\documentclass[aps,prx,twocolumn,superscriptaddress,longbibliography]{revtex4-2}
\usepackage[utf8]{inputenc}
\usepackage{amsmath, amssymb,amsfonts,dsfont,xspace,graphicx,relsize,bm,mathtools,xcolor,amsthm,soul}
\usepackage{tikz}
\usetikzlibrary{positioning, shadows,calc}
\definecolor{beige}{RGB}{245, 245, 220} 
\definecolor{lightmaroon}{RGB}{209, 170, 170} 

\usepackage{hyperref}

\usepackage[margin=1in]{geometry}
\hypersetup{
	colorlinks,
	linkcolor={red!100!black},
	citecolor={blue!100!black},
}

\usepackage{subcaption}
\usepackage{bbm}
\usepackage{mleftright}
\usepackage{hyperref}
\usepackage{tipa}
\usepackage{ulem}
\usepackage{enumitem}

\DeclareTextFontCommand{\emph}{\textit}

\def\01{\{0,1\}}

\DeclareDocumentCommand{\dist}{o}{%
  \IfNoValueTF{#1}{d}{d_{\mathrm{#1}}}%
}

\NewDocumentCommand{\Prob}{e{_} m}{%
  \IfNoValueTF{#1}{%
    \Pr \set*{#2}
  }{%
    \Pr_{#1} \set*{#2}
  }
}

\newcommand{\eps}{\epsilon}

\newcommand{\C}{\ensuremath{\mathcal{C}}}

\newcommand{\E}{\ensuremath{\mathbb{E}}}

\newcommand{\F}{\ensuremath{\mathcal{F}}}
\newcommand{\G}{\ensuremath{\mathcal{G}}}

\newcommand{\T}{\ensuremath{\mathsf{T}}}
\newcommand{\X}{\ensuremath{\mathsf{X}}}
\newcommand{\mN}{\ensuremath{\mathcal{N}}}
\makeatletter
\let\X\@undefined
\makeatother
\newcommand{\X}{\ensuremath{\mathcal{X}}}

\newcommand{\Exp}{\mathbb{E}}
\newcommand{\Tr}{\mathrm{Tr}}
\newcommand{\bra}[1]{\langle{#1}|}
\newcommand{\ket}[1]{|{#1}\rangle}

\newcommand{\ketbra}[2]{|{#1}\rangle\langle{#2}|}

\newcommand{\beq}{\begin{equation}}
\newcommand{\beql}[1]{\begin{equation}\label{#1}}
\newcommand{\eeq}{\end{equation}}
\newcommand{\eeqp}{\,\,\,.\end{equation}}
\newcommand{\eeqc}{\,\,\,,\end{equation}}

\theoremstyle{plain}
\newtheorem{problem}{Problem}

\newtheorem{defn}{Definition}

\newtheorem{theorem}{Theorem}
\newtheorem{proposition}{Proposition}
\newtheorem{lemma}{Lemma}
\newtheorem{corollary}{Corollary}
\newtheorem{fact}{Fact}

\usepackage{algorithm}
\usepackage{algpseudocode}

\begin{document}

\title{Learning quantum processes without input control}

\author{Marco Fanizza}
\affiliation{F\'{\i}sica Te\`{o}rica: Informaci\'{o} i Fen\`{o}mens Qu\`{a}ntics, Departament de F\'{\i}sica, Universitat Aut\`{o}noma de Barcelona, 08193 Bellaterra, Spain.}
\email{marco.fanizza@uab.cat}

\author{Yihui Quek} 
\affiliation{Dahlem Center for Complex Quantum Systems, Freie Universit\"{a}t Berlin, 14195 Berlin, Germany.}
\affiliation{Massachusetts Institute of Technology, Cambridge, MA, USA}
\email{yihuiquek3.14@gmail.com}

\author{Matteo Rosati} 
\affiliation{Dipartimento di Ingegneria Civile, Informatica e delle Tecnologie Aeronautiche, Università Roma Tre, Via Vito Volterra 62, I-00146 Rome, Italy.}
\email{matteo.rosati@uniroma3.it}

\date{\today}

\begin{abstract}
We introduce a general statistical learning theory for processes that take as input a classical random variable and output a quantum state. Our setting is motivated by the practical situation in which one desires to learn a quantum process governed by classical parameters that are out of one's control. This framework is applicable, for example, to the study of astronomical phenomena, disordered systems and biological processes not controlled by the observer. We provide an algorithm for learning with high probability in this setting with a finite amount of samples, even if the concept class is infinite. To do this, we review and adapt existing algorithms for shadow tomography and hypothesis selection, and combine their guarantees with the uniform convergence on the data of the loss functions of interest.
As a by-product we obtain sufficient conditions for performing shadow tomography of classical-quantum states with a number of copies which depends on the dimension of the quantum register, but not on the dimension of the classical one. 
We give concrete examples of processes that can be learned in this manner, based on quantum circuits or physically motivated classes, such as systems governed by Hamiltonians with random perturbations or data-dependent phase-shifts. 
\end{abstract}

\maketitle

\section{Introduction}

The goal of science is to gain a better understanding of nature. Because `nature isn't classical, dammit' \cite{Feynman-1986}, the tools of quantum information processing have been central to this pursuit. Insights transposed from statistical learning theory to the quantum domain have established rigorous guarantees for learners that predict properties of quantum states \cite{Aaronson2007, Aaronsonshadow20, BO21, Huang21manybody, Huangshadows}, classify phases of matter \cite{Huang21manybody}, learn quantum channels \cite{ChungLin21} or approximate models of physical dynamics \cite{huang2022quantum}. 

\begin{figure}[!t]
    \centering
    \includegraphics[trim = 1cm 1cm 1cm 1cm,width=.48\textwidth]{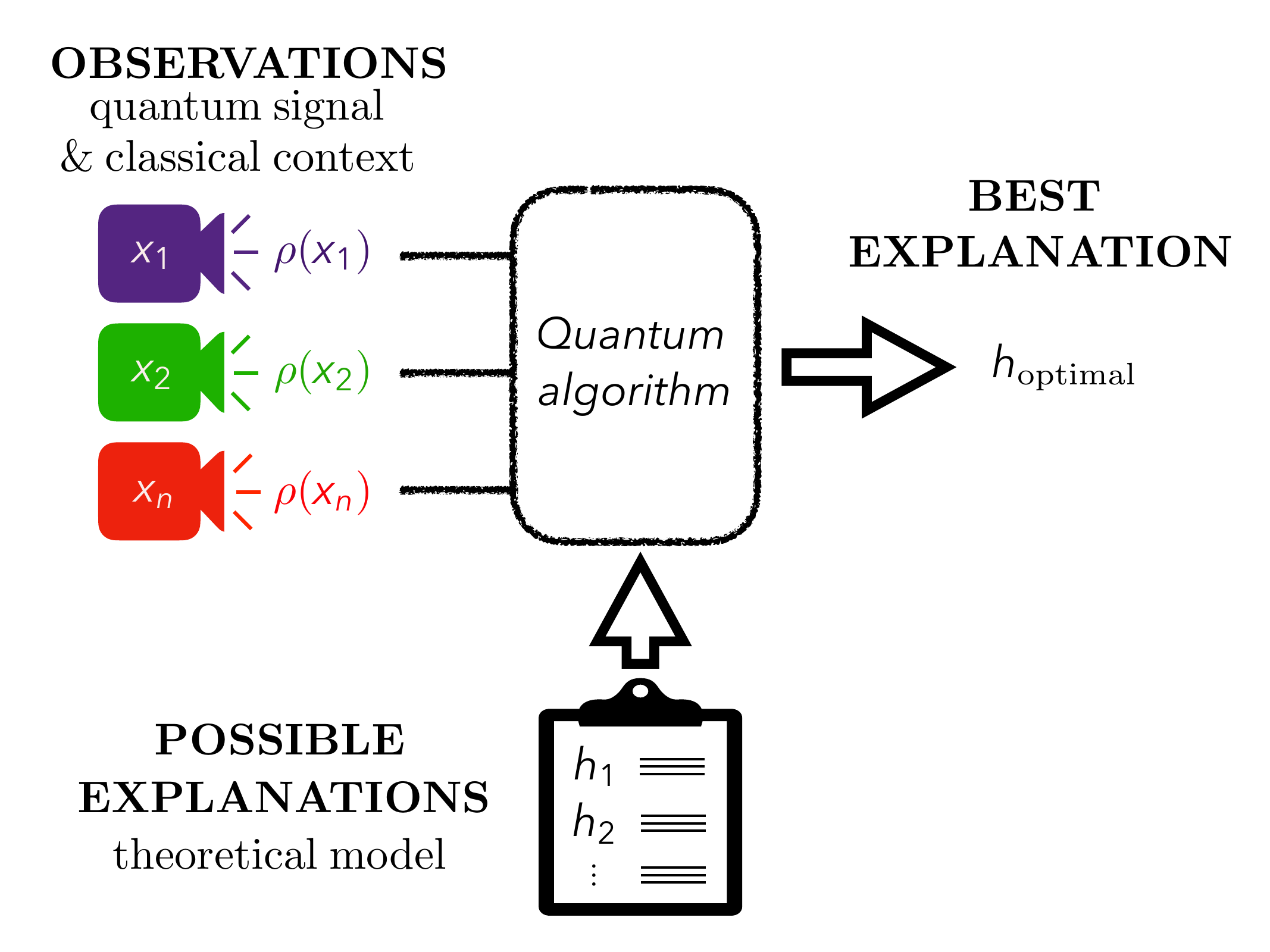}
    \caption{Schematic depiction of the experimental setting for which we provide theoretical learning guarantees. A size-$n$ training set comprising couples of a classical label $x$ and a corresponding quantum state $\rho(x)$ is produced by sampling from random sources parametrized by $x_1, x_2, \cdots, x_n$. The learner tries to find a theoretical explanation to the observed classical-quantum data using a hypothesis class of operator-valued functions $h(x)$, possibly of infinite cardinality.  We present quantum algorithms that are able to identify a near-optimal hypothesis $h_{\rm optimal}$ and obtain sample complexity guarantees in terms of covering numbers of the hypothesis class.}
    \label{fig:fig1}
\end{figure}

A feature shared by many of these works is that they require the quantum learner to have precise control of the unknown object -- for instance, the ability to request multiple identical copies of the unknown state or run the unknown process on a string of well-chosen inputs. That is, the unknown process is treated as a black box, to be applied on inputs specially designed by the learner. However, this assumption is not always satisfied in practice. A scientist can typically \textit{observe but not fully control} an unknown process of interest -- think of an astronomer analyzing signals generated by rare celestial events, a biologist probing molecular mechanics induced by biochemical signals in a noisy environment, or a physicist characterizing systems obeying Hamiltonians subject to random perturbations. In this work, we show that quantum processes can be learned even without the strong assumption of input control.

That is (see Fig.~\ref{fig:fig1}), the learner receives as examples the input-output pairs 
\beq
(x_1, \rho(x_1)), ...(x_n, \rho(x_n))
\eeq
where $x_i$ are classical inputs and the $\rho(x_i)$ are the quantum states output by the unknown process with these inputs. Here the lack of input control is reflected in the fact that the $x_i$ are not chosen by the learner but are samples from some distribution $\mathcal{D}$. Practically, the imperfect control could consist in the impossibility of choosing the input to a process or the inevitability of the input fluctuating throughout the various applications of the process. For instance, in optical imaging of celestial, atmospheric, planetary or biological events, $\rho(x)$ could describe the quantum state of an optical signal emitted or back-scattered by an unknown object, while $x$ is a set of environmental parameters influencing the state of the signal, e.g., temperature, optical depth, distance. 

We model the situations described above as follows: the learner has access to a source that outputs classical-quantum states
\begin{align}
&\sum_{x\in{\cal X}}{\cal D}(x)|x\rangle\langle x|\otimes\rho(x), \,\,\,\\&\text{where $\rho:\X \rightarrow \mathcal{S}$ and $x \overset{\mathcal{D}}{\gets} \X$}\nonumber,
\end{align}
{with $\mathcal{S}$ being the set of states of some Hilbert space $\mathcal{H}$ and $\rho$ the unknown process}. At each data collection, the input will be a product state $(\ketbra{x_1}{x_1}\otimes\rho(x_1))\otimes...\otimes(\ketbra{x_n}{x_n}\otimes\rho(x_n))$. This generalizes the setting of \cite{ChungLin21,caro2021binary} on probably approximately correctly (PAC)-learning quantum channels.

Our goal is to learn a possible classical-quantum source from which the data are sampled, and we construct algorithms to do so that have bounded sample complexity. A key object in the analysis is a certain distinguishability measure of the class of candidate classical-quantum processes, which we identify, that can be bounded even if this class is infinite. As an application, we also improve the shadow tomography procedure of \cite{Aaronsonshadow20} for an unknown classical-quantum state: a naive application of shadow tomography would not be guaranteed to work if the classical register is infinite-dimensional, while we show that only the dimension of the quantum register matters. 

Our theory also goes beyond previous works to encompass the \textit{agnostic case}: the case when the unknown process is \textit{not} included in the hypothesis set $\C$. {Note that the agnostic case was mentioned by~\cite{ChungLin21}, who provided a lower bound on the sample complexity, and tackled in a specific case by~\cite{caro2021binary}.} This is as opposed to the \textit{realizable} case, where the unknown process is guaranteed to be from the set. This is in keeping with the setting of learning Nature: agnostic learning models the situation when we learn a natural process using as hypotheses only the limited models that are within the reach of our theoretical understanding.

Our work overcomes a key technical hurdle not tackled by previous approaches: without input control, our learner cannot obtain identical copies of the process sampled at well-chosen points $\{x_i\}$. This assumption is crucial in, for instance, \cite{huang2022quantum} and \cite{LC22}. Instead, for every sampled $x_i$, she only gets a single copy of $\rho(x_i)$, which in general differs from the other copies $\rho(x_j)$, $j\neq i$. Nevertheless, we design a measurement strategy that learns even without the luxury of identical copies. 
Ref. \cite{ChungLin21} hinted that a VC-dimension-like quantity for this setting might be impossible to define. Here instead we establish a fundamental prerequisite for the definition of a statistically meaningful dimension (analogous to the fundamental theorem for uniform convergence of~\cite{vapnik1999overview}), providing sufficient conditions for learning concept classes of infinite cardinality with finite data: we introduce learning algorithms that succeed with high probability if a suitable \textit{covering number} of the quantum concept class $\mathcal C$ grows sufficiently slowly with the sample size. 

Our theory of learnability extends \textit{beyond} the usual setting of learning quantum processes given by quantum circuits. In fact, our learning model also encompasses the following scenarios:

\begin{itemize}
    \item We want to study how a small quantum system behaves in a variable environment. In this case, the classical random variable is a measurement of the status of the environment, for example a measure of classical fields. The copy of the quantum state is a copy of the state of the system corresponding to the measured state of the environment. One can imagine applying this scenario to molecules or nanostructures. Notice that the border between environment and object is arbitrary, therefore in the classical random variable one could include the outcomes of some predetermined measurement on the object of interest. With the same idea in mind, the quantum state could be also not the original state of the system but some post-processing of it, for example via a quantum sensor.
    \item We want to do imaging of a system, that is to associate to each point of the system a quantum state or channel. When the detector clicks, we receive as experimental data a pair comprising the position and the quantum state corresponding to that position. If our experimental setup uses spontaneous/stimulated emission, such that in a specific time interval we cannot guarantee that we can obtain an observation at a specific position, our model correctly represents the fact that we receive data from random positions and we cannot afford to receive multiple copies of the state corresponding to an arbitrary position. 
    \item We are studying a class of stars with a combination of classical and quantum sensors: for example, we get classical electromagnetic signals from a star, and a quantum sensor collects information about gravitational waves. We would like to study correlations between the electromagnetic and gravitational waves, but since these events are rare and unique we cannot repeat them at will.
\end{itemize}

Some toy models for concept classes inspired by these scenarios are discussed as applications. In these cases, we can find bounds on the covering number, from which sample complexity bounds can be obtained. 

\subsection{Setting: learning quantum processes with random classical input} We now go into more detail about our learning setting, which is a natural quantum generalization of supervised learning and builds on that of \cite{ChungLin21}. Suppose there is an unknown function $\rho: \mathcal{X}\rightarrow \mathcal{S}$ to be learned, $\mathcal{X}$ possibly of infinite cardinality, and a distribution $\mathcal{D}: \X \rightarrow [0,1]$. In fact, we will focus on processes $\rho$ that map from a \textit{classical} domain $\X$ to a \textit{quantum} set $\mathcal{S} \subseteq \mathcal{L}(\mathcal H)$, where $\mathcal{L}(\mathcal H)$ is the set of linear operators on a finite-dimensional Hilbert space $\mathcal{H}$.  When we want to keep track of the dimension $d$ of a Hilbert space, we use the notation $\mathcal{H}^{(d)}$. Furthermore, we will always be interested in the case where the unknown process $\rho$ outputs a \textit{quantum state}, that is, all operators in $\mathcal{S}$ are positive semi-definite and have unit trace.

The learner receives as input samples the pairs $(x_i, \rho(x_i))_{i=1}^n$ where $x_i \overset{\mathcal{D}}{\gets} \X$. Furthermore, the learner has a set of hypotheses, $\mathcal{C} = \{h:\mathcal{X}\rightarrow \mathcal{L}(\mathcal{H})\}$, and would like to use the smallest possible number $n$ of samples to choose a candidate $h$ from the class that accounts well for the observations. The accuracy of the learner's output $h$ relative to the true function $\rho$ will be measured by the \textit{true risk} $R_\rho:\mathcal{C} \rightarrow \mathbb R$, defined via a loss function $L:\mathcal{L}(\mathcal H)\times \mathcal{L}(\mathcal H)\rightarrow \mathbb R$:

\begin{equation}\label{eq:truerisk}
R_\rho(h):= \mathbb{E}_{x \sim \mathcal D} \left[L(\rho(x),h(x))\right]. \qquad \text{(True risk)}
\end{equation}
{Therefore, it is in the learner's interest to minimize the true risk, although she can only do so approximately, as detailed in the next section.}

In what is known as the \textit{realizable} setting of learning (studied by \cite{ChungLin21}), there is a promise that the unknown function comes from $\mathcal{C}$. We present results for this setting too, but go one step further. In learning Nature, our scientific models are but approximate descriptions that correspond more closely to reality at some scales than others. Thus, we primarily treat the \textit{agnostic} (or unrealizable) setting, in which no hypothesis in $\mathcal{C}$ is guaranteed to correspond exactly to the unknown function. 

We will focus on concept classes $\mathcal{C}$ that output two types of quantum objects:
\begin{itemize}
    \item \textbf{(Case 1) Quantum states:} $\C$ consists of hypotheses $h(x)=\sigma_h(x)$ which are state-valued functions. The true function $\rho(x)$ is also a state-valued function, and we use trace distance $L_s(\sigma_h,\rho)=d_{\text{tr}}(\sigma_h,\rho),$ a natural notion of distance between quantum states, as the loss function.
    \item \textbf{(Case 2) Quantum events/projectors:} $\C$ consists of hypotheses $h(x)=\Pi_h(x)$ which are projector-valued functions. Again, the true function $\rho(x)$ is a state-valued function and we will use as loss function the probability of not accepting the projector, i.e.,  $L_p(\Pi_h,\rho):=1-\Tr[\rho \Pi_h].$

\end{itemize}
 {We note that in Case 2  we can switch out the projectors for general POVM elements, by a standard dilation argument with which we can represent them as projectors on a larger space. However, the dilation is not unique and what we will say in the following will depend on the dilation. Therefore, for simplicity, we will always speak only about projectors.

A quick note on the motivation for defining projector-valued concept classes. In the classical case, when the label $y$ is not a deterministic function of $x$, one speaks of learning probabilistic concepts ~\cite{KEARNS1994464}. In~\cite{KEARNS1994464} two possible approaches are considered. The first is to learn a deterministic concept which maximises the probability of correct prediction, the second is to learn the conditional probability distributions $p(y|x)$ on average. Learning projector-valued classes is a generalization of the first approach, and learning state-valued classes is a generalization of the second approach.

Moreover, estimating $L_p(\Pi_h,\rho)$ for every $h\in \mathcal{C}$ (as we do later) encompasses shadow tomography \cite{Aaronson2007,Aaronsonshadow20}, a task where one is given a fixed state and a list of observables and has to output the expectation values of all observables in the list. Shadow tomography corresponds to the case where $\rho(x)$ and $\Pi_h(x)$ are constant as functions of $x$. Most importantly, our algorithm for this risk estimation problem on projector-valued functions is a key part of our strategy to attack Case 1. 

    }
\subsection{Results}
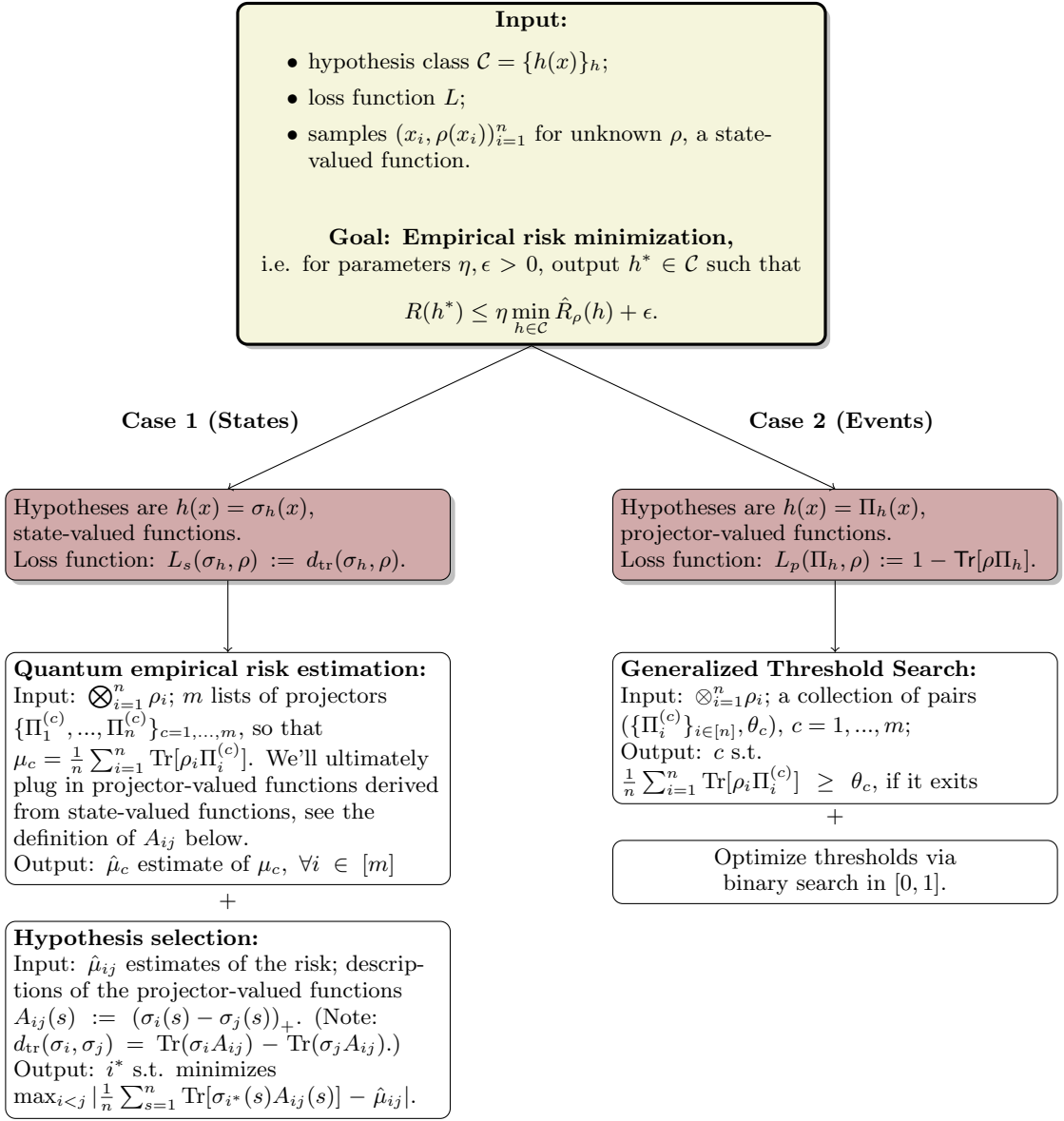
\begin{figure*}[!htbp]
\centering 

\begin{tikzpicture}
\node[draw, rectangle, rounded corners, align=center, text width=8cm, fill=beige, drop shadow, line width=1.2pt] (input) at (-5, 0){
\textbf{Input:}\\
\begin{itemize}
\item  hypothesis class $\mathcal{C}=\{h(x)\}_h$; 
\item loss function $L$; 
\item samples $(x_i, \rho(x_i))_{i=1}^n$ for unknown $\rho$, a state-valued function. \vspace{.5cm}
\end{itemize}
\textbf{Goal: Empirical risk minimization,}\\
i.e. for parameters $\eta,\epsilon>0$, output $h^*\in \mathcal{C}$ such that 
$$R(h^*) \leq \eta \min_{h\in \mathcal{C}} \hat{R}_{\rho}(h)+\epsilon.$$};

\node[draw, rectangle, rounded corners, align=left, text width=6cm, below left=2cm and -3cm of input, fill=lightmaroon, drop shadow] (case1) {
\raggedright Hypotheses are $h(x)=\sigma_h(x)$,\\ state-valued functions. \\
Loss function: $L_s(\sigma_h,\rho):=d_{\text{tr}}(\sigma_h,\rho)$.
};

\node[draw, rectangle, rounded corners, align=left, text width=6cm, below right=2cm and -3cm of input, fill=lightmaroon, drop shadow] (case2) {
\raggedright Hypotheses are $h(x)=\Pi_h(x)$, projector-valued functions. \\
Loss function: $L_p(\Pi_h,\rho):=1-\mathsf{Tr}[\rho \Pi_h]$.
};


\draw[->] (input.south) -- (case1.north) node[midway, below left, xshift=-10mm, yshift=2mm] {\textbf{Case 1 (States)}};
\draw[->] (input.south) -- (case2.north);

\node[below right, xshift=8mm, yshift=2mm, align=center] at ($(input.south)!0.5!(case2.north)$) {\textbf{Case 2 (Events)}};

\node[draw, rectangle, rounded corners, align=left, text width=6cm, below=1cm of case2] (summary) {
\textbf{Generalized Threshold Search:}\\
\raggedright{Input: $\otimes_{i=1}^{n}\rho_{i}$; a collection of pairs $(\{\Pi^{(c)}_i\}_{i\in[n]},\theta_c)$, $c=1,...,m$;\\
Output: $c$ s.t.\, $\frac{1}{n}\sum_{i=1}^n\mathrm{Tr}[\rho_{i}\Pi^{(c)}_{i}]\geq \theta_c$, if it exits}};

\draw[->] (case2) -- (summary);

\node[draw, rectangle, rounded corners, align=center, text width=6cm, below=0.5cm of summary] (binarySearch) {
  \raggedright Optimize thresholds via binary search in $[0,1]$.
};

\node[above=0.1cm of binarySearch] (plus) {
    +
};
\node[draw, rectangle, rounded corners, align=left, text width=6cm, below=1cm of case1] (theorem) {
\textbf{Quantum empirical risk estimation:}\\
\raggedright {Input: $\bigotimes_{i=1}^n \rho_i$; $m$ lists of projectors \\ $\{\Pi^{(c)}_{1},...,\Pi^{(c)}_{n}\}_{c=1,...,m}$, so that $\mu_c=\frac{1}{n}\sum_{i=1}^{n}\mathrm{Tr}[\rho_{i}\Pi^{(c)}_{i}]$. We'll ultimately plug in projector-valued functions derived from state-valued functions, see the definition of $A_{ij}$ below.\\
Output: $\hat \mu_c$ estimate of $\mu_c, \,\, \forall i\in [m]$
}};

\draw[->] (case1) -- (theorem);

\node[draw, rectangle, rounded corners, align=left, text width=6cm, below=0.5cm of theorem] (helstrom) {
    \textbf{Hypothesis selection:}\\
    Input: $\hat\mu_{ij}$ estimates of the risk; descriptions of the projector-valued functions $A_{ij}(s):=\left(\sigma_i(s)-\sigma_j(s)\right)_+$. (Note: $d_{\mathrm {tr}}(\sigma_i,\sigma_j)=\mathrm{Tr}(\sigma_i A_{ij}) - \mathrm{Tr}(\sigma_j A_{ij}).$)\\
    Output: $i^*$ s.t. minimizes $\max_{i<j}|\frac{1}{n}\sum_{s=1}^{n}\Tr[\sigma_{i^*}(s)A_{ij}(s)]-\hat\mu_{ij}|$.\\
};

\node[above=0.05cm of helstrom] (plusHelstrom) {
    +
};
\end{tikzpicture}
\caption{Summary of quantum empirical risk minimization algorithms} 
\label{fig:summarytech} 
\end{figure*}

{In this section, we state our main results, based on the algorithms summarized in Figure \ref{fig:summarytech} and obtained in Sections~\ref{sec:trnoniid},~\ref{sec:Empirical risk minimization},~\ref{sec:statlearn}. Recall that the goal is to learn an unknown function $\rho$ and the output of our learning algorithm will be some hypothesis $h\in \mathcal{C}$. While $\rho$ is not required to be in $\C$, our algorithms will always output some $h\in \C$ close to achieving the minimum \textit{empirical risk}, i.e., the average loss computed on the examples $(x_i, \rho(x_i))_{i=1}^n$:
\begin{equation}\label{eq:risk}
\hat{R}_\rho(h):=\frac{1}{n}\sum_{i=1}^n L(\rho(x_i),h(x_i)).\qquad \text{(Empirical risk)}
\end{equation}
That is to say, we use minimizing the empirical risk as a proxy for minimizing the true risk. {This principle, known as \textit{empirical risk minimization} (ERM), originated in classical statistical learning theory \cite{vapnik1999overview}; the contribution of this paper is to adapt it to learning quantum-valued classes.} {In this paper, we distinguish between two types of tasks: the term \textit{empirical risk minimization} refers to the task of outputting a \textit{single hypothesis} from $\C$ that approximately minimizes the risk. Along the way, we will also develop an algorithm for \textit{empirical risk estimation} (ERE), a term that refers to estimating the risk for \textit{every hypothesis} in $\C$.} 

As formulated by \cite{vapnik1999overview}, the success of ERM depends on a property of a concept class known as \textit{uniform convergence}~\cite{Anthony1999,Wolfnotes} of the empirical risks, which is controlled by a certain measure of the effective size of the concept class we define: the $\gamma_{1,q}$ covering number {(see general treatments of covering numbers, e.g.~\cite{Anthony1999, vershynin}).}

\begin{defn}[Covering number]\label{defcov}
Let $\mathcal{G} \subseteq \mathcal{L}(\mathcal H)^{\mathcal{X}}$ be a class of functions mapping to linear operators, and let $\eps\in(0,1]$. The \textit{covering number} of $\mathcal{G}$ is 
\begin{equation}
\gamma_{1,q}(n,\epsilon, \mathcal{G}):=\max \left\{ N_{in}\left( \epsilon, \mathcal{G},|| \cdot ||_{1,q,\vec{x}} \right)| \vec{x} \in \mathcal{X}^{n}  \right\},
\end{equation}
where for a pseudometric $d$, $N_{in}(\epsilon, \mathcal{G}, d)$ is the smallest cardinality of any internal $\epsilon$-cover of $\mathcal{G}$ according to the pseudometric $d$. { Here we have chosen as pseudometric the $\lVert \cdot\rVert_{1,q, \vec{x}}$-seminorm, which depends on the observed examples $\vec x$ as
\begin{equation}
    \lVert g_1 -g_2\rVert_{1, q, \vec{x}} := \frac{1}{|\vec{x}|}\sum_{i=1}^{|\vec{x}|} ||g_1(x_i)-g_2(x_i)||_q,
\end{equation}
 where $||A||_q$ is a Schatten norm (see definitions in section~\ref{sec:statlearnprel}) and $|\vec x|$ is the length of $\vec x$, i.e., the sample size. }
\end{defn}

{Intuitively, $\gamma_{1,q}$ describes the maximum number of hypotheses that can be pairwise distinguished on a dataset of size $n$ with resolution $\epsilon$, given information from the classical register only; in fact, the distinguishability between two hypotheses is measured by the average over classical outcomes of the appropriate distinguishability metric (operator norm for projectors and trace norm for states) for the corresponding values of the hypotheses.}

Since the rate of convergence of the empirical risk to the true risk is controlled by { the covering number $\gamma_{1,q}$~\cite{vapnik1999overview,Anthony1999,vershynin}, it is possible to minimize the risk by optimizing over an $\epsilon$-net of the concept class, which is finite-dimensional, rather than the class itself, which in general is infinite-dimensional.  This is the basis of our technique.} While the ERM principle is well-established classically, the non-trivial part in the quantum case is to minimize the empirical risk $\hat{R}_\rho(h)$ based on the string of samples $(x_i, \rho(x_i))_{i=1}^n$. { The main difficulties in this respect are the lack of identical copies of the input states and the fact that the observables associated to naive estimators of the empirical risks do not commute.} 

{We now present our results establishing a quantum variant of ERM, for both types of concept classes considered.

\subsubsection{ Quantum empirical risk minimization and estimation}
\label{sec:introerm}

{Let us first discuss projector-valued concept classes. Here,} the naive strategy of measuring each $\rho_{x_i}$ with $\Pi_h(x_i)$ { for all $h$} to estimate the empirical loss does not immediately work, since the projectors do not necessarily commute for different hypotheses. 
Nevertheless, we construct algorithms for both ERE and ERM on projector-valued concept classes, establishing the following Theorems.

\begin{theorem}[\sf{Quantum empirical risk minimization for projector-valued functions}]\label{th:theoere0}
Given access to a product state
\begin{equation}
\varrho=\rho_{1}\otimes...\otimes \rho_{n}
\end{equation} 
and a collection of lists of projectors $\{\Pi^{(c)}_{1},...,\Pi^{(c)}_{n}\}_{c=1,...,m}$, with 
\begin{equation}
\mu_c=\frac{1}{n}\sum_{i=1}^{n}\Tr[\rho_{i}\Pi^{(c)}_{i}],
\end{equation}
there is an algorithm which outputs $c^*$ and $\hat \mu_{c^*}$ such that
\begin{equation}
\Pr(|\hat{\mu}_{c^*} -\max_{c\in [m]}\mu_c|\geq \epsilon\cup |\hat{\mu}_{c^*} -\mu_{c^*}|\geq \epsilon)\leq \delta.\label{ineqerm}
\end{equation}
{ if $n$ is large enough. In fact we can take
\begin{equation}
    n=\tilde O\left(\frac{\log\frac{1}{\delta}\max(\log\frac{m}{\delta},\log^2 (e m))}{\epsilon^2}\right).
\end{equation}
}

\end{theorem}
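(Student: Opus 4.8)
The plan is to reduce the maximization-plus-estimation task of \ineq{ineqerm} to repeated calls of the \textbf{Generalized Threshold Search} primitive of Fig.~\ref{fig:summarytech}, wrapped in a binary search over the threshold value. Recall that Generalized Threshold Search, given $\varrho=\rho_1\otimes\cdots\otimes\rho_n$ and pairs $(\{\Pi^{(c)}_i\}_i,\theta_c)$, returns with high probability an index $c$ with $\mu_c\geq\theta_c-\eps$ whenever some hypothesis clears its threshold, and reports ``none'' when every $\mu_c$ sits a gap $\eps$ below its threshold. I would first establish this primitive (Section~\ref{sec:trnoniid}) and then use it as follows: set all thresholds to a common value $\theta$ and binary-search $\theta$ over $[0,1]$. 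A ``found'' answer certifies $\max_c\mu_c\geq\theta-\eps$ and lets us raise $\theta$; a ``none'' answer indicates $\max_c\mu_c\lesssim\theta$ and lets us lower it. After $\Ord{\log(1/\eps)}$ rounds the bracket around $\max_c\mu_c$ has width $\Ord{\eps}$, the last successful search returns the reported index $c^*$, and the final threshold serves as the estimate $\hat\mu_{c^*}$.

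The quantum heart of the argument, and the step I expect to be the main obstacle, is the threshold-search primitive itself. The difficulty is exactly the non-commutativity flagged in the main text: for a single register $i$ the projectors $\Pi^{(c)}_i$ and $\Pi^{(c')}_i$ need not commute, so I cannot simply measure every register against every hypothesis and read off all the $\mu_c$. Following the shadow-tomography line of work \cite{Aaronsonshadow20}, I would instead test hypotheses sequentially. For a fixed $c$ the relevant observable is the collective average $M_c=\frac1n\sum_{i=1}^n\Pi^{(c)}_i$, whose expectation on $\varrho$ is exactly $\mu_c$; the test ``$\mu_c\gtrsim\theta$'' is implemented as a two-outcome \emph{gentle} (threshold) measurement of $M_c$ on the whole product state. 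The gentle measurement lemma guarantees that whenever such a test has a near-deterministic outcome it barely disturbs $\varrho$, and the quantum union bound controls the accumulated disturbance across the whole sequence of non-commuting tests; this quadratic accounting is what turns a naive $\log m$ into the $\log^2(em)$ appearing in the sample complexity.

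Two points require care beyond the standard i.i.d.\ threshold search. First, since $\varrho$ is a genuine product state rather than $n$ copies of one state, I cannot invoke concentration of an empirical mean of identical samples; instead I would bound the fluctuation of $\frac1n\sum_i\Tr[\rho_i\Pi^{(c)}_i]$ around $\mu_c$ by a Hoeffding-type inequality for independent but non-identically distributed Bernoulli outcomes, which is exactly what the product structure supplies. Second, the threshold measurement of $M_c$ must be realized as a coherent ``pass/fail'' predicate rather than a destructive register-by-register measurement, so that gentleness applies at the level of the whole collective observable. With these two adaptations the threshold-search guarantee goes through with the stated gap $\eps$ and confidence, at a per-call cost of $\tOrd{\log^2(em)/\eps^2}$ copies.

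Finally I would assemble the pieces and track the sample budget. Because every test is gentle, a single pool of copies can be reused across all queries, with the quantum union bound bounding the total disturbance over the entire sequence of tests; the binary search therefore costs only polylogarithmically in $1/\eps$ (absorbed into the $\tOrd{\cdot}$) rather than multiplying the copy count. Demanding that all queries, together with the final value estimate over the $m$ hypotheses, succeed simultaneously with probability $1-\delta$ introduces the $\log(m/\delta)$ union-bound term, while boosting confidence by a repetition-and-majority argument contributes the outer $\log(1/\delta)$ factor, giving $n=\tOrd{\log(1/\delta)\max(\log(m/\delta),\log^2(em))/\eps^2}$. To close \ineq{ineqerm}, the returned $c^*$ is a near-maximizer, $\mu_{c^*}\geq\max_c\mu_c-\Ord{\eps}$, while the final bracket gives $|\hat\mu_{c^*}-\max_c\mu_c|\leq\Ord{\eps}$; the triangle inequality then yields $|\hat\mu_{c^*}-\mu_{c^*}|\leq\Ord{\eps}$ as well, so both events in \ineq{ineqerm} hold after rescaling $\eps$ by a constant.
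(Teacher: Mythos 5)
Your high-level architecture---binary search over a common threshold wrapped around a generalized threshold search for non-identical product states, with Poisson-binomial concentration, gentle measurements, and the quantum union bound supplying the $\log^2(em)$ term---is indeed the paper's architecture (Lemma~\ref{lem:threshold}, Algorithm~\ref{algo:proj2}). However, there is a genuine gap in how you pay for the binary search. You claim that ``a single pool of copies can be reused across all queries'' because every test is gentle. This is not justified, and it is precisely where the paper has to work harder. The gentleness guarantees (Theorem~\ref{thm:chi-stable-final}, and Lemmas 2.5--2.6 of~\cite{BO21}) control the damage to the state only \emph{conditioned on rejection}, and only while the accumulated acceptance probabilities remain small; but in a binary search an acceptance is not a rare error event---whenever the current threshold sits below $\max_c\mu_c$, the search is \emph{supposed} to accept, and an accepting measurement disturbs the state in an uncontrolled way, so nothing downstream can be run on that same pool. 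Moreover, Lemma~\ref{lem:threshold} only guarantees success with constant probability (at least $0.03$), so each binary-search round must be amplified by $k=O(\log\frac{1}{\delta}\log\frac{1}{\epsilon})$ independent repetitions, and every candidate index must be verified by a separate direct measurement (the Check step of Algorithm~\ref{algo:proj2}) to filter false positives; neither the repetitions nor the checks can reuse a state already damaged by a previous acceptance.

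The paper's fix, absent from your proposal, is the sampling-without-replacement step: the $n$ registers are split into $2Tk$ fresh blocks of size $l$, one block per ThresholdSearch call and one per Check, and Lemma~\ref{lem:multisamplerep} (a Hoeffding bound for sampling without replacement) guarantees that every block average $\frac{1}{l}\sum_{j}\Tr[\rho_{s,j}\Pi^{(c)}_{s,j}]$ is within $\epsilon/4$ of the global $\mu_c$, simultaneously for all $c$ and all blocks. Note this is a different statement from the concentration you invoke (the ``fluctuation of $\frac{1}{n}\sum_i\Tr[\rho_i\Pi^{(c)}_i]$ around $\mu_c$''---that quantity \emph{is} $\mu_c$); what must concentrate is each block average around the full-data average. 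As a consequence, the $T=O(\log\frac{1}{\epsilon})$ rounds and the $k$-fold amplification genuinely multiply the per-block copy count, $n=2Tkl$, which is exactly where the $\log\frac{1}{\delta}$ factor and the $\tilde O$ logarithms in the stated sample complexity come from---contrary to your claim that the binary search adds only an absorbed polylogarithmic overhead via state reuse. With the block structure and the Check step added, your argument becomes the paper's proof; without them, the reuse claim is the step that fails.
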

Here, the notation $\tilde O$ hides logarithmic dependence on $\log m,\log\delta$, and $\epsilon$.}
{Note that the minimization in the caption of Theorem~\ref{th:theoere0} refers to the minimization of the loss function, which is $1-\mu_c$, whereas the theorem statement is about maximizing $\mu_c$, which is equivalent. 
}

 \begin{theorem}[\sf{Quantum empirical risk estimation for projector-valued functions}]\label{th:theoavsh0}

Given access to a product state
\begin{equation}
\varrho=\rho_{1}\otimes...\otimes \rho_{n}
\end{equation} 
and a collection of lists of projectors $\{\Pi^{(c)}_{1},...,\Pi^{(c)}_{n}\}_{c=1,...,m}$, with 
\begin{equation}
\mu_c=\frac{1}{n}\sum_{i=1}^{n}\Tr[\rho_{i}\Pi^{(c)}_{i}],
\end{equation}
there is an algorithm which outputs estimates $\hat \mu_{c}, c\in [m]$ such  such that
\begin{equation}
\Pr(\exists c \in [m]: |\hat{\mu}_{c} -\mu_c|\geq \epsilon)\leq \delta
\end{equation}
{if $n$ is large enough. In fact we can take 
\begin{equation}
    n=\tilde O\left(\frac{\log d \log\frac{1}{\delta} \max(\log \frac{m}{\delta},(\log^2 em))}{\epsilon^5}\right).
\end{equation}
}

\end{theorem}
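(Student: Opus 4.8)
The plan is to recognize Theorem~\ref{th:theoavsh0} as an instance of shadow tomography and to solve it by combining online learning of quantum states with the generalized threshold search primitive already used for Theorem~\ref{th:theoere0}. The first and most important move is a reformulation that exposes the correct effective dimension. Although the available state $\varrho=\rho_1\otimes\cdots\otimes\rho_n$ lives on $(\mathcal{H}^{(d)})^{\otimes n}$ and hence has dimension $d^n$, I would not run shadow tomography on it directly. Instead, the relevant object is the classical-quantum state $\tilde\rho=\frac{1}{n}\sum_{i=1}^n\ketbra{i}{i}\otimes\rho_i$ on $\mathbb{C}^n\otimes\mathcal{H}^{(d)}$, together with the block-diagonal projectors $\tilde\Pi^{(c)}=\sum_{i=1}^n\ketbra{i}{i}\otimes\Pi^{(c)}_i$, since $\mu_c=\Tr[\tilde\rho\,\tilde\Pi^{(c)}]$. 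The key point is that the classical register is restricted to the $n$ \emph{observed} labels, so this space has dimension $nd$ rather than $|\X|\cdot d$; this is precisely the improvement over applying shadow tomography naively to the full source $\sum_{x}\mathcal{D}(x)\ketbra{x}{x}\otimes\rho(x)$, whose classical register may be infinite-dimensional.

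With this reformulation, I would import the shadow-tomography-from-threshold-search reduction of~\cite{Aaronsonshadow20, BO21}. Maintain a hypothesis state $\omega^{(t)}$ on $\mathbb{C}^n\otimes\mathcal{H}^{(d)}$, initialized to the maximally mixed state and updated by matrix multiplicative weights. In each round, compute the predictions $\hat\mu^{(t)}_c=\Tr[\omega^{(t)}\tilde\Pi^{(c)}]$ and invoke the generalized threshold search to test whether some observable is badly predicted, i.e.\ whether $|\mu_c-\hat\mu^{(t)}_c|>\epsilon$ for some $c$. If such a $c_t$ is returned, perform the update $\omega^{(t+1)}\propto\exp(\log\omega^{(t)}\pm\eta\,\tilde\Pi^{(c_t)})$; if none is found, output the current predictions as the estimates $\hat\mu_c$. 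The number of update rounds is controlled by the standard potential argument on $S(\tilde\rho\|\omega^{(t)})\le\log(nd)$, giving a mistake bound of order $\log(nd)/\epsilon^2=\tilde O(\log d/\epsilon^2)$, the $\log n$ contribution being of lower order in the target parameters and absorbed into the $\tilde O$. This is the precise point at which the classical dimension enters only logarithmically.

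Each round's test must be implemented \emph{without} identical copies: the generalized threshold search operates directly on a fresh batch of registers drawn from $\varrho$, measuring each $\rho_i$ in the batch with the relevant $\Pi^{(c)}_i$ and averaging. To guarantee that a batch average tracks the full average $\mu_c$, I would first apply a uniformly random permutation to the $n$ registers and then assign disjoint batches to successive rounds, so that each batch is a uniformly random subset whose empirical average concentrates around $\mu_c$ by a sampling-without-replacement tail bound, uniformly over $c\in[m]$ after a union bound. The final sample size is obtained by multiplying the mistake bound by the per-round cost of the generalized threshold search and the batch size needed to control the register-sampling error; collecting these factors, together with the additional overhead from the interplay of the register-averaging with the threshold-search resolution, yields $n=\tilde O\big(\log d\,\log\tfrac{1}{\delta}\,\max(\log\tfrac{m}{\delta},\log^2(em))/\epsilon^5\big)$.

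The main obstacle is the dimension bookkeeping in the second step: one must argue convincingly that the effective dimension governing the mistake bound is $nd$ rather than the naive $d^n$ that would render the bound vacuous, i.e.\ that the product structure of $\varrho$ and the local form of the observables let the classical register contribute only logarithmically. A secondary but delicate point is to show that the gentle measurements underlying the generalized threshold search, combined with the randomized batching, can supply the entire sequence of threshold queries demanded by the online learner while keeping both the quantum estimation error and the register-sampling error below $\epsilon$ with total failure probability at most $\delta$; carefully tracking this error propagation is what fixes the exact exponents in the final $n$.
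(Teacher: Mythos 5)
Your proposal reaches the statement by a route that is genuinely different from the paper's in its key component. The parts you share with the paper are the reformulation $\mu_c=\Tr[\tilde\rho\,\tilde\Pi^{(c)}]$ for the classical-quantum state $\tilde\rho=\frac{1}{n}\sum_{i}\ketbra{i}{i}\otimes\rho_i$ and block-diagonal tests $\tilde\Pi^{(c)}=\sum_i\ketbra{i}{i}\otimes\Pi^{(c)}_i$, the use of the generalized threshold search (Lemma~\ref{lem:threshold}) to detect a badly predicted $c_t$, and the random-permutation/disjoint-batch scheme, which is exactly the sampling-without-replacement concentration of Lemma~\ref{lem:multisamplerep}. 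Where you diverge is the update rule: you run matrix multiplicative weights on a single copy of an $nd$-dimensional hypothesis state, with a relative-entropy potential argument, whereas the paper's $\mathsf{Update}$ (Lemma~\ref{le:updatecq}) generalizes Aaronson's original post-selection procedure~\cite{Aaronsonshadow20}: it stores a $q$-fold tensor power $\rho_t^{\ast}$, updates by post-selecting on the event that the empirical acceptance frequency lies slightly above/below the current prediction, and gets the mistake bound $T=\tilde O(\log d/\epsilon^3)$ from the tension between the Markov upper bound $p_t\leq(1-\epsilon)^{\Omega(t)}$ and the lower bound $p_t\geq d^{-q}\bigl(1-O(\sqrt{t}\,e^{-\Omega(q\epsilon^2)})\bigr)$ coming from $\rho_0^{\ast}=d^{-q}\sigma^{\otimes q}+(1-d^{-q})\omega$. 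Your route (which is closer to how~\cite{BO21} do i.i.d.\ shadow tomography) buys a better mistake bound, $O(\log d/\epsilon^2)$, and hence a final complexity scaling as $\epsilon^{-4}$ rather than $\epsilon^{-5}$; note that your quoted $\epsilon^{-5}$ matches the theorem statement but does not actually follow from your own accounting, while the paper's $\epsilon^{-5}$ is genuinely the product of its $\epsilon^{-3}$ update count and the $\epsilon^{-2}$ per-round threshold-search cost.

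Two points you leave hand-waved should be made explicit, and both are fixable. First, the potential is bounded by $\log d$, not $\log(nd)$: with the maximally mixed initialization $\omega^{(0)}=\frac{1}{n}\sum_i\ketbra{i}{i}\otimes\frac{\mathbf{1}}{d}$, block-diagonality gives $S(\tilde\rho\Vert\omega^{(0)})=\log(nd)-S(\tilde\rho)\leq\log d$, since $S(\tilde\rho)\geq\log n$ (the classical marginal of $\tilde\rho$ is exactly uniform on $[n]$). This is the clean analogue of the paper's observation that the weight $d^{-q}$, rather than $(nd)^{-q}$, suffices in the decomposition of $\rho_0^{\ast}$, and it removes the need for your self-referential "absorb $\log n$ into $\tilde O$" step. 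Second, the MMW update needs quantitative, signed feedback — an $O(\epsilon)$-accurate estimate of $\mu_{c_t}$, not merely the identity of $c_t$ — and one must also bound the probability that threshold search returns a concept that is not actually badly predicted; both are handled by running the search on the two lists $\{\Pi^{(c)}_i\}$ and $\{\mathbf{1}-\Pi^{(c)}_i\}$ with thresholds shifted by $O(\epsilon)$ and confirming with a separate Check measurement on a fresh batch, exactly as in the paper's Lemma~\ref{lem:averageshadow}. With these two points filled in, your argument closes, and in fact slightly improves the stated $\epsilon$-dependence.
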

{Here, the notation $\tilde O$ hides logarithmic dependence on $\log m,\log\delta,\log d$, and $\epsilon$.}
{ Note that estimation, in comparison with minimization, shows a worse dependence on $\epsilon$ and an explicit dependence on $d$. However, it represents a key subroutine to realize the ERM algorithm in the case of state-valued concept classes, which is the subject of the next theorem.}

{ Recall that, in this case, the loss function is different from the projector-valued case: it is the trace distance, rather than the overlap. This prevents us from immediately applying the techniques used for projector-valued classes, which only work to estimate expectation values. 

Nevertheless, using a trick from {\em quantum hypothesis selection} \cite{BO21}, we can reduce the task of risk \textit{minimization} for state-valued concept classes to estimating expectation values of Helstrom projectors constructed from the state-valued class.
The result is the following Theorem.}

\begin{theorem}[\sf{Quantum empirical risk minimization for state-valued functions}]\label{theo:states_finite0}
Let ${\cal C}=\{\sigma_i:[n]\rightarrow \mathcal{L}(\mathcal{H}^{(d)})\}_{i=1}^m$ be a class of state-valued functions and 
\begin{equation}
\varrho=\rho_1\otimes...\otimes\rho_n.
\end{equation}
There exists an algorithm which given $\varrho$ outputs $i^*$ such that
\begin{equation}\label{eq:ER_states0}
    \frac{1}{n}\sum_{s=1}^n d_{\rm tr}(\sigma_{i^*}(s),\rho_s)\leq 3\eta + 4\epsilon,
\end{equation}
where 
\begin{equation}\label{eq:optimalQHS10}
    \eta:=\min_{i\in [m]} \frac{1}{n}\sum_{s=1}^n d_{\rm tr}(\sigma_i(s),\rho_s),
\end{equation}
with probability of error less than $\delta$ if $n$ is large enough. { In fact we can take 
\begin{equation}
    n=\tilde O\left(\frac{\log d \log\frac{1}{\delta} \max(\log \frac{m}{\delta},(\log^2 em))}{\epsilon^5}\right).
\end{equation}
}
\end{theorem}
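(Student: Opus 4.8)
The plan is to reduce the state-valued risk minimization problem to the projector-valued estimation task of Theorem~\ref{th:theoavsh0}, following the quantum hypothesis selection strategy of \cite{BO21}. The key observation, already hinted at in the caption of the ``Hypothesis selection'' box in Fig.~\ref{fig:summarytech}, is the Helstrom-type identity
\begin{equation}\label{eq:helstromident}
d_{\rm tr}(\sigma_i(s),\sigma_j(s)) = \Tr[\sigma_i(s)A_{ij}(s)] - \Tr[\sigma_j(s)A_{ij}(s)],
\end{equation}
where $A_{ij}(s):=(\sigma_i(s)-\sigma_j(s))_+$ is the projector onto the positive part of the difference. First I would build, for every ordered pair $(i,j)$ and every sample index $s$, the projector $A_{ij}(s)$, giving us a collection of $m(m-1)$ lists of $n$ projectors each. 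Feeding these into the estimation algorithm of Theorem~\ref{th:theoavsh0}, with the unknown states $\rho_s$ in the slots, yields estimates $\hat\mu_{ij}$ of the true averages $\mu_{ij}=\frac{1}{n}\sum_{s=1}^n\Tr[\rho_s A_{ij}(s)]$, all accurate to $\pm\epsilon$ simultaneously with probability at least $1-\delta$. This is exactly the regime of Theorem~\ref{th:theoavsh0} with $m$ replaced by $O(m^2)$, which only costs an extra constant factor inside the logarithm and therefore leaves the stated sample complexity $n=\tilde O(\log d\,\log(1/\delta)\max(\log(m/\delta),\log^2(em))/\epsilon^5)$ unchanged up to the hidden logarithmic factors.

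Next I would run the classical hypothesis-selection/minimum-distance estimator on the estimated quantities: output the index $i^*$ minimizing $\max_{j\neq i}|\frac{1}{n}\sum_{s}\Tr[\sigma_{i}(s)A_{ij}(s)]-\hat\mu_{ij}|$, where the first term is classically computable from the known descriptions of the hypotheses. The point of this min-max rule is that the inner quantity is a proxy for how well hypothesis $i$ predicts the empirical overlaps of $\rho$ against all its competitors; the true optimal index (call it the one achieving $\eta$) scores well because $\frac{1}{n}\sum_s\Tr[\sigma_i(s)A_{ij}(s)]$ and $\mu_{ij}$ differ by at most the average trace distances to $\rho$, which are controlled by $\eta$. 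The heart of the argument is then the standard Scheff\'e-type / robust hypothesis selection inequality: combining the Helstrom identity~\eqref{eq:helstromident} with the triangle inequality for trace distance, one shows that any index whose min-max score is within $2\epsilon$ of optimal must have average trace distance to $\rho$ at most $3\eta+4\epsilon$. I would carry out this last step by writing $\frac{1}{n}\sum_s d_{\rm tr}(\sigma_{i^*}(s),\rho_s)\le \frac{1}{n}\sum_s d_{\rm tr}(\sigma_{i^*}(s),\sigma_{i_\eta}(s))+\eta$, expanding the first term via the identity, and bounding each piece by the score guarantees and the estimation accuracy $\epsilon$.

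The main obstacle I anticipate is making the Scheff\'e/hypothesis-selection argument go through with the \emph{average-over-samples} trace distance rather than a single fixed pair of states, because the projectors $A_{ij}(s)$ depend on $s$ and the quantity $\frac{1}{n}\sum_s\Tr[\sigma_{i}(s)A_{ij}(s)]$ is not itself a trace distance between two fixed operators --- it is an average of per-sample Helstrom overlaps. One must verify that the key inequality $d_{\rm tr}(\sigma_i,\sigma_j)\le |\text{score}_i - \text{score}_j| + (\text{error terms})$ survives this averaging, which it does by linearity of the sum and the fact that the identity~\eqref{eq:helstromident} holds sample-by-sample before averaging. A secondary technical point is bookkeeping the constants: the factor $3$ in $3\eta$ and $4$ in $4\epsilon$ come from chaining three applications of the triangle inequality and four separate $\epsilon$-accuracy invocations (two for $\hat\mu_{i^*j}$ and $\hat\mu_{i_\eta j}$, one for the min-max slack, one residual), and I would track these carefully rather than quoting \cite{BO21} verbatim, since our averaged setting differs from the i.i.d.\ copies setting there. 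Once the selection inequality is established the sample complexity is inherited directly from Theorem~\ref{th:theoavsh0}.
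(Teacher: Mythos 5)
Your proposal is correct and follows essentially the same route as the paper: construct the Helstrom projectors $A_{ij}(s)=(\sigma_i(s)-\sigma_j(s))_+$, feed them as projector-valued lists into the ERE algorithm of Theorem~\ref{th:theoavsh0} (the $O(m^2)$ lists only change constants inside the logarithms), then select a hypothesis by a classical min-max rule and conclude with the same triangle-inequality chain giving $3\eta+4\epsilon$. The only cosmetic differences are that the paper scores each candidate $k$ against \emph{all} pairs $i<j$ rather than only pairs involving $k$ as you do, and it packages the averaged quantities as trace distances between block-diagonal classical-quantum states rather than per-sample averages; both variants of the selection rule yield the stated guarantee with the same constants.
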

{Here, the notation $\tilde O$ hides logarithmic dependence on $\log m,\log\delta,\log d$, and $\epsilon$.}

\subsubsection{Learning via empirical risk minimization}
\label{sec:introlearnviaerm}

Combining the Theorems just stated and uniform convergence guarantees from statistical learning theory, we show the following sufficient conditions for learning.

\begin{theorem}[Learning quantum processes via ERM]\label{theo_ERM}

Suppose the concept class $\C$ consists of classical-quantum processes mapping to projectors or states and let $\epsilon>0$ be the accuracy parameter. { Furthermore, let $S = (x_i,\rho(x_i))_{i=1}^n$ be the training set, with $x_i \xleftarrow{\mathcal{D}} X$ and $\rho(\cdot)$ an unknown classical-quantum channel. 

Then, the appropriate ERM algorithm of Theorems~\ref{th:theoere0},~\ref{theo:states_finite0}, run on an $\epsilon$-net of the concept class $\mathcal{C}$ (according to the appropriate pseudometric determined by $x_1,...,x_n$), provide an \textit{agnostic} learning algorithm $\mathcal{A}:\X^n \times  \mathcal{L}(\mathcal{H}^{(d)})^{\otimes n} \rightarrow \C$. This algorithm} outputs an hypothesis ${\cal A}(S)$ satisfying, for some fixed $\eta,\xi\geq 1$, and $n$ large enough,
\begin{equation}\label{eq:ERM1}
\Pr_{S}[R_\rho(\mathcal{A}(S)) -\eta\inf_{h\in\mathcal C} R_\rho(h)<\xi\epsilon] > 1-\delta,
\end{equation}
if
\beq
\lim_{n\rightarrow \infty}\frac{\log^2 \gamma_{1,q}(n, \eps,\mathcal{C})}{n}=0,\qquad \forall\epsilon>0
\eeqp
In particular, this applies to risks defined via the loss functions $L_p$ (in this case $\eta=1$, $q=\infty$) and $L_s$ (in this case $\eta=3$, $q=1$) for projector-valued and state-valued concept classes $\mathcal{C}$ respectively. 

\end{theorem}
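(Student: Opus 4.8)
The plan is to prove the statement in the standard two-pillar ERM fashion: (i) a \emph{uniform convergence} bound controlling $\sup_{h\in\C}|\hat R_\rho(h)-R_\rho(h)|$ by the covering number $\gamma_{1,q}$, and (ii) the quantum ERM subroutines of Theorems~\ref{th:theoere0} and~\ref{theo:states_finite0}, which approximately minimize the \emph{empirical} risk over a finite list of hypotheses. The role of the data-dependent $\epsilon$-net $\mathcal N$ (of size $\le\gamma_{1,q}(n,\epsilon,\C)$ by Definition~\ref{defcov}) is to reduce the infinite minimization to a finite one the quantum subroutine can solve, while uniform convergence lets us pass back from empirical to true risk on all of $\C$.

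First I would establish the Lipschitz relation linking the loss classes to the pseudometric $\|\cdot\|_{1,q,\vec x}$. For projectors, $|L_p(\Pi,\rho)-L_p(\Pi',\rho)|=|\Tr[\rho(\Pi'-\Pi)]|\le\|\Pi-\Pi'\|_\infty$ for any state $\rho$, so an $\epsilon$-cover in $\|\cdot\|_{1,\infty,\vec x}$ is an $\epsilon$-cover of the loss functions in the empirical $L_1(\vec x)$ metric (hence $q=\infty$); for states, the triangle inequality for $d_{\mathrm{tr}}$ gives $|L_s(\sigma,\rho)-L_s(\sigma',\rho)|\le d_{\mathrm{tr}}(\sigma,\sigma')=\tfrac12\|\sigma-\sigma'\|_1$, so the relevant norm is $q=1$. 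Consequently the $L_1(\vec x)$-covering number of the $[0,1]$-valued loss class is at most $\gamma_{1,q}(n,\epsilon,\C)$, and I would invoke the classical uniform convergence bound via covering numbers~\cite{Anthony1999,vapnik1999overview} — through a symmetrization/ghost-sample argument that handles the data-dependence of the net — to get
\begin{equation}
\Pr_S\Big[\sup_{h\in\C}|\hat R_\rho(h)-R_\rho(h)|>\epsilon\Big]\le 4\,\gamma_{1,q}(n,c\epsilon,\C)\,e^{-c'n\epsilon^2}.
\end{equation}

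Next I would chain the estimates. Pick $h_0\in\C$ with $R_\rho(h_0)\le\inf_{h\in\C}R_\rho(h)+\epsilon$ and its net representative $h_0'\in\mathcal N$ with $\|h_0-h_0'\|_{1,q,\vec x}\le\epsilon$. Conditioning on the joint (high-probability) success of uniform convergence and the subroutine, I would telescope
\begin{align}
R_\rho(\mathcal A(S))
&\le \hat R_\rho(\mathcal A(S))+\epsilon \le \eta\min_{h\in\mathcal N}\hat R_\rho(h)+O(\epsilon)\nonumber\\
&\le \eta\,\hat R_\rho(h_0')+O(\epsilon)\le \eta\,\hat R_\rho(h_0)+O(\epsilon)\nonumber\\
&\le \eta\,R_\rho(h_0)+O(\epsilon)\le \eta\inf_{h\in\C}R_\rho(h)+\xi\epsilon,
\end{align}
where the first and fifth inequalities are uniform convergence, the second is the subroutine's near-optimality on $\mathcal N$ (additive, $\eta=1$, for projectors via Theorem~\ref{th:theoere0}; multiplicative $\eta=3$ with additive error for states via Theorem~\ref{theo:states_finite0}), the third is $\eta\min_{\mathcal N}\hat R_\rho\le\eta\,\hat R_\rho(h_0')$, the fourth the Lipschitz/cover bound, and the last the near-optimality of $h_0$. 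Rescaling $\epsilon$ by a constant absorbs the $O(\epsilon)$ terms into $\xi\epsilon$, and the multiplicative $\eta=3$ is carried so as to multiply $\inf_h R_\rho(h)$, not the additive term.

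The failure probability is then bounded by a union bound over the two bad events. Uniform convergence failure is $\le\delta/2$ once $n\gtrsim(\log\gamma_{1,q}+\log\tfrac1\delta)/\epsilon^2$, whereas the subroutine failure is $\le\delta/2$ once $n$ meets the requirement of Theorems~\ref{th:theoere0}/\ref{theo:states_finite0} with $m=|\mathcal N|\le\gamma_{1,q}(n,\epsilon,\C)$; the latter carries a $\log^2 m$ term, so it is the binding constraint and forces $n\gtrsim\log^2\gamma_{1,q}(n,\epsilon,\C)$. Hence the hypothesis $\lim_{n\to\infty}\log^2\gamma_{1,q}(n,\epsilon,\C)/n=0$ is exactly what makes an admissible ``$n$ large enough'' exist. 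The main obstacle, I expect, is the interplay of two independent randomness sources — the i.i.d.\ draw of $S$ and the internal measurement randomness of the quantum subroutine — over a net $\mathcal N$ that is itself a function of $S$: uniform convergence must be applied to the \emph{whole} class $\C$ (so it simultaneously covers the random output $\mathcal A(S)$ and $h_0$), while the subroutine is controlled only on the realized net, and the two estimates must be glued without letting the data-dependence of $\mathcal N$ break either guarantee.
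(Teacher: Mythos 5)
Your proposal is correct and takes essentially the same route as the paper: the same Lipschitz relations tying $L_p$/$L_s$ to the $\lVert\cdot\rVert_{1,q,\vec x}$ pseudometric, the same classical uniform-convergence bound applied to all of $\C$ (which is exactly how the paper, via its Lemma~\ref{th:uniconv}, neutralizes the data-dependence of the net), the same chaining through a net representative of a near-optimal hypothesis with the multiplicative $\eta$ applied only to $\inf_h R_\rho(h)$, and the same union bound in which the subroutine's $\log^2 m$ requirement with $m\leq\gamma_{1,q}(n,\epsilon,\C)$ is the binding constraint yielding the $\log^2\gamma_{1,q}$ condition. No gaps.
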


 We remark that Theorem \ref{theo_ERM} applies even to agnostic learners, going beyond Ref. \cite{ChungLin21} which considered only the realizable setting. In fact our methods also imply that we can learn with an \textit{infinite} concept class if $\lim_{n\rightarrow \infty}\log^2(\gamma_{1,q}(n,0,\C))/n=0$ (while in Ref. \cite{ChungLin21} show a sample complexity $O(\mathrm{poly}(\log|\C|))$). Moreover, in Appendix~\ref{subsec:warmup} we analyze the algorithm of \cite{ChungLin21} for learning pure states using a statistical learning theory approach, to show that $\lim_{n\rightarrow \infty}\log(\gamma_{1,1}(n,0,\C))/n=0$ suffices also with that algorithm. 

As a consequence of proving Theorem \ref{theo_ERM} for the case when the concept class consists of projectors, we are also able to speed up shadow tomography of classical-quantum states, vis-a-vis Ref. \cite{Aaronsonshadow20}:

\begin{theorem}[Shadow tomography of classical-quantum states]\label{theo_shadow}
Suppose the concept class $\C$ consists of quantum processes mapping to projectors, $\epsilon>0$ is the accuracy parameter, and the learner is given the same input as in the previous Theorem. 
Suppose also that

\begin{equation}
\lim_{n\rightarrow \infty}\frac{\log^2 \gamma_{1,\infty}(n,\eps,\mathcal{C})}{n}=0,\qquad \forall\epsilon>0.
\end{equation}

Then, the estimation algorithm of Theorem~\ref{th:theoavsh0} run on an $\epsilon$-net of the concept class $\mathcal{C}$ (according to the appropriate pseudometric determined by $x_1,...,x_n$), provide an \textit{agnostic} learning algorithm $\mathcal{A}:\X^n \times  \mathcal{L}(\mathcal{H}^{(d)})^{\otimes n} \rightarrow [0,1]^{|\C|}$, which output estimates of risks $\mu(h)$ for all concepts $h\in \C$, such that for some fixed $\xi\geq0$,
\begin{equation}\label{eq:ST}
\Pr_{S}[\,\forall h\in \C , \, \, |\mu(h)- R_\rho(h)|<\xi\epsilon] > 1-\delta,
\end{equation}
for $n$ large enough.

\end{theorem}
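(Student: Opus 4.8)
The plan is to prove Theorem~\ref{theo_shadow} as the estimation (shadow-tomography) counterpart of the ERM theorem (Theorem~\ref{theo_ERM}), replacing the minimization subroutine by the empirical-risk-\emph{estimation} algorithm of Theorem~\ref{th:theoavsh0} and then bridging empirical and true risks via uniform convergence. Because the projector loss is $L_p(\Pi_h,\rho)=1-\Tr[\rho\Pi_h]$, I work with the overlap $\nu(h):=\E_{x\sim\D}[\Tr[\rho(x)\Pi_h(x)]]=1-R_\rho(h)$ and its empirical version $\hat\nu(h):=\frac1n\sum_{i=1}^n\Tr[\rho(x_i)\Pi_h(x_i)]$; since $R_\rho(h)=1-\nu(h)$, estimating $\nu$ to accuracy $\epsilon$ is the same as estimating the risk to accuracy $\epsilon$.

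First I would condition on the sample $\vec{x}=(x_1,\dots,x_n)$ and on the states $\rho(x_i)$, and construct an internal $\epsilon$-net $\C_\epsilon=\{h_1,\dots,h_m\}$ of $\C$ in the data-dependent pseudometric $\lVert\cdot\rVert_{1,\infty,\vec{x}}$, of size $m\le\gamma_{1,\infty}(n,\epsilon,\C)$ by Definition~\ref{defcov}. Feeding the $m$ projector lists $\{\Pi_{h_c}(x_1),\dots,\Pi_{h_c}(x_n)\}_{c=1}^m$ to the algorithm of Theorem~\ref{th:theoavsh0} yields estimates $\hat\mu_c$ with $\max_c|\hat\mu_c-\hat\nu(h_c)|\le\epsilon$ and failure probability at most $\delta/2$. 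The learner then outputs, for each $h\in\C$, the risk estimate $\mu(h):=1-\hat\mu'$, where $h'\in\C_\epsilon$ is a net element with $\lVert h-h'\rVert_{1,\infty,\vec{x}}\le\epsilon$ and $\hat\mu'$ is its associated estimate. Running Theorem~\ref{th:theoavsh0} on $m=\gamma_{1,\infty}(n,\epsilon,\C)$ lists costs $n=\tilde O(\log d\,\log\tfrac1\delta\,\log^2(e\gamma_{1,\infty})/\epsilon^5)$, so for fixed $\epsilon,\delta,d$ the hypothesis $\log^2\gamma_{1,\infty}(n,\epsilon,\C)/n\to0$ ensures this budget is met once $n$ is large enough.

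The error then decomposes, using $R_\rho(h)=1-\nu(h)$ and the triangle inequality, as
\begin{equation}
|\mu(h)-R_\rho(h)|=|\hat\mu'-\nu(h)|\le|\hat\mu'-\hat\nu(h')|+|\hat\nu(h')-\hat\nu(h)|+|\hat\nu(h)-\nu(h)|.
\end{equation}
The first term is at most $\epsilon$ by the estimation guarantee. The second is a Lipschitz/covering bound: by H\"older's inequality and $\lVert\rho(x_i)\rVert_1=1$,
\begin{equation}
|\hat\nu(h')-\hat\nu(h)|\le\frac1n\sum_{i=1}^n\lVert\Pi_{h'}(x_i)-\Pi_h(x_i)\rVert_\infty=\lVert h'-h\rVert_{1,\infty,\vec{x}}\le\epsilon,
\end{equation}
which is exactly why the $q=\infty$ (operator-norm) covering number is the correct resolution measure for projector-valued classes.

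I expect the main obstacle to be the third term, the uniform-convergence error $\sup_{h\in\C}|\hat\nu(h)-\nu(h)|$ over the possibly infinite class. Each summand $\Tr[\rho(x_i)\Pi_h(x_i)]\in[0,1]$ is a bounded i.i.d.\ function of $x_i$, so a standard empirical-process argument---symmetrization against a ghost sample, discretization onto an $\lVert\cdot\rVert_{1,\infty,\vec{x}}$-net, Hoeffding, and a union bound over that net---produces a tail of order $\gamma_{1,\infty}(n,\epsilon,\C)\,e^{-cn\epsilon^2}$; the delicate point, that the net itself depends on the random $\vec{x}$, is precisely what the worst-case-over-$\vec{x}$ definition of $\gamma_{1,\infty}$ in Definition~\ref{defcov} is designed to absorb. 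Since $\log^2\gamma_{1,\infty}/n\to0$ implies $\log\gamma_{1,\infty}/n\to0$, this tail vanishes and the third term is at most $C\epsilon$ with probability at least $1-\delta/2$ for $n$ large enough. Conditioning on $\vec{x}$ keeps the measurement randomness internal to Theorem~\ref{th:theoavsh0} independent of the sampling randomness governing the third term, so a union bound over the two failure events gives $|\mu(h)-R_\rho(h)|\le(2+C)\epsilon=:\xi\epsilon$ for all $h\in\C$ simultaneously with probability at least $1-\delta$, establishing \eqref{eq:ST}.
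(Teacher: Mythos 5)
Your proposal is correct and follows essentially the same route as the paper's proof (Theorem~\ref{thm:minprojector} together with Lemma~\ref{th:uniconv}): build a data-dependent $\epsilon$-net of $\C$ in the $\lVert\cdot\rVert_{1,\infty,\vec{x}}$ pseudometric of size at most $\gamma_{1,\infty}$, run the ERE algorithm of Theorem~\ref{th:theoavsh0} on the net, and bridge empirical to true risks by uniform convergence whose tail is controlled by $\Gamma_1\leq\gamma_{1,\infty}$. The only cosmetic differences are that you re-derive the uniform-convergence step via symmetrization where the paper cites Theorem~\ref{convergence_proj}, and your triangle inequality discretizes at the empirical-risk level rather than the true-risk level; both yield the same $\xi\epsilon$-type guarantee.
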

In words, we can not only find the minimum possible risk in the concept class, but also simultaneously estimate risks for all concepts. In fact, when $\C$ is finite, algorithm $\mathcal{A}$ performs shadow tomography of classical-quantum states with a copy complexity of $\tilde O(\mathrm{poly}(\log |\mathcal{C}|,\log d,1/\epsilon))$, where $d$ is the dimension of the quantum register only. By contrast, a naive application of shadow tomography~\cite{Aaronsonshadow20} has a copy complexity scaling as the dimension of the full space, which may be infinite if $|{\cal X}|=\infty$. 
\textit{Remark}: We mention that the algorithm of the aforementioned Theorems performs ERM correctly with high probability on \textit{any} possible data-sequence, { and not just on a subset of sequences which occur with high probability.} If one is willing to accept that sometimes ERM can fail with non-negligible probability, learning becomes possible when $\lim_{n\rightarrow \infty}\frac{\log \gamma_{1,\infty}(n,\eps,\mathcal{C})}{n}=0$, { rather than $\lim_{n\rightarrow \infty}\frac{\log^2 \gamma_{1,\infty}(n,\eps,\mathcal{C})}{n}=0$}. We argue for this in Appendix~\ref{appC}, by presenting a modification of the algorithms that perform ERM correctly only on a certain $\epsilon$-net on a subset of the data, with high probability. We illustrate this fact only for risk minimization for a projector-valued class, but the same can be shown also for risk estimation for a projector-valued class, and for risk minimization for a state-valued class.

\subsubsection{Examples of learnable classes}
In Section~\ref{sec:ex_covering}, we evince the applicability of our results by giving examples of quantum processes covered by our Theorems, for which we compute explicit upper bounds on the covering number $\gamma_{1,q}$ in terms of the dimensionality of the quantum systems and the \textit{fat-shattering dimension} of an appropriate concept class $\mathcal{F}$.
In all of the below examples, let $\mathcal{F}$ be a class of real-valued functions $g(x)$.
We build our concept classes using finite dimensional circuits and matrix functions that depend on the data via real functions $g(x)$ coming from concept classes $\mathcal{F}$. 
In fact, for the function classes with explicit data dependence that we consider, the covering number can slowly grow to infinity, but our Theorems show they are still learnable.\\ 

\textbf{(1) Quantum circuits in a particular architecture, possibly data-dependent.}
Concept classes consisting of $m$-qubit quantum circuits chosen from a set $S_m$ acting on arbitrary input states or projectors. That is, $\C_m = \{c_U(x) := U\rho(x) U^{\dag}\}_{U\in S_m}$ where $\rho(x)$ is a process that prepares the input state to the circuit, or $\C_m = \{c_U(x) := U\Pi(x) U^{\dag}\}_{U\in S_m}$, where $\Pi(x)$ is a projector. We give explicit upper bounds on the covering number for $S_m$ being\\
(a) the class of one-dimensional local quantum circuits on $m$ qubits of depth $\ell$ \cite{Brandao2016}, constructed by applying $\ell$ $2$-qubit nearest-neighbor gates on any pair of neighboring qubits,\\
(b) brickwork quantum circuits,\\
(c) the set of all unitaries,\\
(d) data-dependent circuits with any of the previous architectures, but modified by inserting in specific places in the sequence a number $\ell'$ of gates of the form $e^{i H_{j}g_{j'}(x)},\, j'=1,...,\ell'$, $H_{j'}$ fixed.\\

\textbf{(2) Gibbs states and low-energy projectors of a perturbed Hamiltonian.} Concept classes given by a set of Gibbs states obtained by perturbing a Hamiltonian with a field-dependent term, but the specific dependence on the field is not known, namely $H_0+g(x)V$. 
Similarly, concept class of projectors on low-energy eigenspace of $H_0+g(x)V$.\\

\textbf{(3) Phase-shifts with position-dependent depth. }
 Consider a spatially local channel that acts as a power of an unknown unitary channel at position $x$, according to some classical variable $g(x)$ at that position (for example a thickness), which we probe with some position-dependent state $\rho(x)$. 
 We can model this class as a state-valued (or projector-valued, when $\rho(x)$ are pure) function class:
$\mathcal C=\{f|f(x)=U^{g(x)}\rho(x){(U^\dagger)}^{g(x)}\,\, g\in\mathcal G, U=e^{i H}\in \mathrm{U}(d)\}.$
\\

\subsection{Related work}
There is a vast literature on how to learn quantum states and their properties. The exponential cost of full state tomography motivated to consider the problem of `pretty-good tomography'~\cite{Aaronson2007}, in which the goal of the learner is relaxed: she is content with obtaining a predictor function for \textit{properties} of the state instead of a full description of it. A similar task, consisting of estimating expectation values of a fixed list of observables, is known by the name \textit{shadow} tomography~\cite{Aaronsonshadow20, aaronson2019gentle,BO21}. Even though we aim to learn processes and not quantum states, we have leveraged the tools developed in the shadow tomography literature for our work. 

In this work, we delve into the considerably less studied realm of learning quantum \textit{processes} that have a classical input variable and a quantum output. As mentioned, this model is motivated by the observation that many processes in nature leave a classical \textit{imprint} of the conditions under which they occur, which the learner can detect with experimental equipment, but not observe. Our goal is to build a statistical learning theory for learning quantum processes that takes into account the unique structure of quantum measurements. 

Several previous works have investigated similar directions. Ref.~\cite{ChungLin21} first introduced the notion of Probably Approximately Correct (PAC)-learning of quantum channels, which captures our notion of lack of input control -- but solved the problem exclusively in the realizable setting. Their algorithms however have sample complexity scaling polynomially in $\log(|\C|)$, and thus cannot be applied when the concept class $\C$ is infinite. Our paper encompasses their setting (and improves their result for pure states) and solves the agnostic setting of channel learning. We also define an extensive measure of the concept class -- the covering number -- which controls learnability and can be finite even when $\C$ is infinite. The work~\cite{caro2021binary} addresses this aspect for the case of binary functions with quantum output (which are two states corresponding to 0 and 1 respectively), finding that Helstrom measurements are enough to get an upper bound on the sample complexity depending on the VC dimension of the associated classical function class.

Ref.~\cite{huang2022quantum} also considers agnostic learning for polynomial-time quantum processes. However, the paper models each process as an experiment over which the experimenter has full control. Accordingly, the learner runs the process on identical copies of the input in order to implement the hypothesis selection procedure (Section F.3.b). On the other hand, we are interested in the setting where the experimenter must contend with non-identical copies of the input. Additionally, while the model in~\cite{huang2022quantum} applies to channels with bounded input and output dimensions, where one can always find an $\epsilon$-net directly for the concept class (by its compactness), in our case instead we are interested in learning correlations between a classical and a quantum random variable, with the classical variable living possibly in an unbounded space. In this case, one cannot always define a finite $\epsilon$-net on the sources, but we show that learning can be still guaranteed in terms of $\epsilon$-nets of the datasets formed by pairs of classical and quantum variables. 

 Ref.~\cite{Huang21IT} also works in a PAC setting for learning quantum channels acting on classical inputs drawn from a distribution, but their notion of learning is to find a good predictor function (mapping to real values) for the expected value of a fixed set of observables on the output of an unknown quantum process. The focus of their work is whether quantum machine learning algorithms can have a large advantage over classical machine learning for various notions of prediction error. By contrast, our notion of learning is different and is not about predicting observables accurately. Furthermore, our work provides constructive quantum algorithms that achieve the sample complexity bounds we mention, while no such algorithms are presented in this other work. 
 
Ref.~\cite{Banchi2021b} also considered the problem of learning classical-quantum processes, providing generalization bounds in terms of entropic quantities. However, their setting is closer to a discrimination problem: the task is to find, among a set of possible projectors, one that best approximates joint distributions of classical data $x$ and a set of labels $c$, when measured on quantum states $\rho(x)$ representing embedding of $x$. Our setting is different, as their projectors do not depend on the specific value of $x$; moreover, their bounds on the generalization error depend on the source and not on the concept class, capturing a different aspect. 

Other relevant works on learning quantum processes are~\cite{PRXQuantum.4.040337}, which deals with estimating local properties of the output of an unknown circuit, with guarantees on the average error for certain distribution of input states, \cite{caro2023learning}, which shows how to efficiently learn the matrix elements of a channel in the Pauli basis, as a generalization of shadow tomography, and \cite{Aharonov2022, Chen22, 
Chen21}, which studies how to estimate and test properties of an unknown channel with or without quantum memory. 

Finally, the possibility of having generalization bounds for various loss functions evaluated on quantum circuits, using Rademacher complexities and covering numbers, has been considered extensively, especially with quantum machine learning applications in mind~\cite{PhysRevA.105.062431,Cai_2022, Popescu2021, Caro2021encodingdependent, PhysRevLett.128.080506, Gyurik2021, Caro2022, Rosati2022}. Relations between different combinatorial dimensions that cater to different modes of learning quantum states have also been explored in \cite{AQS21}. Our bounds on the covering numbers of processes described by quantum circuits are technically analogous to these existent results, which are devised for a different scenario with full control over the inputs. Indeed, such generalization bounds are valid if the sample complexity is defined as the number of unique data-point pairs $(x_i,\rho(x_i))$ seen by the learner, without restrictions on the number of times the pair $(x_i,\rho(x_i))$ can be actually produced (assuming that one is able to minimize the average of the loss function in some efficient way for every finite dataset). 
This analysis is not generically applicable in our case, since we do not have full control of the source and a good model for explaining the sample data cannot be in general inferred from finite samples. Obviously, since the covering numbers we consider are not constructed directly from the loss function but are dependent on the input data, they are generally worse in performance than those obtained for the cases with controlled input.

 \subsection{Learning algorithms: technical overview}

{In this section we explain the ideas behind our algorithms and their proofs (see also Figure~\ref{fig:summarytech}).}

\subsubsection{Threshold search for general product states (summary of Section~\ref{sec:trnoniid}) }\label{sec:introtr}

The key technical tool we establish is a generalization of the \textit{Threshold Search} algorithm in~\cite{BO21}, { which takes as input many identical copies of a state $\sigma$, a collection of pairs consisting of a projector and a threshold $(\Pi^{(c)},\theta_c)$, $c=1,...,m$ and reports if there is one projector whose average exceeds the threshold.}

In our generalization, the algorithm takes as input { only \textit{a single state} $\varrho$ --} a product (of possibly \textit{non-identical}) states, $\varrho=\otimes_{i=1}^{n}\rho_{i}$ -- {and a collection of pairs, each comprising a list of projectors and a threshold 
 $(\{\Pi^{(c)}_i\}_{i\in[n]},\theta_c)$; as output, the algorithm reports if there is one projector list whose average on the sample exceeds the threshold}: $\frac{1}{n}\sum_{i=1}^n\Tr[\rho_{i}\Pi^{(c)}_{i}]\geq \theta_c-\epsilon$.
Lemma~\ref{lem:threshold} gives the detailed formulation of this result. { In the following we outline its main technical steps.} 
For each list of projectors $\{\Pi^{(c)}_i\}_{j\in[n]}$, one can construct the collective projectors that accept at least $k$ times, denoted as $\{E^{(c)}_k\}_{k=0}^{n}$, 
\beq
E^{(c)}_k:= \sum_{\vec v\in \{0,1\}^n: ||\vec{v}||_1=k} \bigotimes_{i=1}^{n}(\Pi^{(c)}_i+v_i(\mathbf{1}-2\Pi^{(c)}_i)).
\eeqc

When the projective measurement $\{E^{(c)}_k\}_k$ is executed on $\varrho$, the random variable associated to the outcome $k$, denoted as $\mathsf{T}^{(c)}$, is a sum of Bernoulli variables, $\mathsf{T}^{(c)}=\sum_{i=1}^n \mathsf{T}^{(c)}_i$, where  $\mathsf{T}^{(c)}_i\sim (\Tr[\rho_{i}\Pi^{(c)}_{i}],1-\Tr[\rho_{i}\Pi^{(c)}_{i}])$. Via standard Chernoff bound arguments, to distinguish between 
\begin{itemize}
\item $\frac{1}{n}\sum_{i=1}^n\Tr[\rho_{i}\Pi^{(c)}_{i}]\geq \theta_c$ and 
\item $\frac{1}{n}\sum_{i=1}^n\Tr[\rho_{i}\Pi^{(c)}_{i}]\leq \theta_c -\epsilon $
\end{itemize} for a given $c$, it is sufficient to measure the binary event $\sum_{t >n( \theta_c -\epsilon/3)} E^{(c)}_t$, obtaining the correct answer with high probability with $n=O\left(\frac{1}{\epsilon^2}\right)$. However, after one of these measurements is applied $\varrho$ can change significantly, { so that the method of checking each threshold by measuring the threshold's accompanying projective measurement will not work naively. Therefore we construct a more clever measurement that is ``gentler": each threshold is associated with one such measurement. One performs the list of measurements sequentially on the same state and the measurement disturbs the state in a controlled way when it rejects. We now explain how such measurements are constructed in more detail.}

From the projectors  $\{E^{(c)}_k\}_k$, one can construct the events $B_c$:

\begin{equation}
B_c := \sum_{t=1}^n \Pr(\mathsf{X} + t > \theta n ) E^{(c)}_t,
\end{equation}
 where $\mathsf{X}$ is an exponential random variable, $p_{\lambda}(\mathsf{X}=x):=\lambda e^{-\lambda x}$, for some $0<\lambda<1$. 
 When $B_c$ is measured on $\varrho$, the probability of accepting is $\Pr(\mathsf{X} + \mathsf{T}^{(c)} > \theta n )$. The exponential random variable has the role of smoothing the event $B_c$, so that the probability of accepting is still exponentially suppressed in $n$ when $\frac{1}{n}\sum_{i=1}^n\Tr[\rho_{i}\Pi^{(c)}_{i}]\geq \theta_c-\epsilon$, but at the same time, when $B_c$ rejects, the state $\varrho$ does not change a lot. More explicitly, we can show in Theorem~\ref{thm:chi-stable-final} that for $\lambda= \frac{1}{D\sqrt{n}}$ and some constants $D>0$:

\begin{align}\label{eq:boundexp0}
\Tr[\varrho B_c] \leq e^{\left(-\frac{\sqrt{n}}{D}\left(\theta-\frac{1}{n}\sum_{i=1}^n\Tr[\rho_{i}\Pi^{(c)}_{i}]\right)\right){\color{black}+e/(2D^2)}},
\end{align}
and, if $\Tr[\varrho B_c]<1/4$, with $\varrho|_{\sqrt{\mathbf{1}-B_c}}:=\tfrac{\sqrt{\mathbf{1}-B_c}\varrho\sqrt{\mathbf{1}-B_c}}{\Tr[\varrho(\mathbf{1}-B_c)]}$ being the post-measurement state conditioned on rejecting $B_c$, for some constant $C>0$:
\begin{align} \label{eq:bures0}
d_{\mathrm{tr}}\left(\varrho,\varrho|_{\sqrt{\mathbf{1}-B_c}}\right) \leq C \Tr[\varrho B_c].
\end{align}

Note that the last equation is a consequence of the Gentle Measurement lemma for the Fidelity (e.g., Proposition 2.2 in~\cite{BO21}), together with a bound on the $\chi^2$-divergence of the distributions of $\mathsf{X} + \mathsf{T}^{(c)}$ and $\mathsf{X} + \mathsf{T}^{(c)}$ conditioned on $\mathsf{X} + \mathsf{T}^{(c)}\leq \theta n$.

{ Based on the previous definitions,} the algorithm (Algorithm~\ref{alg:threshold}) for threshold search consists in measuring the events $B_c$, $c=1,...,m$ in sequence until a first acceptance $c^*$ occurs. { Then } $c^*$ is declared as the index of the list of projectors with expectation value above threshold. Lemma~\ref{lem:threshold} shows that $n=O(\frac{\log^2(e m)}{\epsilon^2})$ is a sufficient condition for Algorithm~\ref{alg:threshold} to succeed with probability larger than $0.03$. The proof is based on a quantum union bound for sequential measurement proved in~\cite{BO21}.

 In fact, our contribution is to note that the proofs of~\cite{BO21} can be adapted to the case of non-identical states and projectors, using the fact that the concentration properties of the Poisson binomial distribution are sufficient to reproduce the same argument of~\cite{BO21} with some appropriate adjustments. We do not make additional comments here on the proof, and we redirect to~\cite{BO21} for additional explanations on the idea behind the algorithm and remarks on its connections with adaptive data analysis.

\subsubsection{Quantum empirical risk minimization (summary of Section~\ref{sec:Empirical risk minimization})  }
\label{sec:introermp}

{\textbf{Projector-valued concept classes (Theorem~\ref{th:theoere0} and Theorem~\ref{th:theoavsh0}) ---} 
{ The main idea that lets us perform ERM for projector-valued functions is to observe that, in this case, the empirical risk is the average of expectation values of a list of projectors. Hence, we can directly apply the techniques of Sec.~\ref{sec:introtr} to check if empirical risks are above or below a threshold. Furthermore, for our purposes, it is sufficient to consider concept classes of finite cardinality.} Thus, in the following we assume that our concept class can be described as $m$ lists of projectors $\{\Pi^{(c)}_i\}_{i\in[n]}$, $c\in[m]$. Our Algorithm~\ref{algo:proj2} for ERM works as follows.

{(I) \em Sampling-without-replacement step:}
given an input $\varrho=\otimes_{i=1}^{n}\rho_{i}${ a sufficiently large number of random batches $\varrho_s=\otimes_{k=1}^{l}\rho_{s,k}$ of size $l$, obtained by sampling without replacement from the list $i=1,...,n$. In the same way, one selects corresponding lists of projectors $\{\Pi^{(c)}_{s,k}\}_{k\in[l]}$ from the initial list $\{\Pi^{(c)}_j\}_{j\in[n]}$, for all $c$. 

{ Via an argument based on concentration inequalities for sampling without replacement, we show that this procedure has the effect} that the empirical risk on each batch $\varrho_s$, i.e., $\frac{1}{l}\sum_{j=1}^l\Tr[\rho_{s,j}\Pi^{(c)}_{s,j}]$, is $\epsilon$-close to that on the full training set $\varrho$. The precise relation between the batch size $l$, the number of batches, the batch-risk approximation error $\epsilon$, and the probability of error of this procedure is given in Lemma~\ref{lem:multisamplerep}.}

{(II) \em Binary search for optimal threshold:}
the empirical risks take values in $[0,1]$, therefore we start with a candidate minimum empirical risk $0.5$, and use Algorithm~\ref{alg:threshold} { on the first batch} to check if there is a candidate below this threshold. If we find one, we guess a new minimum empirical risk at $0.25$, otherwise we guess $0.75$ (these are approximate values, see the technical treatment for the details). {We iterate this procedure via binary search, using a new batch for each new threshold, and} selecting the upper or lower halves of the candidate interval, until we get to approximate the value of the empirical risk with precision $\epsilon$: we will terminate with a number of step $O(\log\frac{1}{\epsilon})$. 

Since Algorithm~\ref{algo:proj2} is guaranteed to work only with a constant probability, we may need to repeat it a certain number of times to ensure we are able to get a candidate below threshold with high probability, if there is one. Moreover, we also need to check that the candidate risk is indeed below threshold, which is done by a further measurement of the corresponding empirical average. 

{ Crucially, since in all of these steps we cannot reuse batches $\varrho_s$, the empirical risks must be close between different batches,} which is guaranteed by the sampling-without-replacement step. By a union bound on the probability of the several types of error, we can obtain our guarantee for ERM, Theorem~\ref{th:theoere0}. 
\\

{ A similar approach can be used to obtain Theorem~\ref{th:theoavsh0},} viewing ERE as a generalization of \textit{shadow tomography}. Shadow tomography~\cite{Aaronsonshadow20} consists of the following task: given $n$ copies of an unknown quantum state $\sigma$, and a list of projectors $\Pi_1,\ldots \Pi_m$, output approximations of $\Tr(\rho \Pi_c) \,\, \forall c\in [m].$ The Threshold Search algorithm of~\cite{BO21} was indeed used to obtain { the state-of-the-art sample complexity for shadow tomography}. 

The idea is to keep a list of candidate intervals for the true values of $\Tr(\sigma \Pi_c)$ for each $c$, and use their extremes as thresholds in the Threshold Search algorithm, to be run with projectors $\Pi_1,\ldots \Pi_m$ as well as $\mathbf{1}-\Pi_1,\ldots \mathbf{1}-\Pi_m$. If one of the true values of $\Tr(\sigma \Pi_{c})$ is far from the candidate values, one such $c^*$ with this property will be found after a certain number of attempts, with high probability. If this happens, the list of candidates for the true expectation values is updated. Given a specific way to compute the candidate intervals, the process is guaranteed to terminate with intervals of size $O(\epsilon)$, containing the true values, after $\tilde O(\frac{\log d}{\epsilon^3})$ rounds~\cite{Aaronsonshadow20}.

Our Algorithm~\ref{algo:shadow} generalizes the i.i.d case with two main modifications, as follows.
\\

{(I) \em Looking for bad estimates in the general product states case:} 
with our generalization of Threshold Search for general product states and our sampling-without-replacement step, we can follow the same scheme for ERE: given a list of candidate values for $\frac{1}{n}\sum_{i=1}^n\Tr[\rho_{i}\Pi^{(c)}_{i}]$, search for $c^*$ such that the true value of $\frac{1}{n}\sum_{i=1}^n\Tr[\rho_{i}\Pi^{(c)}_{i}]$ is $\epsilon$-far from the candidate. 
\\

{(II) \em Update rule for the general product states case:}
we generalize Aaronson's update procedure. We stress that in both Aaronson's and our generalization this step can be carried out by a classical computer, even if the computation involves quantum states. 

In the original work, the estimates $\mu_{t,c}$ for the projector $\Pi_c$ at update step $t$ are obtained starting from $q$ copies of a maximally mixed state, that is from the state $\rho_0^{\ast}:=\left(\frac{\mathbf{1}}{d}\right)^{\otimes q}$.
At each step, the estimates $\mu_{t,c}$ are the expectation of the empirical average of $\Pi_c$ on $\rho_t^{\ast}$, which is also the predicted fraction of acceptances when $\Pi_c$ is measured on each subsystem $\mathcal{H}_i$ in the tensor product $\bigotimes_{i=1}^q\mathcal{H}_i$. 
If at step $t=1,...,T$ some $c^*$ is detected as associated to a prediction smaller (larger) than the true value, $\rho_{t}^{\ast}$ is obtained post-selecting $\rho_{t-1}^{\ast}$ on the event for which measuring $\Pi_c$ on each subsystem accepts a fraction of times slightly larger (smaller) than what previously predicted. Clearly, this makes the estimates progressively more accurate. 

For the general product states case, the main change is that we start from the following classical-quantum state at step $t=0$:

\beq
\rho_0^{\ast} := \left(\frac{1}{n}\sum_{i=1}^n \ketbra{i}{i}\otimes \frac{\mathbf{1}}{d}\right)^{\otimes q},
\eeq

where $\ketbra{i}{i}$ are orthonormal projectors of an auxiliary system.
At each step $\rho_{t}^{\ast}$ is used as a guess for the classical-quantum state

\begin{equation}\sigma=\left(\frac{1}{n}\sum_{i=1}^n \ketbra{i}{i}\otimes \rho_i\right)^{\otimes q},
\end{equation}

in the following sense.
Note that the quantities $1-R_{\varrho_{s}}(c)$ can be expressed as expectation values of projectors 

\begin{equation}
\Pi^{(c)}:=\sum_{i=1}^n\ketbra{i}{i}\otimes \Pi_{i}^{(c)}
\end{equation}

on $\sigma$, $1-R_{\varrho_{s}}(c)=\Tr[\Pi^{(c)}\sigma]$. The empirical average of $\Pi^{(c)}$ on $\mathcal{H}^{\otimes q}$ is $\overline{\Pi^{(c)}}:=\frac{1}{q} (\Pi^{(c)}\otimes \mathbf{1}\otimes ....\otimes  \mathbf{1}+\mathbf{1}\otimes \Pi^{(c)}\otimes ....\otimes  \mathbf{1}+\mathbf{1}\otimes \mathbf{1}\otimes ....\otimes \Pi^{(c)})$. It follows that  $\Tr[\Pi^{(c)}\sigma]=\Tr[\overline{\Pi^{(c)}}\sigma^{\otimes q}]$. $\rho_t^*$ is a guess for $\sigma^{\otimes q}$ in the sense that, at each step $t$, the estimates of the risks are $\mu_{c,t}=\Tr[\overline{\Pi^{(c)}}\rho_t^*]$.
The algorithm now runs exactly as Aaronson's. The key point to conclude is that

\begin{equation}
\rho_0^{\ast} = \rho_0^{\otimes q}=\frac{1}{d^q}{\sigma}^{\otimes q}+\left(1-\frac{1}{d^q}\right)\omega,
\end{equation}
for some positive semi-definite and trace-1 $\omega$. Therefore, with probability at least $\frac{1}{d^q}$, $\rho_0^{\ast}$ behaves as $\sigma^{q}$, which would accept all the post-selection updates with high probability. On the other hand, if some estimate is incorrect at time $t$, one can show that $\rho_t^{\ast}$ should reject with high probability. These two facts give rise to contradiction unless the number of necessary rounds is $\tilde O(\frac{\log d}{\epsilon^3})$.

The proof of Theorem~\ref{th:theoavsh0} is then obtained through a union bound over all possible sources of errors of Algorithm~\ref{algo:shadow}. We also note that, for projector-valued functions, the estimation of the empirical risks is similar to the \textit{diverse-state setting} for shadow tomography considered in~\cite{aaronson2019gentle}, { where the state is a general product state but the projectors do not change with the subsystems.} 
\\

{\textbf{State-valued concept classes (Theorem~\ref{theo:states_finite0}) ---} Here the loss function used is different from the projector-valued case: it is trace distance, and not overlap. This prevents us from immediately recycling the technique used previously, which only works to estimate expectation values. 

Also in this case, it will suffice for the moment to consider finite cardinality classes, described by lists of states $\{\sigma_c(s)\}_{s\in[n]}$, $c\in[m]$. Equivalently, we can encode this information into a collection of classical-quantum states $\sigma_c=\frac{1}{n}\sum_{k=1}^n \ketbra{s}{s} \otimes \sigma_{c}(s)$, while the information about the input $\varrho=\otimes_{s=1}^{n}\rho(s)$ is encoded in the classical-quantum state $\sigma=\frac{1}{n}\sum_{k=1}^n \ketbra{s}{s} \otimes \rho(s)$. Then, the empirical risk for $\varrho$ and the hypothesis $\sigma_c(s)$ is simply $d_{\mathrm{tr}}(\sigma,\sigma_c)$.

The key idea is again inspired by~\cite{BO21}, which used it { for a task named} {\em quantum hypothesis selection}. They used the following observations: 
\begin{itemize}
\item The trace distance between pairs of states $\sigma_i,\sigma_{j}$ can be written as the difference of two expectation values via Helstrom's theorem. Indeed, defining $A_{ij}(s):=\left(\sigma_{i}-\sigma_j\right)_+$, where $(\cdot)_+$ is the projector on the positive part of the argument, we have that
\begin{equation}
    d_{\rm tr}(\sigma_i,\sigma_j)=\Tr(\sigma_i A_{ij}) - \Tr(\sigma_j A_{ij}).
\end{equation}
\item If a state $\rho$ is close to $\sigma_k$, 
 so are the expectation values. 
{ A good guess for $\rho$ is then the state $\sigma_{k^*}$ such that
\begin{equation}
   k^* = \underset{k\in[m]}{\mathrm{argmin}} \max_{i<j}|\Tr[\rho A_{ij}]-\Tr[\sigma_{k} A_{ij}]|.
\end{equation}
Indeed, suppose that $i^*$ is such the true minimizer of $d_{\mathrm{tr}}(\sigma_{i^*},\rho)\leq \eta$. }Then we have that
\begin{align}
    &\max_{i<j}|\Tr[\rho A_{ij}]-\Tr[\sigma_{k^*} A_{ij}]|\\
    &\leq \max_{i<j} |\Tr[\rho A_{ij}]-\Tr[\sigma_{i^*} A_{ij}]|\leq \eta
\end{align}
and via triangle-inequalities 
\begin{align}
    &d_{\rm tr}(\sigma_{k^*},\rho)
    \leq d_{\rm tr}(\sigma_{k^*},\sigma_{i^*}) + d_{\rm tr}(\sigma_{i^*},\rho) \\
    &= \left|\Tr(\sigma_{k^*}A_{k^*i^*}) - \Tr(\sigma_{i^*}A_{k^*i^*})\right| + \eta \\
    &\leq \left|\Tr(\sigma_{k^*}A_{k^*i^*}) - \Tr(\rho A_{k^*i^*})\right| \nonumber\\
    &+ \left|\Tr(\rho A_{k^*i^*} - \Tr(\sigma_{i^*}A_{k^*i^*})\right| + \eta \leq 3\eta.
\end{align}
\item Finally, if $\Tr[\rho A_{ij}]$ are known only with precision $\epsilon$, the above procedure is robust, and allows to find $k^*$ s.t. $d_{\mathrm{tr}}(\rho,\sigma_{k^*})\leq 3\eta +2\epsilon$ from approximate estimation of the expectation values of the Helstrom projectors.
\end{itemize}

Our crucial observation is that for the classical-quantum states $\sigma_c$, the Helstrom projectors are $A_{ij}:=\sum_{s\in[n]}\ketbra{x}{x}\otimes A_{ij}(s)$, with $A_{ij}(s):=\left(\sigma_i(s)-\sigma_j(s)\right)_+$. In turn, their expectation values can be estimated via our ERE algorithm for the lists of projectors $\{A_{ij}(s)\}_{s\in[n]}$, $i<j$, given $\varrho$  as input. This reduces our procedure for learning with state-valued functions to a post-processing of Algorithm~\ref{algo:shadow} for ERE for projector-valued functions and proves Theorem~\ref{theo:states_finite0}.

\subsubsection{Statistical learning for classical-quantum processes (summary of Section~\ref{sec:statlearn})  }
\label{sec:introstatlearn}

Building on the ERM algorithms just reviewed, we are able to give sufficient conditions for the minimization of the true risk in both cases of projector- and state-valued functions.
The algorithm works as follows.
\begin{itemize}
\item From the knowledge of the classical variables $x_i$ for all $i$, we can construct an $\epsilon$-net of the function class using the appropriate pseudometric: for two projector-valued functions $\Pi^{(c)}(x)$ and $\Pi^{(c')}(x)$, their pseudodistance on the data is $\frac{1}{n} \sum_{i=1}^{n}||\Pi^{(c)}(x_i)-\Pi^{(c')}(x_i)||$, while for two state-valued functions $\sigma_c(x)$ and $\sigma_{c'}(x)$, their pseudodistance on the data is $\frac{1}{n}\sum_{i=1}^{n}d_{\mathrm{tr}}(\sigma_c(x_i),\sigma_{c'}(x_i))$.
The cardinality of the $\epsilon$-net will be bounded by the $\gamma_{1,\infty}$ covering number in the case of projectors and by the $\gamma_{1,1}$ covering number in the case of states (see Definition~\ref{defcov}).
\item We then run ERM on $\varrho=\otimes_{i=1}^{n}\rho(x_i)$ using as concept class the $\epsilon$-net found at the previous step. 
\end{itemize}

{ Let us observe that our ERM algorithms are guaranteed to work if the covering numbers $\gamma_{1,q}$ grow slowly with $n$, as they take the place of $m$ in Theorems~\ref{th:theoere0},~\ref{th:theoavsh0} and \ref{theo:states_finite0}. On the other hand, via classical statistical learning theory, uniform convergence of the estimated empirical risk to the true risk is controlled by the covering numbers of the loss function, which depend on the unknown states $\rho(x)$. 

Nevertheless, we are able to show that these covering numbers can be bounded by $\gamma_{1,q}$, which does not depend on $\rho(x)$ and is in principle computable by the learner. Therefore, we get a sufficient condition, independent of the data, for learning in terms of the growth of $\gamma_{1,q}$ with $n$, proving Theorem~\ref{theo_ERM}.} Similarly, Theorem~\ref{theo_shadow} can be obtained by running the ERE algorithm of Theorem~\ref{th:theoavsh0} and checking that uniform convergence is also satisfied when $\gamma_{1,\infty}$ grows slowly with $n$.

\subsection{Outlook}
Our goal is to learn Nature. In the past few years, quantum information processing tools have been fundamental to building a theory of learning states produced by quantum circuits. However, the circuit picture may not be the most natural one to describe all quantum mechanical processes. Our work represents a first incursion into the territory of developing a statistical learning theory for physical processes beyond quantum circuits. Our intention is to build a general theory of what we can quantum-learn, going beyond the terra firma of quantum circuits and what they can model, reaching to characterization protocols in quantum mechanics that may be best expressed outside the circuit model, for instance, metrology, sensing, calibration and verification, and eventually building a unified foundation for designing physical experiments. 

\section{Preliminaries}

\subsection{Notation}
We will consider random variables $\mathsf X$ valued in the set $\mathcal X$, denoting as $\mathcal{D}(x)=\Pr_{x\sim\mathcal{D}}(\mathsf X=x)$ the probability that the random variable takes value $x\in \mathcal X$; accordingly, $\Pr_{x\sim\mathcal{D}}(E)$ is the probability of an event $E\subseteq \mathcal X$. We will denote the set $\{1,...,n\}$ as $[n]$. We denote the $n$-fold cartesian product of $\mathcal{X}$ as $\mathcal{X}^n$, elements in it as vectors $\vec x\in \mathcal{X}^n$, and we use the notation $|\vec{x}|:=n$ to refer to the length of the vector. $\mathcal{D}^n$ denotes the probability distribution of $\vec{x}$. We denote a Hilbert space of dimension $d$ by $\mathcal{H}^{(d)}$. We further denote by $\mathcal{L}(\mathcal{H}^{(d)})$ the set of linear operators on $\mathcal{H}^{(d)}$, and by $D(\mathcal{H}^{(d)}) \subseteq \mathcal{L}(\mathcal{H}^{(d)})$ the set of density matrices, that is, the subset of $\mathcal{L}(\mathcal{H}^{(d)})$ which is positive semi-definite and has unit trace. These matrices describe quantum states.
For brevity, the Hilbert space of $n$ qubits is denoted  $\mathcal{H}_n$. 

An arbitrary valid quantum operation on quantum
states can be expressed as a quantum channel $\Phi$, i.e., a completely positive trace-preserving map, and we denote the output of a channel applied to $\rho$ as $\Phi[\rho].$ A special case of quantum channels are unitary channels, defined from a \textit{unitary matrix} $U$ (meaning $UU^\dagger=U^\dagger U=\mathbf{1}$). An application of a unitary~$U$ to
the state $\rho$ results in the quantum state $U\rho U^\dagger$. Any quantum channel in finite dimension can be expressed in Kraus representation as $\Phi[\rho]=\sum_{i=1}^{m}K_i\rho K_{i}^{\dagger}$ for a suitable finite set of operators $\{K_i\}$. 

In order to extract \textit{classical} information out of a quantum state, one can perform a POVM (positive-operator valued measurement) which is specified by a set of $m$ positive semidefinite matrices $\{E_1, \ldots, E_m\}$ satisfying $\sum_i E_i=\mathbf{1}$. A measurement on a state $\rho$ using such a POVM returns a classical outcome $i \in [m]$ with probability~$\Tr[E_i \rho]$. For any operator $0\leq  E\leq \mathbf{1}$, we can define a two-outcome POVM $\{E,\mathbf{1}-E\}$, which we say that implements the measurement of the \textit{event} $E$, and the probability associated to $E$ is denoted as in the classical case as $\Pr(E)=\mathbb{E}_{\rho}[E]$. 
We will similarly use the notation $\mathbb E_{\rho}[A]:=\Tr[\rho A]$ to express expectation values of operators $A$. We will denote the standard deviation of a classical random variable $\mathsf{T}$ as $\operatorname{stddev}[\mathsf{T}]:=\sqrt{\mathbb{E}[(\mathsf{T}-\mathbb E[\mathsf{T}])^2]}$, where the expectation value is with respect to the probability distribution of $\mathsf{T}$.

We will focus on \textit{classical-quantum} states, i.e. states that can be written as $\rho=\sum_{x\in \mathcal X}\mathcal{D}(x)\ketbra{x}{x}\otimes \rho(x)$, where $\rho(x)$ are states of $\mathcal H^{(d)}$ and $\mathcal{D}$ is a probability distribution on $\mathcal X$, $\{\ket{x}\}_{x\in\mathcal{X}}$ are an orthonormal basis of a Hilbert space that we denote $\mathcal{H}_{\mathcal{X}}$. Note that for any operator $A\in \mathcal{L}(\mathcal{H}_{\mathcal{X}}\otimes \mathcal{H}^{(d)})$ and a classical-quantum state $\rho\in D(\mathcal{H}_{\mathcal{X}}\otimes \mathcal{H}^{(d)})$, $\mathbb E_{\rho}[A]=\sum_{x\in \mathcal X}\mathcal{D}(x) \Tr[A(x)\rho(x)]=\sum_{x\in \mathcal X}\mathcal{D}(x)\mathbb E_{\rho(x)}[A(x)] $, where $A(x):=\bra{x}A\ket{x}\in \mathcal{L}(\mathcal{H}^{(d)})$. For a set of $n$ operators on a Hilbert space $\mathcal H$, $A=\{A_1,...,A_n\}$, we denote as $\overline{A}$ the operator on $\mathcal{H}^{\otimes n}$ defined as $\overline{A}:=A_1\otimes \mathbf{1}\otimes ....\otimes  \mathbf{1}+\mathbf{1}\otimes A_2\otimes ....\otimes  \mathbf{1}+\mathbf{1}\otimes \mathbf{1}\otimes ....\otimes A_n$

\subsubsection{Distances between probability distributions, quantum states and quantum channels}
We consider the following distances, defined for probability distributions on discrete supports. Let $P,Q$ be two probability measures on the same support $\mathcal{X}$.  
\begin{itemize}
    \item \textbf{Total variation distance:}
    \begin{align}\label{eq:TV}
        TV(P,Q):= \sup_{A \subseteq \mathcal{X}} |P(A)-Q(A)| \nonumber\\= \frac{1}{2} \sum_{x\in \mathcal{X}} |P(x)-Q(x)|.
    \end{align}
    \item \textbf{Chi squared distance:}
    \beq
    d_{\chi^2}(P,Q) := \sum_{x \in \mathcal{X}} Q(x) \left(1-\frac{P(x)}{Q(x)}\right)^2 
    \eeqp
    \item \textbf{Bhattacharya Coefficient:} 
    \beq
    BC(P,Q) := \sum_{x\in \X}  \sqrt{P(x) Q(x)} 
    \eeqp
\end{itemize}

We will also need notions of continuity for matrices. First of all, we introduce some matrix norms. Let $M \in \mathbb{C}^{d\times d}$, then:

\begin{itemize}
\item The \textbf{trace norm} of $M$ is
\beq
\lVert M \rVert_1 := \Tr\left[\sqrt{M^{\dagger}M}\right] = \sum_{i=1}^{\text{rank}(M)} \sigma_i(M)
\eeqc
where $\sigma_i(M)$ are the singular values of $M$. 

\item For $p \in [1, \infty)$, the \textbf{Schatten $p$-norm} of $M$ is 
\beq
\lVert M \rVert_p := [\Tr((\sqrt{M^\dagger M}^p)]^{\frac{1}{p}},
\eeq
and $||M||_\infty=\lim_{p\rightarrow\infty}||M||_p$.
\item The \textbf{spectral norm} is the maximum singular value of $M$:

\beq
\lVert M \rVert := \max_{i \in [\text{rank}(M)]} \, \sigma_i (M),
\eeq
which coincides with $||M||_\infty$.

This is also the operator norm induced by the 2-norm for vectors:
\beq\label{eq:spec}
\lVert M \rVert = \max_{\lVert x\rVert_2 = 1, x \in \mathbb{C}^n} \lVert Mx \rVert_2
\eeqp
\end{itemize}
We will also use the following fact:
\begin{fact}\label{fact:1norm2norm}
For any two vectors $u,v \in \mathbb{C}^{n}$ (not necessarily normalized),
\beq
\lVert u v^\dag \rVert_1 = \lVert u\rVert_2 \lVert v \rVert_2
\eeqc
which can be verified by applying definitions. 
\end{fact}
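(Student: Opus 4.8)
The plan is to compute the singular values of the rank-one matrix $M := uv^\dagger$ directly and then invoke the definition of the trace norm as the sum of singular values. First I would dispose of the trivial case: if $u=0$ or $v=0$ then $M$ is the zero matrix and both sides of the claimed identity vanish, so I assume $u,v\neq 0$ from here on. In that case $M$ has rank one, so it has exactly one nonzero singular value, and the trace norm $\lVert M\rVert_1=\sum_i \sigma_i(M)$ reduces to that single value.

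To identify that singular value I would compute $M^\dagger M$. Using $(uv^\dagger)^\dagger = v u^\dagger$ and associativity of matrix multiplication, one gets
\begin{equation}
M^\dagger M = v u^\dagger u v^\dagger = (u^\dagger u)\, v v^\dagger = \lVert u\rVert_2^2\, v v^\dagger,
\end{equation}
where I have used that $u^\dagger u = \lVert u\rVert_2^2$ is a scalar. Now $vv^\dagger$ is itself a rank-one positive semidefinite matrix whose unique nonzero eigenvalue is $\lVert v\rVert_2^2$, with eigenvector $v/\lVert v\rVert_2$; this follows because $(vv^\dagger)v = v(v^\dagger v) = \lVert v\rVert_2^2\, v$. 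Hence the unique nonzero eigenvalue of $M^\dagger M$ is $\lVert u\rVert_2^2 \lVert v\rVert_2^2$, and the unique nonzero singular value of $M$ is its square root, $\sigma_1(M)=\lVert u\rVert_2\lVert v\rVert_2$. Summing over singular values then gives $\lVert uv^\dagger\rVert_1 = \lVert u\rVert_2\lVert v\rVert_2$, as claimed.

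An equivalent route, which I would mention as an alternative, is to exhibit the singular value decomposition explicitly: writing $\hat u = u/\lVert u\rVert_2$ and $\hat v = v/\lVert v\rVert_2$ as unit vectors, one has $uv^\dagger = \lVert u\rVert_2\lVert v\rVert_2\, \hat u \hat v^\dagger$, which is already in SVD form with left and right singular vectors $\hat u,\hat v$ and the single singular value $\lVert u\rVert_2\lVert v\rVert_2$. Since this fact is a completely standard identity with no delicate estimates, I do not anticipate any genuine obstacle; the only point requiring mild care is the bookkeeping around the degenerate case $u=0$ or $v=0$, and the observation that the full set of singular values of a rank-one operator consists of a single nonzero entry (all other $\sigma_i$ vanishing), so that the sum defining $\lVert\cdot\rVert_1$ collapses to one term.
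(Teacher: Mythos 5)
Your proof is correct and matches what the paper intends: the paper offers no argument beyond ``verified by applying definitions,'' and your direct computation of the singular values of the rank-one matrix $uv^\dagger$ (via $M^\dagger M = \lVert u\rVert_2^2\, vv^\dagger$, or equivalently the explicit SVD) is exactly that definitional verification, with the degenerate case $u=0$ or $v=0$ handled properly.
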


We will need in particular some measures of distance between quantum states and channels, defined as follows. For $\rho, \sigma$ two quantum states in $D(\mathcal{H}^{(d)})$:
\begin{itemize}
\item \textbf{Trace distance:} 
\beq
d_{\text{tr}}(\rho,\sigma) := \frac{1}{2} \lVert \rho - \sigma\rVert_1 = \frac{1}{2}\Tr\left[\sqrt{(\rho -\sigma)^{\dagger} (\rho - \sigma)}\right]
\eeqp
Note that in analogy to Eq.\eqref{eq:TV}, this is half of the trace norm of the difference between $\rho$ and $\sigma$.
\item \textbf{Fidelity:}
\beq
F(\rho,\sigma):= \lVert \sqrt{\rho} \sqrt{\sigma}\rVert_1 = \Tr\left[\sqrt{\sqrt{\rho}\sigma \sqrt{\rho}}\right]
\eeqp
\item \textbf{Bures distance:}
\beq
d_{\text{Bures}}(\rho,\sigma)^2 := 2(1-F(\rho,\sigma))
\eeqp
\end{itemize}

\subsection{Sequential measurements}
The algorithms we will develop make extensive use of sequential measurements, which require to specify what is the state of the quantum system conditioned on the outcome of a measurement. We stick to the usual convention, which states that if the outcome of a POVM $\{E_i\}_{i=1}^r$ is $i$, the state after the measurement is $\rho_{|\sqrt{E_i}}:=\frac{\sqrt{E_i}\rho\sqrt{E_i}}{\Tr[E_i\rho]}$. 
The following facts will be useful. \footnote{This selection is influenced by~\cite{BO21}, as our algorithms are heavily based on their work. We refer the reader to~\cite{BO21} for a more extended presentation. Note also that we refer to lemmas 2.5 and 2.6 appearing in the v2 of the arXiv post.}
\begin{itemize}
\item ~\cite{fuchs1995mathematical, BO21}  Let $P$ be the probability distribution on~$[r]$ determined by the measurement $\mathcal M=\{\Pi_1, \dots, \Pi_r\}$ on~$\rho$, and let $Q$ instead be the distribution determined by $\mathcal{M}$ on $\rho|_{\sqrt{A}}$, where $A=\sum_{i=1}^r a_i \Pi_i$ for some $a_i>0$.  Then
  \begin{equation}F(\rho, \rho|_{\sqrt{A}}) =\frac{\mathbb{E}_{\rho}[\sqrt{A}]}{\sqrt{\mathbb{E}_{\rho}[A]}}= BC(P,Q).\label{eq:27}
  \end{equation}
\item  \cite{sen2012achieving, wilde2013sequential} For each $i = 1,  \dots,  m$, let $\Pi_0^{(i)}$ be projectors and $\Pi^{(i)}_1 = \mathbf{1} - \Pi_0^{(i)}$. The probability $p$ of getting always outcome $1$ for sequential measurements $\{\{\Pi_0^{(1)},\Pi_1^{(1)}\},...,\{\Pi_0^{(m)},\Pi_1^{(m)}\}\}$ on the state $\rho$ satisfies
\begin{align}\label{eq:unionbound}
       p&=\mathbb{E}_\rho\left[({\Pi^{(1)}_1} \dotsm {\Pi^{(m)}_1})({\Pi_1^{(1)}} \dotsm {\Pi^{(m)}_1})^{\dagger}\right] \nonumber\\
       &\geq 1 - 2\sqrt{\sum_{i=1}^m \E_\rho[\Pi^{(i)}_0]}.
\end{align}
\item Lemma 2.5 in~\cite{BO21}: Let $E_1, \dotsc, E_m$ be events and consider a sequence of two-outcome POVMs implementing them, on the state $\rho$, so that the state post-selected on the events occuring is $\rho_{|\sqrt{E_{i-1}}\dotsm\sqrt{E_1}}=\frac{\sqrt{E_{i-1}}\dotsm\sqrt{E_1}\rho\sqrt{E_{1}}\dotsm\sqrt{E_{i-1}}}{\Tr[\sqrt{E_{i-1}}\dotsm\sqrt{E_1}\rho\sqrt{E_{1}}\dotsm\sqrt{E_{i-1}}]}$  Let $q_1=\mathbb{E}_\rho[E_1]$ and $q_i=\mathbb E_{\rho_{|\sqrt{E_{i-1}}\dotsm\sqrt{E_1}}}[E_i]$, for $i>1$, while $p_i=\mathbb E_{\rho}[E_i]>0$ for all $i \in [m]$. We have that denoting $p_{[k]}=\prod_{i\in [k]} p_i$, $q_{[k]}=\prod_{i\in [k]} q_i$, 
  \begin{align}\label{eq:damagelemma}
    |p_{[m]}- q_{[m]}|
    &\le 2 \cdot \sum_{i=1}^{m-1}
      p_{[i]} d_{\text{tr}}(\rho_{|\sqrt{E_i}}, \rho),
  \end{align}
\item Lemma 2.6 in~\cite{BO21}:
  With the same notation as the previous point, let $p_0 = 1$, $\rho_0 = \rho$, and
  $\rho_i = \rho_{i-1}{}_{|\sqrt{E_{i}}}$ for all $i \in
  [m]$.
 
   For all $t \in [m]$, let $s_t$ denote the
  probability of observing outcomes $E_1, \dotsc, E_{t-1}, I-E_{t}$. It
  holds that
  \begin{align}\label{eq:fid-ineq}
    1
    &\le \sqrt{q_m} F(\rho, \rho_m) + \sum_{i=1}^m \sqrt{s_i} \sqrt{1-p_i}.
  \end{align}

Morover, given a subset $\mathcal{B}\in [m]$ such that $1-p_i\leq (\eta/m)^2$ for any $i\in \mathcal{B}$, it follows that (see proof of Lemma 4.2 in~\cite{BO21}):

\begin{equation}\label{eq:unionapp}
1-\sqrt{q_m}F(\rho,\rho_m)-\eta\leq \sqrt{\sum_{i\in[m]/\mathcal B}s_i}\sqrt{\sum_{i\in[m]/\mathcal B}(1-p_i)}.
\end{equation}

\item Fix any POVM $\{E_1,...,E_n\}$, a classical random variable $\mathsf{X}$ with values in $\mathbb R$, and $\theta\in\mathbb{R}$. Furthermore, consider the classical random variable $\mathsf{T}$, defined on a fixed quantum state $\rho$, which has the distribution $\Pr(\mathsf{T}=t)=\mathbb{E}_{\rho}[E_t]$.

The quantum event 
\beq
B=\sum_{t=1}^n\Pr(\mathsf{X}+t> \theta)E_t
\eeq
is such that when measured on a state $\rho$, it is accepted with probability $\mathbb E_{\rho}[B]=\Tr[\rho B] =\Pr(\mathsf{X}+\mathsf{T}> \theta)$. Furthermore, by Eq.~\eqref{eq:27}, $B$ has the property that
\begin{equation}\label{eq:bccond}
F(\rho,\rho_{|\sqrt{1-B}})=BC(\T, (\T|\T+\mathsf{X} \leq \theta)).
\end{equation}
\end{itemize}

\subsection{Naive expectation estimation}
A fundamental result on the concentration of sums of random variables is the following multiplicative Chernoff bound, with $\mathsf{X}_1 \ldots \mathsf{X}_n$ independent  random variables and $0\leq\epsilon\leq1$:
\begin{align}\label{eq:Chernoff2}
\Pr\left(\sum_{i=1}^n \mathsf{X}_i\geq (1+\epsilon) \sum_{i=1}^n \mathbb{E}[\mathsf{X}_i]\right)&\leq e^{-\epsilon^2\sum_{i=1}^n \mathbb {E}[\mathsf{X}_i]/3}\\
\Pr\left(\sum_{i=1}^n \mathsf{X}_i\leq (1-\epsilon)\ \sum_{i=1}^n \mathbb{E}[\mathsf{X}_i]\right)&\leq e^{-\epsilon^2\sum_{i=1}^n \mathbb {E}[\mathsf{X}_i]/2}.\label{eq:Chernoff1}
\end{align}

A consequence of this bound is the following Proposition, related to what has been called \textit{naive expectation estimation} by \cite{BO21}, but with i.i.d. observables. We retain the name: 

\begin{proposition}[Naive expectation estimation]\label{estave}
The random variable $\mathsf{X}$ obtained by measuring the observable $\overline{\Pi}:=
\Pi_{1}\otimes \mathbf{1}\otimes ....\otimes \mathbf{1}+\mathbf{1}\otimes \Pi_2....\otimes \mathbf{1}+\mathbf{1}\otimes ....\otimes \mathbf{1}\otimes \Pi_n$ on $\varrho=\rho_1\otimes...\otimes\rho_n$ satisfies, for $\epsilon<1$,
\begin{equation}
\Pr_{\varrho}[|\mathsf{X}-\mathbb{E}_{\varrho}[\overline{\Pi}]|\geq n\epsilon]\leq 2 e^{-n\epsilon^2/3}.\label{eqchernoff}
\end{equation}

Note that $\mathbb{E}_{\varrho}[\overline{\Pi}]=\sum_{s=1}^n\mathbb E_{\rho_s}[\Pi_s]:=n(1- R(h))$. This implies that, for given $\theta$ and $\epsilon$, and $n\geq\frac{27}{\epsilon^2}\log\frac{2}{\delta}$ this measurement allows to distinguish the two cases
\begin{itemize}
\item $|R(h)-\theta|\leq \epsilon$,
\item $|R(h)-\theta|\geq 2\epsilon$,
\end{itemize}
with probability at least $1-\delta$.
\end{proposition}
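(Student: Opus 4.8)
The plan is to reduce the quantum measurement to a classical concentration problem and then apply the multiplicative Chernoff bounds \eqref{eq:Chernoff2} and \eqref{eq:Chernoff1}. The key structural observation is that the single-site events $\mathbf 1^{\otimes(i-1)}\otimes\Pi_i\otimes\mathbf 1^{\otimes(n-i)}$ act on disjoint tensor factors, hence mutually commute, so measuring $\overline\Pi$ is equivalent to measuring the two-outcome POVM $\{\Pi_i,\mathbf 1-\Pi_i\}$ on each subsystem and summing the outcomes. Because $\varrho=\rho_1\otimes\cdots\otimes\rho_n$ is a product state, these $n$ outcomes are \emph{independent} Bernoulli variables $\mathsf X_i$ with $\Pr(\mathsf X_i=1)=\Tr[\rho_i\Pi_i]=\mathbb E_{\rho_i}[\Pi_i]$. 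Thus $\mathsf X=\sum_{i=1}^n\mathsf X_i$ is a sum of independent $[0,1]$-valued variables (a Poisson-binomial variable) with mean $\mu:=\mathbb E[\mathsf X]=\sum_{i=1}^n\Tr[\rho_i\Pi_i]=\mathbb E_\varrho[\overline\Pi]$, which is exactly the quantity appearing in \eqref{eqchernoff}.

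Next I would convert the multiplicative bounds into the additive form of \eqref{eqchernoff}, using the crucial fact that $\mu\le n$ since $\mathsf X$ is a sum of $n$ variables bounded by $1$. For the lower tail, if $n\epsilon>\mu$ the event $\{\mathsf X\le\mu-n\epsilon\}$ is empty (as $\mathsf X\ge0$); otherwise applying \eqref{eq:Chernoff1} with relative deviation $\epsilon'=n\epsilon/\mu\le1$ gives $e^{-\epsilon'^2\mu/2}=e^{-n^2\epsilon^2/(2\mu)}\le e^{-n\epsilon^2/2}\le e^{-n\epsilon^2/3}$, where the penultimate step used $\mu\le n$. The upper tail $\{\mathsf X\ge\mu+n\epsilon\}$ is empty whenever $\mu+n\epsilon>n$; in the remaining range I apply \eqref{eq:Chernoff2} with $\epsilon'=n\epsilon/\mu$. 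When $\epsilon'\le1$ the same computation yields $e^{-n^2\epsilon^2/(3\mu)}\le e^{-n\epsilon^2/3}$, while when $\epsilon'>1$ one invokes the standard $\epsilon'>1$ variant of \eqref{eq:Chernoff2}, namely $e^{-\epsilon'\mu/3}=e^{-n\epsilon/3}\le e^{-n\epsilon^2/3}$ (the last inequality because $\epsilon<1$ implies $\epsilon\ge\epsilon^2$). A union bound over the two tails produces the factor $2$ and establishes \eqref{eqchernoff}.

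The only point requiring care is precisely this conversion: the multiplicative Chernoff bound controls deviations of size $\epsilon'\mu$, whereas \eqref{eqchernoff} asks for a deviation of size $n\epsilon$, so one must set $\epsilon'=n\epsilon/\mu$ and then separate the regimes $\epsilon'\le1$ and $\epsilon'>1$ (the latter needing the linear-exponent form of the bound) while discarding the ranges in which the tail event is emptied by the hard constraint $0\le\mathsf X\le n$. One could alternatively bypass the regime split entirely by invoking Hoeffding's inequality for independent $[0,1]$-valued summands, which gives the even stronger $2e^{-2n\epsilon^2}$; I keep the multiplicative route only to stay within the tools already introduced. Everything else is routine.

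Finally, for the distinguishing claim I would define the risk estimator $\hat R:=1-\mathsf X/n$, so that $\hat R-R(h)=-(\mathsf X-\mu)/n$ via the identity $\mathbb E_\varrho[\overline\Pi]=n(1-R(h))$. Applying \eqref{eqchernoff} with $\epsilon$ replaced by $\epsilon/3$ shows $|\hat R-R(h)|<\epsilon/3$ except with probability $2e^{-n\epsilon^2/27}$, which is at most $\delta$ exactly when $n\ge\frac{27}{\epsilon^2}\log\frac{2}{\delta}$. Conditioned on this event, a triangle inequality gives $|\hat R-\theta|<4\epsilon/3$ in the case $|R(h)-\theta|\le\epsilon$ and $|\hat R-\theta|>5\epsilon/3$ in the case $|R(h)-\theta|\ge2\epsilon$, so thresholding $|\hat R-\theta|$ at $3\epsilon/2$ separates the two cases with success probability at least $1-\delta$, as claimed.
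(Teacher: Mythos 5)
Your proposal is correct and follows essentially the same route the paper intends: the paper presents Proposition~\ref{estave} as a direct consequence of the multiplicative Chernoff bounds \eqref{eq:Chernoff2}--\eqref{eq:Chernoff1}, applied after observing that measuring $\overline{\Pi}$ on a product state yields a sum of independent Bernoulli outcomes with mean $\mathbb{E}_{\varrho}[\overline{\Pi}]$, exactly as you argue. Your careful handling of the multiplicative-to-additive conversion (including the regime $\epsilon'=n\epsilon/\mu>1$, which needs the linear-exponent variant of the Chernoff bound not explicitly stated in the paper) and of the final thresholding argument simply fills in details the paper leaves implicit.
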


\subsection{Growth functions and covering numbers: Measuring the effective size of a concept class}\label{sec:statlearnprel}
In classical statistical learning theory, 
conditions that guarantee the success of ERM to learn an unknown function in the concept class $\C$ are given in terms of effective measures of the size of $\mathcal{C}$. We will be interested in two such measures, the growth function and the covering number.

For a finite sequence $x_1,\cdots,x_n$ and a set of functions $\mathcal C=\{f_{\alpha}:\mathcal X\rightarrow \mathcal Y\}$, we can identify a set of equivalence classes given by grouping together the functions that give the same outcomes on the inputs $x_1\cdots,x_n$. Naturally, if $\mathcal Y$ is finite, the cardinality of this set is finite. However, even if $\mathcal{Y}$ is infinite, $\C$ could still be such that there are only finitely many equivalence classes. Indeed, for a fixed function class, the maximum number of equivalence classes induced by a sample of length $n$ is called the \textit{growth function} $G(n)$. A formal definition of the growth function is as follows~\cite{Anthony1999,Wolfnotes}:
\begin{defn}[Growth function\label{def:growthandrestriction}]
Let $\mathcal{C} \subseteq \mathcal{Y}^{\mathcal{X}}$ be a class of functions with target space $\mathcal{Y}$. For every subset $\Xi \subseteq \mathcal{X}$ define the \textbf{restriction of $\mathcal{C}$} to $\Xi$ as 
\beq
\mathcal{C} \mid_{\Xi}:=\{f \in \mathcal{Y}^{\Xi} \mid \exists F \in \mathcal{C}\, \forall x \in \Xi: f(x)=F(x)\}.
\eeq
The growth function $G$ assigned to $\mathcal{C}$ is then defined for all $n \in \mathbb{N}$ as
$$
G(n):=\max _{\Xi \subseteq \mathcal{X}:|\Xi| \leq n}|\mathcal{C}|_{\Xi} \mid .
$$
\end{defn}

By a slight abuse of notation, we will also write $\mathcal C|_{\vec x}$ for a concept class restricted to a domain given by the set of values appearing in $\vec x$. We will also want to define $\epsilon$-nets over classes of functions and will be interested in their size. 
\begin{defn}[Covering number $N_{in}$] 
For a set $\mathcal{M}$ and a pseudometric~\footnote{The difference between a pseudometric and a metric is that the former is allowed to be zero for pairs of distinct points.} $d:\mathcal{M}\times \mathcal{M} \rightarrow [0,c]$, let $(\mathcal{M}, d)$ be a pseudometric space, let the sets $A, B \subseteq \mathcal{M}$ and fix $\epsilon>0$. 

The set $A\subseteq B$ is an internal $\epsilon$-net of $B$ if $\forall b \in B, \, \exists a \in A: \,d(a, b) \leq \epsilon$. 

The $\epsilon$-covering number of $B$, denoted by 
\begin{equation}
    N_{in}(\epsilon, B, d),
\end{equation} is the smallest cardinality of any internal $\epsilon$-net of $B$.
\end{defn}

The first pseudometric that will be of interest to us is the one built on the $\lVert \cdot \rVert_{p,\vec{x}}$ seminorm. 

\begin{defn}[$\lVert \cdot \rVert_{p,\vec{x}}$ seminorms]
For any set $\mathcal{X}, \vec{x} \in \mathcal{X}^{n}$ and any function class $\mathcal{G} \subseteq [0,c]^{\mathcal{X}}$ define the $\lVert \cdot\rVert_{p, \vec{x}}$-seminorm on the linear span of $\mathcal{G}$ for $p \in[1, \infty)$ as
\beq
\lVert g\rVert_{p, \vec{x}}:=\left(\frac{1}{n} \sum_{i=1}^{n}|g(x_{i})|^{p}\right)^{1 / p}, 
\eeq
and note that $\lVert g\rVert_{\infty, \vec{x}}:=\max _{i\in[n]}\left|g\left(x_{i}\right)\right|$.
The corresponding pseudometric is then given by $d: (g_{1}, g_{2})\mapsto \lVert g_{1}-g_{2}\rVert_{p, \vec{x}}$. 
\end{defn}

\begin{defn}[Loss-function covering numbers]
Let $\mathcal{G} \subseteq[0,c]^{\mathcal{X}}$ be a class of real-valued functions. For positive integer $p$, the $p$-th loss-function covering number is 
\beq
\Gamma_{p}(n, \epsilon, \mathcal{G}):=\max \left\{ N_{in}\left( \epsilon, \mathcal{G},|| \cdot ||_{p,\vec{x} } \right)| \vec{x} \in \mathcal{X}^{n}  \right\}
\eeqp
\end{defn}

We make three remarks: 
\begin{enumerate}
    \item For a finite target space $\mathcal{Y} = 1, 2, \ldots |\mathcal{Y}|$, and for any $\vec{x} \in \X^n$, observe that $||g-f||_{\infty,\vec{x} } < \eps < 1$ iff $g|_{\vec{x} } = f|_{\vec{x}} $. This follows from the definition of spectral norm. Thus

\begin{equation}
    N_{in}\left( \epsilon, \mathcal{F},|| \cdot ||_{\infty,\vec{x} } \right)  = \left|\mathcal{F}|_{\vec{x} } \right|
\end{equation}
and hence 
\begin{equation}
    \Gamma_{\infty}(n,\eps,\mathcal{F}) = G(n).
\end{equation}

\item The name of the above covering number comes from the fact that we will often be interested in choosing $\G$ to be the \textit{induced loss function class}, defined below:

\begin{defn}[Induced loss function class]\label{def:G}
For any function class $\mathcal{F} \subseteq \mathcal{Y}^{\mathcal{X}}$ and loss function $L: \mathcal{Y} \times \mathcal{Y} \rightarrow[0, c]$ define 
\begin{align}
    \mathcal{G}_{\F,L}&:=\{g: \mathcal{X} \times \mathcal{Y} \rightarrow[0, c] \nonumber\\
    &\mid \exists h \in \mathcal{F}: g(x, y)=L(y, h(x))\}.
\end{align}
\end{defn}
On a related note, observe that, when the loss function is $L(y,x)=|x-y|$, for any $h_1, h_2 \in \F$, their pseudodistance $\lVert h_1 - h_2\rVert_{1,\vec{x}}$ upper-bounds the difference between their empirical risks, $\hat{R}(h_1) - \hat{R}(h_2)$. This fact will be crucial in the next few paragraphs.
\end{enumerate}

We will also need to define covering numbers for classes mapping to operators instead of real intervals (in the following, $\mathcal{H}$ denotes some fixed Hilbert space). For this case we will use a different seminorm, defined for operator-valued functions: 
\begin{defn}[$\lVert \cdot \rVert_{p,q,\vec{x}}$ seminorms]
For any set $\mathcal{X}, \vec{x} \in \mathcal{X}^{n}$ and any function class $\mathcal{C} \subseteq \{h:\X\rightarrow M\}, \text{ where } M\subseteq \mathcal{L}(\mathcal{H})$, we define the $\lVert \cdot\rVert_{p, q,\vec{x}}$-seminorm on the linear span of $\mathcal{C}$ for $p \in[1, \infty)$, $q \in[1, \infty)$ as
\beq
\lVert g\rVert_{p,q, \vec{x}}:=\left(\frac{1}{n} \sum_{i=1}^{n}||g(x_{i})||_q^{p}\right)^{1 / p}, 
\eeq
where $\lVert \cdot \rVert_q$ is a Schatten $q$-norm, and note that $\lVert g\rVert_{\infty,q, \vec{x}}:=\max _{i\in[n]}||g(x_{i})||_q$.
The corresponding pseudometric is then given by $d: (g_{1}, g_{2})\mapsto \lVert g_{1}-g_{2}\rVert_{p, q,\vec{x}}$. 
\end{defn}

Accordingly, we will define the following covering number:
\begin{defn}[Operator-class covering numbers]
Let $\mathcal{C} \subseteq \mathcal{L}(\mathcal{H})^{\mathcal{X}}$ be a class of operator-valued functions. For positive integers $p,q$, the $(p,q)$-th operator-class covering number is 
\beq
\gamma_{p,q}(n, \epsilon, \mathcal{C}):=\max \left\{ N_{in}\left( \epsilon, \mathcal{C},|| \cdot ||_{p,q,\vec{x}} \right)| \vec{x} \in \mathcal{X}^{n}  \right\}
\eeqp
\end{defn}

For the concept classes of interest $\mathcal{C}: \mathcal{C} \subseteq \{h:\X\rightarrow M\}$ where $M\subseteq \mathcal{L}(\mathcal{H})$, we elect to use the following natural operator-class covering numbers, which we can also relate to the corresponding loss-function covering numbers:
\begin{itemize}
    \item \textbf{(Case 1: Quantum states)} In the case of $M$ being quantum states, we can take  $p=1,q=1$. The pseudodistance between two concepts becomes twice the average trace distance between the states,
    $||g_1-g_2||_{1,1,\vec x}=\sum_{i=1}^{n}\frac{1}{n}||g_1(x_i)-g_2(x_i)||_1$. Note that the empirical risk (defined via $L_s$) evaluated on the input source and the concepts $g_1$ and $g_2$ satisfies
    \begin{align}\label{contstates}
        &|\hat{R}_s(g_1)-\hat{R}_s(g_2)|\nonumber\\
        &= \frac{1}{n} \left|\sum_{i=1}^n L_s(\rho(x_i),g_1(x_i))-\sum_{i=1}^n L_s(\rho(x_i),g_2(x_i))\right|\\
        &= \frac{1}{2n}\left|\sum_{i=1}^n \left(\lVert g_1(x_i) - \rho(x_i) \rVert_1 -\lVert g_2(x_i) - \rho(x_i) \rVert_1\right)\right| \\
        &\leq \frac{1}{2}||g_1-g_2||_{1,1,\vec x}.
    \end{align}
    where the inequality follows from the triangle inequality. This means that if $N_{in}(\C, \eps, || \cdot ||_{1,1,\vec{x}} )$ is an $\eps$-net for $\C$, then one can construct an $\eps$-net for $\G_{\C,L_s}$, simply by associating every point $c\in N_{in}(\C, \eps, || \cdot ||_{p,q,\vec{x}})$ with a point $g_c = L_s \circ c$ which is in $\G_{\C,L_s}$. This implies in turn that 
    \beq \label{eq:gammavsGamma_s}
    \Gamma_1(n, \eps, \G_{\C,L_s}) \leq \gamma_{1,1}(n, 2\eps, \C)\leq \gamma_{1,1}(n, \eps, \C)
    \eeqp
    \item \textbf{(Case 2: Projectors)} In the case of $M$ being projectors, we can take  $p=1,q=\infty$. The pseudodistance between two concepts becomes the average operator norm distance between the projectors,
    $||g_1-g_2||_{1,\infty,\vec x}=\sum_{i=1}^{n}\frac{1}{n}||g_1(x_i)-g_2(x_i)||_{\infty}$. Note that the empirical risk (defined via $L_p$) evaluated on the input source and the concepts $g_1$ and $g_2$ satisfies 
    \begin{align}\label{contproj}
        &|\hat{R}_{p}(g_1)-\hat{R}_{p}(g_2)|\\
        &=\left|\sum_{i=1}^n\frac{1}{n}L_p(\rho(x),g_1(x))-\sum_{i=1}^n\frac{1}{n}L_p(\rho(x),g_2(x))\right|\\
        &\leq ||g_1-g_2||_{1,\infty,\vec x}.
    \end{align}
    Similarly to the above, we can conclude that 
    \beq \label{eq:gammavsGamma_p}
    \Gamma_1(n, \eps, \G_{\C,L_p}) \leq \gamma_{1, \infty}(n, \eps, \C)
    \eeqp
\end{itemize}
Note that both properties above are a generalization of the decreasing property of covering numbers for real-valued function classes under composition with Lipschitz functions~\cite{Anthony1999}. We need operator-class covering numbers since we can construct $\epsilon$-nets with respect to the $||\cdot||_{p,q,\vec{x}}$ seminorm, but not with respect to the loss function directly, since the value of the loss cannot be obtained immediately from the data.

\subsection{Statistical learning theorems}
 \label{sec:statlearntheo}

The principle of empirical risk minimization dictates that, once a loss function $L$ with values in the interval $[0,c]$ has been fixed, in order to find a function in the concept class that minimizes the true risk relative to the unknown concept $f$, 
\begin{equation}\label{eq:truerisk_2}
R_f(h):= \mathbb{E}_{x \sim \mathcal D} \left[L(f(x),h(x))\right] \qquad \text{(True risk)},
\end{equation}
it suffices with high probability to find one that minimizes the empirical risk over the given sample:
\begin{equation}\label{eq:risk_2}
\hat{R}_f(h):=\frac{1}{n}\sum_{i=1}^n L(f(x_i),h(x_i))\qquad \text{(Empirical risk)}
\end{equation}
In the classical case, the utility of this principle comes from the fact that the true risk cannot be estimated directly from the sample, but the empirical risk can. Crucially, the quality of the approximation and the rate of the above-mentioned convergence depend on the concept class under study. 

In particular, \textit{discrete-output} concept classes are very well-understood. In this special case, the rate of convergence of the empirical risk to the true risk is quantified by growth functions. Namely, a well-known fact (see for instance~\cite{Wolfnotes,Anthony1999, vapnik1999overview}) states that given a sample $S = (x_i, f(x_i))_{i=1}^n$, with probability at least $(1-\delta)$ w.r.t. repeated sampling of training data of size $n$ we have
\beq
\forall h \in \mathcal{F}:|R_f(h)-\hat{R}_f(h)| \leq c \sqrt{\frac{8 \ln \left(G(2 n) \frac{4}{\delta}\right)}{n}}
\eeqp
However, in our setting, we would like to be able to characterize channels that map to a continuous set of states. Since the restrictions of such concept classes to a finite sample are potentially infinite, the above statement is not useful. Nevertheless, a crucial observation is that, even for infinite concept classes $\F$, we can obtain a PAC bound for learning $\F$ via the covering number of a class $\G$ related to $\F$. $\G$ is induced by composing the chosen loss function with the functions in $\F$, as earlier stated in Definition \ref{def:G}.

The learning theorem is then:
\begin{theorem}[PAC bound via uniform covering numbers (reported as in~\cite{Wolfnotes}, also equivalent to Theorem 17.1 in~\cite{Anthony1999}]~\label{convergence_proj}
For any concept class $\F$ and loss function $L$ with values in $[0,c]$, define $\G_{\F,L}$ as in Def \ref{def:G}. For any $\epsilon>0$ and any probability measure ${\cal D}$ on $\mathcal{X} \times \mathcal{Y}$ it holds
\begin{align}\label{fundamentalgrowth}
&\Pr_{S \sim \mathcal{D}^{n}}[\exists h \in \mathcal{F}:|R(h)-\hat{R}(h)| \geq \epsilon] \\&\leq 4 \Gamma_{1}(2 n, \epsilon / 8, \mathcal{G}_{\F,L}) e^{-\frac{n \epsilon^{2}}{32 c^{2}}},
\end{align}
 where $S = ((x_i,y_i))_{i=1}^n$ is the training sample and $(x_i,y_i) \sim \mathcal{D}$:
\end{theorem}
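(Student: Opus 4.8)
The plan is to prove this as the classical uniform-convergence bound via covering numbers, following the standard symmetrization-plus-discretization route, since the statement concerns a real-valued class: once we pass to the \emph{induced loss class} $\G_{\F,L}$ (Definition~\ref{def:G}), every hypothesis $h\in\F$ is represented by a function $g(x,y)=L(y,h(x))\in[0,c]$, and the event $|R(h)-\hat R(h)|\ge\epsilon$ becomes a deviation of the sample mean of $g$ from its expectation. So the whole argument takes place for the real-valued class $\G=\G_{\F,L}$, and the operator structure of $\F$ never enters; the link back to the concept class is exactly the bounds $\Gamma_1\le\gamma_{1,q}$ established in Eqs.~\eqref{eq:gammavsGamma_s},~\eqref{eq:gammavsGamma_p} earlier in the excerpt.

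First I would \textbf{symmetrize} with a ghost sample. Let $S'$ be an independent copy of $S$ (both of size $n$), with empirical mean $\hat R_{S'}$. Using that $\hat R_{S'}(h)$ concentrates around $R(h)$ (a one-line Chebyshev/Hoeffding estimate, valid once $n$ exceeds a mild threshold like $n\gtrsim c^2/\epsilon^2$ so the factor of two is justified), one shows
\begin{align}
\Pr_{S}\!\left[\sup_{h\in\F}|R(h)-\hat R(h)|\ge\epsilon\right]
\le 2\,\Pr_{S,S'}\!\left[\sup_{h\in\F}\bigl|\hat R_{S'}(h)-\hat R(h)\bigr|\ge\tfrac{\epsilon}{2}\right].
\end{align}
This replaces the unobservable true risk by a second empirical mean, so that everything now depends on the pooled sample $\vec z=(S,S')\in\X^{2n}$.

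Next I would \textbf{condition on the pooled sample and randomize by swaps}. Because $S$ and $S'$ are i.i.d., swapping the $i$-th point of $S$ with the $i$-th point of $S'$ leaves the joint law invariant, so the inner probability equals its average over independent Rademacher swaps $\sigma_i\in\{\pm1\}$ applied to $\frac1n\sum_i \sigma_i\bigl(g(z_i')-g(z_i)\bigr)$, for fixed $\vec z$. Now comes the \textbf{discretization}: on the fixed $2n$-point pool, cover $\G$ by a minimal internal $\tfrac{\epsilon}{8}$-net in the $\lVert\cdot\rVert_{1,\vec z}$ seminorm, whose cardinality is at most $\Gamma_1(2n,\epsilon/8,\G)$ by definition. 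Replacing each $g$ by its net representative $\tilde g$ costs at most $\tfrac{\epsilon}{8}$ on each of the two half-samples (so $\tfrac{\epsilon}{4}$ total), which I subtract off, leaving a threshold $\tfrac{\epsilon}{4}$ to be controlled for net elements only. For each fixed $\tilde g$ the randomized sum is an average of independent bounded terms, so Hoeffding's inequality gives an exponential tail $e^{-\Theta(n\epsilon^2/c^2)}$; a union bound over the net multiplies this by $\Gamma_1(2n,\epsilon/8,\G)$. Combining with the symmetrization factor and collecting the constant in the exponent yields the claimed $4\,\Gamma_1(2n,\epsilon/8,\G_{\F,L})\,e^{-n\epsilon^2/(32c^2)}$.

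The main obstacle is purely bookkeeping rather than conceptual: threading the constants consistently through the chain $\epsilon\to\epsilon/2\to\epsilon/4\to\epsilon/8$ so that the covering radius $\epsilon/8$, the prefactor $4$, and the exponent constant $32c^2$ come out exactly as stated, and verifying the mild sample-size condition needed to license the symmetrization factor of two. I would emphasize that I do not expect to reprove this from scratch in the paper — it is a classical result quoted verbatim from~\cite{Wolfnotes,Anthony1999} — so the genuinely new content lies not here but in showing (as done above in the preliminaries) that the relevant loss-class covering number $\Gamma_1(\cdot,\G_{\F,L})$ is dominated by the operator-class covering numbers $\gamma_{1,1}$ (states) and $\gamma_{1,\infty}$ (projectors), which is what makes the bound usable for the quantum concept classes of interest.
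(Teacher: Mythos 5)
Your proposal is correct, and it is essentially the same approach as the source the paper relies on: the paper offers no proof of this theorem at all, quoting it verbatim from~\cite{Wolfnotes,Anthony1999}, and your symmetrization--swap--discretization--Hoeffding sketch is precisely the standard argument behind Theorem 17.1 of~\cite{Anthony1999}, with the correct net radius $\epsilon/8$, prefactor $4$, and exponent $n\epsilon^2/(32c^2)$. One small bookkeeping correction: an $\epsilon/8$-net in the pooled $\lVert\cdot\rVert_{1,\vec z}$ seminorm over $2n$ points controls each half-sample empirical mean only up to $2\cdot\epsilon/8=\epsilon/4$ (with the \emph{sum} of the two half-sample costs bounded by $\epsilon/4$), not $\epsilon/8$ per half-sample as you wrote; the surviving threshold $\epsilon/4$ for net elements, and hence your final constants, are nonetheless exactly right, and the mild sample-size condition for symmetrization is harmless because the stated bound exceeds $1$ whenever that condition fails.
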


That is to say, the empirical risk $\hat{R}$ converges to the true risk $R$ depending on the speed of growth of $\Gamma_1$. 

As noted above, for the loss functions we consider the covering number $\Gamma_1(n,\epsilon,\mathcal G_{\mathcal F,L})$ can be bounded by the corresponding operator-class covering number $\gamma_{1,q}(n,\epsilon,\mathcal C)$, which effectively controls our learning algorithms.

\section{Threshold search for non-identical states}\label{sec:trnoniid}

Recall the discussion in the introduction about the difficulties encountered in naively generalizing the strategy of classical ERM to our quantum setting: they stem from not having an arbitrary number of identical copies of a quantum state $\rho(x_i)$ and from not being able to naively estimate all the empirical risks at the same time. In this section, we nevertheless introduce a tool for performing our ``quantum" version of ERM: an adaptation of \textit{quantum threshold search} \cite{Aaronsonshadow20,BO21}. The gist of the algorithm is to use identical copies of some state of interest $\rho$, to pick an (observable, threshold) pair from amongst a set of such pairs, with the property that the observable exceeds the threshold on $\rho$ -- or, reports that no such pair is available. We define it more formally:

\begin{problem}[Quantum Threshold Search \cite{Aaronsonshadow20,BO21}]
Given as input
\begin{itemize} 
    \item Parameters $\eps, \delta > 0$
    \item Access to  \textit{identical} copies of a state $\rho\in D(\mathcal{H}^{(d)})$
    \item A collection of known projectors $\{\Pi_c\}_{c=1}^m$ where $\Pi_c \in \mathcal{L}(\mathcal{H}^{(d)})$.
    \item A collection of known thresholds $\{\theta_c\}_{c=1}^m$ where $\theta_c \in [0,1]$.
\end{itemize}
Output either 
\begin{itemize}
    \item $\Tr[\rho \Pi_c] > \theta_c-\eps$ for some particular $c$,
    \item $\Tr[\rho \Pi_c] \leq \theta_c$ for all $c$,
\end{itemize}
with probability of a correct statement at least $1-\delta$.
\end{problem}

This section is devoted to presenting an algorithm for threshold search that relaxes the need for access to identical copies of $\rho$, solving the same task on \textit{general product states}. Intuitively, this is necessary for our setting because the learner, lacking control over the input to the process, receives as examples the sequence $(x_i, \rho(x_i))_{i=1}^n$ where $\rho(x_i)$ are not identical to each other. Thus, in contrast to the measurement in \cite{BO21} which works on $\rho^{\otimes n}$, our key tool will be a measurement on the product state $\rho(x_1)\otimes \ldots \otimes\rho(x_n)$. The measurement reports if a threshold has been exceeded. We will eventually use this measurement to perform ERM for our setting. We define the properties of this measurement in Lemma \ref{lem:threshold}. Our general proof strategy follows closely that of~\cite{BO21}, adapting when needed.

\begin{lemma} [Quantum threshold search on non-identical states\label{lem:threshold}]
Given as input
\begin{itemize} 
    \item Parameter $1>\eps > 0$
    \item Access to \textit{a single} product state $\varrho := \rho_1\otimes \ldots \otimes \rho_n \in (D(\mathcal{H}^{(d)}))^{\otimes n}$ which is a product of generally \textit{non-identical} qudit states.
    \item A collection of lists of known projectors
    \beq
    \{\Pi_{1}^{(c)} , \ldots, \Pi_{n}^{(c)}\}_{c=1}^m
    \eeq
    where each $\Pi_{i}^{(c)} \in \mathcal{L}(\mathcal{H}^{(d)})$.
    \item A collection of known thresholds $\{\theta_c\}_{c=1}^m$ where $\theta_c \in [0,1]$.
\end{itemize}
If the projectors and thresholds obey the promise that 
\beq\label{eq:promise2}
\frac{1}{n}\sum_{i=1}^n\Tr[\Pi_{i}^{(c)}\rho_i] > \theta_c
\eeq
for at least one~$i$, there is an algorithm such that: 
  \begin{itemize}
      \item At each step $c=1,...,m$ it performs a two-outcome measurement $\{B_c,\mathbf{1}-B_c\}$ based on the projectors $\{\Pi_{1}^{(c)}, \ldots, \Pi_{n}^{(c)}\}$.
     { \item If the measurement accepts $B_c$, then the algorithm halts and outputs $c$, otherwise it passes to $c+1$.}
     
  \end{itemize}
  This algorithm is such that if 
  \begin{equation}
  (\log m+C_2)^2<C_1 n \epsilon^2
  \end{equation}
  for appropriate constants $C_1,C_2>0$, it outputs $c$ such that 
  \begin{equation}
  \frac{1}{n}\sum_{i=1}^n\Tr[\Pi_{i}^{(c)}\rho_i]  \geq \theta_c-\epsilon
  \end{equation}
  with probability at least~$0.03$. 
  
\end{lemma}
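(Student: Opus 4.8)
The plan is to follow the sequential-measurement strategy of~\cite{BO21}, treating the two-outcome measurements $\{B_c,\mathbf 1-B_c\}$ of Section~\ref{sec:introtr} as the atomic building blocks and controlling their cumulative back-action with a quantum union bound. First I would record the two properties of $B_c$ that do all the work: the exponential acceptance bound~\eqref{eq:boundexp0}, which says that if list $c$ sits below threshold then $\Tr[\varrho B_c]$ is exponentially small in $\sqrt n$, and the gentle-measurement bound~\eqref{eq:bures0}, which says that a rejection of $B_c$ disturbs $\varrho$ in trace distance by at most $C\,\Tr[\varrho B_c]$. These are exactly the non-identical analogues of the facts used in~\cite{BO21}, with the Poisson-binomial concentration of $\mathsf T^{(c)}$ playing the role of the i.i.d.\ binomial concentration.

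Next comes the classification step, which is where the hypothesis on $n$ enters. Call an index $c$ \emph{incorrect} if $\frac1n\sum_i\Tr[\Pi^{(c)}_i\rho_i]<\theta_c-\epsilon$, and \emph{correct} otherwise. By~\eqref{eq:boundexp0} every incorrect index has fresh acceptance probability $b_c:=\Tr[\varrho B_c]\le e^{-\frac{\sqrt n}{D}\epsilon+e/(2D^2)}$. The entire purpose of $(\log m+C_2)^2<C_1 n\epsilon^2$ is to upgrade this to $b_c\le(\eta/m)^2$ for a fixed small constant $\eta$: taking logarithms, $b_c\le(\eta/m)^2$ is equivalent to $\log m\le \frac{\sqrt n\epsilon}{2D}-\frac{e}{4D^2}+\log\eta$, which is precisely the stated inequality with $C_1=1/(4D^2)$ and $C_2=e/(4D^2)-\log\eta>0$. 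Read contrapositively — and this is the point I would emphasise — any index with $b_c>(\eta/m)^2$ is necessarily correct, so whenever the algorithm halts at a high-acceptance index it is \emph{safe} to output it.

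Then I would run the quantum union bound of~\cite{BO21}, i.e.\ the inequalities~\eqref{eq:fid-ineq}--\eqref{eq:unionapp}, under the identification $E_i=\mathbf 1-B_i$, so that $s_c$ becomes the probability that the algorithm halts at step $c$ and $q_m$ the probability of rejecting everything. I would take $\mathcal B$ to be the set of incorrect indices, which is a legitimate choice because the previous paragraph guarantees $1-p_c=b_c\le(\eta/m)^2$ there. This lower-bounds $\sum_{c\notin\mathcal B}s_c=\Pr[\text{halt at a correct index}]$ by $1-\sqrt{q_m}F(\varrho,\varrho_m)-\eta$, and it remains to make this a positive constant. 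Here I would invoke the promise~\eqref{eq:promise2}: fixing the \emph{first} good index $c^*$ with $\frac1n\sum_i\Tr[\Pi^{(c^*)}_i\rho_i]>\theta_{c^*}$, its fresh acceptance probability $\Tr[\varrho B_{c^*}]=\Pr(\mathsf X+\mathsf T^{(c^*)}>\theta_{c^*}n)$ is bounded below by a positive constant (close to $1/2$ in the worst case), because $\mathbb E[\mathsf T^{(c^*)}]=\sum_i\Tr[\Pi^{(c^*)}_i\rho_i]$ already exceeds $\theta_{c^*}n$, the Poisson-binomial $\mathsf T^{(c^*)}$ concentrates, and $\mathsf X\ge 0$. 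Since every index preceding $c^*$ is incorrect, each disturbs $\varrho$ by at most $C b_c$ via~\eqref{eq:bures0}, so the state reached at step $c^*$ is still close to $\varrho$ and $B_{c^*}$ still accepts with constant probability; this keeps $q_m$ bounded away from $1$. Collecting constants yields the success probability $0.03$ (the modest value reflecting the worst-case near-threshold good index).

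The hard part is the middle step: honestly controlling the accumulated measurement back-action over a sequence of up to $m$ measurements. The tension is that~\eqref{eq:bures0} only certifies small disturbance when a rejection is likely, i.e.\ when $b_c$ is small, which is exactly what fails at the good index whose whole purpose is to accept. I would resolve this by organising the analysis around the \emph{first} good index $c^*$: all steps before it are incorrect and hence barely perturb the state, so $c^*$ is reached with $\varrho$ nearly intact, while in the damage lemma~\eqref{eq:damagelemma} the prefactors $p_{[i]}\le p_{[c^*]}\le 1-b_{c^*}$ automatically suppress every contribution from steps at or beyond $c^*$. Reconciling these two regimes — the many harmless incorrect rejections and the single decisive good acceptance — through the union bound, exactly as in~\cite{BO21} but with Poisson-binomial concentration in place of the binomial one, is the crux of the argument.
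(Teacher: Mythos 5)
Your outline assembles the right ingredients (the exponential bound \eqref{eq:boundexp0}, the gentleness bound \eqref{eq:bures0}, the sequential union bound \eqref{eq:unionapp}), and your conversion of the hypothesis $(\log m+C_2)^2<C_1 n\epsilon^2$ into $b_c\le(\eta/m)^2$ for below-threshold lists matches the paper's false-positive step. But the step you yourself identify as the crux is resolved incorrectly. You classify indices only as \emph{incorrect} ($\mu_c<\theta_c-\epsilon$) or \emph{correct}, and then assert that every index preceding the first promise-satisfying index $c^*$ is incorrect. That is false: an index with $\mu_c\in[\theta_c-\epsilon,\theta_c]$ is correct but not good, and such indices may well precede $c^*$. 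For them the measurement threshold sits at or barely above the mean of $\mathsf{T}^{(c)}$, so the fresh acceptance probability is a constant (order $1/2$), the bound \eqref{eq:boundexp0} is vacuous (its exponent is nonnegative), and the gentleness statement cannot be invoked at all, since Theorem~\ref{bernoulli_new} requires the acceptance probability to be below $1/4$. Hence rejections at these intermediate indices can disturb the state by an uncontrolled amount, your claim that $\varrho$ arrives at $c^*$ ``nearly intact'' is unjustified, and $q_m$ is never shown to be bounded away from $1$. A second, related problem: \eqref{eq:unionapp} does not lower-bound $\sum_{c\notin\mathcal{B}}s_c$ by $1-\sqrt{q_m}F(\varrho,\varrho_m)-\eta$; it lower-bounds the product $\sqrt{\sum_{c\notin\mathcal{B}}s_c}\,\sqrt{\sum_{c\notin\mathcal{B}}(1-p_c)}$, and taken over all of $[m]$ the second factor can be of order $\sqrt{m}$ (many correct indices, each with constant fresh acceptance), so your success probability would degrade with $m$ rather than stay at a constant.

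The paper's proof avoids both problems with a different organizing device, which is exactly what your argument is missing: it does not anchor on the first good index, but defines $t$ as the \emph{minimal} index for which the product of fresh rejection probabilities $\prod_{c\le t}(1-p_c)$ drops below $e^{-1/4}$ (such a $t$ exists because the promised index has fresh acceptance at least $1-e^{-1/4}$ by the Chernoff bound). Minimality forces $\sum_{c<t}p_c\le 1/4$ \emph{regardless} of whether the indices before $t$ are good, intermediate, or bad. This one inequality does double duty: each individual $p_c$ with $c<t$ is then below $1/4$, so the gentleness bound applies and the damage lemma \eqref{eq:damagelemma} gives $q_t\le 4/5$ and $q_{t-1}\ge 3/4$; and restricting the union bound \eqref{eq:unionapp} to the first $t$ steps keeps the Cauchy--Schwarz factor $\sum_{c\in[t]\setminus\mathcal{B}}p_c$ of constant size, which is what turns the inequality into the constant lower bound $\sum_{c\in[t]\setminus\mathcal{B}}s_c\ge 0.03$. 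Your classification step and constants can stand, but the damage-control argument must be rebuilt around this $t$ for the proof to go through.
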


In order to get this improvement, we need to generalize the measurement constructed in~\cite{BO21}, which in a certain sense is gentler than the projectors which can check if the expectation is above or below threshold from Proposition~\ref{estave}. Roughly speaking, gentle means that the state does not change much after the measurement, with high probability. Implementations of product measurements that are gentle on product states had already been obtained in~\cite{Aaronsonshadow20}, and then reconsidered by~\cite{BO21} with a simpler analysis on identical states, which gave an improvement in the sample complexity of shadow tomography~\cite{BO21}. In these implementations, a parameter $\lambda$ can be increased in order to obtain a gentler version of a projective measurement, sacrificing the amount of information that the measurement can reveal.
We go one step further and generalize the stronger statements of~\cite{BO21} so that they apply to our setting, which involves products of non-identical states. 

To do this we need to study the behavior, under perturbation, of a random variable which is the sum of non-identical Bernoulli random variables:
\beq\label{eq:T}
\mathsf{T}=\sum_{i=1}^n \mathsf{T}_i \qquad \text{where } \mathsf{T}_i\sim (p_i,1-p_i)
\eeq
The distribution of $\mathsf{T}$ is also known as the \textit{Poisson binomial distribution} and we denote this as $T \sim PB(p_1,\ldots p_n)$. 

We also need to consider exponential random variables $\mathsf X$, which have a density $p_{\lambda}(x):=\lambda e^{-\lambda x}$, for some $\lambda>0$, and satisfy $\mathbb E [\mathsf X]=1/\lambda$.

{ We first present an entirely classical Theorem. This Theorem says that a measurement that checks for the event that $\mathsf{T}+\mathsf{X}$ exceeds a certain threshold is gentle when it rejects, in the sense that it only \textit{gently} perturbs the distribution of $\mathsf{T}$.}  
This is a generalization of Theorem 1.2 of~\cite{BO21}, which shows that adding exponential noise to a \textit{binomial} random variable allows for gentle measurements. 

\begin{theorem}[Gentle classical measurement on Poisson binomial distribution]\label{bernoulli_new}
    Let $\T\sim PB(p_1,\ldots, p_n)$, and write $q_i = 1-p_i$. Assume that $\mathsf{X}$ is an independent exponential random variable
    with mean $1/\lambda$ at least~$\mathrm{stddev}[\mathsf{T}] = \sqrt{\sum_{i=1}^{n} p_i q_i}$ (and also at
    least~$1$).  
    Let $\mathfrak{B}$ be the event that $\mathsf{T} + \mathsf{X} > \theta n$, and assume that $\Pr[\mathfrak{B}] < \frac14$.
    Then
    \[
        \dist[\chi^2]((\mathsf{T} \mid \overline{\mathfrak{B}}), \mathsf{T}) \leq C
        \left(\Pr[\mathfrak{B}] \cdot \frac{\mathbf{\mathrm{stddev}}[\mathsf{T}]}{\mathbb{E}[\mathsf{X}]}\right)^2,
    \]
    for a sufficiently large constant $C>0$.
\end{theorem}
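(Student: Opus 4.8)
The plan is to make the reduction to a variance bound completely explicit and then control that variance through the moment generating function of the Poisson binomial law. First I would compute the conditional law exactly. Writing $w(k) := \Pr[\mathsf{X} > \theta n - k] = \min\!\left(1, e^{\lambda(k-\theta n)}\right)$ for the probability that $\mathfrak{B}$ fires given $\mathsf{T}=k$, independence of $\mathsf{X}$ and $\mathsf{T}$ gives $\Pr[\mathfrak{B}] = \mathbb{E}[w(\mathsf{T})] =: \beta$ and $\Pr[\mathsf{T}=k \mid \overline{\mathfrak{B}}] = P(k)\,(1-w(k))/(1-\beta)$, where $P$ denotes the law of $\mathsf{T}$. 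The likelihood ratio is therefore $r(k) = (1-w(k))/(1-\beta)$, so $1-r(k) = (w(k)-\beta)/(1-\beta)$, and substituting into the definition of $\chi^2$-divergence collapses it to an exact identity,
\[
d_{\chi^2}\big((\mathsf{T}\mid\overline{\mathfrak{B}}),\,\mathsf{T}\big) = \frac{1}{(1-\beta)^2}\sum_k P(k)\,(w(k)-\beta)^2 = \frac{\mathrm{Var}[w(\mathsf{T})]}{(1-\beta)^2}.
\]
The hypothesis $\Pr[\mathfrak{B}] < \tfrac14$ immediately bounds the prefactor by $16/9$, reducing the theorem to showing $\mathrm{Var}[w(\mathsf{T})] = \mathbb{E}[w^2]-\beta^2 \le C'\,\beta^2\lambda^2\sigma^2$, equivalently $\mathbb{E}[w^2]/\beta^2 \le 1 + C'\lambda^2\sigma^2$, where $\sigma := \mathrm{stddev}[\mathsf{T}]$.

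Next I would dispose of the cap in $w = \min(1, \tilde w)$, with $\tilde w(k) := e^{\lambda(k-\theta n)}$, via a monotonicity lemma: for any nonnegative random variable $Y$ the ratio $c \mapsto \mathbb{E}[\min(Y,c)^2]/\mathbb{E}[\min(Y,c)]^2$ is nondecreasing in $c$. This follows from differentiating: with $A(c)=\mathbb{E}[\min(Y,c)]$ and $A_2(c)=\mathbb{E}[\min(Y,c)^2]$ one has $A'=\Pr[Y>c]$ and $A_2'=2c\Pr[Y>c]$, so the derivative of the ratio has the sign of $cA - A_2 = \mathbb{E}[\min(Y,c)\,(c-\min(Y,c))] \ge 0$. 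Applying this with $Y = \tilde w(\mathsf{T})$ and $c=1$ yields $\mathbb{E}[w^2]/\mathbb{E}[w]^2 \le \mathbb{E}[\tilde w^2]/\mathbb{E}[\tilde w]^2$, so it suffices to bound the uncapped ratio.

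For the uncapped ratio I would exploit that the exponential tilt factorizes over the independent summands of $\mathsf{T}$. Since $\mathbb{E}[e^{s\mathsf{T}}] = \prod_i (q_i + p_i e^{s})$, the threshold $\theta n$ cancels and
\[
\frac{\mathbb{E}[\tilde w^2]}{\mathbb{E}[\tilde w]^2} = \frac{\mathbb{E}[e^{2\lambda\mathsf{T}}]}{\mathbb{E}[e^{\lambda\mathsf{T}}]^2} = \prod_{i=1}^n \frac{q_i + p_i e^{2\lambda}}{(q_i + p_i e^{\lambda})^2}.
\]
I would then show each factor obeys $\log\!\big[(q_i+p_ie^{2\lambda})/(q_i+p_ie^{\lambda})^2\big] \le C\,p_i q_i \lambda^2$ for $\lambda \le 1$: the left side and its first $\lambda$-derivative both vanish at $\lambda=0$, and its second derivative equals $4\,v_i(2\lambda)-2\,v_i(\lambda)$, where $v_i(s)=\mu_i(s)(1-\mu_i(s))$ is the variance of the Bernoulli tilted by $s$ and satisfies $v_i(s)\le e^{2}p_iq_i$ for $s\in[0,2]$ (since $\mu_i(s)\le e^2 p_i$ and $1-\mu_i(s)\le q_i$ on this range); a second-order Taylor estimate then gives the per-factor bound. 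Multiplying yields $\mathbb{E}[\tilde w^2]/\mathbb{E}[\tilde w]^2 \le e^{C\lambda^2\sum_i p_iq_i} = e^{C\lambda^2\sigma^2}$. Here the hypothesis $\mathbb{E}[\mathsf{X}] = 1/\lambda \ge \sigma$ enters crucially: it forces $\lambda^2\sigma^2 \le 1$, so the exponential linearizes to $1 + C'\lambda^2\sigma^2$. Combining with the two previous steps gives $d_{\chi^2} \le \tfrac{16}{9}\,C'\,\beta^2\lambda^2\sigma^2 = O\big((\beta\,\sigma/\mathbb{E}[\mathsf{X}])^2\big)$, as claimed.

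I expect the per-factor MGF estimate to be the main obstacle: it is the only genuinely computational step, it is where both hypotheses are used ($\lambda\le 1$ to bound the tilted variances and $\lambda\sigma\le 1$ to linearize the exponential), and securing a per-factor bound proportional to $p_iq_i$ rather than a crude constant is precisely what recovers the sharp $\sigma^2=\sum_i p_iq_i$ dependence needed for the Poisson binomial generalization of the binomial argument of \cite{BO21}.
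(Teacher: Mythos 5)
Your proposal is correct, and it departs from the paper's proof in exactly one step --- but it is the step where all the Poisson-binomial-specific difficulty lives. Both arguments share the same skeleton: the exact identity $d_{\chi^2}\bigl((\mathsf{T}\mid\overline{\mathfrak{B}}),\mathsf{T}\bigr)=\mathrm{Var}[w(\mathsf{T})]/\Pr[\overline{\mathfrak{B}}]^2$ (the paper cites this as Proposition 3.3 of~\cite{BO21}; you derive it inline), the cancellation of the threshold in the uncapped ratio via the MGF $\mathbb{E}[e^{s\mathsf{T}}]=\prod_i(q_i+p_ie^{s})$, and the per-factor inequality $q_i+p_ie^{2\lambda}\le(1+Cp_iq_i\lambda^2)(q_i+p_ie^{\lambda})^2$, which the paper imports as Lemma 3.2 of~\cite{BO21} with $C=(e-1)^2$ and you re-derive by a Taylor argument on tilted Bernoulli variances with the worse but harmless constant $2e^2$. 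The genuine difference is how the cap in $w=\min(1,\tilde w)$ is removed. The paper uses the $1$-Lipschitz contraction $\mathrm{Var}[w(\mathsf{T})]\le\mathrm{Var}[\tilde w(\mathsf{T})]$, which yields a bound proportional to $\mathbb{E}[\tilde w(\mathsf{T})]^2$ rather than $\Pr[\mathfrak{B}]^2=\mathbb{E}[w(\mathsf{T})]^2$; since $\tilde w\ge w$ pointwise, the paper must then separately prove $\mathbb{E}[\tilde w(\mathsf{T})]\lesssim\Pr[\mathfrak{B}]$, and that is where its heaviest machinery sits: the mean-exceedance bound $\Pr[\mathrm{Bin}(n,\overline p)>n\overline p]\ge\tfrac14$~\cite{Doe18}, the stochastic comparison $\Pr[\mathsf{T}>k]\ge\Pr[\overline{\mathsf{T}}>k]$ between a Poisson binomial and a binomial of the same mean~\cite{Hoe56,tang2019poisson}, and a two-case analysis according to whether $\overline p\ge 1/n$. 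Your monotonicity lemma --- that $c\mapsto\mathbb{E}[\min(Y,c)^2]/\mathbb{E}[\min(Y,c)]^2$ is nondecreasing, via $A'=\Pr[Y>c]$, $A_2'=2c\Pr[Y>c]$, and $cA-A_2=\mathbb{E}[\min(Y,c)(c-\min(Y,c))]\ge0$ --- bounds the capped second-moment ratio by the uncapped one directly, so none of that anti-concentration input is ever needed. What each buys: the paper's route stays maximally parallel to~\cite{BO21}, making transparent precisely which binomial properties must be transported to the Poisson binomial setting; your route is shorter, self-contained, and more general, since it uses only the MGF ratio bound and would apply to any bounded nonnegative summand structure, at the price of one new (if elementary) lemma, whose differentiation argument you should phrase through the integral representations $A(c)=\int_0^c\Pr[Y>t]\,dt$ and $A_2(c)=\int_0^c 2t\Pr[Y>t]\,dt$, since $Y=\tilde w(\mathsf{T})$ is discrete and $A$, $A_2$ are only piecewise smooth.
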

\begin{proof}
See Appendix \ref{bernoulli_new_proof}. The proof of Theorem \ref{bernoulli_new} requires to modify some key details of the original proof. The main observation is that the crucial properties of the binomial distribution used in the proof, i.e., the form of the generating function and the probability of exceeding the expectation, also hold, with some caveats, for the Poisson binomial distribution.
\end{proof}

Now let us turn to the quantum problem. The reason we must consider Poisson binomial random variables is that they describe the probability distribution of outcomes when measuring a sum of (possibly non-identical) local projectors on a product quantum state. The link to the learning setting is that the local projectors are exactly the ones in the set $\{\Pi_1^{(c)}, \ldots, \Pi_n^{(c)}\}$, which describe the $c$-th hypothesis under consideration, in a way we will make formal in the next section.

In the agnostic learning setting, the learner receives the string of classical-quantum examples $(x_1, \rho(x_1)) \otimes \ldots \otimes (x_n, \rho(x_n))$. The quantum part of these examples are $n$ non-identical quantum states. We assume there is at least one hypothesis whose empirical risk on the examples goes below some threshold; this is captured by the promise in Lemma \ref{lem:threshold}, that there is at least one set of projectors $\{\Pi_1^{(c)} ,\ldots, \Pi_n^{(c)}\}$ such that 
\beq\label{eq:guarantee}
\frac{1}{n}\sum_{i=1}^n \Tr[\Pi_i^{(c)} \rho_i] > \theta_c
\eeqp
In order to perform ERM, therefore, it suffices to find some $c$ satisfying such a guarantee. The next theorem (Theorem \ref{theo:Bbiid}) shows that the learner can perform a gentle \textit{quantum} measurement, in the sense we next define, on the product state, in order to find such $c$ with high probability. This is also the measurement referred to in Lemma \ref{lem:threshold}. It is sequential and adaptive, where the learner is presented with a projectors-threshold pair $(\{\Pi^{(c)}_1,\ldots,\Pi^{(c)}_n\}, \theta_c)$ at each step, and responds by making an appropriate measurement on her state (which is reused for multiple measurements) depending on past measurement outcomes. The $c$-th measurement depends on the presented projectors-threshold pair and the outcomes of the first $c-1$ measurements. 

Theorem \ref{theo:Bbiid} should be viewed as a quantum counterpart to the classical Theorem \ref{bernoulli_new}, as it introduces the gentle quantum measurement that is at the core of our learning algorithm. Here we show that to every list of projectors $\{\Pi_1 ,\ldots, \Pi_n\}$, one can associate a gentle quantum observable $B$ that, similarly, only perturbs a quantum state by a small amount, if a measurement of $B$ rejects. In Theorem \ref{bernoulli_new}, the gentleness was in the sense that $d_{\chi^2}((\mathsf{T}|\overline{\mathfrak{B}}),\mathsf{T})$ was bounded by $\Pr(\mathfrak{B})^2$. In the quantum Theorem \ref{theo:Bbiid}, the gentleness is in the sense that the fidelity between the pre- and post-rejecting-measurement quantum state is exactly  $BC((\mathsf{T}|\overline{\mathfrak{B}}),\mathsf{T})$: this is analogous to Lemma 3.4 of \cite{BO21}. The two classical distance measures on probability distributions that we have mentioned are related via the inequality 
\beq\label{ineq:bcchi}
1 - BC(P, Q) \leq d_{\chi^2}(P, Q)
\eeq
for probability distributions $P,Q$.

\begin{theorem}[Gentle quantum measurements on non-identical product states] \label{theo:Bbiid}
Fix $0\leq \theta \leq 1$. Let $\sf{X}$ be any classical random variable taking values in $[0,\infty)$. For any list of projectors  $\{\Pi_{1}, \ldots, \Pi_n \},\, \Pi_i\in \mathcal{L}(\mathcal{H}^{(d)})$, there exists a quantum event $B\in L((\mathcal{H}^{(d)}){}^{\otimes n}))$ such that when $B$ is measured against a product state $\varrho$, i.e.:
\begin{equation}
    \varrho := \rho_1\otimes \ldots \otimes\rho_n,
\end{equation}
we have
\beq\label{eq:fidBC}
\mathbb{E}_{\varrho}[B] =\Pr(\mathsf{T} + \mathsf{X}> \theta n),
\eeq
where $\T\sim PB(p_1,\ldots, p_n)$ and $p_i = \mathbb{E}_{\rho_i}[\Pi_i]$. 
Furthermore,
\beq
F(\varrho , \varrho|_{\sqrt{\mathbf{1}-B}}) = BC((\T|\T+\mathsf{X} \leq \theta n),\T)
\eeqp

\end{theorem}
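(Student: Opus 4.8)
The plan is to exhibit $B$ explicitly as a coarse-graining of a single product projective measurement and then reduce both assertions to the general fact about events of the form $\sum_t \Pr(\mathsf{X}+t>\theta n)E_t$ recorded in the preliminaries (the last bullet on sequential measurements, culminating in Eq.~\eqref{eq:bccond}), which in turn rests on the fidelity identity Eq.~\eqref{eq:27}. The only genuinely new ingredient relative to the i.i.d.\ treatment of~\cite{BO21} is that the relevant outcome statistic is Poisson binomial rather than binomial; everything else carries over verbatim because Eq.~\eqref{eq:27} never used identicalness of the copies.

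First I would build the measurement. For each $i$ set $P_i^{(1)}:=\Pi_i$ and $P_i^{(0)}:=\mathbf{1}-\Pi_i$, so that $\{P_i^{(0)},P_i^{(1)}\}$ is a two-outcome PVM on $\mathcal{H}^{(d)}$. Their tensor product $\{\bigotimes_{i=1}^n P_i^{(v_i)}\}_{\vec v\in\{0,1\}^n}$ is a PVM on $(\mathcal{H}^{(d)})^{\otimes n}$, since for distinct $\vec v,\vec v'$ there is a coordinate $i$ with $\{P_i^{(v_i)},P_i^{(v_i')}\}=\{\Pi_i,\mathbf{1}-\Pi_i\}$ and $\Pi_i(\mathbf{1}-\Pi_i)=0$, forcing orthogonality. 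Coarse-graining by the number of acceptances, $E_k:=\sum_{\vec v:\,||\vec{v}||_1=k}\bigotimes_{i=1}^n P_i^{(v_i)}$ for $k=0,\dots,n$, therefore yields a genuine PVM $\{E_k\}_{k=0}^n$: each $E_k$ is a sum of mutually orthogonal projectors hence a projector, $E_kE_{k'}=0$ for $k\neq k'$, and $\sum_k E_k=\mathbf{1}$.

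Next I would compute the outcome law on $\varrho$. Because $\varrho$ factorizes, $\mathbb{E}_\varrho[\bigotimes_i P_i^{(v_i)}]=\prod_i\mathbb{E}_{\rho_i}[P_i^{(v_i)}]=\prod_i p_i^{v_i}(1-p_i)^{1-v_i}$ with $p_i=\mathbb{E}_{\rho_i}[\Pi_i]$; summing over weight-$k$ strings gives $\mathbb{E}_\varrho[E_k]=\Pr[\T=k]$ for $\T\sim PB(p_1,\dots,p_n)$. I would flag this single step as the conceptual crux: it is precisely the product structure of $\varrho$ (equivalently, independence of the local Bernoulli outcomes) that makes the outcome statistic Poisson binomial, and it is the only place where non-identicalness of the $\rho_i$ enters. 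I then define the event $B:=\sum_{t=0}^n\Pr(\mathsf{X}+t>\theta n)\,E_t$; since each scalar coefficient lies in $[0,1]$ and the $E_t$ are orthogonal projectors, $0\le B\le\mathbf{1}$, so $B$ is admissible.

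With this in hand both claims are immediate. For Eq.~\eqref{eq:fidBC}, linearity gives $\mathbb{E}_\varrho[B]=\sum_t\Pr(\mathsf{X}+t>\theta n)\Pr[\T=t]=\Pr(\T+\mathsf{X}>\theta n)$ by the law of total probability and independence of $\mathsf{X}$ and $\T$. For the fidelity, apply Eq.~\eqref{eq:27} to the PVM $\{E_t\}$ with $A:=\mathbf{1}-B=\sum_t a_t E_t$, $a_t:=\Pr(\mathsf{X}+t\le\theta n)$: this gives $F(\varrho,\varrho|_{\sqrt{\mathbf{1}-B}})=BC(P,Q)$ where $P(t)=\Pr[\T=t]$ is the law of $\T$, and the post-rejection law $Q$ computes, using $[\,\sqrt{A},E_t\,]=0$ and $\sqrt{A}\,E_t\sqrt{A}=a_t E_t$, to $Q(t)=a_t\Pr[\T=t]/\Pr(\T+\mathsf{X}\le\theta n)=\Pr[\T=t\mid\T+\mathsf{X}\le\theta n]$. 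By symmetry of the Bhattacharyya coefficient, $F(\varrho,\varrho|_{\sqrt{\mathbf{1}-B}})=BC((\T\mid\T+\mathsf{X}\le\theta n),\T)$, as claimed — equivalently one simply invokes Eq.~\eqref{eq:bccond} with the threshold set to $\theta n$. Since each of these steps is either a one-line computation or a direct instance of an already-established fact, there is no hard estimate here: the main (and only) obstacle is the bookkeeping that identifies the outcome statistic of the coarse-grained product measurement with the Poisson binomial $PB(p_1,\dots,p_n)$.
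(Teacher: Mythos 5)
Your proposal is correct and follows essentially the same route as the paper's own proof: construct the coarse-grained projectors $E_k$ counting acceptances of the product measurement, set $B=\sum_t \Pr(\mathsf{X}+t>\theta n)E_t$, and obtain the fidelity claim from Eq.~\eqref{eq:27}/Eq.~\eqref{eq:bccond}. The only difference is that you spell out the steps the paper compresses into ``it is easy to see'' — the PVM verification, the identification $\mathbb{E}_{\varrho}[E_k]=\Pr[\T=k]$ via the product structure, and the explicit computation of the post-rejection law $Q(t)=\Pr[\T=t\mid \T+\mathsf{X}\le\theta n]$ — all of which are carried out correctly.
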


\begin{proof}
Observe that we may write the distribution of $\mathsf{T}$ explicitly as 
\beq\label{eq:PB}
\Pr[\mathsf{T}=t] = \sum_{\vec{v} \in \01^n: ||\vec{v}||_1=t} \prod_{i=1}^n p_i^{v_i} (1-p_i)^{(1-v_i)}
\eeqp

Then consider the projectors $\{E_k\}_{k=0}^{n}$, 
\begin{equation}
E_k:= \sum_{\vec v\in \{0,1\}^n: ||\vec{v}||_1=k} \otimes_{i=1}^{n}(\Pi_i+v_i(\mathbf{1}-2\Pi_i)).
\end{equation}

It is easy to see that the event $B$ 
\beq
B := \sum_{t=1}^n \Pr(\mathsf{X} + t > \theta n ) E_t
\eeqp
fulfills Eq.~\eqref{eq:fidBC}. The `furthermore' part of the statement of the Lemma follows immediately from Eq.~\eqref{eq:bccond}.
\end{proof}

Theorem \ref{theo:Bbiid} applies to $\mathsf{X}$ being any classical random variable. Now we specialize to $\mathsf{X}$ being an exponential random variable. This specification allows us  
to obtain a bound on the Bures distance between the state before and after the rejecting measurement, together with a bound on the probability of $B$. {\color{black} For the latter, at variance with~\cite{BO21}, we employ here a second-order Lagrange remainder that allows us to bypass the reduction to a fixed threshold, thus simplifying the overall proof procedure.}

\begin{theorem}[Gentle quantum measurements on non-identical product states via exponential noise]\label{thm:chi-stable-final}
Let $\varrho := \rho_1\otimes \ldots \otimes \rho_n \in L((\mathcal{H}^{(d)})^{\otimes n})$ and consider a list of projectors $\{\Pi_{1}, \ldots, \Pi_n\},\, \Pi_i\in \mathcal{L}(\mathcal{H}^{(d)})$. Let $1>\lambda>0$, and let $\theta \in[0,1]$ be an arbitrary threshold. Let $\T\sim PB(p_1,\ldots, p_n)$ where $p_i=\Tr[\rho_i\Pi_i]$, and let the exponential random variable $\mathsf{X}$ be as in Theorem \ref{bernoulli_new}.

Then there exists a quantum event $B \in\left(\mathbb{C}^{d \times d}\right)^{\otimes n}$ such that 
\beq\label{eq:gentle}
\mathbb{E}_{\varrho}[B]=\Pr(\T+{\sf{X}}>\theta n)
\eeqc
and
\beq \label{eq:bures}
d_{\mathrm{Bures}}\left(\varrho,\varrho|_{\sqrt{\mathbf{1}-B}}\right) \lesssim \mathbb{E}_{\varrho}[B] \cdot \frac{\operatorname{stddev}[\T]}{\Exp[\mathsf{X}]}
\eeqp
Moreover,
\beq\label{eq:boundexp}
\Exp_{\varrho}[B] \leq \exp{\left(-n \lambda\left(\theta-\frac{1}{n}\sum_{i=0}^n p_i{\color{black}-e\frac\lambda2}\right)\right)}
\eeqp
\end{theorem}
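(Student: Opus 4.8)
The plan is to take $B$ to be exactly the quantum event constructed in Theorem~\ref{theo:Bbiid}, now specialised to the case where $\mathsf{X}$ is the exponential random variable of rate $\lambda$. With this choice, \eqref{eq:gentle} is nothing but the first conclusion of Theorem~\ref{theo:Bbiid}, so that claim requires no further work. The remaining two assertions then decouple cleanly: \eqref{eq:bures} follows by feeding the fidelity identity of Theorem~\ref{theo:Bbiid} into the classical $\chi^2$ estimate of Theorem~\ref{bernoulli_new}, whereas \eqref{eq:boundexp} is an independent Chernoff-type tail bound for $\Pr(\mathsf{T}+\mathsf{X}>\theta n)$ that does not use Theorem~\ref{bernoulli_new} at all.

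First I would establish \eqref{eq:bures}. By definition $d_{\mathrm{Bures}}(\varrho,\varrho|_{\sqrt{\mathbf{1}-B}})^2 = 2\bigl(1-F(\varrho,\varrho|_{\sqrt{\mathbf{1}-B}})\bigr)$, and Theorem~\ref{theo:Bbiid} identifies this fidelity with $BC\bigl((\mathsf{T}\mid \mathsf{T}+\mathsf{X}\leq \theta n),\mathsf{T}\bigr)$. Writing $\mathfrak{B}$ for the event $\mathsf{T}+\mathsf{X}>\theta n$, so that its complement $\overline{\mathfrak{B}}$ is the conditioning event above, I would apply \eqref{ineq:bcchi} to obtain $1-F \leq d_{\chi^2}\bigl((\mathsf{T}\mid \overline{\mathfrak{B}}),\mathsf{T}\bigr)$ and then bound the right-hand side with Theorem~\ref{bernoulli_new}, which applies because $\mathsf{X}$ is chosen with mean at least $\operatorname{stddev}[\mathsf{T}]$ and at least $1$. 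This chain yields $d_{\mathrm{Bures}}^2 \leq 2C\bigl(\Pr[\mathfrak{B}]\cdot \operatorname{stddev}[\mathsf{T}]/\mathbb{E}[\mathsf{X}]\bigr)^2$; taking square roots and recalling from \eqref{eq:gentle} that $\Pr[\mathfrak{B}]=\mathbb{E}_{\varrho}[B]$ gives \eqref{eq:bures}, with the constant $\sqrt{2C}$ absorbed into $\lesssim$. The only hypothesis left to verify is $\Pr[\mathfrak{B}]<\tfrac14$, which Theorem~\ref{bernoulli_new} requires.

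The more computational step is \eqref{eq:boundexp}. Here I would condition on $\mathsf{T}$ and use the exponential tail bound $\Pr(\mathsf{X}>y)\leq e^{-\lambda y}$, valid for every real $y$ (trivially when $y<0$, since then the right-hand side exceeds $1$). This gives
\begin{equation}
\mathbb{E}_{\varrho}[B]=\Pr(\mathsf{T}+\mathsf{X}>\theta n)=\sum_{t}\Pr[\mathsf{T}=t]\,\Pr(\mathsf{X}>\theta n-t)\leq e^{-\lambda\theta n}\,\mathbb{E}[e^{\lambda \mathsf{T}}].
\end{equation}
Since $\mathsf{T}=\sum_i\mathsf{T}_i$ is a sum of independent Bernoulli variables, its moment generating function factorises as $\mathbb{E}[e^{\lambda\mathsf{T}}]=\prod_i\bigl(1+p_i(e^{\lambda}-1)\bigr)\leq \exp\bigl((e^{\lambda}-1)\sum_i p_i\bigr)$, using $1+x\leq e^x$. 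The final move, which the paper flags as the simplification over~\cite{BO21}, is to control $e^{\lambda}-1$ by a second-order Lagrange remainder: for $0<\lambda<1$ one has $e^{\lambda}-1-\lambda=\tfrac{\lambda^2}{2}e^{\xi}\leq \tfrac{e\lambda^2}{2}$ for some $\xi\in(0,\lambda)$. Combined with $\sum_i p_i\leq n$ to bound only the quadratic term, this gives $(e^{\lambda}-1)\sum_i p_i\leq \lambda\sum_i p_i+\tfrac{e\lambda^2}{2}n$, and substituting back reproduces exactly the exponent $-n\lambda\bigl(\theta-\tfrac1n\sum_i p_i-\tfrac{e\lambda}{2}\bigr)$ of \eqref{eq:boundexp}.

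I expect the main obstacle to be essentially outsourced: the genuinely hard analytic content lives in Theorem~\ref{bernoulli_new} (deferred to the appendix), whose Poisson-binomial $\chi^2$ bound powers \eqref{eq:bures}; granting that theorem, the assembly for \eqref{eq:gentle} and \eqref{eq:bures} is routine. The one new ingredient at the level of this proof is the Lagrange-remainder estimate, whose virtue is that it makes \eqref{eq:boundexp} hold uniformly in the threshold $\theta$, so that, unlike in the original argument, no reduction to a single fixed $\theta$ is needed.
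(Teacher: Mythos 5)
Your proposal is correct and follows essentially the same route as the paper's proof: take $B$ to be the event of Theorem~\ref{theo:Bbiid}, obtain Eq.~\eqref{eq:bures} by combining the fidelity identity with Ineq.~\eqref{ineq:bcchi} and Theorem~\ref{bernoulli_new} (the paper likewise leaves the $\Pr[\mathfrak{B}]<\tfrac14$ hypothesis implicit), and derive Eq.~\eqref{eq:boundexp} via the exponential tail bound, the Poisson-binomial moment generating function, and the second-order Lagrange remainder. The only immaterial difference is the order of elementary steps in the Chernoff computation—you apply $1+x\le e^{x}$ to the MGF product before invoking the Lagrange remainder together with $\sum_i p_i\le n$, whereas the paper inserts an AM-GM step first—and both yield the identical exponent.
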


\begin{proof}
Eq.~\eqref{eq:bures} is an easy consequence of Theorems \ref{bernoulli_new}, \ref{theo:Bbiid} and Eq.~\eqref{ineq:bcchi}. 

Moreover, Eq.~\eqref{eq:boundexp} follows from 
\begin{align}
\mathbb{E}_{\varrho}[B]&:=\Pr[\T+\mathsf{X}>\theta n] \\
&= \sum_{t=0}^n \Pr(\mathsf{T}= t) \Pr(\mathsf{X} > \theta n-t)\\
&\leq \Exp[\exp (-\lambda(\theta n-\T))]\\
&=\exp (-\lambda \theta n) \Exp[\exp (\lambda \T)]\\
&=\exp (-\lambda \theta n) \prod_{i=0}^n ( 1- p_i + p_i e^{\lambda})\\
&\leq \exp (-\lambda \theta n)\left(\frac{1}{n} \sum_{i=0}^n (1+ p_i (e^{\lambda}-1))\right)^{n}\\
&\leq\exp (-\lambda \theta n)\left(1+ \frac{\lambda}{n}\sum_{i=0}^n  p_i + e\lambda^2/2\right)^{n} \\
&\leq \exp (-\lambda \theta n) \exp \left(\lambda \sum_{i=0}^n  p_i+en\lambda^2/2\right)\\
&= \exp \left(-n \lambda\left(\theta-\frac{1}{n}\sum_{i=0}^n p_i-e \lambda/2\right)\right)
\end{align}
as desired. In the first inequality, we used that $\Pr[\mathsf{X}>t] \leq \exp (-\lambda t)$; in the fourth equality we used the MGF of a Poisson binomial random variable; in the second inequality we used the AM-GM inequality; in the third inequality we used that, from the second-order Lagrange remainder:
$$e^\lambda=1+\lambda+\lambda^2\frac{e^{t^*}}{2}\leq 1+\lambda+e\lambda^2/2,$$
with $t^*\in[0,\lambda]$.
Finally, in the fourth inequality, we used that $1+x \leq e^{x} \text { for } x \in \mathbb{R}$.
\end{proof}

With these ingredients, we can prove Lemma~\ref{lem:threshold}. The algorithm that solves quantum threshold search is as follows:

\begin{algorithm}[H]
\textbf{Parameters}: $\eps,\delta, D > 0$. \\
\textbf{Promise}: Of the pairs $\{(\Pi_{1}^{(c)}, \ldots, \Pi_{n}^{(c)}, \theta_c)\}_{c=1}^m$, there is some $c\in [m]$ such that \beq\label{eq:promise1}
\frac{1}{n}\sum_{i=1}^n\Tr[\Pi_{i}^{(c)}\rho_i] > \theta_c
\eeq
\textbf{Initialize}: $\varrho^{(0)} \leftarrow \rho_1\otimes \ldots \otimes \rho_n$.
\vspace{5pt}
\begin{algorithmic}[1]
\For{$c = 1, \ldots, m$}
\vspace{1mm}
    \State Upon being presented with the pair 
    \beq
    (\{\Pi_{1}^{(c)}, \ldots , \Pi_{n}^{(c)}\}, \theta_c)
    \eeqc
    measure the two-outcome POVM $\{B_{c},\overline B_{c}:=\mathbf{1}-B_{c}\}$, constructed as in Theorem~\ref{theo:Bbiid} with threshold $\theta_c-\epsilon$ and with $1/\lambda=D\sqrt{n}$, on $\varrho^{(c-1)}$. \;
    \State If the measurement accepts, output $c$ and break. If the measurement rejects, let $\varrho^{(c)}$ denote the post-measurement quantum state. \;
\EndFor
\State If none of the measurements were accepted, output ``pass on all". 
\end{algorithmic}
\caption{Quantum threshold search on non-identical states ($\mathsf{\mathsf{ThresholdSearch}}$)\label{alg:threshold}}
\end{algorithm}
In Lemma \ref{lem:threshold}, we claimed that Algorithm \ref{alg:threshold} is such that if 
  \begin{equation}
  (\log m+C_2)^2<C_1 n \epsilon^2,
  \end{equation}
 for an appropriate constants $C_1$ and $C_2$, (and when $\epsilon<1$), with probability at least 0.03, the algorithm halts and outputs a projector exceeding the threshold, i.e. it halts on $i$ such that \beq\label{eq:exceeds}
  \frac{1}{n}\sum_{i=1}^n\Tr[\Pi_{i}^{(c)}\rho_i]  \geq \theta_c-\epsilon
  \eeqp
We now prove this. We will adopt the following notation, which mirrors the notation in~\cite{BO21}. For $c = 1, \dotsc, m$, let:
  \begin{enumerate}
  \item $\mathsf{T}^{(c)}=\sum_{i=1}^n \mathsf{T}^{(c)}_i$, with $\mathsf{T}^{(c)}_i\sim (\mathbb{E}_{\rho_{i}}[\Pi_{i}^{(c)}],1-\mathbb{E}_{\rho_{i}}[\Pi_{i}^{(c)}])$ Bernoulli variables,\\ implying  $\mathsf{T}^{(c)} \sim PB(\mathbb{E}_{\rho_{1}}[\Pi_{1}^{(c)}],\ldots, \mathbb{E}_{\rho_{n}}[\Pi_{n}^{(c)}])$. 
  \item $p_c = \mathbb{E}_{\varrho}[B_c]$;
  \item $\varrho^{(0)} = \rho_1\otimes \ldots \otimes \rho_n$ and $\varrho^{(c)}$ be the
    quantum state after the $c$-th measurement, \textit{conditioned} on the event $\overline{B}_j$ occurring for all $1 \le j \le c$;
  \item {$r_c = \mathbb{E}_{\varrho^{(c-1)}}[\overline{B}_c]$} 
  be the probability that the
    event $\overline{B}_i$ occurs assuming all the events $\overline{B}_j$ with $1 \le j
    \le c - 1$ occurred;
  \item $q_c = r_1 \dotsm r_c$ be the probability that \textit{all} of the
    events $\overline{B}_j$ with $1 \le j \le c$ occur;
    \item $s_c = q_{c-1}\cdot \mathbb{E}_{\varrho_{c-1}}[B_c]$ be the probability of observing outcomes $\bar B_1 \cdots \bar B_{c-1} B_c$.
  \end{enumerate}
  
As in~\cite{BO21}, our proof works by bounding the probabilities of two bad events. We now state what these events are and give an intuition for why their probabilities should be bounded, before we formalize the intuition. 
\begin{enumerate}
    \item The algorithm outputs a \textit{false negative}: it passes on all projector-threshold pairs, even though there was one that fulfilled the promise. 
    \item The algorithm outputs a \textit{false positive}: it outputs $c$ that actually doesn't fulfil the promise, i.e.
    \beq
    \frac{1}{n}\sum_{i=1}^n\Tr[\Pi_{i}^{(c)}\rho_i] \leq \theta_c - \eps
    \eeq
\end{enumerate}
\textit{Intuition for why these two probabilities are bounded}: Suppose hypothetically that the algorithm had the luxury to measure the given projector on $R$ \textit{fresh} copies of $\rho_1\otimes \ldots \otimes \rho_n$ each time. Then the Promise of Lemma \ref{lem:threshold}, and Chernoff's bound, ensure that with $R=O(1/\eps^2)$-many copies, one could identify some projector fulfilling \eqref{eq:exceeds} with high probability, so that neither false negative nor false positive would come to pass. 

However, in Algorithm \ref{alg:threshold}, we do not use $R$ fresh copies on each of the $m$ iterations of the For loop; instead, we make successive measurements on a single copy. Then the Damage Lemma (Eq.~\eqref{eq:damagelemma}), together with the properties of the specially-constructed $B_c$ in Theorem \ref{theo:Bbiid}, ensures that the usage of a `damaged' copy does not affect the probability of acceptance on a new measurement too much, relative to using $R$ fresh copies. The caveat is that `low damage' is only guaranteed if all of the previous measurements have rejected; care must then be taken to account for this. 

\begin{proof}[Proof of Lemma~\ref{lem:threshold}]

\textbf{Controlling the false negative probability} ---
The promise of Lemma \ref{lem:threshold} is that there is some $c\in [m]$ such that \beq\label{eq:promise3}
\frac{1}{n}\sum_{i=1}^n\Tr[\Pi_{i}^{(c)}\rho_i] > \theta_c
\eeqp
For this particular $c$, let us bound the probability of acceptance of the corresponding measurement $B_c$ (which we previously denoted $p_c$).

To do so, let us recall that Theorem \ref{thm:chi-stable-final} guarantees that the $B_c$'s are constructed in such a way that their measurement outcome statistics follow from that of a classical Poisson binomial random variable, $\mathsf{T}^{(c)} := \sum_{i=1}^n \mathsf{T}^{(c)}_i$. This is a sum of independent and not identically distributed binary random variables. Their concentration is captured by the multiplicative Chernoff bound in Eqs.~\eqref{eq:Chernoff2} and~\eqref{eq:Chernoff1}. 
Applying this to the sum $\mathsf{T}^{(c)} := \sum_{i=1}^n \mathsf{T}^{(c)}_i$, which has expectation $\mathbb{E}[\mathsf{T}^{(c)}] =n\overline p_c := \sum_{i=1}^n \Tr[\rho_i\Pi_{i}^{(c)}]$, we have, by the promise, $\theta_c \leq \overline{p}_c$ and for any non-negative random variable $\mathsf{X}$:
\begin{align}
 p_i &:= \mathbb{E}_{\varrho}[B_c]=\Pr(\mathsf{T}^{(c)}+\mathsf{X}>(\theta_c-\epsilon) n)\\
 &\geq \Pr(\mathsf{T}^{(c)}>(\theta_c-\epsilon)n)\geq \Pr(\mathsf{T}^{(c)}>(\overline{p}_c-\epsilon)n)\\
 &\geq 1- e^{-n \epsilon^2/(2\overline{p_i})}\geq 1-e^{-1/4}
\end{align}
for $n=\Omega(1/\epsilon^2)$, where the second equality on the first line is by Eq.~\eqref{eq:gentle}, and the first inequality on the third line follows from Eq.~\eqref{eq:Chernoff1}.

Since we have now established that there is some $i$ such that $1-p_i \leq e^{-1/4}$, so there must also be some minimum $t \in [m]$ such that $(1-p_1)...(1-p_t)\leq e^{-1/4}$. If $t=1$, then $r_1=1-p_1\leq e^{-1/4}\leq 4/5$, that is $p_1\geq 1/5$. 

Otherwise, $t>1$. Since $t$ is minimal, $(1-p_1)...(1-p_{t-1})> e^{-1/4}$.
Taking logs on both sides, this implies that $\sum_{c=1}^{t-1}p_c\leq -\log \prod_{c=1}^{t-1} (1-p_c)\leq 1/4$ (where we have used the inequality $x< -\log(1-x)$).
If $t>1$, by Eq.~(\ref{eq:damagelemma}), Theorem~\ref{bernoulli_new} and the fact that $d_{\text{tr}}(\rho,\sigma)\leq d_{\text{Bures}}(\rho,\sigma)$ (by the standard Fuchs-Van de Graaf inequality~\cite{fuchs1995mathematical}), we have
\begin{align}
&|(1-p_1)...(1-p_t)-q_t|\\
&\leq  2 \sum_{c=1}^{t-1}
      d_{\text{tr}}(\varrho_{|\sqrt{1-B_c}}, \varrho)
     \\ 
     &\lesssim 2\sum_{c=1}^{t-1} \mathbb{E}_{\varrho}[B_c] \cdot \frac{\operatorname{stddev}[\T^{(c)}]}{\Exp[\mathsf{X}]}\nonumber\\
      &\leq \left(\sum_{c=1}^{t-1} p_c\right) \cdot \frac{\sqrt{n}}{\Exp[\mathsf{X}]}\leq
      \frac{1}{4} \cdot \frac{\sqrt{n}}{\Exp[\mathsf{X}]},
\end{align}
and also
\begin{align}
|(1-p_1)...(1-p_{t-1})-q_{t-1}|\lesssim
      \frac{1}{4} \cdot \frac{\sqrt{n}}{\Exp[\mathsf{X}]}.
\end{align}

Recall that $\mathsf{X}$ is an exponential random variable with $\mathbb{E}[\mathsf{X}] = D\sqrt{n}$, meaning that choosing $D$ large enough we have $q_t\leq 4/5$ and $q_{t-1}\geq 3/4 $. The former means that the algorithm will output some $c\leq t$ with probability larger than $1/5$. \newline \newline

\textbf{Controlling the false positive probability} --- Now we restrict our attention to the first $t$ projectors presented to the algorithm, since we have established in the previous section that the algorithm terminates with more than some constant probability after at most $t$ iterations of the For loop. We now need to show that the algorithm, with sufficiently high probability, does not output a $c$ with $\mu_c= \frac{1}{n}\sum_{i=1}^n\Tr[\Pi_{i}^{(c)}\rho_i]  \leq \theta_c-\epsilon$. Let us denote the set of such $c$ as $\mathcal{B}$ (for Bad):
$$\mathcal B=\left\{c\leq t \Bigg| \mu_c \leq \theta_c-\epsilon \right\}.$$ 

By Eq.~(\ref{eq:boundexp}) in Theorem \ref{thm:chi-stable-final}, whenever $\mu_c  \leq \theta_c-\epsilon$ (i.e. for all $c \in \mathcal{B}$), we have that
\begin{align}\label{eq:boundwrong}
p_c  &\leq \exp \left(-n \lambda\left(\theta_c-\mu_c-e\lambda/2\right)\right)\nonumber\\
&\leq {\color{black}\exp \left(-D^{-1} \sqrt{n}\epsilon+eD^{-2}/2\right)} 
\end{align}
where we used that $\mathbb{E}[\mathsf{X}]=1/\lambda=D\sqrt{n}$ for some $D>0$. Therefore $p_c\leq (\eta/m)^2 < 1/5$ if $(2\log(m/\eta)+e D^{-2}/2)^2\leq D^{-1} n \epsilon^2$, and $\eta\leq0.01$. 

The argument is now identical to~\cite{BO21}, proof of Lemma 4.2. There, using Eq.~\eqref{eq:unionapp}, $F(\varrho,\varrho_t)<1$, and $\sum_{c\in[t]/\mathcal{B}}p_c<1/4$, $q_t<4/5$, one obtains 
  \begin{align}
    &\frac{1}{2} \sqrt{\sum_{c \in [t]/\mathcal{B}} s_c}
    \ge 0.99 - \sqrt{4/5}\\
    &\rightarrow \sum_{c \in [t]/\mathcal{B}} s_c
      \ge 4 \cdot (0.99 - \sqrt{4/5})^2
      \geq 0.03.
\end{align}
  Since $\sum_{c \in [t]/\mathcal{B}} s_c$ is the probability that the algorithm
  returns an index $c \in [t]$ with $\mu_c\geq \theta_c-\epsilon$, it follows
  that the algorithm is correct with probability at least $0.03$.

\end{proof}

\section{Quantum empirical risk minimization}\label{sec:Empirical risk minimization}
In this section we present our main algorithm to perform ERM for both projector- and state-valued classes of quantum processes.
\subsection{Empirical risk minimization for projector-valued functions}
We first show how to estimate the empirical risk of a fixed number of projector-valued functions on a given product state. The main theorem we prove is the following.

\begin{theorem}[{\sf{Quantum empirical risk minimization - Theorem~\ref{th:theoere0}, refined}}]\label{th:theoere}
Given access to a product state
\begin{equation}
\varrho=\rho_{1}\otimes...\otimes \rho_{n}
\end{equation} 
and a collection of lists of projectors $\{\Pi^{(c)}_{1},...,\Pi^{(c)}_{n}\}_{c=1,...,m}$, with 
\begin{equation}
\mu_c=\frac{1}{n}\sum_{i=1}^{n}\Tr[\rho_{i}\Pi^{(c)}_{i}],
\end{equation}
(i.e. $\mu_c$ is $1-R_{\varrho}(c)$, where the second term is the empirical risk of concept $c$ on the entire product state) there is an algorithm which outputs $c^*$ together with an estimate $\hat{\mu}_{c^*}$ of $\mu_{c^*}$ such that
\begin{equation}
\Pr(|\hat{\mu}_{c^*} -\max_{c\in [m]}\mu_c|\geq \epsilon\cup |\hat{\mu}_{c^*} -\mu_{c^*}|\geq \epsilon)\leq \delta,
\end{equation}
if $n$ is large enough. In fact we can take 
\begin{align}
    n&=\frac{1}{\epsilon^2}\log\frac{1}{\delta}\log^2\frac{1}{\epsilon}\nonumber\\
    &\cdot O\left(\max\left(\log\left(\frac{m}{\delta}\log\frac{1}{\delta}\log^2\frac{1}{\epsilon}\right),(\log m+C_1)^2\right)\right)
\end{align}
for some $C_1>0$.
\end{theorem}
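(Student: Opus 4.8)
The plan is to reduce empirical risk \emph{minimization} to a binary search on the value of $\max_{c}\mu_c$, where each comparison is resolved by the generalized threshold search of Lemma~\ref{lem:threshold}, and to terminate with a direct estimate of $\mu_{c^*}$ via naive expectation estimation (Proposition~\ref{estave}). The obstacle that shapes the whole argument is that each call to Algorithm~\ref{alg:threshold} is a quantum measurement that irreversibly disturbs $\varrho$; since binary search together with its amplification needs many such tests, a single copy of $\varrho$ cannot be reused throughout. I therefore first split the registers into disjoint batches and certify that the per-batch empirical averages all track the global $\mu_c$.

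First I would partition the $n$ registers into $B$ batches of size $l=n/B$ by sampling without replacement, producing product states $\varrho_s=\bigotimes_{k=1}^{l}\rho_{s,k}$ and the matching projector sublists $\{\Pi^{(c)}_{s,k}\}_{k\in[l]}$. By the concentration bound for sampling without replacement (Lemma~\ref{lem:multisamplerep}) and a union bound over all $c\in[m]$ and all $B$ batches, one gets, except with probability at most $\delta/3$,
\begin{equation}
\Big|\tfrac{1}{l}\sum_{k=1}^{l}\Tr[\rho_{s,k}\Pi^{(c)}_{s,k}]-\mu_c\Big|\le\epsilon,\qquad\forall\,c\in[m],\ \forall\,s\in[B],
\end{equation}
provided $l=\Omega(\epsilon^{-2}\log(mB/\delta))$. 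This makes any single batch an $\epsilon$-faithful proxy for the full state for the purpose of thresholding the $\mu_c$.

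Since all $\mu_c\in[0,1]$, I would maintain a candidate interval for $\max_{c}\mu_c$, initialized to $[0,1]$ and halved over $K=O(\log(1/\epsilon))$ rounds. In the round with midpoint threshold $\theta$, I would run Algorithm~\ref{alg:threshold} with every threshold set to $\theta$ (shifted by the $\epsilon$ slack of the batching step) on a \emph{fresh} batch to decide whether some $c$ has $\mu_c>\theta$. As a single run succeeds only with probability $\ge 0.03$, I would amplify by repeating it on $R=O(\log(K/\delta))$ independent fresh batches and taking the logical OR: a returned index triggers the answer ``yes'' and is recorded, otherwise the answer is ``no''. The false-positive control inside Lemma~\ref{lem:threshold} guarantees any returned $c$ has $\mu_c\ge\theta-O(\epsilon)$, while the promise plus amplification misses a genuine ``yes'' with probability at most $0.97^{R}$; choosing $R$ as above makes all $K$ decisions simultaneously correct up to $\epsilon$-slack except with probability $\delta/3$. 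Binary search then returns an index $c^*$ with $\mu_{c^*}\ge\max_{c}\mu_c-O(\epsilon)$, and measuring $\overline{\Pi^{(c^*)}}$ on one more fresh batch and invoking Proposition~\ref{estave} yields $\hat\mu_{c^*}$ with $|\hat\mu_{c^*}-\mu_{c^*}|\le\epsilon$ except with probability $\delta/3$. A union bound over the three failure events and a rescaling of $\epsilon$ give both clauses of Ineq.~\eqref{ineqerm} with probability $\ge 1-\delta$.

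For the sample complexity, the number of batches is $B=O(KR)=\mathrm{polylog}(1/\epsilon,1/\delta)$, and each batch must be large enough to meet both the threshold-search condition $(\log m+C_2)^2<C_1 l\epsilon^2$ of Lemma~\ref{lem:threshold} and the batching/estimation accuracy, i.e.\ $l=O(\epsilon^{-2}\max(\log(mB/\delta),(\log m+C_1)^2))$; multiplying $n=Bl$ and collecting logarithmic factors reproduces the stated bound. I expect the main difficulty to be exactly the bookkeeping forced by the sequential, adaptive, and only-constant-probability nature of the measurements: the batching step is what decouples successive tests so that the one-sided guarantees of Lemma~\ref{lem:threshold} can be amplified and chained through the binary search without the accumulated state disturbance corrupting later rounds, and verifying that the $\epsilon$-slacks introduced at each stage compose into the single $\epsilon$ of the theorem is where care is needed.
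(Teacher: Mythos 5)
Your overall architecture --- sampling-without-replacement batching via Lemma~\ref{lem:multisamplerep}, binary search over $[0,1]$ with fresh batches per comparison, threshold search via Lemma~\ref{lem:threshold}, and a final direct estimate via Proposition~\ref{estave} --- matches the paper's proof (Lemma~\ref{lem:simpleere} and Algorithm~\ref{algo:proj2}). However, there is a genuine gap in how you resolve each binary-search comparison. You amplify Algorithm~\ref{alg:threshold} by a logical OR over $R$ fresh batches and accept any returned index, justifying this with the claim that ``the false-positive control inside Lemma~\ref{lem:threshold} guarantees any returned $c$ has $\mu_c \geq \theta - O(\epsilon)$.'' Lemma~\ref{lem:threshold} provides no such guarantee: it states only that with probability at least $0.03$ the algorithm outputs an index satisfying the threshold condition. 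The complementary event, of probability up to $0.97$, includes the algorithm outputting a \emph{bad} index, and nothing in the lemma bounds this false-positive probability. Indeed, the per-index bound $p_c \leq (\eta/m)^2$ established in its proof applies to acceptance of $B_c$ on the \emph{original} state $\varrho$, not on the sequentially damaged state actually being measured when index $c$ is reached; conditioned on the (unlikely) rejection of some earlier measurement with large acceptance probability, the post-measurement state can be strongly perturbed, after which a bad projector list may accept with large probability.

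Because of this, your OR-amplification makes matters worse rather than better: a constant per-run false-positive rate becomes a near-certain false positive over $R$ runs, a single false positive pushes the binary-search interval above $\max_c \mu_c$, and the recorded $c^*$ can then have $\mu_{c^*}$ far below $\max_c \mu_c$. Your final estimation step ensures $|\hat\mu_{c^*} - \mu_{c^*}| \leq \epsilon$, but precisely because of that it cannot repair the first clause $|\hat\mu_{c^*} - \max_{c\in[m]} \mu_c| < \epsilon$. The paper closes exactly this hole with the \textbf{Check} step of Algorithm~\ref{algo:proj2}: every index returned by $\mathsf{ThresholdSearch}$ is verified by directly measuring the empirical average $\overline{h^{(c)}}$ on a further fresh batch, and a ``yes'' is accepted only if the measured value clears the (slightly shifted) threshold. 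The false-positive probability of one round is then bounded by $k\, p_{FPTS}\, p_{FPC}$, which is small because $p_{FPC}$ is exponentially small in $l\epsilon^2$ by the Chernoff bound (Proposition~\ref{estave}), regardless of how large $p_{FPTS}$ may be. Inserting such a verification measurement after each accepted run of threshold search (and accounting for the extra batches it consumes in the union bound and in the sample-complexity bookkeeping) repairs your argument and essentially reproduces the paper's proof.
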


Let us now convey the intuition behind the claimed algorithm (which is Algorithm \ref{algo:proj2}). Observe that the task of finding the concept that attains the maximum overlap, $\max_{c\in [m]}\mu_c$, can be accomplished by binary search over the interval $[0,1]$ -- namely, we search for the largest threshold such that there is at least one risk above it, out of the $m$ possible ones. The remainder of this section gives the details of how to do so. The key idea is to divide the given product state $\varrho$ into \textit{blocks} of $l=n/2Tk$ states,
that is
\beq\label{eq:blocks}
\varrho = \varrho_1 \otimes \varrho_2 \cdots \otimes \varrho_{2Tk} \quad\text{where}\quad \varrho_s = \rho_{s,1}\otimes \cdots \otimes\rho_{s, l}
\eeqc
and take the block size $l$ large enough so that the average risk on each block concentrates towards $\mu_c$ for each $c$. This gives us sufficient confidence to apply our Algorithm \ref{alg:threshold} ($\mathsf{ThresholdSearch}$) on each block to check if there exists some concept that exceeds the current candidate threshold. Depending on the results of this check, we adjust the candidate threshold accordingly. The guarantees of this Algorithm (Algorithm \ref{algo:proj2}) are given in Lemma \ref{lem:simpleere}. Subsequently, in Lemma \ref{lem:multisamplerep}, we derive the block size $l$ we need to ensure the desired concentration. 

\begin{lemma}[Quantum empirical risk minimization given large product states\label{lem:simpleere}]
Given access to $2Tk$ blocks of states as in Eq.~\eqref{eq:blocks} and a collection of lists of projectors $\{\Pi^{(c)}_{s,j}\}_{c=1,...,m,s=1,...,2Tk, j=1,...,l}$, 
suppose that the following conditions hold:
\begin{enumerate}
    \item For appropriate constants $C_1,C_2>0$
\begin{equation}\label{cond:conv2}
  (\log m+C_2)^2<C_1 l \epsilon^2;
\end{equation}
\item  At the same time
\begin{equation} 
  l>\frac{\log(Tk/\delta)}{\epsilon^2},
  \end{equation}
for large enough $T=O(\log\frac{1}{\epsilon})$ and $k=O\left(\log\frac{1}{\delta} \log\frac{1}{\epsilon}\right)$.
  \item the numbers $0\leq\mu_c\leq 1$ are approximations of the expected value of $\{\Pi^{(c)}_{s,j}\}_{s,j}$ with respect to the blocks of states for all $c\in[m]$, i.e., they satisfy 
\begin{equation}\label{cond:conv1}
\left|\frac{1}{l}\sum_{j=1}^{l}\mathbb{E}_{\rho_{s,j}}[\Pi^{(c)}_{s,j}]-\mu_{c}\right|\leq \epsilon/4, \, \forall\,c\in [m], s\in[2Tk].
\end{equation}

\end{enumerate}

Then there is an algorithm that outputs $c^*$ and $\hat\mu_{c^*}$ such that 
\begin{equation}
\Pr(|\hat\mu_{c^*} -\max_{c\in [m]}\mu_c|\geq 6\epsilon \cup |\hat{\mu}_{c^*} -\mu_{c^*}|\geq 6\epsilon)\leq\delta.
\end{equation}
 
\end{lemma}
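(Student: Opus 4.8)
The plan is to realize the maximum overlap $\max_{c\in[m]}\mu_c$ by binary search over the threshold interval $[0,1]$, where at each step the decision ``is there a concept whose overlap exceeds the current candidate threshold?'' is answered by the non-identical-state threshold search of Lemma~\ref{lem:threshold} (Algorithm~\ref{alg:threshold}), run on a fresh block $\varrho_s$. Condition~\eqref{cond:conv2} is exactly the hypothesis of Lemma~\ref{lem:threshold} with sample size $l$, so a single call to $\mathsf{ThresholdSearch}$ with threshold $\theta$ on a block, under the promise that some concept's \emph{block} overlap exceeds $\theta$, returns a $c$ with block overlap $\geq\theta-\epsilon$ with probability at least $0.03$. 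Condition~\eqref{cond:conv1} guarantees that each block overlap $\tfrac1l\sum_{j}\mathbb{E}_{\rho_{s,j}}[\Pi^{(c)}_{s,j}]$ is within $\epsilon/4$ of the target $\mu_c$ uniformly in $c$ and $s$, so that a threshold decision made on any block is, up to an $O(\epsilon)$ slack, a decision about the $\mu_c$ themselves; this is what makes it legitimate to combine evidence gathered on \emph{different} blocks.

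The algorithm (Algorithm~\ref{algo:proj2}) maintains a candidate interval, initialized to $[0,1]$, and at binary-search step $t=1,\dots,T$ tests its midpoint $\theta_t$. Because Lemma~\ref{lem:threshold} only succeeds with constant probability and, moreover, does not on its own certify that its output is not a false positive, I would handle each step in two sub-steps, each consuming fresh blocks (this is the origin of the factor $2$ in $2Tk$). First, amplify: run $\mathsf{ThresholdSearch}$ at threshold $\theta_t$ on up to $k$ fresh blocks, stopping at the first block that returns a candidate $c$. The probability that all $k$ independent runs fail to return a valid candidate when one is present is at most $(1-0.03)^{k}$, so taking $k=O\!\left(\log\tfrac1\delta\,\log\tfrac1\epsilon\right)$ drives the per-step false-negative probability below $\delta/(2T)$ after a union bound over the $T=O(\log\tfrac1\epsilon)$ steps. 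Second, verify: measure the empirical average $\overline{\Pi^{(c)}}$ of the returned candidate on one more fresh block via naive expectation estimation (Proposition~\ref{estave}); the second hypothesis on the block size $l$ ensures this estimate is $\epsilon$-accurate with failure probability $\le\delta/(2Tk)$, so a spurious candidate (one whose true block overlap lies well below $\theta_t$) is rejected with high probability, controlling false positives. If some candidate survives verification we record $c^\ast\leftarrow c$ and recurse on the upper half; otherwise we recurse on the lower half.

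After $T=O(\log\tfrac1\epsilon)$ steps the candidate interval has width $O(\epsilon)$ and its endpoints bracket $\max_c\mu_c$ up to the accumulated slacks: the $-\epsilon$ from Lemma~\ref{lem:threshold}, the $\epsilon/4$ from Condition~\eqref{cond:conv1}, the $\epsilon$ verification tolerance from Proposition~\ref{estave}, and the $O(\epsilon)$ binary-search discretization. I would output the last recorded $c^\ast$ together with $\hat\mu_{c^\ast}$ taken from its final verified empirical average. Tracking these constants through the triangle inequality gives both $|\hat\mu_{c^\ast}-\max_c\mu_c|$ and $|\hat\mu_{c^\ast}-\mu_{c^\ast}|$ below $6\epsilon$. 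A final union bound over all $T$ steps, the $\le k$ amplification rounds and $\le k$ verifications per step (hence over all $2Tk$ block usages) collects every failure mode into total probability at most $\delta$.

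The step I expect to be the main obstacle is the bookkeeping that keeps the guarantee ``on any data sequence'' rather than merely ``with high probability over the data'': because no block may be reused and each gentle measurement of $\mathsf{ThresholdSearch}$ damages its block, the correctness of the binary search hinges entirely on Condition~\eqref{cond:conv1} holding \emph{simultaneously} for every block and every concept, so that the decisions taken on disjoint blocks are mutually consistent. Threading the one-sided, constant-probability guarantee of Lemma~\ref{lem:threshold} together with the two-sided verification of Proposition~\ref{estave} through a single union bound, while ensuring the $\epsilon$-slacks compose to exactly the claimed $6\epsilon$ and the block budget stays at $2Tk$, is the delicate part; the remainder is a standard amplified binary search.
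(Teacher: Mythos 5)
Your overall architecture (binary search over thresholds, fresh blocks per decision, concentration of block risks via Condition~\eqref{cond:conv1}, amplification plus verification) matches the paper's Algorithm~\ref{algo:proj2}, but your control flow has a genuine gap in the false-negative analysis. The guarantee of Lemma~\ref{lem:threshold} is one-sided in a specific way: with probability at least $0.03$ the output is a \emph{valid} concept, but with the remaining probability the algorithm's behavior is unconstrained --- it may output a \emph{spurious} candidate $c$ (one with block overlap below $\theta_t-\epsilon$) rather than ``pass on all''. Your amplification step stops at the first block that returns \emph{any} candidate, and your step descends to the lower half as soon as a single verification fails. Consequently, when a concept above $\theta_t$ does exist, the bad path ``first returned candidate is spurious $\to$ verification correctly rejects it $\to$ recurse on lower half'' is a false negative whose probability is not controlled by your $(1-0.03)^k$ bound; in the worst case (ThresholdSearch outputs a spurious candidate whenever its $0.03$-probability success does not occur) this path has probability close to $0.97$ at every such step, so the binary search fails with constant probability per level rather than probability $\delta/(2T)$.

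The fix is exactly what the paper's Algorithm~\ref{algo:proj2} does: pair \emph{each} ThresholdSearch attempt with its own Check on a fresh block, and treat ``ThresholdSearch returned a candidate but Check rejected it'' as one of the $k$ allowed consecutive failures, descending to the lower half only after $k$ such failures. Then each attempt succeeds (valid candidate found \emph{and} confirmed) with probability at least roughly $0.03(1-p_{\mathrm{FNC}})$, the probability of $k$ consecutive failures when a concept above threshold exists is bounded by $0.97^k + p_{\mathrm{FNC}}$-type terms, and false positives are still suppressed since a wrong ascent requires a spurious ThresholdSearch output \emph{and} a Check false positive within the same attempt (probability at most $k\,p_{\mathrm{FPTS}}\,p_{\mathrm{FPC}}$ per level). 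This interleaved retry structure is also what produces the block budget $2Tk$ (two blocks per attempt, $k$ attempts per level, $T$ levels), rather than your $T(k+1)$ accounting. Your remaining ingredients --- the use of Condition~\eqref{cond:conv1} to make decisions on disjoint blocks mutually consistent, Proposition~\ref{estave} for the Check, and the $O(\epsilon)$ slack bookkeeping summing to $6\epsilon$ --- are in line with the paper's proof once this control flow is corrected.
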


Here we note that `given large product states' in the title of the Lemma refers to the requirement that the $n=2Tkl$ is large enough to guarantee that Eq.~(\ref{cond:conv2}) holds.

\begin{proof}
We use Algorithm \ref{algo:proj2}, and the following notation:
\begin{itemize}
\item $h_{s}^{(c)}=\{\Pi_{s,1}^{(c)},...,\Pi_{s,n}^{(c)}\}$
\item $\overline{h_{s}^{(c)}}=\Pi_{s,1}^{(c)}\otimes \mathbf{1}\otimes ....\otimes  \mathbf{1}+\mathbf{1}\otimes \Pi_{s,2}^{(c)}\otimes ....\otimes  \mathbf{1}+\mathbf{1}\otimes \mathbf{1}\otimes ....\otimes \Pi_{s,n}^{(c)}$
\item $R_{\varrho_{s}}(c)=1-\frac{1}{n}\mathbb{E}_{\varrho_{s}}[\overline{h_{c,s}}]$ 
\item $\mathsf{X}_{c,s}$ is the random variable obtained by measuring $\overline{h_{c,s}}$ on $\varrho_{s}$.
\end{itemize}

\begin{algorithm}[H]
\textbf{Input:} $2T{}k$ product states $\{\varrho_{s}\}_{s=1,\ldots 2Tk}$ and $2Tkm$ sets of projectors $\{\{h_{s}^{(c)}\}_{c=1,...,m}\}_{s=1,...,2Tk}$, with $h_{s}^{(c)}=\{\Pi_{s,1}^{(c)},...,\Pi_{s,n}^{(c)}\}$.\\

\textbf{Parameters}: $\eps,\delta,k > 0$.      \\
\textbf{Initialize}:  $\theta=1/2, low=0,high=1, failures=0, s = 0$.
\vspace{3pt}\\
\begin{algorithmic}[1]
\While{$high - low \geq 6\eps$}
        \If{$failures < k$}
            \State $s += 1$
            \State \textbf{Threshold search}: Apply $\mathsf{ThresholdSearch}$ (Algorithm \ref{alg:threshold}) on the set of projectors-threshold pairs $\{(h_{2s}^{(c)},\theta-\eps)\}_{c=1}^m$
            with the parameter $\epsilon/4$, to the product state $\varrho_{2s-1}$.
            \If{\textbf{Threshold search} doesn't output a concept}:
                \State $failures += 1$
            \Else{ \textbf{Threshold search} outputs concept $c$}
                \State \textbf{Check}: measure $\overline{h_{2s}^{(c)}}$ on $\varrho_{2s}$ and check if $\mathsf{X}_{c,2s}\geq n(\theta-7/4\epsilon).$
                \If{\textbf{Check} outputs `yes'}:
                    \State $low \gets \theta-2\eps, \, high \gets high, \, \theta \gets \frac{1}{2}(high+low)$ \Comment{Update interval to upper half}  
                    \State $failures \gets 0.$ 
                \Else{ \textbf{Check} outputs `no' and}
                    \State $failures += 1$
                \EndIf
            \EndIf
        \Else{ there have been $k$ consecutive failures, so}
            \State $low \gets low, \,high \gets \theta,\, \theta \gets  \frac{1}{2}(high+low)$ \Comment{Update interval to lower half} 
            \State $failures \gets 0$ 
        \EndIf
\EndWhile
\State Output $\theta$ and the last selected concept, if there is one, otherwise pick it randomly.
\end{algorithmic}
\caption{\label{algo:proj2}Learning projector-valued functions}
\end{algorithm}
\textbf{Algorithm in words:} Algorithm \ref{algo:proj2} runs binary search to find an interval containing $\max_c \mu_c$, starting with the candidate interval $[0,1]$, determining whether the desired value lies in the upper or lower half, and then updating the candidate interval to the relevant half and recursing. To determine in which half of the candidate interval the desired value lies, the algorithm uses up two blocks of samples, $\varrho_{2s-1}$ and $\varrho_{2s}$: on the first block, we run $\mathsf{ThresholdSearch}$, which also outputs a concept that exceeds the current candidate value $\theta_c$. On the second block, we run a check, that confirms that this concept indeed exceeds the threshold $\theta_c$ (as $\mathsf{ThresholdSearch}$ only succeeds with probability $0.03$). We declare a $failure$ if by the end of this process, we have not received a `yes' from both algorithms; after $k$ consecutive failures, we conclude that there was actually no concept exceeding the candidate threshold, and move on by decreasing the candidate threshold (Line 17). Conversely, if at any point we receive a `yes' from both algorithms, we conclude that the candidate threshold was too low and increase it (Line 10).

\textbf{Error analysis:} Let us now analyze the error in this algorithm. The algorithm will err if either Line 17 or Line 10 updates the candidate threshold wrongly. We are interested in the probability of either of these two events happening for a fixed candidate threshold; by a union bound, we will then multiply this probability by the number of candidate thresholds that are examined, which is $O(\log(1/\epsilon))$ -- as the interval does not get updated once it becomes smaller than $6\eps$, and every update approximately halves the size of the interval (see Lines 10 and 17). 

We thus define the following four error probabilities and their associated events:
\begin{itemize}
\item $p_{FPTS}$: $\mathsf{ThresholdSearch}$ outputs a false positive, i.e. it outputs that there is a concept above threshold when there is none. 
\item $p_{FNTS}$: $\mathsf{ThresholdSearch}$ outputs a false negative, i.e. it outputs `no concept above threshold' when, in fact, there was one.
\item $p_{FPC}$: Check outputs a false positive, i.e. $\mathsf{X}_{c,2s}\geq l(\theta-7/4\epsilon)$ when in fact $\mu_c < \theta-2\eps.$
\item $p_{FNC}$: Check outputs a false negative, i.e. $\mathsf{X}_{c,2s}< l(\theta-7/4\epsilon)$ when in fact $\mu_c \geq \theta.$
\end{itemize}
Note that, by Lemma~\ref{lem:threshold}, it holds
$p_{FNTS} <0.97$, whereas $p_{FNC}$ and $p_{FPC}$ can be made exponentially small in the block size $l$ by the multiplicative Chernoff bound (Proposition~\ref{estave}).

If Line 10 updates wrongly, it can only be because $\mathsf{ThresholdSearch}$ outputs a concept even though there is no concept that exceeds the threshold \emph{and} Check outputs `yes' on that wrong concept. The probability of this happening for a given interval is at most $k\, p_{FPTS}\, p_{FPC}$, as there are $k$ rounds where this could potentially happen and any such event triggers an update of the interval. 

If Line 17 updates wrongly, there must have been $k$ consecutive failures when there actually was a concept above threshold. Each such failure is caused by one of the following events: 
\begin{itemize}
\item Threshold search executes correctly at some point but Check falsely outputs no. The probability that this happens in $k$ rounds is at most $p_{FNC}$. To see this, call $\tau_i$ the event that the first false negative check occurs at time $i$. $\{\tau_i\}$ are mutually exclusive events, and $\Pr(\tau_i)\leq (p_{FNTS})^{i-1}(1-p_{FNTS})p_{FNC}$. The probability that there is a false negative at some time $i$ is $\sum_{i=1}^{k}\Pr(\tau_i)\leq  p_{FNC}$; 
\item Threshold search wrongly outputs that no concept was above threshold (probability $p_{FNTS}$) for $k$ consecutive times. The probability of these events is bounded by $0.97^k$, as explained above.
\end{itemize}

Summing over the $T$ rounds, the probability of error is then upper bounded as
\begin{equation}
p_{err}\leq T{}(0.97^k+k p_{\mathrm{FPC}}(l)+p_{\mathrm{FNC}}(l)).
\end{equation}

With the choices made for the parameters of the algorithm, we obtain the guarantee given in the Lemma statement.

\end{proof}

Next, we control the probability that the (possibly non-identical) product states give an accurate approximation to the expectation values of the projectors associated with the possible concepts, i.e. the probability that Eq.~(\ref{cond:conv1}) is satisfied. The main technical ingredient is the following Lemma which shows, essentially, that H\"{o}ffding's inequality is effective even for proving the concentration of the empirical mean of samples without replacement from a finite population.

\begin{lemma}\label{lem:multisamplerep}
Let $\mathcal{Y}=\{1,...,n\}$, $n\geq 3Kl$, and let the $K$ sets of indices $\{\{X_{lk+i}\}_{i=1}^l\}_{k=0}^{K-1}$ be random samples drawn without replacement from $\mathcal{Y}$. Furthermore, consider $m$ different finite populations of $n$ numbers $\{\{x_{c,j}\}_{j=1}^{n}\}_{c=1}^m$, with $0\leq x_{c,j}\leq 1$. From each population $\{x_{c,j}\}_{j=1}^{n}$ obtain $K$ subsets of size $l$, as $\{\{x_{c,X_{lk+j}}\}_{j=1}^l\}_{k=0}^{K-1}$. 
Then it holds
\begin{align}
&\mathrm{Pr}\left(\underset{\substack{ c=1,...,m\\ k=0,...,K-1}}{\max}\Big|\frac{1}{l}\sum_{j=1}^{l}x_{c,X_{lk+j}}-\mu_{c}\Big|\geq \epsilon\right)\\
&\leq 2Kme^{-2l \epsilon^2/4},
\end{align}
where $\mu_c = \frac{1}{n}\sum_{j=1}^{n} x_{c,j}$.
\end{lemma}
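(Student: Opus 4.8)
The plan is to prove the bound by a union bound over the $Km$ pairs $(c,k)$ of population index and block index, reducing everything to a single concentration statement for one block and one population. The key structural observation is that, although the $K$ blocks are drawn jointly from $\mathcal{Y}$ without replacement and are therefore mutually dependent, the union bound only needs the \emph{marginal} behaviour of each block. First I would note that, by the exchangeability of sampling without replacement, the index set $\{X_{lk+1},\dots,X_{lk+l}\}$ of every block $k$ has the same marginal law as the first $l$ draws, i.e.\ it is a uniformly random size-$l$ subset of $\mathcal{Y}=\{1,\dots,n\}$ drawn without replacement. Consequently, for each fixed $(c,k)$ the quantity $\tfrac{1}{l}\sum_{j=1}^l x_{c,X_{lk+j}}$ is exactly the empirical mean of an $l$-element sample taken without replacement from the finite population $\{x_{c,1},\dots,x_{c,n}\}$, whose true mean is $\mu_c$.

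The central ingredient is then Hoeffding's concentration inequality for sampling without replacement. I would invoke Hoeffding's classical result that the sum of a without-replacement sample is dominated in the convex order by the sum of the corresponding with-replacement (i.i.d.) sample; applying this domination to the exponential moment generating function shows that the usual with-replacement Hoeffding tail bound remains valid for the without-replacement sum. Since $0\le x_{c,j}\le 1$, each summand has range $1$, so a sample of size $l$ satisfies
\[
\Pr\!\left(\Big|\tfrac{1}{l}\sum_{j=1}^l x_{c,X_{lk+j}}-\mu_c\Big|\ge\epsilon\right)\le 2e^{-2l\epsilon^2}\le 2e^{-l\epsilon^2/2}.
\]
The final inequality is a crude weakening matching the exponent $2l\epsilon^2/4=l\epsilon^2/2$ in the statement, which leaves ample slack; the hypothesis $n\ge 3Kl$ is more than enough to make all $Kl$ draws well-defined, and, if one prefers the sharper Serfling form, it also keeps the finite-population correction factor bounded away from zero.

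Finally I would assemble the pieces by subadditivity of probability over the $Km$ events indexed by $c\in[m]$ and $k\in\{0,\dots,K-1\}$:
\[
\Pr\!\left(\max_{c,k}\Big|\tfrac{1}{l}\sum_{j=1}^l x_{c,X_{lk+j}}-\mu_c\Big|\ge\epsilon\right)\le\sum_{c=1}^m\sum_{k=0}^{K-1}2e^{-l\epsilon^2/2}=2Kme^{-2l\epsilon^2/4},
\]
which is the claim. I expect the only genuinely non-routine step to be the justification of the Hoeffding bound for sampling without replacement: one must either cite Hoeffding's convex-domination theorem or reproduce the moment-generating-function argument showing that removing replacement can only improve concentration relative to the i.i.d.\ case. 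The exchangeability reduction and the union bound are standard, and the dependence among the blocks is harmless precisely because subadditivity never uses independence.
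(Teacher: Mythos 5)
Your proof is correct, and it takes a genuinely different --- and in fact sharper --- route than the paper's. The paper's proof (Theorem~\ref{thmappsampl} in Appendix~\ref{app:worep}) is sequential: it conditions on the first $t$ blocks having empirical means within $\epsilon$ of $\mu_c$, notes that block $t+1$ is then a without-replacement sample from the \emph{remaining} population of size $n-tl$, and uses the hypothesis $n\geq 3Kl$ to bound the drift of the conditional means, $|\mu_{c,t}-\mu_c|\leq\epsilon/2$; it then applies the without-replacement Hoeffding bound (Theorem~\ref{hoeffrep}, via the union-bound version in Theorem~\ref{th:multinorep}) at deviation $\epsilon/2$ on each block and sums over blocks, which is exactly where the weakened exponent $2l(\epsilon/2)^2=l\epsilon^2/2$ in the statement comes from. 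Your exchangeability observation short-circuits all of this bookkeeping: since the joint law of $(X_1,\dots,X_{Kl})$ is exchangeable, every block has the same marginal law as the first $l$ draws, i.e.\ a simple random sample of size $l$ from the \emph{full} population with mean exactly $\mu_c$, so the same core ingredient (Hoeffding's convex-domination argument for sampling without replacement) applies at the full deviation $\epsilon$, and the union bound over the $Km$ pairs requires no independence between blocks. This yields the stronger bound $2Kme^{-2l\epsilon^2}$, which you then legitimately weaken to match the stated exponent, and it shows that the hypothesis $n\geq 3Kl$ is superfluous for this lemma --- only $n\geq Kl$ is needed for the draws to be well-defined. Nothing in the paper's downstream use of the lemma (where all blocks must be simultaneously good before the batches are consumed by the learning algorithm) requires the sequential formulation, so your argument is a strict simplification and quantitative improvement over the paper's.
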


The proof is given in the appendix~\ref{app:worep} (see the equivalent Theorem~\ref{thmappsampl}). With these two Lemmas in hand, we are now able to prove that Algorithm \ref{algo:proj2} works for finding the empirical risk minimizer amongst projector-valued functions.

\begin{proof}[Proof of Theorem~\ref{th:theoere}]
Let $n\geq 6Tkl$ be the number of observations of the output of the unknown process (where we remind readers that each observation corresponds to a quantum state). That is, we set $K = 2Tk$ in Lemma \ref{lem:multisamplerep}. Sampling without replacement $2T{}k$ lists of length $l$ from $[n]$, let the $s$-th list define the product state
\begin{equation}
\varrho_{s}=\rho_{s,1}\otimes...\otimes \rho_{s,l}, \qquad s=1,..., 2T{}k
\end{equation}
and for each concept $c \in [m]$, the set of projectors
\begin{equation}
\{\Pi^{(c)}_{s,1},... \Pi^{(c)}_{s,l}\}, \qquad s=1,..., 2T{}k.
\end{equation}
Now by identifying $x_{c,j}=\Tr[\Pi^{(c)}_{j}\rho_{j}]$ in Lemma~\ref{lem:multisamplerep}, we obtain the desired concentration
\begin{equation}\label{cond:conv22}
\left|\frac{1}{l}\sum_{j=1}^{l}\mathbb{E}_{\rho_{s,j}}[\Pi^{(c)}_{s,j}]-\mu_{c}\right|\leq \epsilon/4, \, \forall\,c\in [m], s\in[2T{}k],
\end{equation}
{ with probability $p_{err}^{(1)}\leq 4T{}kme^{-2l \epsilon^2/64}$,}

If Eq.~(\ref{cond:conv22}) is true, then condition Eq.~(\ref{cond:conv1}) is satisfied and we can apply Algorithm~(\ref{algo:proj2}) to obtain a good estimate with probability of error $p_{err}^{(2)}\leq T{}(0.97^k+(k+1)e^{-l \epsilon^2/72})$.

We can make $p_{err}^{(1)}+p_{err}^{(2)}\leq \delta$ with the choice of $T{}, k$ as in Lemma~\ref{lem:simpleere} and $l=O(\max(\log (T{}km/\delta)/\epsilon^2), (\log m+C_2)^2/\epsilon^2))$. 
\end{proof}

\subsection{Empirical risk estimation for projector-valued functions}
In this section we will show that if the size of a product state is sufficiently large with respect to the logarithm of the local dimension, we can not only identify the concept yielding the highest minimum risk and estimate that risk, but in fact estimate the risks of \textit{all} of the concepts at the same time. The key idea is built on the philosophy of shadow tomography: we find a function that predicts the expectation values \begin{equation}
\mu_c=\frac{1}{n}\sum_{i=1}^{n}\Tr[\rho_{i}\Pi^{(c)}_{i}],
\end{equation}
where $c$ runs over the possible different concepts, given access to the product state
\begin{equation}
\varrho=\rho_{1}\otimes...\otimes \rho_{n}.
\end{equation} 

In the case of lists of identical projectors ($\Pi_i^{(c)}$ independent of $i$), the task went under the name of \textit{diverse-state setting} for shadow tomography in~\cite{aaronson2019gentle}, where it was noted that the algorithm proposed there also works in this setting. Here we show that the improvements given by~\cite{BO21} carry through in this setting with the appropriate generalizations. We also make explicit an appropriate modification of the procedure of~\cite{Aaronsonshadow20} to \textit{update} the state guess, where a reference quantum state (possibly stored on a classical computer as a matrix) is used to compute the expectation values of the projectors, and updated to agree with the data from the experiment. 
Here we start with $m$ lists of projectors
\begin{equation}
h_{c}=\{\Pi^{(c)}_{1},...,\Pi^{(c)}_{n}\}, \qquad c\in[m]
\end{equation}
and an unknown product state
\begin{equation}
\varrho=\rho_{1}\otimes...\otimes \rho_{n}.
\end{equation}
As in the previous section we use the notation
\begin{itemize}
\item $h_{c,s}=\{\Pi^{(c)}_{s,1},...,\Pi^{(c)}_{s,l}\}$, $\overline{h_{c,s}}=\Pi^{(c)}_{s,1}\otimes \mathbf{1}\otimes ....\otimes  \mathbf{1}+\mathbf{1}\otimes \Pi^{(c)}_{s,2}\otimes ....\otimes  \mathbf{1}+\mathbf{1}\otimes \mathbf{1}\otimes ....\otimes \Pi^{(c)}_{s,l}$
\item $1-R_{\varrho_{s}}(c)=\frac{1}{l}\mathbb{E}_{\varrho_{s}}[\overline{h_{c,s}}]$ 
\item $\mathsf{X}_{h_{c,s}}$ is the random variable obtained by measuring $\overline{h_{c,s}}$ on $\varrho_{s}$.
\end{itemize}
Note that the quantities $1-R_{\varrho}(c)$ can be also expressed as expectation values of projectors \begin{equation}
\Pi^{(c)}:=\sum_{i=1}^n\ketbra{i}{i}\otimes \Pi^{(c)}_{i}
\end{equation}
on the classical quantum state 
\begin{equation}\sigma=\frac{1}{n}\sum_{i=1}^n \ketbra{i}{i}\otimes \rho_i.
\end{equation}
 
The algorithm works by keeping track of a classical estimate of $\sigma$ that is updated sequentially. The estimate is initialized at time $t=0$ as
\beq\label{eq:rho0}
\rho_0^{\ast} := \left(\frac{1}{n}\sum_{i=1}^n \ketbra{i}{i}\otimes \frac{\mathbf{1}}{d}\right)^{\otimes q},
\eeq
and at each time $t$, $\rho_t$ is obtained by picking one system of the $q$ at random and computing the marginal of $\rho_t^\ast$ in that system. The form of the state to update is the main difference between what we propose and the strategy in \cite{Aaronsonshadow20}. We give the proof in its entirety for convenience.

At each time step $t$, the algorithm is also provided with some $c\in [m]$ for which the current best estimate $\rho_t$ makes poor predictions relative to $\sigma$, 
that is, 
\beq
\mu_{c,t}=\mathbb E_{\rho_t}[\Pi^{(c)}] = \Tr[\rho_t \sum_i \ketbra{i}{i}\otimes \Pi^{(c)}_{i}]
\eeq
is $\eps$-far from 
\beq
\mu_c = \mathbb E_{\sigma}[\Pi^{(c)}] = \sum_{i}\frac{1}{n}\Tr[\Pi^{(c)}_{i}\rho_i]
\eeqp
(Later, in Lemma \ref{lem:averageshadow}, we'll explain how to find such a $c$.) The updating procedure then makes use of $\Pi^{(c)}$
to update $\rho^\ast_t$ to $\rho^\ast_{t+1}$. In doing so we can guarantee that at most a certain number $T$ of updates will be required, in the following sense:

\begin{lemma}[$\mathsf{Update}$]\label{le:updatecq}
 There exists a sequential procedure to update a classical estimate for some unknown quantum state $\sigma$, that initializes the estimate at time $t=0$ as $\rho_0^{\ast}$ given in Eq.~\eqref{eq:rho0}, and at time $t$, takes as input the current estimate $\rho^*_t$ and some $c\in\mathcal C$ such that either
\begin{enumerate}
    \item $\mu_c - \mathbb E_{\rho_t}[\Pi^{(c)}] \geq \eps$, or
    \item  $\mu_c - \mathbb E_{\rho_t}[\Pi^{(c)}] \leq -\eps$.
\end{enumerate}
and outputs an updated estimate, $\rho^\ast_{t+1}$, such that after 
$$T=O\left(\frac{\log{d}}{\epsilon^3}\left(\log \log d+\log \frac{1}{\epsilon}\right)\right)$$ updates, the estimate $\rho_T$ fulfils
\begin{equation}
|\mu_c-\mathbb{E}_{\rho_T}[\Pi^{(c)}]|\leq \epsilon, \qquad \forall c\in [m].
\end{equation}

\end{lemma}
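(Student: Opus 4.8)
The plan is to adapt Aaronson's post-selection online-learning argument \cite{Aaronsonshadow20} to the classical-quantum setting, exploiting the decomposition of $\rho_0^*$ that the authors already record. Concretely, I would specify $\mathsf{Update}$ as follows: at step $t$, given the offending concept $c$ and the current prediction $\mu_{c,t}=\mathbb{E}_{\rho_t}[\Pi^{(c)}]=\Tr[\overline{\Pi^{(c)}}\rho_t^*]$, measure the commuting empirical average $\overline{\Pi^{(c)}}$ on the $q$-copy state $\rho_t^*$ and \emph{post-select} on the outcome lying on the ``correct side'' of a threshold placed halfway between prediction and truth, i.e.\ on $\overline{\Pi^{(c)}}\geq \mu_{c,t}+\epsilon/2$ in case~$1$ (underestimate) and on $\overline{\Pi^{(c)}}\leq\mu_{c,t}-\epsilon/2$ in case~$2$ (overestimate). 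Writing $\tilde\rho_t^*$ for the unnormalised post-selected operator after $t$ updates, $\rho_t^*=\tilde\rho_t^*/\Tr[\tilde\rho_t^*]$, I would define the surviving weight $W_t:=\Tr[\tilde\rho_t^*]$, $W_0=1$; the whole proof is a two-sided estimate of $W_T$, and the bound on $T$ is really a mistake bound, since no update can be triggered once every prediction is already $\epsilon$-accurate.

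For the upper bound I would show that each genuine update multiplies $W_t$ by a factor below $1-\Omega(\epsilon)$. The crucial point — and the reason the exponent is $\epsilon^{-3}$ rather than $\epsilon^{-2}$ — is that after several post-selections the $q$ registers of $\rho_t^*$ are \emph{correlated}, so Chernoff concentration is unavailable and I may use only the mean $\mathbb{E}_{\rho_t^*}[\overline{\Pi^{(c)}}]=\mu_{c,t}$. A one-line Markov inequality suffices: in case~$1$, $\Pr_{\rho_t^*}[\overline{\Pi^{(c)}}\geq\mu_{c,t}+\epsilon/2]\leq \mu_{c,t}/(\mu_{c,t}+\epsilon/2)\leq 1-\Omega(\epsilon)$, and symmetrically in case~$2$ by applying Markov to $\mathbf{1}-\Pi^{(c)}$. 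Since $W_{t+1}/W_t$ is exactly this acceptance probability of $\rho_t^*$, after $T$ updates $W_T\leq (1-\Omega(\epsilon))^T\leq e^{-\Omega(\epsilon T)}$.

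For the lower bound I would use the decomposition the authors state, $\rho_0^*=\rho_0^{\otimes q}=\frac{1}{d^q}\sigma^{\otimes q}+(1-\frac{1}{d^q})\omega$ with $\omega\succeq0$, $\Tr\omega=1$ (itself following from $\mathbf{1}/d\succeq\rho_i/d$ blockwise). Because every post-selection operator is positive semidefinite, $W_T\geq d^{-q}\,\Lambda_T$, where $\Lambda_T$ is the probability that the \emph{i.i.d.} state $\sigma^{\otimes q}$ survives all $T$ post-selections. On $\sigma^{\otimes q}$ the registers \emph{are} independent, so $\overline{\Pi^{(c)}}$ concentrates about $\mu_c$; since each threshold sits at distance $\geq\epsilon/2$ on the correct side of $\mu_c$, the multiplicative Chernoff bound of Eqs.~\eqref{eq:Chernoff2}--\eqref{eq:Chernoff1} makes $\sigma^{\otimes q}$ fail any single post-selection with probability $\leq e^{-\Omega(q\epsilon^2)}$. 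Feeding these into the sequential union bound Eq.~\eqref{eq:unionbound} gives $\Lambda_T\geq 1-2\sqrt{T\,e^{-\Omega(q\epsilon^2)}}\geq \tfrac12$ provided $q=\Theta(\epsilon^{-2}\log(T/\delta))$, whence $W_T\geq \tfrac12 d^{-q}$.

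Combining the two estimates, $\tfrac12 d^{-q}\leq e^{-\Omega(\epsilon T)}$ gives $\epsilon T \lesssim q\log d$, and substituting $q=\Theta(\epsilon^{-2}\log(T/\delta))$ yields the implicit inequality $T\lesssim \epsilon^{-3}\log d\,\log(T/\delta)$, whose solution is $T=O\!\big(\epsilon^{-3}\log d\,(\log\log d+\log\tfrac1\epsilon)\big)$ as claimed (the $\log\tfrac1\delta$ being absorbed). The main obstacle I expect is the lower-bound half: one must check that the \emph{adaptively} chosen post-selections (both $c$ and the threshold at each step depend on the measurement history) still admit the union-bound survival estimate for $\sigma^{\otimes q}$, and that $\overline{\Pi^{(c)}}$ — a sum of commuting block-diagonal projectors on the auxiliary-plus-quantum register — is genuinely projective so that the gentle-measurement and union-bound machinery of Section~\ref{sec:trnoniid} transfers verbatim. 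Getting the interplay between $q$ and $T$ right is the only delicate bookkeeping, and it is precisely what fixes the $\epsilon^{-3}$ scaling.
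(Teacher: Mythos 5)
Your proposal is correct and follows essentially the same route as the paper's proof: the same post-selection events thresholded at $\hat{\mu}_{c,t}\pm\epsilon/2$, the same Markov-inequality upper bound $p_t\leq(1-\Omega(\epsilon))^t$ on the survival probability, the same decomposition $\rho_0^{\ast}=\frac{1}{d^q}\sigma^{\otimes q}+(1-\frac{1}{d^q})\omega$ combined with Chernoff and the sequential quantum union bound for the lower bound, and the same bookkeeping between $q$ and $T$ yielding the $\epsilon^{-3}\log d$ scaling. The only cosmetic difference is your self-consistent choice $q=\Theta(\epsilon^{-2}\log(T/\delta))$ versus the paper's explicit $q=\frac{C}{\epsilon^2}(\log\log d+\log\frac{1}{\epsilon})$, which coincide once the implicit inequality in $T$ is solved.
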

\begin{proof}

First of all, we construct the following projectors
\begin{equation}
\Pi^{(c)}(l)=\sum_{\vec v\in \{0,1\}^q: ||\vec v||_1=l}\bigotimes_{i=1}^{n}(\Pi^{(c)}+v_i (\mathbf{1}-2\Pi^{(c)})),   
\end{equation}
In other words, $\Pi^{(c)}(l)$ is the sum of all events that accept the projector $\Pi^{(c)}$ at \textit{exactly} $l$ points in $[q]$ and reject $\Pi^{(c)}$ at the remaining $q-l$ points in $[q]$. Finally, we define
\begin{equation}
\Pi^{(c),-}(r)=\sum_{l\leq r}\Pi^{(c)}(l). 
\end{equation}

\begin{equation}
\Pi^{(c),+}(r)=\sum_{l\geq r}\Pi^{(c)}(l). 
\end{equation}
In words, $\Pi^{(c),+}(r)$ is the sum of all events that reject $\Pi^{(c)}$ in \textit{at least} $r$ points in $[q]$ and reject $\Pi^{(c)}$ in all remaining points in $[q]$; $\Pi^{(c),-}(r)$ is the sum of all events that reject $\Pi^{(c)}$ in \textit{at most} $r$ points in $[q]$ and accept $\Pi^{(c)}$ at all remaining points in $[q]$.
We now describe the update procedure. In Case 1, we update $\rho^{\ast}_t$ to $\rho^*_{t+1}$ by post-selecting on the event {$F_{t}^+(c)= \Pi^{(c),+}((\hat{\mu}_{c,t}+\epsilon/2)q)$. In Case 2, we then update $\rho_t$ to $\rho_{t+1}$ post-selecting on the event $F_{t}^-(c)=\Pi^{(c),-}((\hat{\mu}_{c,t}-\epsilon/2)q)$}. Define $F_t$ as the appropriate accepting event at time $t$, i.e. $F_t := F_t^+$ in Case 1 and $F_t := F_t^-$ in Case 2.

We will now upper bound the probability that the first $t$ post-selection steps all succeed, following an argument of~\cite{Aaronsonshadow20}. This is 
\beq
p_t = \Tr(F_0\rho_0^{\ast})\ldots \Tr(F_{t-1}\rho_{t-1}^{\ast})
\eeqc
and we will be able to upper-bound it by $p_t\leq (1-\epsilon)^{\Omega(t)}$. Indeed, we can show this by considering the random variable 
\begin{align}
\mathsf{X} &= \text{Number of acceptances resulting from measuring }\nonumber \\
&\Big\{ \Pi^{(c)}, \mathbf{1}-\Pi^{(c)} \Big\} \text{ on } \rho_t^{\ast}.
\end{align}
Applying Markov's inequality, we can then see that, by writing $\overline{\Pi^{(c)}}:=\Pi^{(c)}\otimes \mathbf{1}\otimes ....\otimes  \mathbf{1}+\mathbf{1}\otimes \Pi^{(c)}\otimes ....\otimes  \mathbf{1}+\mathbf{1}\otimes \mathbf{1}\otimes ....\otimes \Pi^{(c)})$,
\beq
\mathbb{E}[\mathsf{X}] = \Tr[\rho_t^{\ast} \overline{\Pi^{(c)}}] = q\mathbb{E}_{\rho_t}[\Pi^{(c)}]
\eeqc
and, in Case 1,
\begin{align}
&\Pr[\mathsf{X} \geq (\hat{\mu}_{c,t} + \eps/2)q] = \Tr[F_t^+(c) \rho_t^{\ast}] \\
&\leq \frac{\hat{\mu}_{c,t}q}{(\hat{\mu}_{c,t}+\epsilon/2)q}\leq 1-\Omega(\epsilon)
\end{align}
while an analogous calculation yields that in Case 2, 
\beq
\Tr[F_t^-(c) \rho_t^{\ast}] \leq 1-\Omega(\epsilon)
\eeqp
This implies $p_t\leq (1-\epsilon)^{\Omega(t)}$.

Now, we lower-bound $p_t$. Hypothetically, suppose that at time $t$ we were to apply the measurement $F_{t}$ (which depends on which case we are in) to ${\sigma}^{\otimes q} =\left(\sum_{i=1}^{n}\frac{1}{n}\ketbra{i}{i}\otimes \rho_i\right)^{\otimes q}$. By the promise that we are either in case 1) or 2), and by Chernoff's bound, at each step the measurements reject with probability $1-\Tr[F_t{\sigma}^{\otimes q}]\leq e^{-\Omega(q\epsilon^2)}$. Applying the measurements in sequence, $F_t\ldots F_0$ also always accepts with high probability by the quantum union bound (e.g.~\cite{wilde2013sequential}). In particular the probability of accepting on every step, until step $t$, is 
\beq
q_t\geq 1-O\left(\sqrt{t}e^{-\Omega(q\epsilon^2)}\right)
\eeqp
But then, for every $\rho_i$, $i\in [n]$, it is possible to write $\frac{\mathbf{1}}{d}=\frac{1}{d}\rho_i+\left(1-\frac{1}{d}\right)\omega_i$ for some state $\omega_i$. Then, recalling that  $\rho_0^{\ast} = (\sum_{i=1}^{n}\frac{1}{n}\ketbra{i}{i}\otimes \frac{\mathbf{1}}{d})^{\otimes q}$
\begin{equation}
\rho_0^{\ast} = \rho_0^{\otimes q}=\frac{1}{d^q}{\sigma}^{\otimes q}+\left(1-\frac{1}{d^q}\right)\omega,
\end{equation}
for some positive semi-definite and trace-1 $\omega$.

This decomposition (specifically the fact that $\omega$ is PSD) elucidates the following lower bound:
\begin{align}
p_t &= \Tr(\rho_0^{\ast} F_0) \ldots \Tr(\rho_{t-1}^{\ast} F_{t-1}) \\
&\geq \frac{1}{d^q}\left(1-
O\left(\sqrt{t}e^{-\Omega(q\epsilon^2)}\right)\right). 
\end{align}

By taking $q=\frac{C}{\epsilon^2}\left(\log \log d+\log \frac{1}{\epsilon}\right)$ we have $p_t\geq \frac{1}{d^q}\left(1-\frac{\sqrt{t}\epsilon^2}{\log d}\right)$. 

Putting this together with the upper-bound $p_t\leq(1-\epsilon)^{\Omega(t)}$, we need $T=O\left(\frac{\log{d}}{\epsilon^3}\left(\log \log d+\log \frac{1}{\epsilon}\right)\right)$, where $T$ is the number of updates after which we can estimate all the expectation values at the desired precision.
\end{proof}

To obtain $c$ satisfying the promise of Lemma~\ref{le:updatecq}, we can use the following Lemma, which is a simpler variant of Lemma~\ref{lem:simpleere}.

\begin{lemma}\label{lem:averageshadow}
Given access to $2k$ product states 
\begin{equation}
\varrho_{s}=\rho_{s,1}\otimes...\otimes \rho_{s,l}, \qquad s=1,..., 2k,
\end{equation} and a collection of lists of projectors $\{\Pi^{(c)}_{s,j}\}_{c=1,...,m,s=1,...,2k, j=1,...,l}$ and numbers $0\leq\mu_c\leq 1$, $c\in[m]$  such that

\begin{equation}\label{cond:conv3}
\left|\frac{1}{n}\sum_{j=1}^{n}\mathbb{E}_{\rho_{s,j}}[\Pi^{(c)}_{s,j}]-\mu_{c}\right|\leq \epsilon/4, \, \forall\,c\in [m], s\in[2k],
\end{equation}
and numbers $\{\lambda_{c}\}_{c=1,...,m}$. Then if we are guaranteed that
\begin{equation}
  (\log m+C_2)^2<C_1 l \epsilon^2,
\end{equation}
for appropriate constants $C_1,C_2$, and at the same time
\begin{equation}
  l>\frac{\log(k/\delta)}{\epsilon^2},
\end{equation}
for a large enough $k=O(\log\frac{1}{\delta} \log\frac{1}{\epsilon})$,
there is an algorithm that, with probability larger than $\delta$, { if there exists some $c$ such that $|\mu_c -\lambda_c|\geq 2\epsilon$, either the algorithm declares failure or it outputs a $c^*$ such that  $|\mu_{c^*} -\lambda_{c^*}|\geq \epsilon/2$.} 

\end{lemma}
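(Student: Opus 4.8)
The plan is to recognize this as the ``looking for bad estimates'' step of shadow tomography and to solve it with a two-sided application of our generalized threshold search (Algorithm~\ref{alg:threshold}), stripped of the binary search that complicates Algorithm~\ref{algo:proj2}. First I would fold the two ways in which an estimate can be wrong into a single instance of $\mathsf{ThresholdSearch}$ over $2m$ projector lists: for each $c$ I keep the list $\{\Pi^{(c)}_{s,j}\}_j$ with a threshold $\theta^{+}_c$, and I add the complement list $\{\mathbf{1}-\Pi^{(c)}_{s,j}\}_j$ with a threshold $\theta^{-}_c$ (precise values fixed below). Because the complement average equals $1$ minus the original average, the event $\mu_c\le\lambda_c-2\epsilon$ becomes an above-threshold event for the complement list, so a single threshold search over these $2m$ lists detects both $\mu_c\gg\lambda_c$ and $\mu_c\ll\lambda_c$. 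Passing from $m$ to $2m$ and running at resolution $\epsilon/4$ only rescales the constants in the hypothesis $(\log m+C_2)^2<C_1 l\epsilon^2$, so Lemma~\ref{lem:threshold} continues to apply on each block.

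Next I would use the concentration hypothesis~\eqref{cond:conv3}, which says every block average $\tfrac1l\sum_j\mathbb E_{\rho_{s,j}}[\Pi^{(c)}_{s,j}]$ is within $\epsilon/4$ of $\mu_c$, to translate the block-level guarantee of Lemma~\ref{lem:threshold} into a guarantee on the true quantities $\mu_c$. In the positive direction, if some $c$ satisfies $\mu_c\ge\lambda_c+2\epsilon$ then its block average exceeds $\lambda_c+\tfrac74\epsilon$; choosing $\theta^+_c$ just below this value makes the threshold-search promise hold, and by Lemma~\ref{lem:threshold}, with probability at least $0.03$ the search returns some $c^{*}$ whose block average is within $\epsilon/4$ of $\theta^{+}_{c^{*}}$, which after undoing the two $\epsilon/4$ slacks forces $\mu_{c^{*}}\ge\lambda_{c^{*}}+\tfrac54\epsilon$. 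The complement computation is identical and certifies $\mu_{c^{*}}\le\lambda_{c^{*}}-\tfrac54\epsilon$. Hence a \emph{correct} threshold-search output already lies comfortably beyond the target $|\mu_{c^{*}}-\lambda_{c^{*}}|\ge\epsilon/2$.

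Since each threshold search is only correct with constant probability $0.03$ and consumes its input block, I would iterate it on fresh blocks $\varrho_{2i-1}$, $i=1,\dots,k$, taking $k=O(\log\tfrac1\delta\log\tfrac1\epsilon)$ so that a far concept, when one exists, is located except with probability at most $0.97^{k}\le\delta/2$. Whenever a round produces a candidate $c^{*}$, I would spend the companion block $\varrho_{2i}$ on a \emph{Check}: measure the empirical mean of $\overline{\Pi^{(c^{*})}_{2i,\cdot}}$ (or of its complement) and accept $c^{*}$ only if that mean lands on the far side of $\lambda_{c^{*}}$ by the prescribed margin. Because $l>\log(k/\delta)/\epsilon^{2}$, Proposition~\ref{estave} makes each Check err with probability at most $\delta/(2k)$, so a union bound over the $\le k$ rounds keeps the probability of ever accepting a close $c^{*}$ below $\delta/2$. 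Adding this to the search-failure bound yields total error below $\delta$: with probability at least $1-\delta$ the algorithm never outputs a $c^{*}$ with $|\mu_{c^{*}}-\lambda_{c^{*}}|<\epsilon/2$, and when a $2\epsilon$-far concept is present it outputs an $\epsilon/2$-far one instead of declaring failure.

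The delicate point, and the main obstacle, is the budgeting of these margins rather than any single estimate. The acceptable outputs ($|\mu-\lambda|\ge\epsilon/2$) and the forbidden outputs ($|\mu-\lambda|<\epsilon/2$) abut at the same boundary, so the Check cannot separate them cleanly; what saves us is the gap between the $2\epsilon$ promise and the $\epsilon/2$ target. One must place the threshold-search threshold high enough (near $\lambda_c+\tfrac74\epsilon$) that a correct output is $\tfrac54\epsilon$-far, and keep both the $\epsilon/4$ concentration slack from~\eqref{cond:conv3} and the Chernoff slack of the Check (which I would drive below $\epsilon/8$) small enough that a correct search output still clears the Check while every genuinely close concept is rejected; the feasibility condition reduces to twice the Chernoff slack being at most $\epsilon/4$. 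This is exactly the $\theta-\epsilon$ versus $\theta-\tfrac74\epsilon$ margin accounting already carried out in Lemma~\ref{lem:simpleere}, so the constants may be imported from there, and the remaining work is routine Chernoff bounds together with the final union bound over the $k$ rounds and the two directions.
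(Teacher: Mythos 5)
Your proposal matches the paper's proof essentially step for step: the same two-sided $\mathsf{ThresholdSearch}$ over $2m$ lists (the originals with threshold $\lambda_c+\tfrac{7}{4}\epsilon$, the complements with threshold $1-\lambda_c-\tfrac{7}{4}\epsilon$, precision $\epsilon/4$), the same alternation of search blocks and Chernoff-based Check blocks over $k=O(\log\tfrac{1}{\delta}\log\tfrac{1}{\epsilon})$ rounds, and the same margin accounting (correct search outputs are $\tfrac{5}{4}\epsilon$-far while genuinely close concepts are rejected by the Check) with the error analysis imported from Lemma~\ref{lem:simpleere}. It is correct as written.
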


\begin{proof} Use $\mathsf{ThresholdSearch}$ (Lemma~\ref{lem:threshold}) on the states $\varrho_s$, $s$ odd with 
\begin{itemize}
\item the list of projectors $\{\Pi^{(c)}_{s,1}, \Pi^{(c)}_{s,2}, ..., \Pi^{(c)}_{s,l}\}$ and threshold $\lambda_c+7/4\epsilon$, 
\item together with the list of projectors $\{\mathbf{1}-\Pi^{(c)}_{s,1}, \mathbf{1}-\Pi^{(c)}_{s,2}, ..., \mathbf{1}-\Pi^{(c)}_{s,l}\}$ and threshold $1-\lambda_c-7/4\epsilon$,
\end{itemize}
$c\in[m]$, and precision parameter $\epsilon/4$.
For each odd $s$, if $c$ is output from the search, we measure 
\begin{align}
&\Pi^{(c)}_{s+1,1}\otimes \mathbf{1}\otimes ....\otimes  \mathbf{1}+\mathbf{1}\otimes \Pi^{(c)}_{s+1,2}\otimes ....\otimes  \mathbf{1}\nonumber\\
&+\mathbf{1}\otimes \mathbf{1}\otimes ....\otimes \Pi^{(c)}_{s+1,l}
\end{align} on $\varrho_{s+1}$, with a resulting random variable $\mathsf{X}_{c,s+1}$.

If $|\frac{\mathsf{X}_{c,s+1}}{l}-\lambda_c|>\epsilon$, we declare the check passed and output $c$. If all the checks are not passed, we declare failure.

The analysis is as follows:
If there exists $c$ such that $|\mu_c -\lambda_c|\geq 2\epsilon$, then by Eq.~(\ref{cond:conv2}) we have $|\frac{1}{l}\sum_{j=1}^{l}\mathbb{E}_{\rho_{s,j}}[\Pi^{(c)}_{s,j}]-\lambda_c|>7/4\epsilon$, therefore either $\frac{1}{l}\sum_{j=1}^{l}\mathbb{E}_{\rho_{s,j}}[\Pi^{(c)}_{s,j}]\geq\lambda_c+7/4\epsilon$, or $\frac{1}{l}\sum_{j=1}^{l}\mathbb{E}_{\rho_{s,j}}[\mathbf{1}-\Pi^{(c)}_{s,j}]\leq1-\lambda_c-7/4\epsilon$. 
Since the promise of $\mathsf{ThresholdSearch}$ is fulfilled either for $\{\Pi^{(c)}_{s,j}\}_{j\in[l]}$ or $\{\mathbf{1}-\Pi^{(c)}_{s,j}\}_{j\in[l]}$, each time $\mathsf{ThresholdSearch}$ is performed, with probability larger than $0.03$ it outputs a concept $c$ such that $|\frac{1}{l}\sum_{j=1}^{l}\mathbb{E}_{\rho_{s,j}}[\Pi^{(c)}_{s,j}]-\lambda_c|>6/4\epsilon$.
 In this case, by Chernoff bound, Eq.~(\ref{eqchernoff}) we have
\begin{widetext}
\begin{align}
&\Pr(|\mathsf{X}_{c,s+1}-l\lambda_c|\leq l\epsilon)= \Pr(|\mathsf{X}_{c,s+1}-\sum_{j=1}^{l}\mathbb{E}_{\rho_{s,j}}[\Pi^{(c)}_{s,j}]+ \sum_{j=1}^{l}\mathbb{E}_{\rho_{s,j}}[\Pi^{(c)}_{s,j}]-l\lambda_c|\leq l\epsilon)\nonumber\\
&\leq \Pr\left(\Big||\mathsf{X}_{c,s+1}-\sum_{j=1}^{l}\mathbb{E}_{\rho_{s,j}}[\Pi^{(c)}_{s,j}]|-|\sum_{j=1}^{l}\mathbb{E}_{\rho_{s,j}}[\Pi^{(c)}_{s,j}]-l\lambda_c|\Big|\leq l\epsilon\right)\nonumber\\&\leq \Pr(|\mathsf{X}_{c,s+1}-\sum_{j=1}^{l}\mathbb{E}_{\rho_{s,j}}[\Pi^{(c)}_{s,j}]|\geq l\epsilon/2 )\nonumber\\
&\leq 2 e^{-l\frac{\epsilon^2}{6}}.
\end{align}

On the other hand, if a concept is selected with $|\mu_c -\lambda_c|\leq \epsilon/2$, we have

\begin{align}&\Pr(|\mathsf{X}_{c,s+1}-l\lambda_c|\geq l\epsilon)= \Pr(|\mathsf{X}_{c,s+1}-\sum_{j=1}^{l}\mathbb{E}_{\rho_{s,j}}[\Pi^{(c)}_{s,j}]+ \sum_{j=1}^{l}\mathbb{E}_{\rho_{s,j}}[\Pi^{(c)}_{s,j}]-l\lambda_c|\geq l\epsilon)\nonumber\\
&\leq \Pr(|\mathsf{X}_{c,s+1}-\sum_{j=1}^{l}\mathbb{E}_{\rho_{s,j}}[\Pi^{(c)}_{s,j}]|+|\sum_{j=1}^{l}\mathbb{E}_{\rho_{s,j}}[\Pi^{(c)}_{s,j}]-l\lambda_c|\geq l\epsilon)\nonumber\\&\leq \Pr(|\mathsf{X}_{c,s+1}-\sum_{j=1}^{l}\mathbb{E}_{\rho_{s,j}}[\Pi^{(c)}_{s,j}]|)\geq l\epsilon/2 )\nonumber\\
&\leq 2 e^{-l\frac{\epsilon^2}{6}}.
\end{align}
\end{widetext}
By an argument identical to the analysis of the error probability in Lemma~\ref{lem:simpleere}, the probability of error is bounded as $p_{err}\leq 0.97^k+2(k+1)e^{-l\frac{\epsilon^2}{6}}$, and the choice of $k$ and $l$ in the Lemma statement makes it less than $\delta$.

\end{proof}

Thus, our algorithm for ERE simply starts from an estimate dependent on the empirical distribution of measurement outcomes of the classical register, and then interleaves the $\mathsf{ThresholdSearch}$ and $\mathsf{Update}$ subroutines to progressively update this estimate. This is summed up in the following algorithm:

\begin{algorithm}[H]
\textbf{Input:} Product states
\beq
\varrho_{s}=\rho_{s,1}\otimes...\otimes \rho_{s,2lk},\qquad s\in[T]
\eeqc

\textbf{Parameters}: $T, q$. \\
\begin{algorithmic}[1]
\State Initialize on a classical computer, the classical estimate
\beq
\rho^*_0 := \left(\sum_{s=1}^{n}  \ketbra{s}{s} \otimes \frac{\mathbf{1}}{d}\right)^{\otimes q}
\eeqp
\For{$t = 1, \ldots, T$}
    \State $c \gets \mathsf{ThresholdSearch}$ on the state $\varrho_s$ with projectors and parameters as described in the proof of Lemma \ref{lem:averageshadow}.
If $\mathsf{ThresholdSearch}$ declares failure then Break.
    \State $\rho^*_{t+1} \gets \mathsf{Update}(\rho^*_t, c)$.
\EndFor\\
Output estimates $\mu_{c,T}=\mathbb E_{\rho_T}[\Pi^{(c)}]$, $\forall c \in \mathcal{C}$.
    \end{algorithmic}
\caption{\label{algo:shadow}Empirical risk estimation}
\end{algorithm}

We can now prove 

\begin{theorem}[{\sf{Quantum empirical risk estimation for projector-valued functions}} (Theorem \ref{th:theoavsh0}, refined)] \label{th:theoavsh}
Given access to a product state
\begin{equation}
\varrho=\rho_{1}\otimes...\otimes \rho_{n}
\end{equation} 
and a collection of lists of projectors $\{\Pi^{(c)}_{1},...,\Pi^{(c)}_{n}\}_{c=1,...,m}$, with 
\begin{equation}
\mu_c=\frac{1}{n}\sum_{i=1}^{n}\Tr[\rho_{i}\Pi^{(c)}_{i}],
\end{equation}
there is an algorithm which outputs estimates $\hat \mu_{c}$ such that
\begin{equation}
\Pr(|\hat{\mu}_{c} -\mu_c|\geq 2\epsilon)\leq \delta \qquad \forall c \in [m]
\end{equation}
if $n$ is large enough; in fact we can take $n=\frac{T k}{\epsilon^2} \cdot O(\max(\log( Tkm/\delta),(\log m+C_1)^2))$, 
with $T=O\left(\frac{\log{d}}{\epsilon^3}\left(\log \log d+\log \frac{1}{\epsilon}\right)\right)$ and $k=O(\log (T/\delta))$.
\end{theorem}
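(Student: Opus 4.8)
The plan is to analyze Algorithm~\ref{algo:shadow}, whose logic mirrors Aaronson's shadow tomography but runs on the classical-quantum encoding $\sigma=\frac{1}{n}\sum_i\ketbra{i}{i}\otimes\rho_i$. The algorithm maintains a purely classical estimate $\rho_t^{\ast}$ of $\sigma^{\otimes q}$ and interleaves two subroutines: a quantum one that hunts for a concept on which the current estimate is inaccurate (Lemma~\ref{lem:averageshadow}), and a classical update step that refines the estimate (Lemma~\ref{le:updatecq}). Since only the hunting step consumes quantum samples, the sample budget is governed by how many times it is invoked, which is controlled by the update lemma.

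First I would fix the batch structure: with $n=2Tkl$ total states, I would draw $2Tk$ batches of size $l$ by sampling without replacement and, via Lemma~\ref{lem:multisamplerep} applied with $K=2Tk$, argue that with probability at least $1-p_{err}^{(1)}$, where $p_{err}^{(1)}\lesssim Tkm\,e^{-l\epsilon^2/32}$, every batch empirical average $\frac{1}{l}\sum_j\mathbb{E}_{\rho_{s,j}}[\Pi^{(c)}_{s,j}]$ lies within $\epsilon/4$ of $\mu_c$, simultaneously for all $c$ and all batches. Conditioned on this concentration event, I would run the main loop: at step $t$ I set $\lambda_c=\mu_{c,t}=\mathbb{E}_{\rho_t}[\Pi^{(c)}]$ and invoke Lemma~\ref{lem:averageshadow} on that step's $2k$ fresh batches. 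With probability at least $1-\delta'$ per step, this subroutine either certifies that no concept is $2\epsilon$-far from its current estimate---in which case the algorithm halts and outputs the $\mu_{c,t}$---or returns a genuinely bad concept $c^*$ with $|\mu_{c^*}-\mu_{c^*,t}|\geq\epsilon/2$, on which I would run the classical $\mathsf{Update}$ of Lemma~\ref{le:updatecq}.

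Correctness and termination would follow from pairing the two lemmas. Since every returned concept is (with high probability) genuinely bad, it satisfies the promise of Lemma~\ref{le:updatecq}---after a harmless rescaling of $\epsilon$ by a constant so that the $\epsilon$-gap required there matches the $\epsilon/2$-gap produced by the hunt---so every update makes progress; but that lemma guarantees that after $T=O(\epsilon^{-3}\log d(\log\log d+\log\frac{1}{\epsilon}))$ updates the estimate is uniformly $\epsilon$-accurate, so no $2\epsilon$-bad concept can survive and the hunt must declare failure. Thus the loop terminates within $T$ steps, and upon termination the failure declaration certifies $|\mu_c-\mu_{c,T}|<2\epsilon$ for all $c$, which is the desired output. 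A union bound over the concentration event and the $T$ invocations of Lemma~\ref{lem:averageshadow} bounds the total error by $p_{err}^{(1)}+T\delta'$, which is made at most $\delta$ by choosing $k=O(\log(T/\delta))$ and $l=\frac{1}{\epsilon^2}O(\max(\log(Tkm/\delta),(\log m+C_1)^2))$; multiplying out $n=2Tkl$ then yields the claimed sample complexity.

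The main obstacle, I expect, is the careful bookkeeping that makes the deterministic potential argument of Lemma~\ref{le:updatecq} compatible with the merely probabilistic guarantee of the hunt: the termination bound of $T$ updates is only valid if \emph{every} update is performed on a genuinely bad concept, so one must ensure that the hunt never triggers a spurious update (no false positives) across all $T$ steps, and reconcile the three distinct accuracy scales---the $\epsilon/2$ gap detected by the hunt, the $\epsilon$ gap required and delivered by the update, and the final $2\epsilon$ certificate---by a consistent constant rescaling absorbed into the $O(\cdot)$ notation.
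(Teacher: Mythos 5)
Your proposal matches the paper's proof essentially step for step: the same decomposition into sampling-without-replacement batch concentration (Lemma~\ref{lem:multisamplerep}), the hunt for a concept with an inaccurate estimate (Lemma~\ref{lem:averageshadow}), and the classical update with its $T$-round potential bound (Lemma~\ref{le:updatecq}), combined via the same union bound $p_{err}^{(1)}+T\,p_{err}^{(2)}\leq\delta$ and the same parameter choices, including the constant rescaling between the $\epsilon/2$ gap detected and the gap required by the update. The only cosmetic difference is the total count: the paper takes $n\geq 6Tkl$ because Lemma~\ref{lem:multisamplerep} requires the factor-$3$ slack in the population size, whereas you wrote $n=2Tkl$; this constant is absorbed into the $O(\cdot)$ of the final sample complexity.
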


\begin{proof}
The algorithm for ERE runs as follows. Prepare the classical guess $\rho^*_0$ as in Lemma~\ref{le:updatecq}. Divide the product states into $T$ batches of $2k$ product states. For $t=1,...,T$, run the algorithm of Lemma~\ref{lem:averageshadow} on the corresponding batch. If a concept $c$ is selected, use it as an update for the algorithm in Lemma~\ref{le:updatecq} and continue to the next $t$, otherwise terminate and update the collection of $\mu_c$ as obtained from $\rho^*_t$.

By Lemma~\ref{lem:multisamplerep}, if $n\geq 6Tkl$ we can obtain $2Tk$ samples without replacement of length $l$ from $[n]$. By identifying $x_{c,i}=\Tr[\Pi^{(c)}_{i}\rho_{i}]$, we get $2Tk$ product states 
\begin{equation}
\varrho_{s}=\rho_{s,1}\otimes...\otimes \rho_{s,l}, \qquad s=1,..., 2T{}k,
\end{equation} and a collection of lists of projectors $\{\Pi^{(c)}_{s,j}\}_{c=1,...,m,s=1,...,2T{}k, j=1,...,l}$ such that

\begin{equation}\label{cond:conv4}
\left|\frac{1}{l}\sum_{j=1}^{l}\mathbb{E}_{\rho_{s,j}}[\Pi^{(c)}_{s,j}]-\mu_{c}\right|\leq \epsilon/4, \, \forall\,c\in [m], s\in[2T{}k].
\end{equation}
with probability $p_{err}^{(1)}\leq 2Tkme^{-2l \epsilon^2/64}$.

If Eq.~(\ref{cond:conv4}) is true, then condition Eq.~(\ref{cond:conv3}) is satisfied and we can apply the algorithm of Lemma~\ref{lem:averageshadow} to obtain, if $\max_c|\hat\mu_c-\mu_c|\geq 2\epsilon$ an estimate $\hat{\mu}_c$, $|\hat\mu_c-\mu_c|\geq \epsilon/2$ with probability of error $p_{err}^{(2)}\leq (0.97^k+2(k+1)e^{-l \epsilon^2/6})$. Otherwise, if $\max_c|\hat\mu_c-\mu_c|\leq 2\epsilon$ and we don't get any $|\hat\mu_c-\mu_c|\geq \epsilon/2$ we are satisfied. Anyway, after $T=O\left(\frac{\log{d}}{\epsilon^3}\left(\log \log d+\log \frac{1}{\epsilon}\right)\right)$ updates we also have $|\hat\mu_c-\mu_c|\leq \epsilon/2$ by Lemma~\ref{le:updatecq}. The probability of error is then less than $Tp_{err}^{(2)}$.

We can make $p_{err}^{(1)}+Tp_{err}^{(2)}\leq \delta$ with $T$ and $k$ as in the theorem statement and $$l=O(\max(\log (Tkm/\delta)/\epsilon^2), (\log m+C_2)^2/\epsilon^2))$$ and  thus $n$ as stated.
\end{proof}

\subsection{Empirical risk minimization for state-valued functions}

A key subroutine we will need to introduce is based on \textit{hypothesis selection}, which is a way of choosing a classical or quantum probability distribution that best fits some observed data. 
We'll use a generalized version of the algorithm of~\cite{BO21} for hypothesis selection (which applied to i.i.d states) to find the empirical risk minimizer, with the loss given by the trace distance, 
in a set of candidate state-valued processes. 

\begin{theorem}[{\sf{Quantum empirical risk minimization for state-valued functions}} (Theorem~\ref{theo:states_finite0}, refined)]\label{theo:states_finite}
Let ${\cal C}=\{\sigma_i:[n]\rightarrow D(\mathcal{H}^{(d)})\}_{i=1}^m$ be a class of state-valued functions and 
\begin{equation}
\varrho=\rho_1\otimes...\otimes\rho_n.
\end{equation}
There exists an algorithm which given $\varrho$ outputs $i$ such that

\begin{equation}\label{eq:ER_states}
    \frac{1}{n}\sum_{s=1}^n d_{\rm tr}(\sigma_i(s),\rho_s)\leq 3\eta + 4\epsilon,
\end{equation}
where 
\begin{equation}\label{eq:optimalQHS1}
    \eta:=\min_{i\in[m]} \frac{1}{n}\sum_{s=1}^n d_{\rm tr}(\sigma_i(s),\rho_s),
\end{equation}
with probability of error less than $\delta$ if $n$ is large enough, i.e. we can take in fact
$$n=\frac{T k}{\epsilon^2} \cdot O(\max(\log(Tkm/\delta),(\log m+C_1)^2)),$$ 
with $T=O\left(\frac{\log{d}}{\epsilon^3}\left(\log \log d+\log \frac{1}{\epsilon}\right)\right)$ and $k=O(\log (T/\delta))$.

\end{theorem}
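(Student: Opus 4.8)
The plan is to reduce state-valued ERM to the projector-valued empirical risk \emph{estimation} of Theorem~\ref{th:theoavsh}, following the Helstrom-based hypothesis selection strategy of~\cite{BO21}. The starting observation is that the empirical risk $\frac1n\sum_s d_{\rm tr}(\sigma_k(s),\rho_s)$ equals the trace distance $d_{\rm tr}(\tau,\sigma_k)$ between the classical-quantum states $\tau=\frac1n\sum_s\ketbra{s}{s}\otimes\rho_s$ and $\sigma_k=\frac1n\sum_s\ketbra{s}{s}\otimes\sigma_k(s)$, because orthogonality of the classical registers makes the trace norm additive across blocks. By Helstrom's theorem this distance is a difference of expectation values of the projector $A_{ij}=(\sigma_i-\sigma_j)_+$, and, crucially, the orthogonality of the classical registers lets this projector decompose blockwise as $A_{ij}=\sum_s\ketbra{s}{s}\otimes A_{ij}(s)$ with $A_{ij}(s)=(\sigma_i(s)-\sigma_j(s))_+$. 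Hence $\Tr[\tau A_{ij}]=\frac1n\sum_s\Tr[\rho_s A_{ij}(s)]$ is exactly an expectation value of the form estimable by the product-state ERE routine.

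Concretely, first I would form, for every pair $i<j$, the list of projectors $\{A_{ij}(s)\}_{s=1}^n$; there are $\binom m2$ such lists. I would then run the estimation algorithm of Theorem~\ref{th:theoavsh} on $\varrho=\rho_1\otimes\cdots\otimes\rho_n$ with these lists as ``concepts'', obtaining estimates $\hat\mu_{ij}$ with $|\hat\mu_{ij}-\mu_{ij}|\le 2\epsilon$ for \emph{all} pairs simultaneously, where $\mu_{ij}=\frac1n\sum_s\Tr[\rho_s A_{ij}(s)]$, with total failure probability at most $\delta$. The quantities $\nu^{(k)}_{ij}:=\frac1n\sum_s\Tr[\sigma_k(s)A_{ij}(s)]$ involve only the known hypotheses, so they are computed classically. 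The algorithm then outputs the hypothesis selection minimizer $k^*:=\mathrm{argmin}_{k\in[m]}\max_{i<j}|\hat\mu_{ij}-\nu^{(k)}_{ij}|$.

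Correctness would follow from the triangle-inequality argument sketched in the technical overview, carried out with the $2\epsilon$ estimation error folded in. Letting $i^*$ attain the minimum so $d_{\rm tr}(\tau,\sigma_{i^*})=\eta$, the variational bound $|\Tr[(\tau-\sigma_{i^*})A]|\le d_{\rm tr}(\tau,\sigma_{i^*})$ for $0\le A\le\mathbf 1$ gives $\max_{i<j}|\hat\mu_{ij}-\nu^{(i^*)}_{ij}|\le\eta+2\epsilon$, and hence the same bound for $k^*$ by minimality. Writing $d_{\rm tr}(\tau,\sigma_{k^*})\le d_{\rm tr}(\sigma_{k^*},\sigma_{i^*})+\eta$, expressing $d_{\rm tr}(\sigma_{k^*},\sigma_{i^*})=\nu^{(k^*)}_{k^*i^*}-\nu^{(i^*)}_{k^*i^*}$ by Helstrom, and inserting $\hat\mu_{k^*i^*}$ and $\mu_{k^*i^*}$ between these two terms, one bounds $d_{\rm tr}(\sigma_{k^*},\sigma_{i^*})\le 2\eta+4\epsilon$ and therefore $d_{\rm tr}(\tau,\sigma_{k^*})\le 3\eta+4\epsilon$, which is precisely the claimed averaged bound.

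For the sample complexity I expect the only genuine bookkeeping obstacle to be the blow-up from $m$ to $\binom m2$ concepts fed into Theorem~\ref{th:theoavsh}. Since $\log\binom m2=O(\log m)$ this changes the bound only by constants, and substituting into the sample complexity of the estimation routine reproduces the stated $n$, with the $\log d$ factor and the $\epsilon^{-5}$-type scaling inherited directly from ERE. The main conceptual content — and the step I would state most carefully — is the blockwise decomposition of the Helstrom projector, which is what lets a \emph{single} copy of the product state $\varrho$ drive the simultaneous estimation of all $\binom m2$ pairwise-distance expectation values, with no need for identical copies of any individual $\rho_s$.
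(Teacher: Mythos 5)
Your proposal is correct and takes essentially the same approach as the paper's proof: reduce state-valued ERM to projector-valued ERE on the blockwise Helstrom projectors $A_{ij}=\sum_{s}\ketbra{s}{s}\otimes A_{ij}(s)$ fed to Theorem~\ref{th:theoavsh}, select $k^*=\mathrm{argmin}_k\max_{i<j}|\hat\mu_{ij}-\nu^{(k)}_{ij}|$, and run the identical triangle-inequality argument yielding $3\eta+4\epsilon$, with the $\binom{m}{2}$ blow-up absorbed since $\log\binom{m}{2}=O(\log m)$. The only difference is cosmetic: you invoke the ERE routine directly where the paper's proof text nominally cites the ERM algorithm (Algorithm~\ref{algo:proj2}) while clearly intending the estimation subroutine, as its stated sample complexity confirms.
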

\begin{proof}
Let us define the following classical-quantum states for $k=1,\cdots,m$:
\begin{equation}
    \sigma_k:=\frac{1}{n}\sum_{s=1}^n \ketbra{s}{s} \otimes \sigma_k(s),
\end{equation}
and
\begin{equation}
    \rho:=\frac{1}{n}\sum_{s=1}^n \ketbra{s}{s} \otimes \rho_s,
\end{equation}
Then $\eta=\min_i d_{\rm tr}(\sigma_i,\rho)$ and the algorithm has to output a hypothesis $k$ such that $d_{\rm tr}(\sigma_k,\rho)\leq 3\eta+\epsilon$. 

The algorithm works as follows: for each $s$, $k$ and $i<j$ define the Helstrom projectors
\begin{equation}\label{eq:diff}
    A_{ij}(s):=\left(\sigma_i(s)-\sigma_j(s)\right)_+, 
\end{equation}
where $(\cdot)_+$ is the projector on the positive part of the argument, and their block-sum $A_{ij}:=\sum_{s\in[n]}\ketbra{s}{s}\otimes A_{ij}(s)$. 
By construction, these projectors satisfy
\begin{equation}
    d_{\rm tr}(\sigma_i,\sigma_j)=\Tr(\sigma_i A_{ij}) - \Tr(\sigma_j A_{ij}).
\end{equation}
The algorithm then uses Algorithm \ref{algo:proj2} on $\varrho$ to perform ERE of the projector-valued functions $\{A_{ij}(s):~ i<j\}$, outputting estimates $\mu_{ij}$ such that, with probability at least $\delta$, it holds 
\begin{equation}\label{eq:STforQHS}
    |\frac{1}{n}\sum_{s=1}^{n}\Tr[\rho_s A_{ij}(s)]-\mu_{ij}|\leq 2\epsilon.
\end{equation} 
We also denote 

\begin{equation}
\nu_{kij}:=\frac{1}{n}\sum_{s=1}^n\Tr[\sigma_k(s) A_{ij}(s)].
\end{equation}
Finally, the algorithm employs a classical subroutine to minimize the quantity
\begin{equation}
    \Delta_k:=\max_{i<j}|\nu_{kij}-\mu_{ij}|,
\end{equation}
finding $k^*:={\rm argmin}_k \Delta_k$. We can show that this is a good enough candidate, i.e., $d_{\rm tr}(\sigma_{k^*},\rho)$ is sufficiently small.
Indeed, if we define the optimal hypothesis attaining Eq.~\eqref{eq:optimalQHS1} as $i^*:={\rm argmin}_i d_{\rm tr}(\sigma_i,\rho)$, by the triangle inequality it holds 
\begin{align}
    &d_{\rm tr}(\sigma_{k^*},\rho)
    \leq d_{\rm tr}(\sigma_{k^*},\sigma_{i^*}) + d_{\rm tr}(\sigma_{i^*},\rho) \\
    &= \left|\Tr(\sigma_{k^*}A_{k^*i^*}) - \Tr(\sigma_{i^*}A_{k^*i^*})\right| + \eta \\
    &\leq \left|\Tr(\sigma_{k^*}A_{k^*i^*}) - \mu_{k^*i^*}\right| \nonumber\\
    &+ \left|\mu_{k^*i^*} - \Tr(\sigma_{i^*}A_{k^*i^*})\right| + \eta. \label{eq:boundMinTrace}
\end{align}
The first term in Eq.~\eqref{eq:boundMinTrace} can be bounded in the following way:
\begin{align}
    &\left|\Tr(\sigma_{k^*}A_{k^*i^*}) - \mu_{k^*i^*}\right|
    = \left|\nu_{k^*k^*i^*} - \mu_{k^*i^*} \right| = \Delta_{k^*} \\
    &\leq \Delta_{i^*} = \max_{i<j} \left|\Tr(\sigma_{i^*}A_{ij}) - \mu_{ij}\right|\\ 
    &\leq \max_{i<j} \left|\Tr(\sigma_{i^*}A_{ij}) - \Tr(\rho A_{ij})\right| \nonumber\\
    &+ \max_{i<j} \left|\Tr(\rho A_{ij}) - \mu_{ij}\right| \\
    &\leq \eta + 2\epsilon 
\end{align}
Note that the last two inequalities above are implied by~(\ref{eq:STforQHS}) and therefore the overall result holds with probability at least $1-\delta$. Proceeding similarly for the second term in Eq.~\eqref{eq:boundMinTrace} we obtain
\begin{equation}
     \left|\Tr(\sigma_{i^*}A_{k^*i^*}) - \mu_{k^*i^*}\right|\leq \eta +2\epsilon
\end{equation}
as well.

Therefore, by taking $\epsilon' =\epsilon/3$ we can conclude that $d_{\rm tr}(\sigma_{k^*},\rho)\leq 3\eta + 4\epsilon$ with probability at least $1-\delta$.
\end{proof}

\section{Statistical learning for classical-quantum processes}\label{sec:statlearn}
The previous section presented Theorems for ERM on finite concept classes. In this section we extend our tools to learning even infinite concept classes. The main result of this section is a statistical learning theorem (Theorem \ref{theo_ERM}) for classical-quantum processes which gives conditions on their learnability in terms of the covering number of the concept class $\C$ from which they are drawn. Our proof is constructive and provides an explicit algorithm to learn the given concept class. { Note however that the algorithm relies on constructing empirical covering nets, which can be demanding; hence we expect that faster algorithms can be found to improve the performance. We also remark that the following results are a consequence of the guarantees on our ERM algorithms for both projector-valued and state-valued concept classes, proved in Section~\ref{sec:Empirical risk minimization}, together with standard classical statistical learning theory guarantees on uniform convergence of the empirical risks in terms of the growth functions $\Gamma_{1}$ (see Sections \ref{sec:statlearnprel} and \ref{sec:statlearntheo}). It is important to note that our sufficient condition for learnability is instead expressed in terms of the growth functions $\gamma_{1,q}\geq \Gamma_{1}$, which are sufficient to guarantee both ERM and uniform convergence.}  

In Subsection~\ref{uniconv} we formulate a lemma that identifies, even for continuous-valued concept classes, a finite cardinality concept class on which we apply the algorithm of the previous section. In Subsection~\ref{part1}, we prove our main Theorem \ref{theo_ERM} for concept classes that map to projector-valued functions. Through shadow tomography, this can be extended to estimating all the empirical risks, and we do this in Section \ref{parttomo}. In Subsection \ref{part2}, we prove Theorem \ref{theo_ERM} for concept classes that map to state-valued functions. In Appendix~\ref{subsec:warmup} we also show that the algorithm of~\cite{ChungLin21} to learn concept classes that output pure states, in the realizable case, works also when the growth function is slowly growing.

\subsection{Uniform convergence}\label{uniconv} 

In the rest of this section we will perform ERE and/or ERM with respect to an $\epsilon$-net of the concept class which depends on the classical data. 
The following Lemma ensures that ERE/ERM also gives a good solution for the estimation/minimization of the true risks. 

\begin{lemma}\label{th:uniconv}
Given $l_0$ copies of
$$\rho=\sum_{x\in \mathcal{X}}\mathcal{D}(x)\ketbra{x}{x}\otimes \rho(x),$$ by looking at the classical register, we can find a finite subset $\mathcal C_{\epsilon}$ of the concept class $\mathcal C$ with loss function ($L_s$ or $L_p$) such that for every $c\in\mathcal C$, there is a (known) $c^*(c)\in \mathcal C_{\epsilon}$ such that
\begin{equation}
|R(c)-R(c^*(c))|\leq\epsilon,
\end{equation}
with cardinality at most 
$\gamma_{1,q}(l_0, \epsilon / 2, \mathcal C),$
and at the same time
\begin{equation}
\forall c \in \mathcal{C}:|R(c)-\hat{R}(c)|<\epsilon/4
\end{equation}
with probability of error
\begin{equation}
p_{err}\leq  4\gamma_{1,q}(2 l_0, \epsilon / 64, \mathcal{C}) e^{-\frac{l_0 \epsilon^{2}}{512}}.
\end{equation}
\end{lemma}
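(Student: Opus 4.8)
The plan is to realize $\mathcal{C}_\epsilon$ as an \emph{empirical} $\epsilon/2$-net and then bridge from the empirical to the true risk through the uniform-convergence theorem of the preliminaries. First I would measure the classical register of each of the $l_0$ copies, obtaining a sample $\vec{x}=(x_1,\dots,x_{l_0})$ with $x_i$ drawn i.i.d.\ from $\mathcal{D}$. Conditioned on $\vec{x}$, I would let $\mathcal{C}_\epsilon$ be a minimal internal $\epsilon/2$-net of $\mathcal{C}$ with respect to the data-dependent pseudometric $\lVert\cdot\rVert_{1,q,\vec{x}}$ (with $q=1$ for state-valued classes and loss $L_s$, and $q=\infty$ for projector-valued classes and loss $L_p$). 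The cardinality bound is then immediate: by Definition~\ref{defcov}, $|\mathcal{C}_\epsilon|=N_{in}(\epsilon/2,\mathcal{C},\lVert\cdot\rVert_{1,q,\vec{x}})\leq\gamma_{1,q}(l_0,\epsilon/2,\mathcal{C})$ for every realization of $\vec{x}$, which settles the claim on $|\mathcal{C}_\epsilon|$.

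Next I would establish the uniform-convergence statement $\forall c\in\mathcal{C}:|R(c)-\hat{R}(c)|<\epsilon/4$ and its failure probability, which is the probabilistic core of the lemma. This is a direct application of Theorem~\ref{convergence_proj} with sample size $l_0$, target accuracy $\epsilon/4$, and noting that both $L_s$ and $L_p$ take values in $[0,1]$; this bounds the failure probability by $4\,\Gamma_1(2l_0,\epsilon/32,\mathcal{G}_{\mathcal{C},L})\,e^{-l_0\epsilon^2/512}$. To rewrite this in terms of the operator-class covering number I would invoke the relations $\Gamma_1(n,\eps,\mathcal{G}_{\mathcal{C},L_s})\leq\gamma_{1,1}(n,2\eps,\mathcal{C})$ and $\Gamma_1(n,\eps,\mathcal{G}_{\mathcal{C},L_p})\leq\gamma_{1,\infty}(n,\eps,\mathcal{C})$ from Eqs.~\eqref{eq:gammavsGamma_s}--\eqref{eq:gammavsGamma_p}, together with the monotonicity of $\gamma_{1,q}$ in its radius. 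For states this gives $\Gamma_1(2l_0,\epsilon/32,\mathcal{G}_{\mathcal{C},L_s})\leq\gamma_{1,1}(2l_0,\epsilon/16,\mathcal{C})\leq\gamma_{1,1}(2l_0,\epsilon/64,\mathcal{C})$, and for projectors $\Gamma_1(2l_0,\epsilon/32,\mathcal{G}_{\mathcal{C},L_p})\leq\gamma_{1,\infty}(2l_0,\epsilon/32,\mathcal{C})\leq\gamma_{1,\infty}(2l_0,\epsilon/64,\mathcal{C})$; the common radius $\epsilon/64$ absorbs the factor-two loss present only in the state case, yielding exactly the advertised $p_{err}$.

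Finally, I would deduce the true-risk covering property from the net construction and the uniform-convergence event via a three-term triangle inequality. Working on the good event where $|R(c)-\hat{R}(c)|<\epsilon/4$ holds for all $c$, for any $c\in\mathcal{C}$ pick its net point $c^*(c)\in\mathcal{C}_\epsilon$, so that $\lVert c-c^*(c)\rVert_{1,q,\vec{x}}\leq\epsilon/2$. The empirical-risk Lipschitz estimates of Eqs.~\eqref{contstates} and~\eqref{contproj} then control the empirical gap, giving $|\hat{R}(c)-\hat{R}(c^*(c))|\leq\epsilon/4$ for states and $\leq\epsilon/2$ for projectors. Splitting $|R(c)-R(c^*(c))|\leq|R(c)-\hat{R}(c)|+|\hat{R}(c)-\hat{R}(c^*(c))|+|\hat{R}(c^*(c))-R(c^*(c))|$ and bounding the three terms by $\epsilon/4$, then $\{\epsilon/4\text{ or }\epsilon/2\}$, then $\epsilon/4$ respectively, yields $|R(c)-R(c^*(c))|\leq\epsilon$ in both cases.

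The main obstacle is conceptual rather than computational: the net $\mathcal{C}_\epsilon$ is only guaranteed to be fine in the \emph{empirical} pseudometric $\lVert\cdot\rVert_{1,q,\vec{x}}$, which by construction controls only differences of \emph{empirical} risks, whereas the claim $|R(c)-R(c^*(c))|\leq\epsilon$ is about \emph{true} risks. The entire content of the lemma is that these two notions can be reconciled, and the reconciliation is precisely the uniform-convergence bound; hence the true-risk covering claim cannot be proven independently of the probabilistic statement, and care must be taken to track the loss range and the factor-two discrepancy between the state and projector covering-number relations so that a single radius ($\epsilon/64$) and a single exponent work uniformly across both cases.
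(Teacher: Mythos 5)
Your proposal is correct and follows essentially the same route as the paper's proof: construct an empirical $\epsilon/2$-net in the $\lVert\cdot\rVert_{1,q,\vec{x}}$ pseudometric (cardinality bounded by $\gamma_{1,q}(l_0,\epsilon/2,\mathcal{C})$ by definition), invoke Theorem~\ref{convergence_proj} at accuracy $\epsilon/4$ together with Eqs.~\eqref{eq:gammavsGamma_s}--\eqref{eq:gammavsGamma_p} and monotonicity of the covering number to get the stated $p_{err}$, and finish with the three-term triangle inequality on the uniform-convergence event. Your bookkeeping of the factor-two discrepancy between the state and projector cases is in fact slightly more explicit than the paper's, but the argument is the same.
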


\begin{proof}
This is a simple consequence of Theorem~\ref{convergence_proj}. In fact, given a sample $S$ consisting of $l_0$ samples of the random variable $x\in\mathcal X$ associated to the classical register, we have that
\begin{align}
&\Pr_{S \sim \mathcal{D}^{l_0}}[\exists c \in \mathcal{C}:|R(c)-\hat{R}(c)|\geq \epsilon/4]\\
&\leq 4 \gamma_{1,q}(2 l_0, \epsilon / 64, \mathcal{C}) e^{-\frac{l_0 \epsilon^{2}}{512}}.
\end{align}
By taking an $\epsilon/2$-net $\mathcal C_{\epsilon/2}(\vec x)$ according to the appropriate loss function $L$ on the data $\vec x$, we have that for every $c\in\mathcal C$ there is $c^*(c)\in \mathcal C_{\epsilon/2}(\vec x)$ such that, via Eqs.~\eqref{contstates},~\eqref{contproj}
\begin{equation}\label{neteps}
|\hat R(c)-\hat R(c^*(c))|\leq \epsilon/2.
\end{equation}

If both conditions are satisfied, we have

\begin{align}
&|R(c)-R(c^*(c))| \leq |R(c)-\hat{R}(c)|\nonumber\\
&+|R(c^*(c))-\hat{R}(c^*(c))|+|\hat R(c)-\hat{R}(c^*(c))|\\
&\leq |R(c)-\hat{R}(c)|+|R(c^*)-\hat{R}(c^*)| +\epsilon/2 \\
&\leq \sup_{c\in C} 2|R(c)-\hat{R}(c)|+\epsilon/2\\
&\leq\epsilon.
\end{align}

The bound on the cardinality comes from the definition of $\gamma_{1,q}$.

\end{proof}

Via this Lemma, whenever we want to solve the risk estimation for a classical quantum source, we can use part of the classical data to extract a good $\epsilon$-net on the space of \textit{true risks} (meaning exactly that for every concept $c$ there is a known concept $c^*_c$ with true risk $\epsilon$-close), and use the quantum data to obtain the best hypothesis on this finite cardinality concept class, using the algorithms shown in the previous sections. 
\textit{Remark:} As the cardinality of the $\epsilon$-net on the data can increase with the size of the sample, one could also construct the $\epsilon$-net only on a part of the classical data, such that uniform convergence is ensured at the desired level. If $\lim_{n\rightarrow\infty}\log \gamma_{1,q}(2n,\epsilon/64,\mathcal C)/n=0$, there will be a finite $l_0$ such that $p_{err}\leq\delta$. The cardinality of the  $\epsilon$-net will be then $\gamma_{1,q}(l_0,\epsilon/2,\mathcal C)$. This observation is used in Appendix~\ref{appC}.

\subsection{Learnability of quantum processes that map to projectors (Proof of Thm~\ref{theo_ERM} for loss function $L_p$)} \label{part1}

We now present one of our main technical results about learning unknown quantum processes without input control. This was Theorem \ref{theo_ERM}, which we reproduce below for convenience:
\begin{theorem}[(Theorem \ref{theo_ERM}, repeated) Learning quantum processes via ERM]\label{theo_ERM_2}
Suppose the concept class $\C$ consists of classical-quantum processes mapping to projectors or states and let $\epsilon>0$ be the accuracy parameter. { Furthermore, let $S = (x_i,\rho(x_i))_{i=1}^n$ be the training set, with $x_i \xleftarrow{\mathcal{D}} X$ and $\rho(\cdot)$ an unknown classical-quantum channel. 

Then, the appropriate ERM algorithm of Theorems~\ref{th:theoere0},~\ref{theo:states_finite0}, run on an $\epsilon$-net of the concept class $\mathcal{C}$ (according to the appropriate pseudometric determined by $x_1,...,x_n$), provide an \textit{agnostic} learning algorithm $\mathcal{A}:\X^n \times  \mathcal{L}(\mathcal{H}^{(d)})^{\otimes n} \rightarrow \C$. This algorithm} outputs an hypothesis ${\cal A}(S)$ satisfying, for some fixed $\eta,\xi\geq 1$, and $n$ large enough,
\begin{equation}\label{eq:ERM2}
\Pr_{S}[R_\rho(\mathcal{A}(S)) -\eta\inf_{h\in\mathcal C} R_\rho(h)<\xi\epsilon] > 1-\delta,
\end{equation}
if
\beq\label{eq:learnability}
\lim_{n\rightarrow \infty}\frac{\log^2 \gamma_{1,q}(n, \eps,\mathcal{C})}{n}=0,\qquad \forall\epsilon>0
\eeqp
In particular, this applies to risks defined via the loss functions $L_p$ (in this case $\eta=1$, $q=\infty$) and $L_s$ (in this case $\eta=3$, $q=1$) for projector-valued and state-valued concept classes $\mathcal{C}$ respectively.

\end{theorem}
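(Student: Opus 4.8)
The plan is to combine the finite-class ERM guarantees of Theorems~\ref{th:theoere0} and~\ref{theo:states_finite0} with the uniform-convergence statement of Lemma~\ref{th:uniconv}, treating the projector-valued ($L_p$, $q=\infty$) and state-valued ($L_s$, $q=1$) cases in parallel since they share the same skeleton. First I would read out the classical registers $x_1,\dots,x_n$ --- which, for a classical-quantum source, leaves the quantum parts $\rho(x_i)$ undisturbed --- and use them to build a data-dependent internal $\epsilon$-net $\mathcal{C}_\epsilon\subseteq\mathcal{C}$ in the pseudometric $\|\cdot\|_{1,q,\vec x}$. By Definition~\ref{defcov} its cardinality is $m:=|\mathcal{C}_\epsilon|\le\gamma_{1,q}(n,\epsilon,\mathcal{C})$, and by Lemma~\ref{th:uniconv} this same net simultaneously (i) approximates true risks, so that every $h\in\mathcal{C}$ has a representative $c^*(h)\in\mathcal{C}_\epsilon$ with $|R_\rho(h)-R_\rho(c^*(h))|\le\epsilon$, and (ii) yields uniform convergence $|R_\rho(c)-\hat R_\rho(c)|\le\epsilon/4$ for all $c\in\mathcal{C}$, with failure probability controlled by $\gamma_{1,q}(2n,\cdot,\mathcal{C})\,e^{-\Omega(n\epsilon^2)}$.

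Next I would run the appropriate ERM algorithm on $\varrho=\rho(x_1)\otimes\cdots\otimes\rho(x_n)$ with the finite class $\mathcal{C}_\epsilon$: Theorem~\ref{th:theoere0} for $L_p$, which (since $\mu_c=1-\hat R_\rho(c)$) returns $c^\ast$ with $\hat R_\rho(c^\ast)\le\min_{c\in\mathcal{C}_\epsilon}\hat R_\rho(c)+2\epsilon$, or Theorem~\ref{theo:states_finite0} for $L_s$, which returns $c^\ast$ with $\hat R_\rho(c^\ast)\le 3\min_{c\in\mathcal{C}_\epsilon}\hat R_\rho(c)+4\epsilon$; the factor $3$ here is exactly the $\eta$ in the statement, while $\eta=1$ for projectors. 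The conclusion then follows from a short chain of triangle inequalities: take $h^\dagger\in\mathcal{C}$ with $R_\rho(h^\dagger)\le\inf_{h}R_\rho(h)+\epsilon$, pass to its net representative $c^*(h^\dagger)$, bound the minimum empirical risk over $\mathcal{C}_\epsilon$ by $\min_{c\in\mathcal{C}_\epsilon}\hat R_\rho(c)\le\hat R_\rho(c^*(h^\dagger))\le R_\rho(c^*(h^\dagger))+\epsilon/4\le\inf_h R_\rho(h)+O(\epsilon)$, feed this into the ERM guarantee, and finally convert the empirical risk of $c^\ast$ back to its true risk via uniform convergence, $R_\rho(\mathcal{A}(S))\le\hat R_\rho(c^\ast)+\epsilon/4$. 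This produces $R_\rho(\mathcal{A}(S))\le\eta\inf_h R_\rho(h)+\xi\epsilon$ with $\eta\in\{1,3\}$ and $\xi$ an absolute constant absorbing the $O(\epsilon)$ terms.

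The main obstacle --- and the reason the hypothesis is stated with $\log^2$ rather than $\log$ --- is the self-consistency of the sample-complexity requirement. The ERM routines of Theorems~\ref{th:theoere0} and~\ref{theo:states_finite0} succeed only when $n\gtrsim\log^2(em)/\epsilon^2$ (inherited from the threshold-search condition $(\log m+C_2)^2<C_1 l\epsilon^2$ of Lemma~\ref{lem:threshold}), whereas uniform convergence needs only $n\gtrsim\log\gamma_{1,q}/\epsilon^2$; the binding constraint is therefore the quadratic one. Since $m\le\gamma_{1,q}(n,\epsilon,\mathcal{C})$ itself grows with $n$, this is a fixed-point condition $n\gtrsim\log^2\gamma_{1,q}(n,\epsilon,\mathcal{C})/\epsilon^2$, and the hypothesis $\lim_{n\to\infty}\log^2\gamma_{1,q}(n,\epsilon,\mathcal{C})/n=0$ is precisely what guarantees that all sufficiently large $n$ satisfy it. I would close with a union bound over the two failure events (net/uniform-convergence failure and ERM failure), choosing constants so that the total is below $\delta$ for $n$ large enough, and noting that the argument runs verbatim in both cases with only the pair $(\eta,q)$ changing.
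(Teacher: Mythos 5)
Your proposal is correct and follows essentially the same route as the paper's proof: measure the classical register to build a data-dependent $\epsilon$-net whose properties come from Lemma~\ref{th:uniconv}, run the finite-class ERM algorithms of Theorems~\ref{th:theoere0} and~\ref{theo:states_finite0} on that net, chain triangle inequalities between empirical and true risks over the net and the full class, and union-bound the net/uniform-convergence and ERM failure probabilities. You also correctly pinpoint the origin of the $\log^2$ hypothesis in the self-consistency of the threshold-search condition $(\log m+C_2)^2<C_1 l\epsilon^2$ with $m\leq\gamma_{1,q}(n,\epsilon,\mathcal{C})$ growing in $n$, which is exactly the paper's reasoning (cf.\ Eq.~\eqref{eq:l0}).
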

We prove this theorem in two parts. In this section, we prove that Eq.~(\ref{eq:learnability}) gives a sufficient condition for learning concept classes of projector-valued functions (i.e. using loss function $L_p$). In Section~\ref{part2}, we will prove that the same condition suffices for learning a concept class whose functions map to mixed states (i.e. using loss function $L_s$). { Note that, in the case where the functions in the concept class map to pure states, this latter subcase would follow almost immediately from the former. }

We will actually prove a more refined, quantitative statement:

\begin{theorem}[Learning projector-valued functions via ERM]\label{ermrefined}
Suppose the concept class $\C$ consists of quantum processes mapping to projectors and let $\epsilon>0$ be the accuracy parameter, and suppose that
\begin{equation}
\lim_{n\rightarrow \infty}\frac{\log^2 \gamma_{1,\infty}(n,\epsilon,\mathcal C)}{n}=0, \qquad \forall\epsilon>0.
\end{equation}
Given as input a training set $S = (x_i,\rho(x_i))_{i=1}^n$ with $x_i \xleftarrow{\mathcal{D}} X$ and $\rho(\cdot)$ an unknown classical-quantum channel, there is an \textit{agnostic} learning algorithm $\mathcal{A}:\X^n \times \mathcal{L}(\mathcal{H}^{(d)})^{\otimes n} \rightarrow [0,1]^{\C}$ (i.e. the algorithm of Theorem~\ref{th:theoere} using an $\epsilon$-net of $\mathcal C$ as concept class), such that it outputs $c^*$ together with an estimate $\hat{\mu}_{c^*}$ of $\hat{R}_{\rho}(c^*)$ and
\begin{equation}\label{eq:ERMp}
\Pr_{S}[|\hat{\mu}_{c^*}-\inf_{c\in\mathcal C} R_\rho(c)|\geq 7\epsilon\cup |\hat{\mu}_{c^*}-\hat{R}_{\rho}(c^*)|\geq 6\epsilon] =: p_{\text{err}}.
\end{equation}

With $n=6Tkl$, for large enough $T=O(\log\frac{1}{\epsilon})$, $k=O(\log(1/\delta\log(1/\epsilon)))$, there exist constants $C_1, C_2, C_3$ such that,
as long as $l$ satisfies
\beq\label{eq:l0}
(\log\gamma_{1,\infty}(6Tkl, \eps/2,\mathcal{C})+C_2)^2\leq C_1l\epsilon^2/9
\eeqc
we have
\begin{align}
p_{\text{err}}&\leq \frac{\delta}{2}+C_3 Tk\gamma_{1,\infty}(6Tkl, \epsilon / 2, \mathcal{C}) e^{-\frac{l \epsilon^{2}}{72 }} \nonumber\\
&+4\gamma_{1,\infty}(12Tkl, \epsilon / 64, \mathcal{C}) e^{-\frac{6Tkl \epsilon^{2}}{512 }}.
\end{align}

\end{theorem}

Once this is established, Part 1 of Theorem~\ref{theo_ERM} follows by redefining $\epsilon$ and taking $l$ large enough.

We emphasize that this $\mathsf{ThresholdSearch}$ subroutine differs from the original ones \cite{Aaronsonshadow20,BO21} because it does not require identical copies of the state to output a concept above threshold. Exploiting this algorithm and the convergence of the empirical risk to the true risk as stated in Theorem~\ref{convergence_proj}, we can finally prove Theorem~\ref{theo_ERM} for the case of projectors.

\begin{proof}
We remind readers that the empirical risk for projector-valued functions can be written as

\beq
\hat R_{\rho}(c) = \frac{1}{n}\sum_{i=1}^n (1-\Tr[\rho(x_i)\Pi^{(c)}(x_i)])
\eeq

The algorithm looks at the classical register $\vec x$ and and obtains an $\epsilon$-net of the empirical risk $\mathcal C|_{\vec{x}}$, which is also a $2\epsilon$-net on the true risks, with probability of error bounded as in Lemma~\ref{th:uniconv}

\begin{equation}
p_{err, unif}(6Tkl)\leq  4\gamma_{1,\infty}(12Tkl, \epsilon / 64, \mathcal{C}) e^{-\frac{6Tkl \epsilon^{2}}{512}}.
\end{equation}

Then it runs the algorithm of Theorem~\ref{th:theoere} on the product state obtained from the full dataset $\vec{x}$, with the concepts in $\mathcal C|_{\vec{x}}$. In this way we obtain $c^*$ such that $|\hat{\mu}_{c^*}-\inf_{c\in \mathcal C|_{\vec{x}_0}}\hat{R}(c)|\leq 6\epsilon$ and $\hat{\mu}_{c^*}-\hat{R}(c^*)\leq 6\epsilon$, implying $\hat{R}(c^*)< \inf_{c\in \mathcal C|_{\vec{x}_0}}\hat{R}(c)+6\epsilon$ with probability bounded as in the proof of Theorem~\ref{th:theoere}
\begin{equation}
p_{err,ERM}\leq T(0.97^k+(k+1)e^{-l\epsilon^2/72})+2Tkme^{-2l\epsilon^2/64}.
\end{equation}

If this is true we have that, for the selected concept $c^*$

\begin{align}
R(c^*)&\leq \hat R(c^*)+\epsilon/4\\
&\leq \inf_{c\in \mathcal C|_{\vec{x}}} \hat R(c)+6\epsilon+\epsilon/4\\
&\leq \inf_{c\in \mathcal C} \hat R(c)+6\epsilon+3\epsilon/4\tag{$\epsilon/2$-net}\\&\leq  \inf_{c\in \mathcal C} R(c)+7\epsilon.
\end{align}

The probability of error is then upper bounded as
\begin{equation}
p_{err,unif}+p_{err,ERM}.
\end{equation}

With the choices made for the parameters, we obtain the thesis.

\end{proof}

\subsection{Shadow tomography for classical-quantum states}\label{parttomo}
 We can also estimate the empirical risks of all the concepts in a class, a task that is strictly related to shadow tomography. By uniform convergence, these will be close to the true risks. In fact, we can show an \textit{improved} algorithm for shadow tomography of classical-quantum states. Using this algorithm, we can not only find the minimum empirical risk in the concept class, but also simultaneously estimate empirical risk for all concepts. 

 \begin{theorem}[(Theorem~\ref{theo_shadow}, refined) Improved shadow tomography of classical-quantum states]\label{thm:minprojector}
Consider a collection of projector-valued functions $\mathcal C=\{f_{c}:x\rightarrow \Pi^{(c)}(x)\}$ with domain ${\cal X}$ and image in the projectors of a Hilbert space of dimension $d$.
Given access to $n$ copies of a classical-quantum state $\rho=\sum_{x\in{\cal X}} \mathcal{D}(x) \ketbra{x}{x}\otimes\rho(x)$, the algorithm of Theorem~\ref{th:theoavsh0} used with an $\epsilon$-net of $\mathcal C$ outputs values $\{\mu_c\}_{c\in \mathcal{C}}$ such that, except with probability $\delta$, for all $f_c\in \mathcal C$ it holds 
\begin{equation}\label{conv_shadow}
    \left|\sum_x \mathcal{D}(x)\Tr[\rho(x) \Pi^{(c)}(x)]-\mu_c\right|\leq 3\epsilon
\end{equation}
if either one of these conditions holds:
\begin{itemize}
\item $\mathcal C$ is finite. Then the minimal number of copies satisfies $n=\tilde O(\frac{\log^2{m}\log d \log(1/\delta)}{\epsilon^5})$ 
\item ${\cal C}$ is infinite but it holds \begin{equation}\label{eq:convergence}
\lim_{n\rightarrow \infty}\frac{\log^2 \gamma_{1,\infty}(n,\epsilon,\mathcal C)}{n}=0,\qquad \forall\epsilon>0
\end{equation}
\end{itemize}
and the number of copies is large enough. 

In particular, with $n=6Tkl$, for large enough $T=O\left(\frac{\log{d}}{\epsilon^3}\left(\log \log d+\log \frac{1}{\epsilon}\right)\right)$ and $k=O(\log (T/\delta))$, there exist constants $C_1, C_2, C_3$ such that, as long as $l$ satisfies
\beq\label{eq:l02}
(\log\gamma_{1,\infty}(6Tkl, \eps/2,\mathcal{C})+C_2)^2\leq C_1l\epsilon^2/9
\eeqc 
have that the probability of error is bounded as
\begin{align}
p_{err}&\leq \frac{\delta}{2}+C_3 Tk\gamma_{1,\infty}(6Tkl, \epsilon / 2, \mathcal{C}) e^{-\frac{l \epsilon^{2}}{32 }}\nonumber \\
&+4\gamma_{1,\infty}(12Tkl, \epsilon / 64, \mathcal{C}) e^{-\frac{6Tkl \epsilon^{2}}{512 }}.
\end{align}

\end{theorem}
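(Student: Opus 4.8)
The plan is to mirror the proof of Theorem~\ref{theo_ERM} for the projector case given in Section~\ref{part1}, replacing the empirical-risk-\emph{minimization} algorithm of Theorem~\ref{th:theoere} by the empirical-risk-\emph{estimation} algorithm of Theorem~\ref{th:theoavsh}. The essential point is that the ERE algorithm returns estimates for \emph{all} concepts in a finite class simultaneously, so after lifting a finite empirical net to the whole (possibly infinite) class $\mathcal{C}$ we obtain estimates of the risk of every $f_c\in\mathcal{C}$. Concretely, I would first inspect the classical register of the $n=6Tkl$ copies and invoke Lemma~\ref{th:uniconv} to extract, from the observed string $\vec x$, a finite $\epsilon/2$-net $\mathcal{C}|_{\vec x}$ of $\mathcal{C}$ in the $\lVert\cdot\rVert_{1,\infty,\vec x}$ pseudometric, whose cardinality is at most $\gamma_{1,\infty}(n,\epsilon/2,\mathcal{C})$. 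Lemma~\ref{th:uniconv} simultaneously guarantees, except with the stated uniform-convergence failure probability, that $|R(c)-\hat R(c)|<\epsilon/4$ for every $c\in\mathcal{C}$ and, via Eq.~\eqref{contproj}, that the empirical risks of $c$ and its net representative $c^*(c)$ differ by at most $\epsilon/2$.

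Second, I would run the ERE algorithm of Theorem~\ref{th:theoavsh} on the full product state $\varrho=\rho(x_1)\otimes\cdots\otimes\rho(x_n)$, taking as its $m$ lists of projectors the concepts in $\mathcal{C}|_{\vec x}$, so that $m=\lvert\mathcal{C}|_{\vec x}\rvert\le\gamma_{1,\infty}(n,\epsilon/2,\mathcal{C})$. This yields estimates $\hat\mu_{c^*}$ of the empirical overlaps $1-\hat R(c^*)$ with $|\hat\mu_{c^*}-(1-\hat R(c^*))|\le 2\epsilon$ for every representative $c^*$, except with the ERE failure probability. I would then \emph{define} the output for an arbitrary $c\in\mathcal{C}$ to be $\mu_c:=\hat\mu_{c^*(c)}$, the estimate attached to its net representative. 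The desired $3\epsilon$ accuracy in Eq.~\eqref{conv_shadow} follows from a single triangle inequality combining the three bounds: $2\epsilon$ from ERE, $\epsilon/2$ from the net property on empirical risks, and $\epsilon/4$ from uniform convergence, so that $|\mu_c-(1-R(c))|\le 2\epsilon+\epsilon/2+\epsilon/4<3\epsilon$. A union bound over the ERE error (Theorem~\ref{th:theoavsh}), the uniform-convergence error (Lemma~\ref{th:uniconv}), and the constant-probability failures of $\mathsf{ThresholdSearch}$ (absorbed into $\delta/2$ via $k=O(\log(T/\delta))$) assembles the claimed bound on $p_{err}$.

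For the sample complexity I would substitute the parameters of Theorem~\ref{th:theoavsh}: $T=O\!\big(\tfrac{\log d}{\epsilon^3}(\log\log d+\log\tfrac1\epsilon)\big)$ and $k=O(\log(T/\delta))$, leaving $l$ as the free knob. In the finite case $|\mathcal{C}|=m$ the net step is vacuous and one recovers $n=\tilde O(\log^2 m\,\log d\,\log(1/\delta)/\epsilon^5)$ directly from the ERE bound, with the two exponential terms replaced by fixed $m$.

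The main obstacle is the \emph{self-referential} constraint on $l$: the number of lists of projectors fed to ERE is the net cardinality $m=\gamma_{1,\infty}(6Tkl,\epsilon/2,\mathcal{C})$, while $\mathsf{ThresholdSearch}$ (through Lemma~\ref{lem:threshold}) demands $(\log m+C_2)^2\le C_1 l\epsilon^2/9$, i.e. $(\log\gamma_{1,\infty}(6Tkl,\epsilon/2,\mathcal{C})+C_2)^2\le C_1 l\epsilon^2/9$. Since $n=6Tkl$ itself grows with $l$, both sides of this inequality increase together and one must verify it can be met. This is exactly where the hypothesis $\lim_{n\to\infty}\log^2\gamma_{1,\infty}(n,\epsilon,\mathcal{C})/n=0$ enters: with $T,k$ fixed once $\epsilon,\delta,d$ are chosen, this condition forces $\log^2\gamma_{1,\infty}(6Tkl,\epsilon/2,\mathcal{C})$ to be $o(l)$, so the left-hand side is dominated by the right-hand side for all sufficiently large $l$, guaranteeing a finite feasible $n$. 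The same asymptotics drives the two exponential error terms to zero, completing the infinite-class case.
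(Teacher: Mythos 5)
Your proposal is correct and follows essentially the same route as the paper: build a finite $\epsilon/2$-net from the classical register via Lemma~\ref{th:uniconv}, run the ERE algorithm of Theorem~\ref{th:theoavsh0} on the full product state with the net as concept class (so $m=\gamma_{1,\infty}(6Tkl,\epsilon/2,\mathcal{C})$), assign each concept the estimate of its net representative, and combine the $2\epsilon$ ERE precision with the net and uniform-convergence errors by triangle inequality and a union bound. Your explicit accounting $2\epsilon+\epsilon/2+\epsilon/4<3\epsilon$ and your discussion of why the self-referential constraint on $l$ is feasible under the covering-number growth hypothesis are exactly the details the paper leaves implicit in its terse proof.
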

\begin{proof}
The only difference with Theorem~\ref{ermrefined} is to replace the ERE subroutine with the shadow tomography subroutine to estimate the empirical risks. The probability of error of the ERE, $p_{err,ERE}$, becomes bounded as in the proof of Theorem~\ref{th:theoavsh0}, with $m=\gamma_{1,\infty}(6Tkl, \epsilon / 2, \mathcal{C})$:

$p_{err,ERE}\leq T((0.97)^k+2(k+1)e^{-l\epsilon^2/6})+2Tk\gamma_{1,\infty}(6Tkl, \epsilon / 2, \mathcal{C}) e^{-2l\epsilon^2/64}$.

On the other hand, the probability of error of the uniform convergence is the same as in the previous Theorem. By summing the two probabilities of error we get the bound in the thesis. Notice that with Theorem~\ref{th:theoavsh0} we get estimates of the empirical risks at precision $2\epsilon$. By Theorem~\ref{th:uniconv} this translates into a precision $3\epsilon$ on the full class.

\end{proof}

\subsection{Learnability of quantum processes that map to states (Thm~\ref{theo_ERM} with loss function $L_s$)}\label{part2}

In this section, we will prove that quantum processes that output states are also efficiently learnable (having shown the analogous statement for processes that output projectors in Section \ref{part1}). This also constitutes the second half of the statement of Theorem \ref{theo_ERM}. In this setting the empirical risk takes the form
\beq\label{eq:risk_states}
\hat{R}_{\rho}(h) = \frac{1}{n}\sum_{i=1}^n d_{\mathrm{tr}}(\sigma_h(x_i),\rho(x_i))
\eeqp

We can now prove Theorem~\ref{theo_ERM} for state-valued functions, with loss function $L_s$. In fact, we prove a refined statement:

\begin{theorem}[Refinement of Theorem~\ref{theo_ERM} for state-valued functions]\label{theoermproj}
Suppose the concept class $\C$ consists of quantum processes mapping to states and let $\epsilon>0$ be the accuracy parameter. Suppose that

\begin{equation}
\lim_{n\rightarrow \infty}\frac{\log^2 \gamma_{1,1}(n,\epsilon,\mathcal C)}{n}=0,\qquad \forall\epsilon>0.
\end{equation}
Given as input a training set $S = (x_i,\rho(x_i))_{i=1}^n$ with $x_i \xleftarrow{\mathcal{D}} X$ and an unknown $\rho(x)$, there is a learning algorithm $\mathcal{A}:\X^n \times \mathcal{L}(\mathcal{H}^{(d)})^{\otimes n} \rightarrow \C$ obtained from the algorithm of Theorem~\ref{theo:states_finite} applied to an $\epsilon$-net of $\mathcal{C}$ (according to the appropriate pesudometric defined by $x_1,...,x_n$), such that
\begin{equation}\label{eq:ERMs}
\Pr_{S}[R_\rho(\mathcal{A}(S)) -3\inf_{c\in\mathcal C} R_\rho(c)<6\epsilon] =: 1-p_{\text{err}},
\end{equation}
and setting $n=6Tkl$, for large enough $T=O\left(\frac{\log{d}}{\epsilon^3}\left(\log\log d+\log\frac{1}{\epsilon}\right)\right)$, $k=O(\log(T/\delta))$, there exist $C_1,C_2,C_3,C_4>0$ constants such that 
as long as
\beq
(\log\gamma_{1,1}(6Tkl, \eps/2,\mathcal{C})+C_2)^2\leq C_1l\epsilon^2
\eeqc
 we have 
\begin{align}
p_{\text{err}}&\leq \frac{\delta}{2}+C_3\gamma_{1,1}(12 Tkl, \eps/16,\mathcal{C})e^{-6Tkl\epsilon^2/128}\nonumber\\
&+C_4Tk(\gamma_{1,1}(6Tkl, \epsilon/ 2, \mathcal{C}))^2 e^{-\frac{l{\epsilon}^{2}}{32 }}.
\end{align}

\end{theorem}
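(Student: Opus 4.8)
The plan is to mirror closely the proof for projector-valued functions in Section~\ref{part1}, replacing the projector-valued ERM subroutine by its state-valued counterpart (Theorem~\ref{theo:states_finite}) and carefully tracking both the extra factor of $3$ introduced by hypothesis selection and the fact that the relevant covering number now enters \emph{squared}. Throughout I write the total sample size as $n=6Tkl$ and recall that for state-valued concepts the empirical risk is $\hat R_\rho(h)=\frac1n\sum_{i=1}^n d_{\mathrm{tr}}(\sigma_h(x_i),\rho(x_i))$, which is why the loss lives in the $\|\cdot\|_{1,1,\vec x}$ geometry and the governing quantity is $\gamma_{1,1}$.

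First I would inspect only the classical register $\vec x$ and invoke Lemma~\ref{th:uniconv} with $q=1$ and loss $L_s$. This simultaneously (i) produces a finite $\tfrac\epsilon2$-net $\mathcal C_{\epsilon/2}(\vec x)\subseteq\mathcal C$ in the $\|\cdot\|_{1,1,\vec x}$ pseudometric of cardinality $M:=\gamma_{1,1}(6Tkl,\epsilon/2,\mathcal C)$, and (ii) guarantees the uniform-convergence event $\forall c\in\mathcal C:\ |R_\rho(c)-\hat R_\rho(c)|<\epsilon/4$, whose failure probability is of the form $\gamma_{1,1}(12Tkl,\cdot,\mathcal C)\,e^{-\Omega(6Tkl\epsilon^2)}$ and accounts for the second term of the claimed bound (the precise numerical constants being reabsorbable by rescaling $\epsilon$).

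Next I would run the state-valued ERM algorithm of Theorem~\ref{theo:states_finite} on the full product state $\varrho=\rho(x_1)\otimes\cdots\otimes\rho(x_n)$, using $\mathcal C_{\epsilon/2}(\vec x)$ as the finite concept class. The one structural novelty relative to Section~\ref{part1} is that this routine reduces to empirical risk \emph{estimation} of the Helstrom projectors $A_{ij}(s)=(\sigma_i(s)-\sigma_j(s))_+$ ranging over all pairs $i<j$ of net elements; hence the effective number of projector-lists fed to the ERE subroutine is $\binom{M}{2}\le M^2/2$, which is exactly why $\gamma_{1,1}$ appears squared in the sampling-without-replacement term $C_4Tk\,M^2e^{-l\epsilon^2/32}$ (Lemma~\ref{lem:multisamplerep}). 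Correspondingly, the $\mathsf{ThresholdSearch}$ feasibility condition of Lemma~\ref{lem:threshold} reads $(\log M^2+C_2')^2=(2\log\gamma_{1,1}+C_2')^2\le C_1'l\epsilon^2$, which is equivalent to the stated $(\log\gamma_{1,1}(6Tkl,\epsilon/2,\mathcal C)+C_2)^2\le C_1l\epsilon^2$ after absorbing the factor $2$ into the constants; the residual $\mathsf{Update}$ and check failures are collectively bounded by $\delta/2$. Theorem~\ref{theo:states_finite} then certifies that the output $\sigma_{i^*}$ obeys $\hat R_\rho(\sigma_{i^*})\le 3\hat\eta+4\epsilon$, with $\hat\eta:=\min_{c\in\mathcal C_{\epsilon/2}(\vec x)}\hat R_\rho(c)$.

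Finally I would chain the estimates on the good event. For a near-infimizer $c_0$ of $R_\rho$ over $\mathcal C$, the net contains some $c^*(c_0)$ with $\|c_0-c^*(c_0)\|_{1,1,\vec x}\le\epsilon/2$, so Eq.~\eqref{contstates} gives $|\hat R_\rho(c_0)-\hat R_\rho(c^*(c_0))|\le\epsilon/4$; combining this with uniform convergence yields $\hat\eta\le\inf_{c\in\mathcal C}R_\rho(c)+\epsilon/2$. Then
\begin{align}
R_\rho(\sigma_{i^*}) &\le \hat R_\rho(\sigma_{i^*})+\tfrac{\epsilon}{4}\le 3\hat\eta+4\epsilon+\tfrac{\epsilon}{4}\nonumber\\
&\le 3\inf_{c\in\mathcal C}R_\rho(c)+\tfrac{3\epsilon}{2}+4\epsilon+\tfrac{\epsilon}{4}\le 3\inf_{c\in\mathcal C}R_\rho(c)+6\epsilon,
\end{align}
which is Eq.~\eqref{eq:ERMs}, and a union bound over the uniform-convergence, sampling, $\mathsf{ThresholdSearch}$, and $\mathsf{Update}$ failures assembles the stated $p_{\mathrm{err}}$. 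I expect the substantive work to be bookkeeping rather than a new idea: one must check that the factor-$3$ blow-up of hypothesis selection still collapses to the clean constant $6\epsilon$ once the sub-precisions are apportioned (as above, $\tfrac32+4+\tfrac14=5.75<6$), and — the genuine conceptual point — verify that squaring the covering number is compatible with learnability, i.e.\ that the hypothesis $\lim_{n\to\infty}\log^2\gamma_{1,1}(n,\epsilon,\mathcal C)/n=0$ (with the \emph{square}) is precisely what allows $l$ to be chosen so that $(\log\gamma_{1,1}+C_2)^2\le C_1l\epsilon^2$ holds while all the exponential error terms simultaneously vanish.
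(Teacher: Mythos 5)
Your proposal is correct and follows essentially the same route as the paper's proof: measure the classical register, build an $\epsilon/2$-net via Lemma~\ref{th:uniconv} (which also supplies uniform convergence), run the hypothesis-selection/ERE machinery of Theorem~\ref{theo:states_finite} on the net — correctly noting that the $\binom{M}{2}$ Helstrom projector pairs are what squares $\gamma_{1,1}$ in the error bound — and chain the inequalities with a union bound. Your final chaining (going through the near-infimizer $c_0$ and its net representative, giving $3\epsilon/2+4\epsilon+\epsilon/4=5.75\epsilon$) is in fact slightly cleaner than the paper's, whose displayed chain as written accumulates $6.5\epsilon$ rather than $6\epsilon$; otherwise the two arguments coincide.
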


\begin{proof}[Proof of Theorem~\ref{theoermproj}]
The proof is a straightforward generalization of the proof of Theorem~\ref{theo:states_finite} using the algorithm of Theorem~\ref{thm:minprojector}. In this case, $\mathcal C$ is possibly of infinite cardinality.
The algorithm works as follows. Following the notation of Theorem~\ref{thm:minprojector}, first we measure the classical register of the copies of $\rho$, obtaining a vector $\vec x$ and an $\epsilon$-net $\mathcal C_{\vec x}$ , with cardinality less than the covering number  
$\gamma_{1,1}(6Tkl,\epsilon/2, \mathcal C)$. 
By Theorem~\ref{th:uniconv}, this $\epsilon/2$-net gives an $\epsilon$-net for the true risks, with probability of error bounded as
\begin{equation}
p_{{err,unif}}\leq  4 \gamma_{1,1}(12 Tkl, \eps/16,\mathcal{C})e^{-6Tkl\epsilon^2/128},
\end{equation}

Therefore we have that if we can find $c^*$ such that $\hat{R}(c^*)\leq 3\inf_{c\in \mathcal C_{\vec x}} \hat{R}(c)+4\epsilon$, then
{\begin{align}
R(c^*)&\leq \hat R(c)+\epsilon/4\\
&\leq 3\inf_{c\in \mathcal C_{\vec{x}}} \hat R(c)+4\epsilon+\epsilon/4\\
&\leq 3\inf_{c\in \mathcal C} \hat R(c)+4\epsilon+7\epsilon/4\tag{$\epsilon/2$-net}\\&\leq  3\inf_{c\in \mathcal C} R(c)+6\epsilon.
\end{align}}
Therefore, we also have that $R(c^*)\leq 3\inf_{c\in \mathcal C} R(h)+6\epsilon$ with high probability.
We can then apply Theorem~\ref{theo:states_finite} to the quantum data with the concept class $\mathcal C_{\vec x}$. The probability of error of the algorithm of Theorem~\ref{theo:states_finite} is in fact the probability of error of the ERE algorithm for the projector-valued concept class constructed from $\mathcal C_{\vec x}$ as in Theorem~\ref{theo:states_finite}, therefore, we have again
\begin{align}
p_{\text{err,ERM}}&\leq T((0.97)^k+2(k+1)e^{-l\epsilon^2/6})\nonumber\\
&+2Tk\gamma^2_{1,1}(6Tkl,\epsilon/2,\mathcal{C}) e^{-2l\epsilon^2/64}.
\end{align}
Putting all together we obtain the full error bound.

\end{proof}

\section{Classes with bounded covering numbers}\label{sec:ex_covering}

In this section we present some models of physically motivated concept classes for which upper bounds on their covering numbers can be found. We find these bounds on covering numbers using continuity and bounds on covering numbers of real functions and matrices. This opens up the question of whether it is possible to define a combinatorial dimension (such as the VC dimension or the fat-shattering dimension) corresponding to the covering numbers we introduced. 

Whenever needed in the following, we employ a real-valued function class $\mathcal F$, and we assume that for this function class we have a bound on the covering number in terms of some combinatorial dimension, for example the fat-shattering dimension (see~\cite{Anthony1999} for definitions and proofs of these bounds). It holds that, for a class of functions $\mathcal F:\mathbb R \rightarrow [0,B]$ with fat-shattering dimension $D=\mathrm{fat}_{\mathcal F}(\epsilon/4)$, the covering number is bounded as
\begin{equation}
\Gamma_{p}(n,\epsilon,\mathcal F) < 2\left(\frac{4 n B^2}{\epsilon^2}\right)^{D\log_2 (4 e B n/(D \epsilon))}.
\end{equation}

Furthermore, we can have finite covering numbers for sets of $k\times k$ matrices, seen as functions with trivial input to $\mathbb C^{k}\times \mathbb C^{k}$. These techniques were used for the purpose of studying generalization bounds in~\cite{Caro2022}, and uniform convergence for learning quantum channels on multiqubit systems of finite size, in the controlled input setting, in~\cite{huang2022quantum}. The following Lemma is crucial.

\begin{lemma}[Size of an $\epsilon$-net over unitary and bounded hermitian matrices]\label{lem:vershynin}
\beq
 N_{in}\left( \epsilon, B_R(x),|| \cdot || \right)  \leq \left(1+\frac{2 R}{\epsilon}\right)^K \leq \left(\frac{3 R}{\epsilon}\right)^K
\eeq
where $B_R(x)$ is a norm ball of radius $R\geq \varepsilon$ about some point $x\in \mathbb{R}^K$. We can use this to calculate the size of $\epsilon$-nets over matrices with respect to the operator norm, by using the fact that for a set of unitaries $S_U$ in dimension $k$, $S_U \subset B_1(0)$ with $K=(2k)^2$, and for a set of hermitian matrices $S_H$  in dimension $k$ with norm bounded by $b$, which we denote by $\mathcal M_b^{(k)}$, $\mathcal M_b^{(k)} \subset B_{b}(0)$ with $K=k^2$.
\begin{align}
|N_{in}\left(\epsilon, S_U,\lVert\cdot\rVert\right)| &\leq |N_{in}\left(\epsilon/2, B_1(0),\lVert\cdot\rVert\right)|\nonumber\\ &\leq \left(\frac{6}{\epsilon}\right)^{4k^2},
\end{align}
\begin{align}
|N_{in}\left(\epsilon, \mathcal M_b^{(k)},\lVert\cdot\rVert\right)| &\leq |N_{in}\left(\epsilon/2, B_b(0),\lVert\cdot\rVert\right)| \nonumber\\ &\leq \left(\frac{6b}{\epsilon}\right)^{k^2}.
\end{align}
\end{lemma}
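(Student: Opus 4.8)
The plan is to derive both matrix estimates from the single ball-covering inequality, which I would prove first by a standard volume (maximal packing) argument. To establish $N_{in}(\epsilon, B_R(x), \|\cdot\|) \leq (1+2R/\epsilon)^K$, I would take $\mathcal{N} \subseteq B_R(x)$ to be a maximal $\epsilon$-separated subset, so that no two of its points lie within distance $\epsilon$. Maximality forces $\mathcal{N}$ to be an internal $\epsilon$-net, since any point of $B_R(x)$ with no net-neighbor within $\epsilon$ could be adjoined, contradicting maximality. The open balls of radius $\epsilon/2$ about the points of $\mathcal{N}$ are then pairwise disjoint and all contained in $B_{R+\epsilon/2}(x)$; as every norm ball in a $K$-dimensional real space has Lebesgue volume proportional to its radius raised to the $K$-th power, comparing the total volume of the small balls with that of $B_{R+\epsilon/2}(x)$ gives $|\mathcal{N}|(\epsilon/2)^K \leq (R+\epsilon/2)^K$, i.e.\ $|\mathcal{N}| \leq (1+2R/\epsilon)^K$. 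The simplification to $(3R/\epsilon)^K$ is the arithmetic remark that $1 \leq R/\epsilon$ whenever $R \geq \epsilon$, so $1+2R/\epsilon \leq 3R/\epsilon$.

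Next I would reduce the matrix statements to this ball bound in two moves. First, I would fix a real coordinatization of the relevant matrix space that is an isometry for the operator norm: identifying $M \in \mathbb{C}^{k\times k}$ with the real $2k\times 2k$ block matrix $\bigl(\begin{smallmatrix}\mathrm{Re}\,M & -\mathrm{Im}\,M\\ \mathrm{Im}\,M & \mathrm{Re}\,M\end{smallmatrix}\bigr)$, whose singular values are exactly those of $M$ (each doubled), so that $\|\cdot\|$ is preserved and the unitaries sit inside the operator-norm ball of the $(2k)^2$-dimensional real matrix space, giving $K=(2k)^2$. For Hermitian matrices the real span has dimension $k^2$ ($k$ real diagonal entries plus $k(k-1)$ real parameters off the diagonal), giving $K=k^2$. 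Second, I would record the containments $S_U \subseteq B_1(0)$ (every unitary has operator norm $1$) and $\mathcal{M}_b^{(k)} \subseteq B_b(0)$.

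The bridge from a net of the containing ball to an \emph{internal} net of the subset is the only step requiring a little care, and I would handle it by the usual halving trick: an internal $\epsilon/2$-net $\mathcal{N}$ of the ball is automatically an external $\epsilon/2$-net of the subset, and replacing each relevant $y\in\mathcal{N}$ by a subset-point within $\epsilon/2$ of it yields, by the triangle inequality, an internal $\epsilon$-net of the subset whose cardinality is at most $|\mathcal{N}|$. This gives $N_{in}(\epsilon, S_U) \leq N_{in}(\epsilon/2, B_1(0))$ and likewise for the Hermitian case. Applying the ball bound at resolution $\epsilon/2$ (valid since $1 \geq \epsilon/2$, resp.\ $b \geq \epsilon/2$) then yields $(3\cdot 1/(\epsilon/2))^{4k^2} = (6/\epsilon)^{4k^2}$ and $(3b/(\epsilon/2))^{k^2} = (6b/\epsilon)^{k^2}$, as claimed.

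I would expect no genuine obstacle here: the volume argument is classical (essentially Vershynin's covering estimate for norm balls), and the matrix part is bookkeeping. The points most worth stating explicitly are that the chosen real embedding is an operator-norm isometry (so that covering in operator norm coincides with covering its image in the ambient real normed space), that the net points must lie in the set itself (hence the internal-net conversion and the factor-two loss in resolution), and the mild standing assumption $\epsilon \leq 2$ (resp.\ $b \geq \epsilon/2$) under which the $3R/\epsilon$ form of the bound is invoked.
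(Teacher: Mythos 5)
Your proposal is correct and follows essentially the same route as the paper: the ball bound via the standard volume (maximal packing) argument, which is exactly the Vershynin estimate the paper cites, and the passage from a net of the containing ball to an internal net of $S_U$ or $\mathcal{M}_b^{(k)}$ via the triangle-inequality halving trick, which is what the paper means by ``the first inequalities are a consequence of triangular inequality.'' You merely spell out the details (the real isometric coordinatization and the internal-net conversion) that the paper leaves implicit.
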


The bound on the covering number of the Euclidean ball is given in \cite{vershynin}, and the argument generalizes to any norm in a finite-dimensional space. The first inequalities are a consequence of triangular inequality. For specific matrix classes one can have improved bounds. 

In the following, we first exhibit concept classes based on circuits, which are valuable since they are associated to quantum states and measurement that can actually be produced efficiently on a quantum computer: not only the data that we receive could be states of this form, but also the resulting output hypothesis would be efficiently produced for any future need. 
In a further subsection we also present concept classes that are more motivated by physical scenarios, ideally exhibiting toy models for the setting described in the introduction, inspired by sensing and hamiltonian learning.

\subsection{Quantum Circuits}

We consider concept classes based on circuits. By choosing circuits that do not depend on the data, 
we will first give several examples of concept classes with covering numbers that are not just slowly growing, but in fact bounded by a quantity independent of $n$, the length of the sample. Then, including a finite number of gates that depend on the data through real functions with finite fat-shattering dimension, we will obtain concept classes that are slowly growing with $n$.

\subsubsection{Quantum circuits that give rise to state-valued functions}\label{subsec:examples}

First, we will look at an example of a concept class that maps to quantum states. We remind readers that this corresponds to choosing as the loss function the trace distance $L_s$.  
Consider the concept class $\mathcal{C}_m \subseteq \{c: \mathcal{X}\rightarrow  \mathcal{L}(\mathcal{H}_m)\}$ consisting of $m$-qubit quantum circuits acting on arbitrary input states. That is, 
\beq\label{eq:Cm}
\C_m = \{c_U(x) := U\rho(x) U^{\dag}\}_{U\in S_m}
\eeq
where $\rho:\mathcal{X} \rightarrow D(\mathcal{H}_m)$ is a fixed process preparing a mixed state that depends on a classical random variable $\mathcal{X}$, and $U$ is an arbitrary $m$-qubit unitary chosen from a set $S_m$. { For instance, we could be interested in studying processes corresponding to quantum circuits given by a particular architecture (specified by $S_m$) acting on pure computational-basis states.} This class is obtained in our formalism by setting $\mathcal{X} = \01^m$ and noting that the process that prepares the pure computational basis state $\ketbra{x}{x}$ corresponds to $\rho(x) = \ketbra{x}{x}$. Specific relevant examples of $S_m$ will be discussed later.

Then an upper-bound on $\gamma_{1,1}(n,\eps, \mathcal{C}_m)$ is given by the size of an $\eps/8$-net over the $m$-qubit unitaries in $S_m$, where the distance metric is taken to be the spectral norm $\lVert \cdot \rVert$ (see Eq. \eqref{eq:spec}).

In the following, we remind readers that $n$ is the number of data points observed, while $m$ is a parameter of the concept class -- the number of qubits in the circuits in the concept class.
\begin{proposition}\label{corr:packing}
$\gamma_{1,1}(n,\eps, \mathcal{C}_m) \leq  |N_{\text in}(\eps, S_m, \lVert \cdot \rVert)|$

\end{proposition}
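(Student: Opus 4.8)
The plan is to transfer a spectral-norm net on the unitary set $S_m$ to a seminorm net on the concept class $\mathcal{C}_m$ by showing that the assignment $U \mapsto c_U$ is Lipschitz. Concretely, I would fix an arbitrary data vector $\vec{x} \in \mathcal{X}^n$, take an internal $\epsilon$-net $N \subseteq S_m$ of $S_m$ with respect to $\lVert\cdot\rVert$, and prove that the image $\{c_U : U \in N\}$ is an internal net of $\mathcal{C}_m$ with respect to $\lVert\cdot\rVert_{1,1,\vec{x}}$. Because the resulting net lies inside $\mathcal{C}_m$ (its elements are genuine concepts $c_U$ with $U\in S_m$) and the argument holds for \emph{every} $\vec{x}$, taking the maximum over $\vec{x}$ in Definition~\ref{defcov} would yield the claimed bound on $\gamma_{1,1}$.

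The technical core is a single pointwise continuity estimate. For any input $x$ and unitaries $U,V$, I would telescope through $V\rho(x)U^\dagger$,
\begin{align}
\lVert U\rho(x)U^\dagger - V\rho(x)V^\dagger\rVert_1 &\leq \lVert (U-V)\rho(x)U^\dagger\rVert_1 \nonumber\\
&\quad + \lVert V\rho(x)(U^\dagger-V^\dagger)\rVert_1,
\end{align}
and then bound each term with the standard H\"older inequality for Schatten norms, $\lVert ABC\rVert_1 \leq \lVert A\rVert\,\lVert B\rVert_1\,\lVert C\rVert$, using $\lVert \rho(x)\rVert_1 = 1$, $\lVert U\rVert = \lVert V\rVert = 1$, and $\lVert U^\dagger-V^\dagger\rVert = \lVert U-V\rVert$. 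This gives $\lVert U\rho(x)U^\dagger - V\rho(x)V^\dagger\rVert_1 \leq 2\lVert U-V\rVert$ for each $x$. Averaging over the $n$ entries of $\vec{x}$ preserves the bound, so $\lVert c_U - c_V\rVert_{1,1,\vec{x}} \leq 2\lVert U-V\rVert$, uniformly in $\vec{x}$ and independently of the fixed preparation $\rho(\cdot)$.

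With this Lipschitz bound the net construction is immediate: for any $c_V\in\mathcal{C}_m$ there is $U\in N$ with $\lVert U-V\rVert$ below the net resolution, and then $c_U$ approximates $c_V$ in $\lVert\cdot\rVert_{1,1,\vec{x}}$. I expect no genuine difficulty here; the only point needing care is constant bookkeeping. The Lipschitz constant is exactly $2$, and it is tight---taking $\rho(x)=\ketbra{0}{0}$ with $U,V$ differing by an infinitesimal planar rotation makes the ratio of the two sides approach $2$---so strictly one obtains $\gamma_{1,1}(n,\epsilon,\mathcal{C}_m)\leq |N_{in}(\epsilon/2,S_m,\lVert\cdot\rVert)|$, and the statement as written should be read with this factor absorbed into the resolution. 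The whole argument is a routine Lipschitz transfer of covering numbers, so I would not anticipate any substantive obstacle beyond tracking this constant.
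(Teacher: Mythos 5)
Your proof is correct, and it reaches the same Lipschitz-transfer conclusion as the paper but through a different core estimate. The paper proves the pointwise bound by writing $U\rho U^{\dag}-V\rho V^{\dag}=\tfrac{1}{2}\left[(U+V)\rho(U^{\dag}-V^{\dag})+(U-V)\rho(U^{\dag}+V^{\dag})\right]$, expanding $\rho$ in its spectral decomposition, and invoking the rank-one identity $\lVert uv^{\dag}\rVert_1=\lVert u\rVert_2\lVert v\rVert_2$ (Fact~\ref{fact:1norm2norm}) together with $\lVert U+V\rVert\leq 2$; this directly yields $d_{\mathrm{tr}}(U\rho U^{\dag},V\rho V^{\dag})\leq\lVert U-V\rVert$. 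Your plain telescoping through $V\rho U^{\dag}$ combined with the Schatten--H\"older inequality $\lVert ABC\rVert_1\leq\lVert A\rVert\,\lVert B\rVert_1\,\lVert C\rVert$ gives $\lVert U\rho U^{\dag}-V\rho V^{\dag}\rVert_1\leq 2\lVert U-V\rVert$, which is the identical bound obtained more economically: the paper's symmetrization buys nothing in the constant here, it only repackages the factor of $2$ as $\lVert U+V\rVert\leq 2$, and its spectral-decomposition argument is essentially a hand-rolled proof of the H\"older step you cite. Your remark on the factor of $2$ is also accurate and, in fact, mirrors a looseness in the paper itself: the paper's Eq.~\eqref{eq:U-V} bounds the average \emph{trace distance} (i.e., $\tfrac{1}{2}\lVert c_U-c_V\rVert_{1,1,\vec{x}}$) by $\lVert U-V\rVert$, so relative to the seminorm of Definition~\ref{defcov} the lifted net has resolution $2\eps$, and the clean statement is $\gamma_{1,1}(n,\eps,\mathcal{C}_m)\leq N_{in}(\eps/2,S_m,\lVert\cdot\rVert)$ exactly as you say; the paper implicitly reads the pseudometric as the average trace distance (consistent with its remark below Eq.~\eqref{contstates}), and the constant is immaterial for all downstream learnability conclusions since $\eps$ is rescaled anyway. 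Your tightness example (an infinitesimal planar rotation on $\ketbra{0}{0}$, where $\lVert U\rho U^{\dag}-\rho\rVert_1\approx 2\theta$ while $\lVert U-\mathbf{1}\rVert\approx\theta$) is correct and is a nice addition the paper does not include.
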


\begin{proof}
To see this, observe that for an arbitrary state $\rho$ and unitaries $U,V$

\begin{equation}\label{eq:pertunit}
\frac{1}{2} \lVert U \rho U^{\dag} - V \rho V^{\dag} \rVert_1 \leq \lVert U-V \rVert.
\end{equation}

Indeed, with the spectral decomposition $\rho = \sum_i p_i \ketbra{\psi_i}{\psi_i}$, 
we have

\begin{align}
&\frac{1}{2} \lVert U \rho U^{\dag} - V \rho V^{\dag} \rVert_1 \nonumber\\&= \frac{1}{4} \lVert (U+V) \rho (U^{\dag} - V^{\dag}) + (U-V) \rho (U^{\dag}+V^{\dag}) \rVert_1 \label{eq:1}\\
&\leq \frac{1}{4} \sum_i p_i \left(\lVert (U+V)  \ketbra{\psi_i}{\psi_i} (U^{\dag} - V^{\dag}) \rVert_1 \right.\nonumber\\&+ \left.\lVert (U-V)  \ketbra{\psi_i}{\psi_i} (U^{\dag} + V^{\dag}) \rVert_1 \right)\\
&= \frac{1}{4} \sum_i p_i \left( \lVert (U+V) \ket{\psi_i}\rVert_2 \lVert (U - V) \ket{\psi_i} \rVert_2\right. \nonumber\\&+\left.  \lVert (U-V) \ket{\psi_i}\rVert_2 \lVert (U + V) \ket{\psi_i} \rVert_2 \right)\\
& \leq \frac{1}{2} \max_{\ket{\psi_i}} \lVert (U+V) \ket{\psi_i}\rVert_2 \, \max_{\ket{\psi_j}} \lVert (U - V) \ket{\psi_j} \rVert_2\\
& \leq \frac{1}{2} \lVert U+V \rVert \lVert U-V \rVert\\
& \leq \lVert U-V \rVert \label{eq:inf}
\end{align}

where in the third line we have used Fact \ref{fact:1norm2norm}.

Therefore, for any two functions $c_U, c_V \in \mathcal{C}_m$, we have that

\begin{align}
&\frac{1}{n} \sum_{i=1}^{n} \frac{1}{2} \bigg\lVert c_U(x_i) - c_V(x_i)\bigg\rVert_1\\
&= \frac{1}{n} \sum_{i=1}^{n} \frac{1}{2} \bigg\lVert U\rho(x_i) U^{\dag} - V\rho(x_i) V^{\dag} \bigg\rVert_1\\
&\leq \lVert U - V \rVert \label{eq:U-V},
\end{align}
where the second inequality is inequality \eqref{eq:inf}. 

Now consider the set $\mathcal{C}_m^{(\epsilon)} \subseteq \mathcal{C}_m$ induced by the $\eps$-net of $S_m$ (denoted as $S_{m}^{(\epsilon)}$),  and
\beq
\mathcal{C}_m^{(\epsilon)} := \left\{c_V: c_V(\vec{x}) = V \rho(x_i)V^{\dag} \right\}_{V\in S_{m}^{(\epsilon)}}
\eeqc
Eq.~\eqref{eq:U-V} implies that, if for all $U \in S_m, \, \exists V \in S_{m}^{(\epsilon)}$ such that $\lVert U-V \rVert < \eps$, then for all $c_U \in \mathcal{C}_m, \, \exists c_V \in \mathcal{C}_m^{(\epsilon)}$ such that $\lVert c_U -c_V \rVert_{1,\vec{x}} < \eps$. This is a sufficient condition for $\mathcal{C}_m^{(\epsilon)}$ to be an $\eps$-net of $\mathcal{C}_m$.
\end{proof}

Proposition \ref{corr:packing} makes an important reduction: to upper-bound $\gamma_{1,1}(n,\eps, \mathcal{C}_m)$, where $\mathcal{C}_m$ is a concept class induced by a set of $m$-qubit quantum circuits $S_m$, it suffices to upper-bound the size of an $\eps$-net in the spectral norm over the set $S_m$. 

We now compute this covering number for circuit classes $S_m$ corresponding to particular architectures of relevance. 
\begin{itemize}
    \item Consider the class of one-dimensional local random quantum circuits on $m$ qubits of depth $\ell$, which was defined by Brandao, Harrow and Horodecki \cite{Brandao2016} as applying a Haar-random $2$-qubit nearest-neighbor gate on a uniformly-random pair of neighboring qubits at each time-step $t=1, \ldots \ell$. (Note that the distribution from which the circuits are drawn is actually immaterial to us; the concept class merely consists of the support of that distribution.) We shall call this class $\text{LQC}(m,\ell)$, and we will call the induced loss function class $\mathcal{G}_{\text{LQC}(m,\ell)}$. 

\begin{corollary}[Bound on covering net size for 1d random local quantum circuits]\label{corr:LQC}
\begin{equation}\label{eq:NinuLQC}
N_{in}(\eps, \text{LQC}(m,\ell), \lVert \cdot \rVert) \leq  \left[m \cdot \left(\frac{6 \ell}{\eps}\right)^{32}\right] ^{\ell},
\end{equation}
and hence by Proposition \ref{corr:packing}, 
\begin{align}\label{eq:net_LQC}
&\Gamma_1(n,\eps,\mathcal{G}_{\text{LQC}(m,\ell)}) \leq\gamma_{1,1}(n,\eps,\mathcal{C}_{\text{LQC}(m,\ell)}) \nonumber\\&\leq \left[m \cdot \left(\frac{6\ell}{\eps}\right) ^ {32}\right]^\ell.
\end{align}

\end{corollary}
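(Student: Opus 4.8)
The plan is to construct an explicit $\eps$-net for $\text{LQC}(m,\ell)$ in the spectral norm by discretizing each of the $\ell$ gates separately, and then to transfer the resulting bound to $\gamma_{1,1}$ and $\Gamma_1$ via Proposition~\ref{corr:packing} and Eq.~\eqref{eq:gammavsGamma_s}. Every $U \in \text{LQC}(m,\ell)$ factorizes as $U = G_\ell\cdots G_1$, where each $G_t$ is a two-qubit gate acting on one of the $m-1$ neighboring pairs (at most $m$), embedded into the $m$-qubit space by tensoring with the identity. A circuit is therefore specified layer by layer by two pieces of data: the discrete choice of which neighboring pair is acted upon and the $4\times 4$ unitary applied there.

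First I would establish the telescoping estimate controlling how per-gate errors accumulate. For circuits $U = G_\ell\cdots G_1$ and $V = H_\ell\cdots H_1$ built from the \emph{same} sequence of gate positions, the identity
\begin{equation}
U - V = \sum_{t=1}^\ell G_\ell\cdots G_{t+1}\,(G_t - H_t)\,H_{t-1}\cdots H_1,
\end{equation}
combined with unitary invariance and submultiplicativity of $\lVert\cdot\rVert$, yields $\lVert U - V\rVert \leq \sum_{t=1}^\ell \lVert G_t - H_t\rVert$. Since tensoring with the identity preserves the spectral norm, $\lVert G_t - H_t\rVert$ equals the spectral distance of the underlying $4\times 4$ gates. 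Hence if each gate of $V$ is chosen within $\eps/\ell$ of the corresponding gate of $U$, with positions matched exactly, then $U$ and $V$ are $\eps$-close.

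Then I would build the net. For each layer I would fix an $\eps/\ell$-net of the two-qubit unitaries in the spectral norm. These unitaries sit inside the unit operator-norm ball of the space of $4\times 4$ complex matrices, whose real dimension is $K = 2\cdot 4^2 = 32$; applying the norm-ball covering bound of Lemma~\ref{lem:vershynin} with $R=1$ (and the usual factor of $2$ loss incurred in passing to an internal net) gives a net of size at most $(6\ell/\eps)^{32}$. Multiplying by the at most $m$ position choices per layer and taking the product over the $\ell$ layers produces a cover of size $[m\,(6\ell/\eps)^{32}]^\ell$, which the telescoping estimate certifies as a genuine $\eps$-net; this is Eq.~\eqref{eq:NinuLQC}. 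Proposition~\ref{corr:packing} then gives $\gamma_{1,1}(n,\eps,\mathcal{C}_{\text{LQC}(m,\ell)}) \leq N_{in}(\eps,\text{LQC}(m,\ell),\lVert\cdot\rVert)$, and Eq.~\eqref{eq:gammavsGamma_s} gives $\Gamma_1 \leq \gamma_{1,1}$, yielding Eq.~\eqref{eq:net_LQC}.

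The argument is essentially routine, so the points requiring care are bookkeeping rather than deep obstacles. The first is that the net must range jointly over the continuous gate parameters and the combinatorial choice of qubit pairs, and one must match the positions of $U$ and its approximant layerwise so that the telescoping identity applies. The second is fixing the correct ambient dimension when invoking Lemma~\ref{lem:vershynin}: a two-qubit gate is a $4\times 4$ complex matrix, so the relevant real dimension is $2\cdot 4^2 = 32$, which is exactly the exponent appearing in the statement.
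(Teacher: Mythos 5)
Your proposal is correct and follows essentially the same route as the paper: an $\eps/\ell$-net per two-qubit gate from Lemma~\ref{lem:vershynin} (with the ambient real dimension $2\cdot 4^2=32$), multiplied by the position choices per layer, combined with a telescoping triangle-inequality argument using unitary invariance of the spectral norm, and finally transferred to $\gamma_{1,1}$ and $\Gamma_1$ via Proposition~\ref{corr:packing} and Eq.~\eqref{eq:gammavsGamma_s}. Your explicit telescoping-sum identity is just the unrolled form of the paper's recursive triangle inequality, so the two arguments coincide.
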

\begin{proof}
The argument follows along the lines of that made in \cite{huang2022quantum,Caro2022},
and so we only sketch it briefly. Let $\mathcal{N}_{\tilde{\eps}}$ be an $\tilde{\eps} = \frac{\eps}{d}$-net over a single 2-qubit unitary. 
Let us now consider the net 
\beq
\tilde{\mN}_{\eps}= \left\{U_{1,p(1)} U_{2,p(2)} \ldots U_{d,p(\ell)} \Big| U_i \in \mN_{\tilde{\eps}}, p(i)\in [\ell] \right\}
\eeqp
where $U_{i,p(i)}$ denotes the unitary $U_i$ acting on qubits $p(i),p(i)+1$. As a special case of Lemma \ref{lem:vershynin}, an $\eps$-net over the set of 2-qubit unitaries (a set that lies within $B_{\lVert \mathbbm{1}_{(\mathbb{C}^2)^{\otimes 2}} \rVert}(0)$, a set of matrices with $4\times 4 \times 2$ parameters), has size
$\left(\frac{ 6}{\eps}\right) ^ {32}$. Combining this with the fact that there are $m-1$ choices of qubit pairs on which the unitary can act, $\tilde{\mathcal{N}}_{\eps}$ has size $
\left[\left(\frac{ 6}{\tilde{\eps}}\right) ^ {32} \cdot (m-1)\right]^\ell$. An arbitrary circuit in $LQC(m,\ell)$ has the form 
\beq
V = V_1,\ldots V_\ell
\eeq
where each $V_i$ is a 2-qubit nearest-neighbor gate. We now show that $\tilde{\mN}_{\eps}$ is an $\eps$-net over the set $LQC(m,\ell)$. For $V_i$ acting on qubits $p(i)$, $p(i)+1$, there exists some $U_{i} \in \mN_{\tilde{\eps}}$ such that $\lVert U_{i}-V_{i}\rVert < \tilde{\eps},$ and we can enforce that it acts on qubits $p(i)$, $p(i)+1$. This means that $U_1 \ldots U_\ell \in \tilde{\mN}_{\eps}$, and moreover

\begin{align}\label{triangcirc}
    &\lVert V_1 \ldots V_\ell - U_1 \ldots U_\ell \rVert \nonumber\\&\leq \lVert V_1 \ldots V_{\ell-1} V_\ell - V_1 \ldots V_{\ell-1}U_\ell \rVert \nonumber\\&+ \lVert V_1 \ldots V_{\ell-1}U_\ell - U_1 \ldots U_{\ell-1} U_\ell \rVert \\
    &\leq \lVert V_\ell - U_\ell \rVert + \lVert V_1 \ldots V_{\ell-1} - U_1 \ldots U_{\ell-1} \rVert \\
    &\leq \eps/\ell + \lVert V_1 \ldots V_{\ell-1} - U_1 \ldots U_{\ell-1} \rVert \label{eq:secondlast}\\
    &\leq \eps
\end{align}

where the first inequality is the triangle inequality, the second inequality follows from unitary invariance of the spectral norm, the third inequality follows from the assumption that $\mN_{\tilde{\eps}}$ is an $\eps/\ell$-net and the last inequality follows from repeating the above three steps on the remaining term in \eqref{eq:secondlast}. 
\end{proof}

\item Exactly analogously,  we could also consider the set $S_{\text{b}}(m,\ell)$ of all `brickwork' local quantum circuits, where every pair of neighboring qubits has a nearest-neighbor two-qubit gate acting on them at every time step. We then obtain
\beq\label{eq:net_brick}
\Gamma_1(n,\eps,\mathcal{G}_{S_{\text{b}}(m,\ell)}) \leq\gamma_{1,1}(n,\eps,\mathcal{C}_{S_{\text{b}}(m,\ell)}) \leq \left(\frac{6m\ell}{\eps}\right) ^ {32m\ell}
\eeqp
\item Alternatively, we could consider choosing our unitaries from the set of all possible unitaries on $m$ qubits. Let us call this set $S_m$. Immediately from Lemma \ref{lem:vershynin} we have
\beq\label{eq:Ninunitaries}
\gamma_{1,1}(n,\eps,\mathcal{C}_{S_m}) \leq \left(\frac{6}{\eps}\right)^{2^{2m+2}}
\eeqp
\end{itemize}
Observe that in all of these cases, the covering number bound (which determines learnability of the corresponding class) is independent of the number of sampled points, $n$.

\subsubsection{Quantum circuits that give rise to projector-valued functions}

In the previous subsection, we computed the covering number of state-valued concept classes. In this subsection, we will give an example of a concept class consisting of \textit{projector}-valued functions, and compute its covering number. 

\textbf{Side note:} An immediate example of a projector-valued concept class is, in fact, the state-valued concept classes obtained by circuits applied to computational basis states, as a special case of the settings considered in the last subsection. This is because the concepts output pure states which are rank-1 density matrices, i.e. rank-1 projectors. Even for higher ranks, the calculation of the covering number for this concept class proceeds similarly. Some slight tweaks are needed to account for the use of a different loss function (namely $L_p$) and notion of distance between concepts (namely the spectral norm instead of the trace norm).

We present a class of projector-valued concepts that have an explicit $x$ dependence but each of them is based on a single circuit. We fix a projector-valued function $\Pi(x)$ and we define
$\C_{\Pi}  \subseteq \{c:\mathcal{X} \rightarrow \mathcal{L}(\mathcal{H}_{m})\}$, where $S_m$ is a set of circuits:
\beq\label{eq:pi}
\C_{\Pi} = \{\Pi_U:=U \Pi(x) U^{\dag}\}_{U \in S_m}
\eeqp

To calculate the covering numbers, it suffices to find the size of an $\eps/8$-net over the isometries $S_m = \{V_h\}_h$ according to the diamond distance. To see that this suffices, we will need two standard consequences of the definition of operator norms, as follows

\begin{proposition}\label{prop:netoverprojectors}
For any $\Pi_1$ and $\Pi_2$ which are projectors onto $\mathcal{L}(\mathcal{H}_m)$,
\begin{equation}
    \max_{\rho \in D(\mathcal{H}_m)} |\Tr[\rho (\Pi_1 - \Pi_2)]| = \lVert \Pi_1 - \Pi_2 \rVert.
\end{equation}
\end{proposition}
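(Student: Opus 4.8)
The plan is to set $A := \Pi_1 - \Pi_2$ and prove the two inequalities $\max_\rho |\Tr[\rho A]| \leq \lVert A\rVert$ and $\max_\rho |\Tr[\rho A]| \geq \lVert A\rVert$ separately, after first observing that $A$ is Hermitian, being the difference of two projectors and hence of two Hermitian operators. The claimed identity then reduces to the standard variational characterisation of the spectral norm of a Hermitian operator in terms of its expectation values against density matrices, and no property of $\Pi_1,\Pi_2$ beyond Hermiticity of their difference will actually be used.

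For the upper bound, I would invoke the duality between the trace norm and the spectral norm: for any $\rho \in D(\mathcal{H}_m)$, H\"older's inequality gives $|\Tr[\rho A]| \leq \lVert \rho\rVert_1 \, \lVert A\rVert$, and since $\rho$ is a density matrix one has $\lVert \rho\rVert_1 = \Tr[\rho] = 1$, so that $|\Tr[\rho A]| \leq \lVert A\rVert$ for every $\rho$, and thus the same bound holds for the maximum over $\rho$.

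For the matching lower bound, I would diagonalise $A = \sum_i \lambda_i \ketbra{v_i}{v_i}$ with real eigenvalues $\lambda_i$ and orthonormal eigenvectors $\{\ket{v_i}\}$, so that $\lVert A\rVert = \max_i |\lambda_i|$. Choosing an index $i^\ast$ attaining this maximum and taking the pure state $\rho = \ketbra{v_{i^\ast}}{v_{i^\ast}} \in D(\mathcal{H}_m)$ yields $|\Tr[\rho A]| = |\lambda_{i^\ast}| = \lVert A\rVert$, which shows the maximum is attained and completes the argument.

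The proof is elementary, so there is no substantive obstacle; the only point that deserves a word of care is the claim that the optimum over the convex set $D(\mathcal{H}_m)$ is realised at a pure state. This is automatic here because the objective $\rho \mapsto \Tr[\rho A]$ is linear in $\rho$ and the extreme points of $D(\mathcal{H}_m)$ are exactly the rank-one projectors $\ketbra{v}{v}$, so a linear functional over this set is maximised at such an extreme point, which is precisely the eigenvector construction used above.
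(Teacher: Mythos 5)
Your proof is correct and follows essentially the same route as the paper's: the upper bound via trace-norm/operator-norm duality (H\"older), and the lower bound by diagonalising the Hermitian difference $\Pi_1-\Pi_2$ and evaluating against the eigenprojector of its largest-magnitude eigenvalue. The extra remark about extreme points of $D(\mathcal{H}_m)$ is fine but unnecessary, since exhibiting the pure state already attains the bound.
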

\begin{proof}
$\Pi_1 - \Pi_2$ is Hermitian, hence it is diagonalizable and it has real eigenvalues, so   $\lVert \Pi_1 - \Pi_2 \rVert$ equals the largest absolute values among those of the eigenvalues of $\Pi_1 - \Pi_2$, say $|\lambda|$, with corresponding eigenvector $\ket{\lambda}$. Then, $\lVert \Pi_1 - \Pi_2 \rVert=|\Tr[\ketbra{\lambda}{\lambda} (\Pi_1 - \Pi_2)]|$. Conversely, by the duality between trace norm and operator norm, $\max_{\rho \in D(\mathcal{H}_m)} |\Tr[\rho (\Pi_1 - \Pi_2)]| \leq \lVert \Pi_1 - \Pi_2 \rVert$. 

\end{proof}

\begin{proposition} \label{prop:diamond} 
For any two unitaries $U, V: \mathcal{L}(\mathcal{H}_m) \rightarrow \mathcal{L}(\mathcal{H}_m)$,
\begin{align}\label{submult}
&||UXU^{\dagger}-VXV^{\dagger}||\\
&\leq ||U X(U^{\dagger}-V^{\dagger})||+||(U-V) X V^{\dagger})||\\
&\leq 2||X||||U-V||.
\end{align}

\end{proposition}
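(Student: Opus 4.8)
The plan is to derive both inequalities from elementary properties of the spectral norm recorded in the preliminaries, so the argument is essentially routine. For the first inequality I would insert a cross term to split the difference into two pieces, each of which isolates the discrepancy between $U$ and $V$ in a single factor. Concretely, writing
\[
UXU^{\dagger}-VXV^{\dagger} = UX(U^{\dagger}-V^{\dagger}) + (U-V)XV^{\dagger},
\]
the triangle inequality for $\|\cdot\|$ immediately gives $\|UXU^{\dagger}-VXV^{\dagger}\|\leq \|UX(U^{\dagger}-V^{\dagger})\| + \|(U-V)XV^{\dagger}\|$, which is exactly the first claimed bound.

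For the second inequality I would apply submultiplicativity of the operator norm, $\|AB\|\leq\|A\|\,\|B\|$, to each of the two terms separately. Using that unitaries satisfy $\|U\|=\|V^{\dagger}\|=1$, together with the identity $\|U^{\dagger}-V^{\dagger}\|=\|(U-V)^{\dagger}\|=\|U-V\|$ (since passing to the adjoint preserves singular values, and the spectral norm is the largest singular value), each of the two terms is bounded above by $\|X\|\,\|U-V\|$. Summing the two contributions then produces the factor of $2$ and completes the bound.

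There is no genuine obstacle here: the only points requiring any care are choosing the cross term so that \emph{both} resulting factors carry a controllable $U-V$ difference (rather than a bare $U$ or $V$), and invoking the adjoint-invariance of the spectral norm to rewrite $\|U^{\dagger}-V^{\dagger}\|$ as $\|U-V\|$. Both ingredients are already available from the discussion of Schatten and spectral norms above, so no additional machinery is needed.
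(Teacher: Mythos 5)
Your proof is correct and follows essentially the same route as the paper: the paper's statement of Proposition~\ref{prop:diamond} already encodes the proof via the asymmetric decomposition $UXU^{\dagger}-VXV^{\dagger}=UX(U^{\dagger}-V^{\dagger})+(U-V)XV^{\dagger}$, followed by the triangle inequality and submultiplicativity with $\lVert U\rVert=\lVert V^{\dagger}\rVert=1$ and $\lVert U^{\dagger}-V^{\dagger}\rVert=\lVert U-V\rVert$. Your identification of the cross-term insertion and the adjoint-invariance of the spectral norm as the only points needing care is exactly right.
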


Let us fix any internal $\eps$-net $S_m^{(\epsilon)}$ of cardinality $N_{in}(\eps, S_m, \lVert \cdot \rVert)$ over the set of unitaries $S_m$. We can then prove that the set induced by $S_m^{(\epsilon)}$ 
\begin{align}
\mathcal{C}_\Pi^{(\epsilon)}&:= \Big\{\Pi_V:\mathcal{X}^{n}\rightarrow L(\bigotimes_{i=1}^{n}\mathcal{H}_{m}),\nonumber\\
&\Pi_V(\vec{x}) = V\Pi(x)V^{\dag}\Big\}_{V\in S_m^{(\epsilon)}}
\end{align}
{ lifts to a $2\epsilon$-net over $\C_{\Pi}$, as follows: we'll show that for any $\Pi_V \in \C_{\Pi}$, we can find $\Pi_U \in \mathcal{C}_\Pi^{(\epsilon)}$ at most distance $2\eps$ away. For, let $V\in S_m$ be the unitary associated with $\Pi_V$. Then, choose $U$ to be the nearest point in $S_m^{(\epsilon)}$ to $V$. Then we claim that $\Pi_U$ suffices to be such a point:}
\begin{align}
 &\frac{1}{n} \sum_{i=1}^{n} ||\Pi_V(x_i) - \Pi_U(x_i)||\nonumber\\
 &\leq \frac{1}{n} \sum_{i=1}^{n} 2||\Pi(x_i)||\cdot||V - U|| \leq 2\eps \label{eq:3}
\end{align}
where the second to last inequality follows 
from Proposition \ref{prop:diamond} and the last one from the definition of $\epsilon$-net.

Finally, exactly analogously to Proposition \ref{corr:packing} and Proposition \ref{corr:LQC}, we obtain from Eq.~\eqref{eq:3}:

\begin{proposition}[Bounding $\Gamma_1$ and $\gamma_{1,\infty}$ for projector-valued function classes]\label{corr:LQC_proj}
For the class of projector-valued functions $\C_{\Pi}$ 
\beq
\gamma_{1,\infty}(n,\eps,\mathcal{C}_{\Pi})  \leq  N_{in}(\epsilon/2,S, \lVert \cdot \rVert) 
\eeq
\end{proposition}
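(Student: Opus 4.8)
The plan is to show that an internal $\eps/2$-net over the unitary set $S_m$, measured in the spectral norm, lifts to an internal $\eps$-net over $\C_{\Pi}$ measured in the data-dependent seminorm $\lVert\cdot\rVert_{1,\infty,\vec{x}}$, and then to exploit the fact that the resulting bound is independent of the data $\vec x$. First I would fix an arbitrary sample $\vec{x}\in\mathcal{X}^n$ and reduce the claim to bounding the internal covering number $N_{in}(\eps,\C_{\Pi},\lVert\cdot\rVert_{1,\infty,\vec{x}})$ for that fixed $\vec x$. Since the target quantity $N_{in}(\eps/2, S_m, \lVert\cdot\rVert)$ carries no dependence on $\vec{x}$, establishing the per-sample bound automatically yields the bound on the maximum over $\vec x$ appearing in the definition of $\gamma_{1,\infty}$.

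The core of the argument is the lifting step, which runs exactly parallel to Proposition~\ref{corr:packing} but with the loss-appropriate replacements (spectral norm in place of trace norm, and the submultiplicative estimate of Proposition~\ref{prop:diamond} in place of Eq.~\eqref{eq:inf}). Concretely, I would fix any internal $\eps/2$-net $S_m^{(\eps/2)}$ of $S_m$ of cardinality $N_{in}(\eps/2,S_m,\lVert\cdot\rVert)$ and define the induced family $\C_{\Pi}^{(\eps/2)}=\{\Pi_V : V\in S_m^{(\eps/2)}\}$. Given any $\Pi_U\in\C_{\Pi}$ with associated unitary $U$, choose $V\in S_m^{(\eps/2)}$ nearest to $U$, so $\lVert U-V\rVert<\eps/2$. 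Applying Proposition~\ref{prop:diamond} pointwise and using that $\lVert\Pi(x_i)\rVert\le 1$ for projectors gives, as in Eq.~\eqref{eq:3},
\begin{equation}
\lVert \Pi_U-\Pi_V\rVert_{1,\infty,\vec{x}}=\frac{1}{n}\sum_{i=1}^{n}\lVert \Pi_U(x_i)-\Pi_V(x_i)\rVert\le \frac{1}{n}\sum_{i=1}^{n}2\lVert\Pi(x_i)\rVert\,\lVert U-V\rVert<\eps,
\end{equation}
so $\C_{\Pi}^{(\eps/2)}$ is an internal $\eps$-net of $\C_{\Pi}$. This yields $N_{in}(\eps,\C_{\Pi},\lVert\cdot\rVert_{1,\infty,\vec{x}})\le N_{in}(\eps/2,S_m,\lVert\cdot\rVert)$, and maximizing over $\vec x$ gives the claim. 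I would also invoke Proposition~\ref{prop:netoverprojectors} to justify why the relevant Schatten index is $q=\infty$: the differences in the loss $L_p$ are governed by overlaps $\Tr[\rho(\Pi_U-\Pi_V)]$, whose worst case over states equals $\lVert\Pi_U-\Pi_V\rVert$, matching the operator-norm seminorm used in $\gamma_{1,\infty}$.

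There is no serious obstacle here; the proposition is essentially a bookkeeping consequence of the two preparatory Propositions. The only point requiring care is tracking the factor of $2$ coming from the submultiplicative bound of Proposition~\ref{prop:diamond}, which is precisely what forces the net over $S_m$ to be taken at resolution $\eps/2$ to guarantee resolution $\eps$ on $\C_{\Pi}$, and the observation that the bound's independence from $\vec{x}$ makes the outer maximization in the definition of the covering number vacuous.
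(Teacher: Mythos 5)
Your proof is correct and follows essentially the same route as the paper: fix an $\eps/2$-net on the unitary set, lift it to $\C_{\Pi}$ via Proposition~\ref{prop:diamond} together with $\lVert\Pi(x_i)\rVert\le 1$, and note that the resulting bound is uniform in $\vec{x}$ so the maximization in the definition of $\gamma_{1,\infty}$ is harmless. The paper phrases this as an $\eps$-net on $S_m$ inducing a $2\eps$-net on $\C_{\Pi}$ (Eq.~\eqref{eq:3}) and then rescales, which is the same factor-of-$2$ bookkeeping you carry out directly.
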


\subsection{Data dependent circuits}
The previous subsections have studied concept classes based on circuits that are independent of the data. As a result, the upper bounds on the covering numbers do not depend on the size of the dataset. Without deep changes in the structure of the proofs of the bounds, we can also treat cases where this dependence is in fact explicit.  Such
data-dependent circuits are considered in variational quantum machine learning models with
data re-uploading (see, e.g.~\cite{PerezSalinas2020datareuploading}). Consider circuits described by a circuit model $S$ of $k$ two-qubit-gates, as in the previous sections,
and modify it by inserting in specific places in the sequence a number $k'$ of gates of the form $e^{i H_{j'}g_{j'}(x)},\, j'=1,...,k'$, $||H_i||\leq b$, $H_{j'}$ fixed and $g_{j'}(x)$ belonging to a concept class $\mathcal F$ with fat-shattering dimension $D$, where the functions in the class map to $[0,B]$ as explained in the beginning of this section. Each circuit in the concept class will be made of a sequence of $k$ data-independent qubit gates and $k'$ data-dependent gates, in specified positions. Since for any unitarily invariant matrix norm we have~\cite{van1977sensitivity}

\begin{equation}
||e^{X}-e^{Y}||\leq ||X-Y||e^{||X-Y||}e^{||X||}\label{expmatrixbound},
\end{equation}  

we have that, choosing $B=1/b$, whenever $|g_{i}(x)-g'_{i}(x)|\leq \epsilon/(4bet)$, $\epsilon<1$

\begin{align}
&||e^{i H_ig_i(x)}-e^{i H_ig'_i(x)}||\nonumber\\
&\leq ||H_ig_i(x)- H_ig'_i(x)||e^{||H_ig_i(x)-H_ig'_i(x)||} e^{||H_ig_i(x)||}\nonumber\\
&\leq b||g_i(x)- g'_i(x)||e^{b||g_i(x)-g'_i(x)||} e^{1}\nonumber\\
&\leq b\frac{\epsilon}{4bet} e^{b\frac{\epsilon}{4bet}} e^{1}\leq \frac{\epsilon}{2t}. 
\end{align}
Now, take two circuits $U$ and $V$ such that the non-data-dependent parts are such that $\sum_{j=1}^{k}||U_{j}-V_{j}||\leq\frac{\epsilon}{2}$, and also $||g_{j'}(x)-g'_{j'}(x)||_{1,\vec{x}}\leq \epsilon/(4bek')$, for every $j$, $j'$.
By triangular inequalities, for two state valued concepts $c_{U},c_{V}$ associated to circuits $U$, $V$, repeating the reasoning behind Eq.~(\ref{triangcirc}) and following equations, we get
\begin{align}
&\frac{1}{n} \sum_{i=1}^{n} \frac{1}{2} \bigg\lVert c_U(x_i) - c_V(x_i)\bigg\rVert_1
\leq \frac{1}{2n} \sum_{i=1}^{n}||U(x_i)-V(x_i)||\nonumber\\ 
&\leq\sum_{j=1}^k||U_j-V_j||\nonumber\\&+\frac{1}{n}\sum_{i=1}^n\sum_{j'=1}^{k'}||e^{i g_{j'}(x_i)H_{j'}}-e^{i g'_{j'}   (x_i)H_{j'}}||\nonumber\\
&\leq \epsilon/2+\frac{1}{n}\sum_{i=1}^n \sum_{j'=1}^{k'}\frac{\epsilon}{2k'}\nonumber\\
&\leq \epsilon/2+\epsilon/2=\epsilon.
\end{align}
The same inequality holds if we take projector-valued classes based on the circuits. This means that the covering number for a data-dependent class of circuits $\mathcal C'$ constructed from a qubit model $S$ is bounded as

\begin{align}
&\gamma_{1,\infty}(n,\epsilon, \mathcal C')\leq {N}_{in}(\epsilon/2,S,||\cdot||_S) \nonumber\\
&\times  2^{k'}\left(\frac{64 ne^2(k')^2}{\epsilon^2}\right)^{k'D\log_2 ( 16 e^2k'  n/(D \epsilon))}.
\end{align}

where $||U-V||_S:=\sum_{i=1}^{k}||U_i-V_i||$ with $\{U_i\}$ and $\{V_i\}$ being the gates of $U$ and $V$ respectively and ${N}_{in}(\epsilon/2,S,||\cdot||_S)$ can be bounded as in the previous section.

\subsection{Learning physical processes}
We now present some concept classes inspired by physical scenarios. We will always assume that the quantum register has dimension $d$.

We use facts about matrix continuity to obtain a bound on the covering number for our state-valued and projector-valued concept classes. In particular, we need the following.

\begin{itemize}
\item For any unitarily invariant matrix norm we have~\cite{van1977sensitivity}

\begin{equation}
||e^{X}-e^{Y}||\leq ||X-Y||e^{||X-Y||}e^{||X||}\label{expmatrixbound2}.
\end{equation}
\item~[Lemma 16 in~\cite{brandao2017quantum}] For any Hermitian matrices $H,H'$, 
\begin{equation}||\frac{e^{-H}}{\Tr[e^{-H}]}-\frac{e^{-H'}}{\Tr[e^{-H'}]}||_1\leq 2(e^{||H-H'||}-1)\label{Brandlemma}
\end{equation}
\item Weyl's perturbation theorem~\cite{bhatia2013matrix}. Let $A$ and $B$ be Hermitian matrices with ordered eigenvalues respectively $\lambda_{1}(A)\geq \lambda_{2}(A)\geq ...\geq \lambda_{n}(A) $ and $\lambda_{1}(B)\geq \lambda_{2}(B)\geq ...\geq \lambda_{n}(B)$. Then
\begin{equation}
\max_{i=1,...,n}|\lambda_{i}(A)-\lambda_i(B)|\leq ||A-B||.
\end{equation}
\item Given two projectors $E, F$, $||E-F||=||E^{\perp}F||$ (\cite{bhatia2013matrix}, Exercise VII.1.11).
\item (\cite{bhatia2013matrix}, Theorem VII.3.2) Let $A$ and $B$ be Hermitian operators, and let $S_1$, $S_2$ any two intervals such that their distance $\sup_{x\in S_1,y\in S_2}|x-y|=\delta>0.$ Let $E$ be the projector on eigenspaces of $A$ with eigenvalues contained in $S_1$, and $F$ be the projector on eigenspaces of $B$ with eigenvalues contained in $S_2$. Then, for any unitarily invariant norm $|||\cdot|||$,
\begin{equation}
|||EF||\leq \frac{\pi}{2\delta}|||A-B|||,
\end{equation}

\item (Corollary of previous points) Given two Hermitian matrices $A$ and $B$ such that $||A-B||\leq\epsilon$ and such that $A$ does not have eigenvalues in the interval $(\theta-2\epsilon,\theta+2\epsilon)$, (and thus $B$ does not have eigenvalues in the interval $(\theta-\epsilon,\theta+\epsilon)$ by Weyl perturbation's theorem) and $E,F$ the projectors on the eigenspaces associated to the eigenvalues less than $\theta-\epsilon$ satisfy
\begin{equation}
||E-F||\leq \frac{\pi}{4\epsilon}||A-B||.\label{bhatiapert}
\end{equation}
This follows from the previous point since $E$ coincides with the projector on the eigenspaces of $A$ with eigenvalues contained in $(-\infty,\theta-\epsilon]$ and $F^{\perp}$ coincides with the projector on the eigenspaces of $B$ with eigenvalues contained in $[\theta+\epsilon, +\infty)$, and the two sets $(-\infty,\theta-\epsilon]$ and $[\theta+\epsilon, +\infty)$ have distance $2\epsilon$.
\end{itemize}

We are now equipped to prove the following Propositions. Let $\F$ be a function class with fat-shattering dimension $D$, containing functions  $g\in F$ such that $|g(x)|\leq B$.  We remind that $\mathcal{M}_b^{(d)}$ indicates Hermitian matrices of operator norm bounded by $b$.

\begin{proposition}[Gibbs states of a perturbed Hamiltonian]Consider a concept class given by a set of Gibbs states obtained by perturbing a Hamiltonian with a field-dependent term, but the specific dependence on the field is not known, namely
\begin{align}
    \mathcal C=\{f|f(x)=\frac{e^{-H_0-g(x)V}}{\Tr[e^{-H_0-g(x)V}]},\,\, g\in\mathcal{F}, V\in \mathcal M_b^{(d)}\},
\end{align}
Its covering number is bounded as
\begin{align}
&\mathcal \gamma_{1,1}(n,2e^{\epsilon(B+b)}\epsilon (b+B),\mathcal C)\nonumber\\
&\leq 2\left(\frac{4 n B^2}{\epsilon^2}\right)^{D\log_2  (4 e B n/(D \epsilon))}\times \left(\frac{6b}{\epsilon}\right)^{d^2}.
\end{align}
\begin{proof}

Let $\mathcal{F}_{\epsilon}$ be an $\epsilon$-net  of $\mathcal{F}$ and ${\mathcal M_{b,\epsilon}^{(d)}}$ an $\epsilon$-net of $\mathcal M_{b}^{(d)}$. Then, for every $g\in \mathcal{F}$ there is $g\in \mathcal{F}_{\epsilon}$ such that $||g-g'||_{1,\vec x}\leq \epsilon$ and 
for every $V\in \mathcal M_{b}^{(d)}$ there is $V'\in {\mathcal M_{b,\epsilon}^{(d)}}$ such that $||V-V'||\leq \epsilon$. Let us consider the corresponding functions $f(x)=\frac{e^{-H_0-g(x)V}}{\Tr[e^{-H_0-g(x)V}]}\in \mathcal {C}$, $f'(x)=\frac{e^{-H_0-g'(x)V'}}{\Tr[e^{-H_0-g'(x)V'}]}\in \mathcal {C}$.
Then, using Eq.~(\ref{Brandlemma}) we obtain

\begin{align}
&||\frac{e^{-H_0-g(x)V}}{\Tr[e^{-H_0-g(x)V}]}-\frac{e^{-H_0-g'(x)V'}}{\Tr[e^{-H_0-g'(x)V'}]}||_1\nonumber\\&\leq 2(e^{||g(x)V-g'(x)V'||}-1)\nonumber\\
&\leq 2 (e^{(||V (g(x)-g'(x))||+||(V -V')g'(x))||}-1) \nonumber\\&\leq 2e^{\epsilon(B+b)}(||V (g(x)-g'(x))||+||(V -V')g'(x))||).
\end{align}

Therefore, 
\begin{equation}
||f-f'||_{1,1,\vec x}\leq 2e^{\epsilon(B+b)}\epsilon (b+B).
\end{equation}
 which implies that 

$ \mathcal C_{\epsilon}=\{f|f(x)=\frac{e^{-H_0-g(x)V}}{\Tr[e^{-H_0-g(x)V}]},\,\, g\in\mathcal{F}_{\epsilon}, V\in \mathcal M_{b,\epsilon}^{(d)}\},$
 
 is a $2e^{\epsilon(B+b)}\epsilon(b+B)$-net for $\mathcal C$.
\end{proof}

\end{proposition}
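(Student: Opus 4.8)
The plan is to construct an $\epsilon$-net for $\mathcal{C}$ as the image, under the Gibbs map, of a product of nets for the two ingredients parametrizing a concept: the scalar field-function $g\in\mathcal{F}$ and the perturbation operator $V\in\mathcal M_b^{(d)}$. First I would take an $\epsilon$-net $\mathcal{F}_\epsilon$ of $\mathcal{F}$ with respect to the $\lVert\cdot\rVert_{1,\vec x}$ seminorm, whose cardinality is controlled by the fat-shattering covering-number bound recalled at the start of this section, namely $2(4nB^2/\epsilon^2)^{D\log_2(4eBn/(D\epsilon))}$. Simultaneously I would invoke Lemma~\ref{lem:vershynin} to take an $\epsilon$-net $\mathcal M_{b,\epsilon}^{(d)}$ of the bounded Hermitian matrices in the spectral norm, of cardinality at most $(6b/\epsilon)^{d^2}$. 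The candidate net $\mathcal{C}_\epsilon$ consists of the Gibbs states built from all pairs $(g',V')\in\mathcal{F}_\epsilon\times\mathcal M_{b,\epsilon}^{(d)}$, so its cardinality is the product of these two counts, giving exactly the claimed bound.

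Next I would verify that $\mathcal{C}_\epsilon$ achieves resolution $2e^{\epsilon(B+b)}\epsilon(b+B)$ in the $\lVert\cdot\rVert_{1,1,\vec x}$ seminorm. Given any $f\in\mathcal{C}$ with parameters $(g,V)$, choose $(g',V')$ from the product net with $\lVert g-g'\rVert_{1,\vec x}\le\epsilon$ and $\lVert V-V'\rVert\le\epsilon$. The two Hamiltonians differ by $g(x)V-g'(x)V'$, which I would split via the telescoping identity $g(x)V-g'(x)V'=V\bigl(g(x)-g'(x)\bigr)+(V-V')g'(x)$ and bound in spectral norm by $b\,\lvert g(x)-g'(x)\rvert+B\,\lVert V-V'\rVert$ using submultiplicativity and the boundedness of $g'$. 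Feeding this into the Gibbs perturbation bound Eq.~\eqref{Brandlemma}, $\lVert \tfrac{e^{-H}}{\Tr[e^{-H}]}-\tfrac{e^{-H'}}{\Tr[e^{-H'}]}\rVert_1\le 2\bigl(e^{\lVert H-H'\rVert}-1\bigr)$, and linearizing the exponential through $e^t-1\le t\,e^t$ while controlling the exponent by $\epsilon(B+b)$, I obtain the pointwise estimate $\lVert f(x_i)-f'(x_i)\rVert_1\le 2e^{\epsilon(B+b)}\bigl(b\,\lvert g(x_i)-g'(x_i)\rvert+B\,\lVert V-V'\rVert\bigr)$. Averaging over $i=1,\dots,n$ and using $\lVert g-g'\rVert_{1,\vec x}\le\epsilon$ together with $\lVert V-V'\rVert\le\epsilon$ then yields $\lVert f-f'\rVert_{1,1,\vec x}\le 2e^{\epsilon(B+b)}\epsilon(b+B)$, as required.

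The delicate step --- and the one I would treat most carefully --- is the interplay between the nonlinearity of the Gibbs perturbation bound and the fact that the net on $\mathcal{F}$ only controls the averaged distance $\lVert g-g'\rVert_{1,\vec x}$ rather than the per-point distances. Because the right-hand side $2(e^{\lVert H-H'\rVert}-1)$ is not linear in the perturbation, one cannot directly average it against the first-order quantities; instead I would first bound the exponent so as to extract the constant prefactor $e^{\epsilon(B+b)}$, reducing the estimate to its linear part, and only then exploit linearity of the average against $b\,\lvert g(x_i)-g'(x_i)\rvert+B\,\lVert V-V'\rVert$. Once this linearization is in place, the remaining steps are routine applications of the triangle inequality, the definitions of the two seminorms, and the definition of $\gamma_{1,1}$ as a maximum of $N_{in}$ over data vectors.
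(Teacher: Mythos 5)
Your construction follows the paper's proof essentially step for step: the product of an $\epsilon$-net on $\mathcal{F}$ (counted via the fat-shattering bound) with an $\epsilon$-net on $\mathcal M_b^{(d)}$ from Lemma~\ref{lem:vershynin}, pushed through the Gibbs map; the telescoping split $g(x)V-g'(x)V'=V(g(x)-g'(x))+(V-V')g'(x)$; the perturbation bound of Eq.~\eqref{Brandlemma}; and then linearization of the exponential followed by averaging over the sample. The cardinality accounting is identical to the paper's.

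However, the step you yourself flag as delicate is not actually resolved by the argument you give. Having chosen $\mathcal{F}_\epsilon$ as a net in the $\lVert\cdot\rVert_{1,\vec x}$ seminorm, you control only the \emph{average} $\tfrac{1}{n}\sum_i\lvert g(x_i)-g'(x_i)\rvert\le\epsilon$; at an individual sample point the difference can be as large as $B$. Hence the exponent $\lVert g(x_i)V-g'(x_i)V'\rVert\le b\lvert g(x_i)-g'(x_i)\rvert+B\lVert V-V'\rVert$ is \emph{not} bounded by $\epsilon(B+b)$ pointwise, and your claimed pointwise estimate $\lVert f(x_i)-f'(x_i)\rVert_1\le 2e^{\epsilon(B+b)}\bigl(b\lvert g(x_i)-g'(x_i)\rvert+B\lVert V-V'\rVert\bigr)$ does not follow from $e^t-1\le te^t$: at a point where $\lvert g(x_i)-g'(x_i)\rvert\approx B$ the prefactor would have to be $e^{\Omega(bB)}$ rather than $e^{\epsilon(B+b)}$. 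Since $e^t-1$ is convex, control of the exponent on average cannot be pushed inside the exponential, which is precisely the obstruction you identified but did not remove. The clean fix costs nothing: take $\mathcal{F}_\epsilon$ to be a net in the $\lVert\cdot\rVert_{\infty,\vec x}$ seminorm. The fat-shattering bound quoted at the start of Section~\ref{sec:ex_covering} is a bound on the uniform covering number, so the cardinality is unchanged, and then $\lvert g(x_i)-g'(x_i)\rvert\le\epsilon$ holds at \emph{every} $x_i$, making your pointwise linearization valid and the rest of the argument go through verbatim. (Alternatively, keep the $1$-net and bound the exponent crudely by $B(b+\epsilon)$, or use $\lVert f(x_i)-f'(x_i)\rVert_1\le 2$; either salvages an averaged bound but with a worse constant than the one claimed.) To be fair, the paper's own proof is written with the same $\lVert\cdot\rVert_{1,\vec x}$-net and the same pointwise inequality, so it is best read as implicitly working with the sup-norm net; but as stated, your resolution of the averaging issue does not work.
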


\begin{proposition}[Phase-shifts with position-dependent depth]\label{projectorcover}
 Consider a spatially local channel that acts as a power of an unknown unitary channel at position $x$ according to some classical variable at position $x$ (for example a thickness), measured as $g(x)$, which we probe with some position dependent state $\rho(x)$, possibly entangled with a different reference system for each $x$. We can model this class as either a state-valued class, but also as a projector-valued function class if the states $\rho(x)$ are pure:
\begin{align}
    &\mathcal C=\{f|f(x)=U^{g(x)}\rho(x){(U^\dagger)}^{g(x)}\,\, g\in\mathcal F,\\
    &U=e^{i H}\in \mathrm{U}(d), H\in \mathcal M_b^{(d)}\}.
\end{align}
The covering number of this class is bounded as

\begin{align}
&\mathcal \gamma_{1,1}(n,2 e^{\epsilon(B+b)+Bb}\epsilon (b+B),\mathcal C)\\
&\leq 2\left(\frac{4 n B^2}{\epsilon^2}\right)^{D\log_2  (4 e B n/(D \epsilon))} \times \left(\frac{6b}{\epsilon}\right)^{d^2},
\end{align}

and the same bound holds for $\mathcal \gamma_{1,\infty}(n,2 e^{\epsilon(B+b)+Bb}\epsilon (b+B),\mathcal C)$ when $\rho(x)$ are pure.
\end{proposition}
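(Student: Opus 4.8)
The plan is to mirror the proof of the Gibbs-state Proposition, replacing Brandao's trace-norm perturbation bound by the matrix-exponential continuity estimate~\eqref{expmatrixbound2} and the conjugation estimate of Proposition~\ref{prop:diamond}. First I would build a product net out of the two free parameters: an $\epsilon$-net $\mathcal{F}_\epsilon$ of the real function class $\mathcal{F}$ in the $\lVert\cdot\rVert_{1,\vec x}$ seminorm, whose cardinality is controlled by the fat-shattering bound $2(4nB^2/\epsilon^2)^{D\log_2(4eBn/(D\epsilon))}$, and an $\epsilon$-net $\mathcal{M}_{b,\epsilon}^{(d)}$ of the Hermitian generators $\mathcal M_b^{(d)}$ in operator norm, of size at most $(6b/\epsilon)^{d^2}$ by Lemma~\ref{lem:vershynin}. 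The induced family $\mathcal C_\epsilon=\{f:f(x)=e^{ig'(x)H'}\rho(x)e^{-ig'(x)H'},\ g'\in\mathcal F_\epsilon,\ H'\in\mathcal M_{b,\epsilon}^{(d)}\}$ is an internal subset of $\mathcal C$, and its cardinality is the product of the two net sizes, matching the claimed right-hand side.

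The core is the pointwise continuity estimate. Given $f\in\mathcal C$ with parameters $(g,H)$, I pick $(g',H')$ in the product net with $\lVert g-g'\rVert_{1,\vec x}\le\epsilon$ and $\lVert H-H'\rVert\le\epsilon$, and compare $f(x_i)=U_i\rho(x_i)U_i^\dagger$ with $f'(x_i)=V_i\rho(x_i)V_i^\dagger$, where $U_i=e^{ig(x_i)H}$ and $V_i=e^{ig'(x_i)H'}$. The trace-norm version of Proposition~\ref{prop:diamond} (using $\lVert ABC\rVert_1\le\lVert A\rVert\,\lVert B\rVert_1\,\lVert C\rVert$ and $\lVert\rho(x_i)\rVert_1=1$) gives $\lVert f(x_i)-f'(x_i)\rVert_1\le 2\lVert U_i-V_i\rVert$, so it remains to bound the unitary difference $\lVert e^{ig(x_i)H}-e^{ig'(x_i)H'}\rVert$. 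Applying~\eqref{expmatrixbound2} with $X=ig(x_i)H$, $Y=ig'(x_i)H'$ and using $\lVert g(x_i)H\rVert\le Bb$ together with $\lVert g(x_i)H-g'(x_i)H'\rVert\le b|g(x_i)-g'(x_i)|+B\lVert H-H'\rVert$ yields the factor $\epsilon(b+B)e^{\epsilon(b+B)+Bb}$; averaging over $i$ and invoking $\frac1n\sum_i|g(x_i)-g'(x_i)|=\lVert g-g'\rVert_{1,\vec x}\le\epsilon$ produces $\lVert f-f'\rVert_{1,1,\vec x}\le 2e^{\epsilon(B+b)+Bb}\epsilon(b+B)$, which is exactly the claimed resolution.

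The main obstacle I anticipate is the exponential factor $e^{\lVert g(x_i)H-g'(x_i)H'\rVert}$ in~\eqref{expmatrixbound2}: its exponent contains the pointwise deviation $|g(x_i)-g'(x_i)|$, whereas the $\mathcal F$-net only controls the \emph{average} $\lVert g-g'\rVert_{1,\vec x}$, and the exponential does not commute with averaging. To handle this cleanly I would split the comparison into two steps: first replace $H$ by $H'$ at fixed $g$, where $\lVert e^{ig(x_i)H}-e^{ig(x_i)H'}\rVert\le B\epsilon\,e^{B\epsilon+Bb}$ holds \emph{uniformly} in $i$ (the exponent $\lVert g(x_i)(H-H')\rVert\le B\epsilon$ is bounded pointwise); then replace $g$ by $g'$ at fixed $H'$, where the two exponentials commute so that $\lVert e^{ig(x_i)H'}-e^{ig'(x_i)H'}\rVert=\lVert e^{i(g(x_i)-g'(x_i))H'}-\mathbf 1\rVert\le b|g(x_i)-g'(x_i)|$ is \emph{linear} in the deviation and averages directly to $b\epsilon$. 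Summing the two contributions and absorbing them into the stated bound closes the estimate without ever exponentiating an average.

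Finally, for the projector-valued case with pure $\rho(x)=\ketbra{\psi(x)}{\psi(x)}$, the image states $f(x)$ are themselves rank-one projectors, so the same computation goes through verbatim with the operator norm in place of the trace norm: I would use Proposition~\ref{prop:netoverprojectors} to identify the distance between projectors with $\lVert\Pi_U-\Pi_V\rVert$ and the operator-norm form of Proposition~\ref{prop:diamond} with $\lVert\ketbra{\psi(x)}{\psi(x)}\rVert=1$, which reproduces the identical resolution and delivers the matching $\gamma_{1,\infty}$ bound.
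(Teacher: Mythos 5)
Your proposal is correct, and its overall skeleton (product of an $\epsilon$-net on $\mathcal F$ with an $\epsilon$-net on $\mathcal M_b^{(d)}$ from Lemma~\ref{lem:vershynin}, the conjugation bound $\lVert U\rho U^\dagger - V\rho V^\dagger\rVert_q\le 2\lVert U-V\rVert$, and the matrix-exponential continuity estimate~\eqref{expmatrixbound2}) is exactly the paper's. Where you genuinely diverge is the key continuity step, and your version is the more careful one. The paper applies~\eqref{expmatrixbound2} in a single shot to $X=ig(x)H$, $Y=ig'(x)H'$ and then bounds the resulting exponential prefactor $e^{\lVert Hg(x)-H'g'(x)\rVert}$ by a constant of the form $e^{\epsilon(B+b)+Bb}$; read literally with only the average control $\lVert g-g'\rVert_{1,\vec x}\le\epsilon$, that bound on the exponent is not justified, since pointwise $|g(x_i)-g'(x_i)|$ can be as large as $B$ (the paper's argument is rescued either by taking the net in the empirical sup-seminorm, to which the quoted fat-shattering bound also applies, or by accepting a worse constant $e^{2Bb+B\epsilon}$). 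Your two-step substitution sidesteps this entirely: replacing $H\to H'$ at fixed $g$ keeps the exponent bounded uniformly in $i$ by $B\epsilon+Bb$, and replacing $g\to g'$ at fixed $H'$ exploits commutativity, $\lVert e^{ig(x_i)H'}-e^{ig'(x_i)H'}\rVert=\lVert e^{i(g(x_i)-g'(x_i))H'}-\mathbf 1\rVert\le b\,|g(x_i)-g'(x_i)|$, so the only quantity that gets averaged enters linearly. Summing gives $B\epsilon e^{B\epsilon+Bb}+b\epsilon\le e^{\epsilon(B+b)+Bb}\epsilon(B+b)$, reproducing the stated resolution with the stated net cardinalities, and your treatment of the pure-state/projector case via the operator-norm form of Proposition~\ref{prop:diamond} matches the paper's. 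In short: same architecture, but your splitting trick closes a step the paper glosses over, at no cost in the final bound.
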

\begin{proof}
Let $\mathcal{F}_{\epsilon}$ be an $\epsilon$-net  of $\mathcal{F}$ and ${\mathcal M_{b,\epsilon}^{(d)}}$ an $\epsilon$-net of $\mathcal M_{b}^{(d)}$. Then, for every $g\in \mathcal{F}$ there is $g\in \mathcal{F}_{\epsilon}$ such that $||g-g'||_{1,\vec x}\leq \epsilon$ and 
for every $H\in \mathcal M_{b}^{(d)}$ there is $H'\in {\mathcal M_{b,\epsilon}^{(d)}}$ such that $||H-H'||\leq \epsilon$. 
Let us consider the corresponding functions $f(x)=U^{g(x)}\rho(x){(U^\dagger)}^{g(x)}\in \mathcal {C}$, $f'(x)={U'}^{g'(x)}\rho(x){({U'}^\dagger)}^{g'(x)}\in \mathcal {C}$.

Using Eq.~(\ref{expmatrixbound2})
therefore, since {$||H||\leq b$ and  $|g(x)|\leq B$}
\begin{align}
&||e^{i H g(x)}-e^{i H'g'(x)}||\nonumber\\
&\leq ||H g(x)- H'g'(x)||e^{||H g(x)- H'g'(x)||}e^{||H g(x)||} \nonumber\\
&\leq (||H (g(x)-g'(x))||+||(H -H')g'(x))||)e^{3(B+b)}
\end{align}

Moreover, applying Eq.~\eqref{eq:pertunit} (or Proposition~\ref{prop:diamond} for the projector-valued case) 

 to the pseudometric for our class, we obtain that for both $q=1,\infty$ 
\begin{align}
&||f-f'||_{1,q,\vec x}=\frac{1}{n}\sum_{i=1}^{n}||e^{iH g(x_i)}\rho(x_i)e^{-iH g(x_i)}\nonumber\\&-e^{iH' g'(x_i)}\rho(x_i)e^{-iH' g'(x_i)}||_{q}\\&\leq 2 e^{\epsilon(B+b)+Bb}\epsilon (b+B),
\end{align}
 which implies that
 $ \mathcal C_{\epsilon}=\{f|f(x)=U^{g(x)}\rho(x){(U^\dagger)}^{g(x)},\,\, g\in\mathcal{F}_{\epsilon}, V\in \mathcal M_{b,\epsilon}^{(d)}\},$
 is
 an $2 e^{\epsilon(B+b)+Bb}\epsilon (b+B)$-net for $\mathcal C$.
\end{proof}
\begin{proposition}[Projectors onto low-energy subspace of a position-dependent Hamiltonian]
Consider a class that projects on the eigenspaces of low energy of a set of unitary equivalent Hamiltonians, defining $\Pi_E$ as the projector on eigenstates of $H_0$ with energy less than $E$. We can model this as
\begin{align}
    &\mathcal C=\{f|f(x)=U^{g(x)}\Pi_E{(U^\dagger)}^{g(x)}\,\, g\in\mathcal F,\nonumber\\
    &U=e^{i H}\in \mathrm{U}(d), H\in \mathcal M_b^{(d)}\}.
\end{align}
The covering number of this class is bounded as
\begin{align}
&\mathcal \gamma_{1,\infty}(n,2 e^{3(B+b)}\epsilon (b+B),\mathcal C)\nonumber\\
&\leq 2\left(\frac{4 m B^2}{\epsilon^2}\right)^{D\log_2 (4 e B n/(D \epsilon))} \times \left(\frac{6b}{\epsilon}\right)^{d^2}.
\end{align}
\end{proposition}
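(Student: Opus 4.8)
The plan is to follow the template of Proposition~\ref{projectorcover} almost verbatim, the only structural changes being that the loss is now measured in operator norm (so we control $\gamma_{1,\infty}$ rather than $\gamma_{1,1}$) and that the $x$-dependent state $\rho(x)$ is replaced by the fixed projector $\Pi_E$. The observation that keeps the argument simple is that $\Pi_E$ does not depend on $x$ and that unitary conjugation preserves the spectrum: the low-energy projector of the conjugated Hamiltonian $e^{iHg(x)}H_0 e^{-iHg(x)}$ is exactly $e^{iHg(x)}\Pi_E e^{-iHg(x)}$. Consequently \emph{no} Weyl/Bhatia spectral-gap perturbation bound (of the kind needed for the Gibbs and perturbed-low-energy classes) enters here; the entire $x$- and $H$-dependence is carried by the conjugating unitary $W(x):=e^{iHg(x)}$. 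I would first fix an $\epsilon$-net $\mathcal F_\epsilon$ of $\mathcal F$ in the $\lVert\cdot\rVert_{1,\vec x}$ seminorm and an $\epsilon$-net $\mathcal M_{b,\epsilon}^{(d)}$ of $\mathcal M_b^{(d)}$ in operator norm, and define $\mathcal C_\epsilon$ to be the subclass built from the pairs $(g',H')\in\mathcal F_\epsilon\times\mathcal M_{b,\epsilon}^{(d)}$.

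The central estimate is a two-step reduction at a fixed data point $x_i$. Writing $W:=e^{iHg(x_i)}$ and $W':=e^{iH'g'(x_i)}$, Proposition~\ref{prop:diamond} together with $\lVert\Pi_E\rVert=1$ gives $\lVert W\Pi_E W^\dagger - W'\Pi_E W'^\dagger\rVert\leq 2\lVert W-W'\rVert$; this is precisely the point where the projector case is \emph{cleaner} than the state case, since the bound is rank-independent. Then the matrix-exponential continuity bound~\eqref{expmatrixbound2} yields $\lVert W-W'\rVert\leq \lVert Hg(x_i)-H'g'(x_i)\rVert\, e^{\lVert Hg(x_i)-H'g'(x_i)\rVert}\,e^{\lVert Hg(x_i)\rVert}$, and I would split the first factor as $\lVert H(g(x_i)-g'(x_i))\rVert+\lVert(H-H')g'(x_i)\rVert\leq b\,|g(x_i)-g'(x_i)|+\epsilon B$, using $\lVert H\rVert\leq b$, $|g'|\leq B$ and $\lVert H-H'\rVert\leq\epsilon$.

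Finally I would pass to the seminorm: averaging $\lVert f(x_i)-f'(x_i)\rVert$ over $i$ and invoking $\lVert g-g'\rVert_{1,\vec x}=\tfrac1n\sum_i|g(x_i)-g'(x_i)|\leq\epsilon$ collapses the data dependence, giving $\lVert f-f'\rVert_{1,\infty,\vec x}\leq 2e^{3(B+b)}\epsilon(b+B)$ and hence showing that $\mathcal C_\epsilon$ is a $2e^{3(B+b)}\epsilon(b+B)$-net of $\mathcal C$. The stated covering-number bound then follows by multiplying the cardinalities of the two nets: the fat-shattering covering-number bound for $\mathcal F$ recalled at the start of this section supplies the first factor of the claimed bound, and Lemma~\ref{lem:vershynin} applied with $K=d^2$ supplies the factor $(6b/\epsilon)^{d^2}$.

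The only genuinely delicate point — and the one I would handle most carefully — is the uniform control of the exponential prefactors $e^{\lVert Hg(x_i)-H'g'(x_i)\rVert}e^{\lVert Hg(x_i)\rVert}$ across all data points and their absorption into the single constant, which (exactly as in the intermediate step of Proposition~\ref{projectorcover}) can be taken to be $e^{3(B+b)}$. Everything else is a direct transcription of that proof, with the trace norm replaced by the operator norm and the conjugated state replaced by the conjugated fixed projector.
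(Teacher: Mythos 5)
Your proposal is correct and matches the paper's approach exactly: the paper's own proof of this proposition is literally a one-line reference stating it is identical to the proof of Proposition~\ref{projectorcover} with $\rho(x)$ replaced by $\Pi_E$, and your write-up is precisely that transcription — nets on $\mathcal F$ and $\mathcal M_b^{(d)}$, the matrix-exponential continuity bound~\eqref{expmatrixbound2}, Proposition~\ref{prop:diamond} with $\lVert\Pi_E\rVert=1$, averaging over the data to invoke $\lVert g-g'\rVert_{1,\vec x}\leq\epsilon$, and multiplying the fat-shattering and Lemma~\ref{lem:vershynin} cardinalities. Your side remark that no Weyl/Bhatia spectral-perturbation argument is needed (since conjugation preserves the spectrum, unlike in the perturbed-Hamiltonian class) is also correct and consistent with the paper's structure.
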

\begin{proof}
The proof is identical to the one of Proposition~\ref{projectorcover} in the projector-valued case, exchanging $\rho(x)$ with $\Pi_E(x)$. 
\end{proof}
\begin{proposition}[Projectors on low-energy subspace of perturbed Hamiltonians]
Similarly, we can consider a class that projects on the eigenspaces of low energy of a set of perturbed Hamiltonians, defining $\Pi_E$ as the projector on eigenstates of $H_0+g(x)V$ with energy less than $E$, where $H_0$ does not have eigenvalues in the interval $[E-2\delta, E+2\delta]$, denoted $P_E(H_0+g(x)V)$.
\begin{equation}
    \mathcal C=\{f|f(x)=\Pi_E(H_0+g(x)V),\,\, g\in\mathcal F, V\in \mathcal M_{\delta/B}^{(d)} \},
\end{equation}
where we choose parameters in such a way that the perturbation is small, $||g(x)V||<\delta$.
The covering number of this class is bounded as

\begin{align}
&\mathcal \gamma_{1,\infty}(n,\frac{\pi\epsilon}{4\delta}(B+\delta\frac{1}{B}),\mathcal C)\nonumber\\
&\leq 2\left(\frac{4 m B^2}{\epsilon^2}\right)^{D\log_2 (4 e B n/(D \epsilon))} \times \left(\frac{6b}{\epsilon}\right)^{d^2}.
\end{align}
\end{proposition}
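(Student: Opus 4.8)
The plan is to mirror the template of the preceding physically-motivated propositions, the only essential change being that the continuity of a fixed state/unitary is replaced by a \emph{spectral-projector} perturbation estimate. First I would fix an $\epsilon$-net $\mathcal{F}_\epsilon$ of $\mathcal{F}$ with respect to $\lVert\cdot\rVert_{1,\vec{x}}$ and an $\epsilon$-net $\mathcal{M}_{\delta/B,\epsilon}^{(d)}$ of $\mathcal{M}_{\delta/B}^{(d)}$ with respect to the operator norm; their cardinalities are controlled by the fat-shattering covering-number bound recalled at the start of the section and by Lemma~\ref{lem:vershynin} (with $b=\delta/B$) respectively. I then set $\mathcal{C}_\epsilon:=\{f\mid f(x)=\Pi_E(H_0+g(x)V),\ g\in\mathcal{F}_\epsilon,\ V\in\mathcal{M}_{\delta/B,\epsilon}^{(d)}\}$ and aim to show it is a $\tfrac{\pi\epsilon}{4\delta}(B+\delta/B)$-net of $\mathcal{C}$, so that the final count is the product of the two net sizes.

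The heart of the argument is a per-point estimate for the low-energy projector. Take two concepts $f(x)=\Pi_E(A(x))$ and $f'(x)=\Pi_E(B(x))$ with $A(x)=H_0+g(x)V$, $B(x)=H_0+g'(x)V'$, where $g'\in\mathcal{F}_\epsilon$ and $V'\in\mathcal{M}_{\delta/B,\epsilon}^{(d)}$ are the net points closest to $g,V$. Since $\lVert g(x)V\rVert\le B\cdot(\delta/B)=\delta$ and likewise $\lVert g'(x)V'\rVert\le\delta$, Weyl's perturbation theorem forces both $A(x)$ and $B(x)$ to inherit the spectral gap of $H_0$: neither has eigenvalues in $(E-\delta,E+\delta)$. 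Hence $\Pi_E(A(x))$ agrees with the spectral projector of $A(x)$ onto $(-\infty,E-\delta]$ and $\Pi_E(B(x))$ with that of $B(x)$ onto $(-\infty,E-\delta]$, the two relevant half-lines entering $\lVert E-F\rVert=\lVert E^{\perp}F\rVert$ being at distance $2\delta$. Applying the Bhatia bound (Theorem~VII.3.2, equivalently Eq.~\eqref{bhatiapert}) together with that projector identity — valid with equal-rank projectors because the perturbation never crosses the gap — gives
\begin{equation}
\lVert \Pi_E(A(x))-\Pi_E(B(x))\rVert\le \tfrac{\pi}{4\delta}\,\lVert A(x)-B(x)\rVert .
\end{equation}

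It then remains to bound $\lVert A(x)-B(x)\rVert=\lVert g(x)V-g'(x)V'\rVert\le |g(x)-g'(x)|\,\lVert V\rVert+|g'(x)|\,\lVert V-V'\rVert$ and insert $\lVert V\rVert\le\delta/B$, $|g'(x)|\le B$. Averaging over the sample and using $\tfrac1n\sum_i|g(x_i)-g'(x_i)|\le\epsilon$ and $\lVert V-V'\rVert\le\epsilon$ yields $\lVert f-f'\rVert_{1,\infty,\vec{x}}\le \tfrac{\pi}{4\delta}(\tfrac{\delta}{B}\epsilon+B\epsilon)=\tfrac{\pi\epsilon}{4\delta}(B+\delta/B)$, which establishes the net at the claimed resolution and the stated cardinality bound. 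I expect the main obstacle to be the gap bookkeeping: one must check that the $2\delta$-wide gap of $H_0$ survives both perturbations with exactly enough margin that $\Pi_E$ coincides with the correct half-line spectral projector, so that Theorem~VII.3.2 applies with separation $2\delta$ and produces the factor $\pi/(4\delta)$; the remaining triangle-inequality splitting and net-counting are routine repetitions of the earlier propositions.
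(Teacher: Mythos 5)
Your proposal is correct and takes essentially the same route as the paper's proof: an $\epsilon$-net on $\mathcal{F}$ (via the fat-shattering bound) times an $\epsilon$-net on $\mathcal{M}_{\delta/B}^{(d)}$ (via Lemma~\ref{lem:vershynin}), the spectral-projector perturbation estimate of Eq.~\eqref{bhatiapert} with separation $2\delta$ producing the factor $\pi/(4\delta)$, a triangle-inequality split of $\lVert g(x)V-g'(x)V'\rVert$ using $\lVert V\rVert\leq\delta/B$ and $|g'(x)|\leq B$, and averaging over the sample. Your explicit gap bookkeeping (Weyl's theorem plus equal ranks so that $\lVert E-F\rVert=\lVert E^{\perp}F\rVert$ applies) is precisely the content the paper packages into its preliminary corollary Eq.~\eqref{bhatiapert}, so the two arguments coincide in substance.
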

\begin{proof}
Let $\mathcal{F}_{\epsilon}$ be an $\epsilon$-net  of $\mathcal{F}$ and ${\mathcal M_{b,\epsilon}^{(d)}}$ an $\epsilon$-net of $\mathcal M_{b}^{(d)}$. Then, for every $g\in \mathcal{F}$ there is $g\in \mathcal{F}_{\epsilon}$ such that $||g-g'||_{1,\vec x}\leq \epsilon$ and 
for every $H\in \mathcal M_{b}^{(d)}$ there is $H'\in {\mathcal M_{b,\epsilon}^{(d)}}$ such that $||H-H'||\leq \epsilon$. 
Let us consider the corresponding functions $f(x)=\Pi_E(H_0+g(x)V)\in \mathcal {C}$, $f'(x)=\Pi_E(H_0+g'(x)V')\in \mathcal {C}$.

Using Eq.~(\ref{bhatiapert}), we have

\begin{align}
&||\Pi_E(H_0+g(x)V)-\Pi_E(H_0+g'(x)V')||\nonumber\\
&\leq \frac{\pi}{4\delta}||H g(x)- H'g'(x)||\nonumber\\&\leq \frac{\pi}{4\delta} \left(\frac{\delta}{B}|g(x)-g'(x)|+B||H -H'||\right).
\end{align}

Therefore, 
we get
\begin{align}
&||f-f'||_{1,\infty,\vec x}\nonumber\\
&=\frac{1}{n}\sum_{i=1}^{n}||\Pi_E(H_0+g(x_i)V)-\Pi_E(H_0+g'(x_i)V')||\\
&\leq\frac{\pi\epsilon}{4\delta}(B+\delta\frac{1}{B}),
\end{align}

which implies that  $ \mathcal C_{\epsilon}=\{f|f(x)=\Pi_E(H_0+g(x)V),\,\, g\in\mathcal{F}_{\epsilon}, V\in \mathcal M_{b,\epsilon}^{(d)}\},$ is a $\frac{\pi\epsilon}{4\delta}(B+\delta\frac{1}{B})$-net for $\mathcal C$.
\end{proof}

\section*{Acknowledgements}
The authors thank Ryan O'Donnell and Costin Bădescu for valuable correspondence. MF thanks Michalis Skotiniotis for discussions on quantum imaging. 
MF is also supported by a Juan de la Cierva Formaci\'on fellowship (Spanish MCIN project FJC2021-047404-I), with funding from MCIN/AEI/10.13039/501100011033 and European Union NextGenerationEU/PRTR. 
MF was also supported by Spanish MICINN PCI2019-111869-2 and 
 Spanish  Agencia Estatal de Investigación, project PID2019-107609GB-I00/AEI/$\backslash$$\backslash$10.13039/501100011033 and co-funded by the European Union Regional Development Fund within the ERDF Operational Program of Catalunya (project QuantumCat, ref. 001-P-001644), and currently by the Spanish MCIN (project PID2022-141283NB-I00) with the support of FEDER funds, by the Spanish MCIN with funding from European Union NextGenerationEU (grant PRTR-C17.I1) and the Generalitat de Catalunya, as well as the Ministry of Economic Affairs and Digital Transformation of the Spanish Government through the QUANTUM ENIA ``Quantum Spain'' project with funds from the European Union through the Recovery, Transformation and Resilience Plan - NextGenerationEU within the framework of the "Digital Spain 2026 Agenda". 
 YQ is supported by the Alexander von Humboldt Foundation. 
MR acknowledges support from the PNRR project 816000-2022-SQUID - CUP F83C22002390007, financed by the European Union - Next Generation EU program, the Einstein Foundation and the European Space Agency. 
\bibliography{qlearning.bib}

\begin{appendix}
\section{Proof of threshold reporting for product state}

\begin{proof}[Proof of Theorem~\ref{bernoulli_new}]\label{bernoulli_new_proof}
We will use the following facts already listed in \cite{BO21}:
\begin{itemize}
    \item  For $\mathsf{S}$ a random variable and $f : \mathbb{R} \to \mathbb{R}$ 1-Lipschitz,  
    \begin{equation}
        \mathrm{Var}[f(\mathsf{S})] \leq \mathrm{Var}[\mathsf{S}]\label{variance}
    \end{equation}
    (Proposition 3.1 in~\cite{BO21}).
    \item Fix $0 \leq p \leq 1$,  $q = 1-p$. Then for $C = (e-1)^2 \leq 3$, we have
  \begin{equation}
        q + pe^{2 \lambda} \leq (1 + Cpq \lambda^2) \cdot (q + pe^{\lambda})^2
    \quad \forall \lambda \in [0,1].\label{lem:inequality-1}
  \end{equation}
(Lemma 3.2 in~\cite{BO21}).
\item Let $\mathsf{S}$ be a discrete random variable, and let $B$ be an event on the same probability space with $\Pr[B] <
  1$. For each outcome~$s$ of~$\mathsf{S}$, define $f(s) = \Pr[B \mid \mathsf{S} = s]$. Then
  \begin{equation}
    \dist[\chi^2]((\mathsf{S} \mid \overline{B}), \mathsf{S})
    = \mathrm{Var}[f(\mathsf{S})] \bigm/ \Pr[\overline{B}]^2.\label{chi-squared}
  \end{equation}
  (Proposition 3.3 in~\cite{BO21}).

\end{itemize}
    We take $\mathsf{X}$ to be an exponential random variable with average $\mathbb{E}[\mathsf{X}]=\frac{1}{\lambda}$, and we assume $\lambda \leq \frac{1}{\sqrt{\sum_{i=1}^np_iq_i}}$ and $\lambda \leq 1$.
    
    Note that 
    \begin{align}
        f(s) &= \Pr[\mathsf{X} > \theta n - s] = \min(1, g(s)), \nonumber\\ g(s) &=  \exp(-\lambda(\theta n - s)).
    \end{align}
    Therefore we have from Eq.~(\ref{chi-squared})
    \[
    \dist[\chi^2]((\mathsf{T} \mid \overline{B}), \mathsf{T})
    = \mathrm{Var}[f(\mathsf{T})] \bigm/ \Pr[\overline{B}]^2\leq \frac{16}{9}\mathrm{Var}[f(\mathsf{T})],
    \]
    and we need to show that
      \[
        \mathrm{Var}[f(\mathsf{T})] \lesssim \Pr[B]^2 \cdot \left(\sum_{i=1}^np_iq_i\right) \lambda^2,
    \]
    
    We can use the fact that $y \mapsto
    \min(1,y)$ is 1-Lipschitz, therefore from Eq.~\eqref{variance} we have that
    $\mathrm{Var}[f(\mathsf{T})] \leq \mathrm{Var}[g(\mathsf{T})]$.
    
    As in the original proof, $\mathrm{Var}[g(\mathsf{T})]$ can be computed using the moment-generating function, this time of the Poisson binomial distribution $\mathsf{T}$, namely $\mathbb{E}[\exp(t \mathsf{T})]
    = \prod_{i=1}^n(q_i + p_ie^t)$:
    \begin{align*}
        \mathbb{E}[g(\mathsf{T})] &= \mathbb{E}[\exp(- \lambda(\theta n - \mathsf {T}))]
                            \nonumber\\&= \exp(- \lambda \theta n) \cdot\prod_{i=1}^n(q_i + p_ie^\lambda), \\
        \mathbb{E}[g(\mathsf{T})^2] &= \mathbb{E}[\exp(- 2 \lambda(\theta n - \mathsf{T}))]
                            \nonumber\\&= \exp(- 2 \lambda \theta n) \cdot \prod_{i=1}^n(q_i + p_ie^{2\lambda}).
    \end{align*}
    Thus
    \begin{align*}
        \mathrm{Var}[g(\mathsf{T})] 
         &= \mathbb{E}[g(\mathsf{T})]^2 \cdot \left(\frac{\mathbb{E}[g(\mathsf{T})^2]}{\mathbb{E}[g(\mathsf{T})]^2} - 1\right)
    \nonumber\\&=  \mathbb{E}[g(\mathsf{T})]^2 \cdot \left(\left(\prod_{i=1}^n\frac{q_i + p_ie^{2 \lambda}}{(q_i + p_ie^{\lambda})^2}\right) - 1\right) \\
    &\le\mathbb{E}[g(\mathsf{T})]^2 \cdot \left(\prod_{i=1}^n(1+3p_iq_i\lambda^2) - 1\right) \tag{Eq.~\ref{lem:inequality-1}}\\
     &\le\mathbb{E}[g(\mathsf{T})]^2 \cdot \left((1+3\frac{1}{n}\sum_{i=1}^{n}p_iq_i\lambda^2)^n - 1\right) \tag{AM-GM}\\
    &\leq 3 \mathbb{E}[g(\mathsf{T})]^2 \cdot \sum_{i=1}^np_iq_i \lambda^2 \tag{as $\lambda^2 \leq \frac{1}{\sum_{i=1}^np_iq_i}$}.
    \end{align*}
  
   Furthermore
    \begin{align}\label{ineq}
        \mathbb{E}[g(\mathsf{T})] &= \exp(- \lambda \theta n) \cdot \prod_{i=1}^n(q_i + p_ie^{\lambda})\nonumber\\&\leq \exp(- \lambda \theta n) \cdot \left(\overline {q} + \overline{p}e^{\lambda}\right)^n,
    \end{align}
    
    where $\overline{q}:=\frac{1}{n}\sum_{i=1}^nq_i$ and $\overline{p}:=\frac{1}{n}\sum_{i=1}^np_i$, and we will show
    \begin{equation}
    \exp(- \lambda \theta n) \cdot \left(\overline {q} + \overline{p}e^{\lambda}\right)^n \leq D\Pr[B],\label{boundgen}
    \end{equation}
    for some other constant $D>0$
    
    We consider two cases: $\overline{p}\geq \frac{1}{n}$ and $\overline{p} < \frac 1 n$.

    \paragraph{Case 1:} $\overline{p} \geq \frac {1} {n}$.  In this case we use that for a Binomial random variable $\overline{\mathsf{T}}$ with mean $\overline{p}n$, we have $\Pr[\overline{\mathsf{T}} > \overline{p}n] \geq \frac14$ (see, e.g.,~\cite{Doe18}). Moreover for the Poisson binomial distribution it holds that $\Pr(\mathsf T> k)\geq \Pr(\overline{\mathsf T}> k)$ if $0\leq k\leq n\overline{p}-1$ \cite{Hoe56,tang2019poisson}, therefore $\Pr[\mathsf{T} \geq \overline{p}n-1]\geq \Pr[\overline{\mathsf{T}} > \overline{p}n-1]\geq 1/4$.
    
    Now observe that: (i)~ $\theta \geq \overline{p}-\frac{1}{n}$, since we are assuming $\Pr[B] = \Pr[\mathsf{T} + \mathsf{X} > \theta n] < \frac14$; and, (ii)
    \begin{align}
        \Pr[B] &\geq \Pr[\mathsf{T} \geq \overline{p}n-1] \cdot  \Pr[\mathsf{T}+ \mathsf{X} \geq  \theta n|\mathsf{T}\geq\overline p n-1] \nonumber\\&\geq\frac{1}{4} \Pr[\mathsf{X} \geq (\theta-\overline{p})n+1] \nonumber\\&\geq \frac14 \exp(-\lambda((\theta-\overline{p})n+1)),
    \end{align} where the first inequality used independence of $\mathsf{T}$ and~$\mathsf{X}$, the second inequality the fact that $\Pr[\mathsf{T} \geq \overline{p}n-1]\geq 1/4$ and the third inequality used $(\theta -\overline{p})n +1\geq 0$ (by~(i)).
    Thus to establish our claim, it remains to show
    \[
        \exp(- \lambda \theta n) \cdot (\overline{q} + \overline{p}e^{\lambda})^n \lesssim \exp(-\lambda((\theta-\overline{p})n+1)),
    \]
    which is equivalent to
 \[
        \overline{q} + \overline{p}e^{\lambda} \lesssim \exp(\lambda \overline{p}).
    \]
    But this last inequality indeed holds, as for $\lambda \leq 1$ we have $\overline{q}+\overline{p}e^{\lambda} {\color{black} <} \overline{q}+\overline{p}(1+2\lambda) = 1+2\overline{p}\lambda$, and $1+2\overline{p}\lambda \leq 2\exp(\lambda \overline{p})$ always.

    \paragraph{Case 2:} $\overline{p} < \frac1n$.  Here $\overline{q} + \overline{p}e^{\lambda}< 1 + 2\overline{p}\lambda \leq 1 + \frac{2}{n}$, and so $(\overline{q}+\overline{p}e^{\lambda})^n \lesssim 1$, which implies that Eq.~\eqref{boundgen} follows from $\Pr[B] \geq \Pr[\mathsf{X} > \theta n] = \exp(-\lambda \theta n)$.
\end{proof}

\section{Results on sampling without replacement}\label{app:worep}

The following Theorem due to Hoeffding holds~\cite{Hoeffding_samplingworep} (see also ~\cite{Bardenet_samplingworep})
\begin{theorem}\label{hoeffrep}
Let $\mathcal{X}={x_1,...,x_N}$ a finite population of points, and $\{X_{1},...,X_{n}\}$ a random sample drawn without replacement from $\mathcal{X}$. Let
\begin{equation}
a=\min_{1\leq i\leq N}x_{i}\qquad b=\max_{1\leq i\leq N}x_{i}.
\end{equation}
Then it holds
\begin{equation}
\mathrm{Pr}\left(\sum_{i=1}^{n}X_i-n\mu\geq n\epsilon\right)\leq e^{-\frac{2n \epsilon^2}{(b-a)^2}}.
\end{equation}
where $\mu:= \frac{1}{N}\sum_i x_i$ is the empirical mean of $\mathcal{X}$.
\end{theorem}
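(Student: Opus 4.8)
The plan is to prove Theorem~\ref{hoeffrep} by the exponential-moment (Chernoff) method, reducing the without-replacement sum to an independent one via Hoeffding's convex-domination lemma. First I would fix $t>0$ and apply Markov's inequality to the nonnegative random variable $e^{t(\sum_{i=1}^n X_i - n\mu)}$, obtaining
\begin{equation}
\Pr\left(\sum_{i=1}^n X_i - n\mu \geq n\epsilon\right) \leq e^{-tn\epsilon}\,\mathbb{E}\left[e^{t\left(\sum_{i=1}^n X_i - n\mu\right)}\right].
\end{equation}
Everything then reduces to controlling the moment-generating function of the sum drawn without replacement, which is the only place the dependence between the $X_i$ enters.

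The crux is the comparison with independent sampling. Let $Y_1,\dots,Y_n$ be drawn \emph{with} replacement (i.e.\ i.i.d.\ uniform) from the same population $\mathcal{X}$, so that each $Y_i$ has mean $\mu$ and takes values in $[a,b]$. Hoeffding's classical observation is that sampling without replacement is dominated in the convex order by sampling with replacement: for every continuous convex $f$,
\begin{equation}
\mathbb{E}\left[f\left(\sum_{i=1}^n X_i\right)\right] \leq \mathbb{E}\left[f\left(\sum_{i=1}^n Y_i\right)\right].
\end{equation}
Applying this to the convex map $s\mapsto e^{ts}$ gives $\mathbb{E}[e^{t\sum_i X_i}] \leq \mathbb{E}[e^{t\sum_i Y_i}] = \left(\mathbb{E}[e^{tY_1}]\right)^n$, turning the correlated MGF into a product of identical single-variable MGFs. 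I would establish this comparison along the standard route: writing the without-replacement sample as $x_{\pi(1)},\dots,x_{\pi(n)}$ for a uniform permutation $\pi$ and expressing $\sum_i X_i$ as a conditional expectation of an independent sum, so that conditional Jensen delivers the convex-order inequality. This exchangeability/averaging step is the nontrivial classical ingredient and the main obstacle; since it is exactly Hoeffding's lemma, one may alternatively invoke it directly from the cited references.

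Given the comparison, the remainder is routine. I would invoke the scalar Hoeffding lemma for a single bounded variable: since $Y_1\in[a,b]$ with $\mathbb{E}[Y_1]=\mu$,
\begin{equation}
\mathbb{E}\left[e^{t(Y_1-\mu)}\right] \leq e^{t^2(b-a)^2/8}.
\end{equation}
Combining the three displays yields the bound $e^{-tn\epsilon + nt^2(b-a)^2/8}$, and optimizing over $t$ by taking $t = 4\epsilon/(b-a)^2$ produces the exponent $-2n\epsilon^2/(b-a)^2$, which is precisely the claimed tail bound. The analogous lower-tail statement, and hence a two-sided version of the type used when the lemma is applied in the main text, follows by replacing each $X_i$ with $-X_i$.
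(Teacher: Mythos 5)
Your proof is correct, and it coincides with the classical argument behind this result: the paper itself offers no proof here, stating the theorem as known and citing Hoeffding's 1963 paper and Bardenet--Maillard, and your route (Markov/Chernoff, Hoeffding's convex-order domination of without-replacement sampling by i.i.d.\ sampling, the scalar Hoeffding lemma $\mathbb{E}[e^{t(Y_1-\mu)}]\leq e^{t^2(b-a)^2/8}$, and optimization at $t=4\epsilon/(b-a)^2$) is exactly the proof found in those references. Your closing remark on obtaining the two-sided bound by negating the population is also precisely how the paper uses the theorem downstream, in its Theorem on simultaneous concentration for $m$ populations.
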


Now we apply the above Theorem to $M$ different finite populations of $N$ points. 
\begin{theorem}\label{th:multinorep}
Let $\mathcal{Y}={1,...,N}$, and let $\textsf{indices} = \{X_{1},...,X_{n}\}$ a random sample drawn without replacement from $\mathcal{Y}$. Furthermore, consider $M$ different finite populations of $N$ numbers $\{\mathcal{X}_i\}_{i=1}^M = \{\{x_{i,j}\}_{j=1}^N\}_{i=1}^M$ with $0\leq x_{i,j}\leq 1$, for every $i, j$, and for each $\mathcal{X}_i$, use the $\textsf{indices}$ to choose a subset of the points $\{x_{i,j}\}_{j \in \textsf{indices}}$. 
 
Then it holds
\begin{equation}
\mathrm{Pr}\left(\max_{i=1,...,M}|\sum_{j=1}^{n}x_{i,X_{j}}-n\mu_{i}|\geq n\epsilon\right)\leq 2Me^{-2n \epsilon^2},
\end{equation}
where $\mu_i:= \frac{1}{N}\sum_{j=1}^N x_{i,j}$ is the empirical mean of $\mathcal{X}_i$.
\end{theorem}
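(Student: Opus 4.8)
The plan is to reduce the $M$-population statement to the single-population inequality of Theorem~\ref{hoeffrep} in two moves: first upgrade Theorem~\ref{hoeffrep} from a one-sided to a two-sided tail bound for a fixed population, and then take a union bound over the $M$ populations. The essential conceptual point is that a union bound requires no independence, so the fact that all $M$ populations are sampled with the \emph{same} index set $\{X_1,\dots,X_n\}$ does not cause any difficulty.

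First I would fix a single population $\mathcal{X}_i$. Since $0\le x_{i,j}\le 1$ for all $j$, its range satisfies $(b-a)^2\le 1$, so applying Theorem~\ref{hoeffrep} directly gives the upper tail
\begin{equation}
\mathrm{Pr}\left(\sum_{j=1}^{n}x_{i,X_{j}}-n\mu_{i}\ge n\epsilon\right)\le e^{-2n\epsilon^2}.
\end{equation}
To obtain the matching lower tail I would apply the \emph{same} theorem to the negated population $\{-x_{i,j}\}_{j=1}^N$, whose empirical mean is $-\mu_i$ and whose values lie in $[-1,0]$, so that again $(b-a)^2\le 1$; this yields
\begin{equation}
\mathrm{Pr}\left(\sum_{j=1}^{n}x_{i,X_{j}}-n\mu_{i}\le -n\epsilon\right)\le e^{-2n\epsilon^2}.
\end{equation}
Adding the two tails produces the two-sided bound $\mathrm{Pr}\bigl(\lvert\sum_{j}x_{i,X_{j}}-n\mu_{i}\rvert\ge n\epsilon\bigr)\le 2e^{-2n\epsilon^2}$ for each fixed $i$.

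Finally I would take a union bound over $i=1,\dots,M$: the event $\{\max_i\lvert\sum_j x_{i,X_j}-n\mu_i\rvert\ge n\epsilon\}$ is contained in the union of the $M$ events $\{\lvert\sum_j x_{i,X_j}-n\mu_i\rvert\ge n\epsilon\}$, hence its probability is at most $\sum_{i=1}^M 2e^{-2n\epsilon^2}=2Me^{-2n\epsilon^2}$, which is exactly the claim. There is no genuinely hard step here; the only points requiring care are (i) reducing the lower tail to the one-sided Theorem~\ref{hoeffrep} via negation, noting that the range bound $(b-a)^2\le 1$ is preserved, and (ii) observing that the shared sampling indices induce dependence \emph{between} the $M$ deviation events but leave the union bound untouched. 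This same two-sided single-population estimate is what underlies Lemma~\ref{lem:multisamplerep} in the main text, which follows by applying it to each of the $Km$ events indexed by a population $c\in[m]$ and a block $k$, each block being marginally a without-replacement sample of size $l$.
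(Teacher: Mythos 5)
Your proposal is correct and is essentially identical to the paper's own proof: the paper likewise applies Theorem~\ref{hoeffrep} to the $2M$ populations $\{\mathcal{X}_i,-\mathcal{X}_i\}_{i=1}^{M}$ (negation handling the absolute value, exactly your lower-tail reduction) and concludes by a union bound over these events, noting that the shared index set poses no obstacle. The only cosmetic difference is that you assemble the two-sided bound per population before taking the union over $i$, whereas the paper unions over all $2M$ one-sided events at once.
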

\begin{proof}

From Theorem~\ref{hoeffrep}, consider the $2M$ finite populations $\{\mathcal{X}_i, -\mathcal{X}_i\}_{i=1,...,M}$ (the negative sign is to handle the absolute value). The $\textsf{indices}$ determine the choice of a sample without replacement from each of the above finite populations. The Proposition follows from the union bound.

\end{proof}

We can thus prove the following Theorem.
\begin{widetext}

\begin{theorem}\label{thmappsampl}
Let $\mathcal{Y}=\{1,...,N\}$, $N\geq 3Kn$ and let the $K$ sets of indices $\{\{X_{nk+i}\}_{i=1}^n\}_{k=0,...,K-1}$ be random samples drawn without replacement from $\mathcal{Y}$. Furthermore, consider $m$ different finite populations of $N$ numbers $\{\{x_{i,j}\}_{j=1}^{N}\}_{i=1}^m$, with $0\leq x_{i,j}\leq 1$. From each population $\{x_{i,j}\}_{j=1}^{N}$ obtain $K$ subsets of size $n$, as $\{x_{i,X_{nk+1}},...,x_{i,X_{nk+n}}\}_{k=0,...,K-1}$. 
Then it holds
\begin{equation}
\mathrm{Pr}\left(\underset{\substack{ i=1,...,m\\ k=0,...,K-1}}{\max}|\sum_{j=1}^{n}x_{i,X_{nk+j}}-n\mu_{i}|\geq n\epsilon\right)\leq 2Kme^{-2n \epsilon^2/4}.
\end{equation}
where $\mu_i = \frac{1}{N}\sum_{j=1}^{N} x_{i,j}$.
\end{theorem}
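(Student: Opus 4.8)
The plan is to reduce the simultaneous control over all $m$ populations and all $K$ batches to a single application of Theorem~\ref{th:multinorep} per batch, followed by a union bound over batches. The one genuinely delicate point is that the $K$ batches are \emph{not} independent: the indices $X_1,\dots,X_{Kn}$ are drawn jointly without replacement from $\mathcal{Y}$, so revealing one batch alters the conditional law of the others. The observation that rescues the argument is the exchangeability of sampling without replacement: for each fixed $k\in\{0,\dots,K-1\}$, the block $\{X_{nk+1},\dots,X_{nk+n}\}$ occupies a fixed set of $n$ positions in a uniformly random ordering of $\mathcal{Y}$, hence is, \emph{marginally}, distributed exactly as a size-$n$ sample drawn without replacement from the \emph{full} population $\mathcal{Y}$. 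This is where the hypothesis $N\geq 3Kn$ is used: it guarantees $Kn\leq N$ so that $Kn$ distinct indices can indeed be drawn, and leaves ample room in the population.

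Granting this marginal description, the first step I would carry out is to fix a single batch $k$ and apply Theorem~\ref{th:multinorep} directly, with the single without-replacement sample played by $\{X_{nk+j}\}_{j=1}^n$ and with the $m$ finite populations $\{x_{i,j}\}_{j=1}^N$, $i\in[m]$. Because the batch is marginally a sample from the full population, the relevant centering constant is the full empirical mean $\mu_i=\tfrac1N\sum_{j=1}^N x_{i,j}$ rather than any conditional mean; this is precisely what lets one avoid tracking the drift of the empirical mean of the shrinking residual population as successive batches are removed. Theorem~\ref{th:multinorep} then yields, for each fixed $k$,
\begin{equation}
\mathrm{Pr}\left(\max_{i=1,\dots,m}\Big|\sum_{j=1}^n x_{i,X_{nk+j}}-n\mu_i\Big|\geq n\epsilon\right)\leq 2m\, e^{-2n\epsilon^2}.
\end{equation}

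The second step is a union bound over the $K$ batches, which is legitimate despite their dependence, since a union bound consumes only the marginal tail estimates above and not any independence. This gives
\begin{equation}
\mathrm{Pr}\left(\max_{\substack{i=1,\dots,m\\ k=0,\dots,K-1}}\Big|\sum_{j=1}^n x_{i,X_{nk+j}}-n\mu_i\Big|\geq n\epsilon\right)\leq 2Km\, e^{-2n\epsilon^2},
\end{equation}
and the claimed bound follows immediately from $e^{-2n\epsilon^2}\leq e^{-n\epsilon^2/2}$, the factor $1/4$ in the stated exponent being harmless slack. I expect the main obstacle to be conceptual rather than computational: namely, convincing oneself that the correct marginal law of each batch is that of a fresh without-replacement sample from all of $\mathcal{Y}$, so that $\mu_i$ is the right center. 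Once exchangeability is invoked, the remainder is simply invoking Theorem~\ref{th:multinorep} (itself a union bound over $2m$ populations of Hoeffding's single-sample estimate, Theorem~\ref{hoeffrep}, with the sign flip handling the two-sided deviation) and absorbing constants.
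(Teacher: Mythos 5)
Your proposal is correct, and it takes a genuinely different route from the paper's own proof. The paper argues sequentially: it peels off batches one at a time, conditions on the first $t$ batches being good, observes that batch $t$ is then a without-replacement sample from the \emph{residual} population of size $N-tn$, and controls the drift of the residual empirical mean $\mu_{i,t}$ away from $\mu_i$. This is exactly where the hypothesis $N\geq 3Kn$ is consumed (it gives $\tfrac{tn}{N-tn}\leq\tfrac{K-1}{2K+1}\leq\tfrac12$, hence $|\mu_{i,t}-\mu_i|\leq\epsilon/2$), and it is also why only half the deviation budget survives, forcing an application of Theorem~\ref{th:multinorep} at accuracy $\epsilon/2$ and producing the weakened exponent $e^{-2n\epsilon^2/4}$. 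Your exchangeability observation bypasses all of this: since $(X_1,\dots,X_{Kn})$ is uniform over ordered sequences of $Kn$ distinct indices, any fixed block of $n$ positions is \emph{marginally} a fresh without-replacement sample from the full population, so Theorem~\ref{th:multinorep} applies per batch with the correct center $\mu_i$ and the full accuracy $\epsilon$, and the union bound over the $K$ batches requires no independence. You thereby obtain the stronger bound $2Km\,e^{-2n\epsilon^2}$, need only $N\geq Kn$ rather than $N\geq 3Kn$, and avoid the recursive bookkeeping entirely — showing, incidentally, that the factor $1/4$ in the stated exponent and the constant $3$ in the hypothesis are artifacts of the paper's sequential method rather than necessities.
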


\begin{proof}
    
We compute an upper bound recursively using
\begin{align}
&\mathrm{Pr}\left(\underset{\substack{ i=1,...,m\\ k=0,...,t}}{\max}|\sum_{j=1}^{n}x_{i,X_{nk+j}}-n\mu_{i}|\geq n\epsilon\right)\nonumber\\&\leq \mathrm{Pr}\left(\underset{\substack{ i=1,...,m\\ k=0,...,t-1}}{\max}|\sum_{j=1}^{n}x_{i,X_{nk+j}}-n\mu_{i}|\geq n\epsilon\right)\nonumber\\
&+\mathrm{Pr}\left(\max_{i=1,..,m}|\sum_{j=1}^{n}x_{i,X_{nt+j}}-n\mu_{i}|\geq n\epsilon \Big|\underset{\substack{ i=1,...,m\\ k=0,...,t-1}}{\max}|\sum_{j=1}^{n}x_{i,X_{nk+j}}-n\mu_{i}|\leq n\epsilon\right).
\end{align}

Defining $\mu_{i,t}$ such that

\begin{equation}
N\mu_i=(N-t n)\mu_{i,t}+\sum_{k=0}^{t-1} \sum_{j=1}^{n}x_{i,X_{nk+j}},
\end{equation}

given the condition $\underset{\substack{ i=1,...,m\\ k=0,...,t-1}}{\max}|\sum_{j=1}^{n}x_{i,X_{nk+j}}-n\mu_{i}|\leq n\epsilon$, 

we have

\begin{align}
N\mu_i\leq(N-tn)\mu_{it}+tn\mu_i+tn\epsilon\\
N\mu_i\geq(N-tn)\mu_{it}+tn\mu_i-tn\epsilon,
\end{align}
and therefore, for $t\leq K-1$
\begin{align}
\mu_{it}\geq \frac{N-tn}{N-tn}\mu_i-\frac{tn}{N-tn}\epsilon\geq \mu_i-\frac{K-1}{2K+1}\epsilon\geq \mu_i-\epsilon/2\nonumber\\
\mu_{it}\leq \frac{N-tn}{N-tn}\mu_i+\frac{tn}{N-tn}\epsilon\leq \mu_i+\frac{K-1}{2K+1}\epsilon\leq \mu_i+\epsilon/2,
\end{align}

meaning that 

\begin{equation}
|n\mu_{it}-n\mu_{i}|\leq n\epsilon/2
\end{equation}

\begin{align}
&\mathrm{Pr}\left(\max_{i=1,..,m}|\sum_{j=1}^{n}x_{i,X_{nt+j}}-n\mu_{i}|\geq n\epsilon |\underset{\substack{ i=1,...,m\\ k=0,...,t-1}}{\max}|\sum_{j=1}^{n}x_{i,X_{nk+j}}-n\mu_{i}|\leq n\epsilon\right)\nonumber\\
&=\mathrm{Pr}\left(\max_{i=1,..,m}|\sum_{j=1}^{n}x_{i,X_{nt+j}}-n\mu_{it}+n\mu_{it}-n\mu_{i}|\geq n\epsilon \Big|\underset{\substack{ i=1,...,m\\ k=0,...,t-1}}{\max}|\sum_{j=1}^{n}x_{i,X_{nk+j}}-n\mu_{i}|\leq n\epsilon\right)\nonumber\\
&\leq \mathrm{Pr}\left(\max_{i=1,..,m}|\sum_{j=1}^{n}x_{i,X_{nt+j}}-n\mu_{it}|+|n\mu_{it}-n\mu_{i}|\geq n\epsilon \Big|\underset{\substack{ i=1,...,m\\ k=0,...,t-1}}{\max}|\sum_{j=1}^{n}x_{i,X_{nk+j}}-n\mu_{i}|\leq n\epsilon\right)\nonumber\\
&\leq \mathrm{Pr}\left(\max_{i=1,..,m}|\sum_{j=1}^{n}x_{i,X_{nt+j}}-n\mu_{it}|\geq n\epsilon/2 \Big|\underset{\substack{ i=1,...,m\\ k=0,...,t-1}}{\max}|\sum_{j=1}^{n}x_{i,X_{nk+j}}-n\mu_{i}|\leq n\epsilon\right)
\end{align}

Using Theorem~\ref{th:multinorep} we have
\begin{equation}
\mathrm{Pr}\left(\max_{i=1,..,m}|\sum_{j=1}^{n}x_{i,X_{nt+j}}-n\mu_{it}|\geq n\epsilon/2 \Big|\underset{\substack{ i=1,...,m\\ k=0,...,t-1}}{\max}|\sum_{j=1}^{n}x_{i,X_{nk+j}}-n\mu_{i}|\leq n\epsilon\right)\leq 2me^{-\frac{2n \epsilon^2}{4}},
\end{equation}

and therefore

\begin{align}
&\mathrm{Pr}\left(\underset{\substack{ i=1,...,m\\ k=0,...,t}}{\max}|\sum_{j=1}^{n}x_{i,X_{nk+j}}-n\mu_{i}|\geq n\epsilon\right)\leq 2Kme^{-\frac{2n \epsilon^2}{4}}.
\end{align}

\end{proof}
\end{widetext}

\section{Realizable learning with pure states}\label{subsec:warmup}
We first consider learning in the setting where there is a guarantee that the unknown concept $f:\X \rightarrow D(\mathcal{H})$ comes from the concept class $\C$. This setting was first considered in Ref. \cite{ChungLin21}. Additionally, Ref. \cite{ChungLin21} gives two algorithms for learning in the realizable setting, one for pure states and one for mixed states. 
In this section we show that the pure-state algorithm can be straightforwardly improved if one looks at growth functions. 

We will use the shorthands:
\begin{align}
\Delta(c,c')&=\sum_{x}\mathcal{D}(x)d_{\text{tr}}(c(x),c'(x))\nonumber\\ \tilde{\Delta}(c,c', \vec {x})&=\frac{1}{|\vec{x}|}\sum_{i=1}^{|\vec{x}|}d_{\text{tr}}(c(x_i),c'(x_i)).
\end{align}

First let us make some definitions.
\begin{defn}[Measurement output distribution]
For any state $\sigma \in S$, and for a $d$-outcome POVM ${\cal M} := \{M_z\}_{z=1}^d$ where $M_z \in \mathbb{C}^d$, define the measurement output distribution $D_{{\cal M},\sigma}:[d] \rightarrow [0,1]$ as the distribution over outcomes upon measuring the state $\sigma$ with ${\cal M}$:
\begin{equation}
D_{{\cal M},\sigma}(z) = \Tr[\sigma M_z]
\end{equation}
\end{defn}

\begin{defn}[$\eps$-far set] For any $c\in \C$, define the $\eps$-far set:
$\C_{\geq \eps}(c):= \{h \in \C: \Delta(h,c) \geq \eps\}$ 
\end{defn}

\begin{algorithm}[H]
\textbf{Input:} $\left(x_{1}, \ket{c^{\ast}(x_{1})}\right),\left(x_{2}, \ket{c^{\ast}(x_{2})}\right), \ldots,\left(x_{n}, \ket{c^{\ast}(x_n)}\right)$ \\

\begin{algorithmic}[1]
\State Restrict the concept class $\C$ into $\C|_{\vec{x}}$.
\State Do a random $d$-outcome orthonormal basis measurement ${\cal M}^{(i)}$ on each output state $\ket{c^{\ast}(x_{i})}$ (if necessary augmenting the space with auxiliary variables) so that the measurement on the overall state is $\bigotimes_{i=1}^n{\cal M}^{(i)}:=\{\bigotimes_{i=1}^n M_{z_i}^{(i)}\mid\vec{z}\in[d]^n\}$. Let the outcome of the measurement ${\cal M}^{(i)}$ be $z_i \in [d]$.
\State Output the concept $h \in \C|_{\vec{x}}$ that is most likely to have produced the measurement outcomes recorded in Line 2:
\beq
h=\underset{c \in \C|_{\vec{x}}}{\arg \max }\,\, \prod_{i =1}^T D_{{\cal M}^{(i)}, c(x_i)}(z_i) \quad \text{(Maximum Likelihood)}
\eeqp
\end{algorithmic}
\caption{\label{algo:pure}Learning channels mapping to pure states}
\end{algorithm}

The error occurs if maximum likelihood chooses a concept $h$ which is $2\eps$-far from the true one (i.e. $\Delta(h,c^{\ast})$), which could happen for two reasons: 
\begin{enumerate}
    \item There is a concept $h$ such that $\Delta(h,c^{\ast})>2\eps$ but $\tilde{\Delta}(h,c^{\ast},\vec{x}) < \eps$, i.e. the sampled points do not give a good approximation of $h$'s distance from the true concept, which affects the construction of a good covering.
    \item There is a concept $h$ such that $\Delta(h,c^{\ast})> 2\eps$ but maximum likelihood on samples from $D_{\bigotimes_i {\cal M}^{(i)}, \bigotimes_i c^{\ast}(x_i)}$ outputs $h$, i.e., the sampled points do not give a good approximation of the measurement output distribution of the true concept. The probability of this happening is smaller than 
    \beq
    1-TV(D_{\bigotimes_i {\cal M}^{(i)}, \bigotimes_i c^{\ast}(x_i)},D_{\bigotimes_i {\cal M}^{(i)}, \bigotimes_i h(x_i)}),
    \eeq 
    as proven in \cite{ChungLin21}.
\end{enumerate}

To bound the probability of error from the first source, we note that for fixed $h \in \C|_{\vec{x}}$,
\begin{align}
    &\Pr_{\vec{x}\sim \mathcal{D}}\left[  \tilde{\Delta}(h,c^{\ast},\vec{x}) < \eps  \text{ and } \Delta(h,c^{\ast})\geq 2\eps \right]\nonumber\\&\leq  \Pr_{\vec{x}\sim \mathcal{D}}\left[ \left| \tilde{\Delta}(h,c^{\ast},\vec{x}) - \Delta(h,c^{\ast}) \right|\geq \eps\right] \leq 2 \exp(-2\eps^2 n).
\end{align}
where the second inequality follows from H\"{o}ffding's inequality. Hence, with probability $1-\exp(-2\eps^2 T)$ over the distribution $\mathcal{D}$, the samples $x_1,\ldots x_T$ are such that 
\beq\label{eq:condEventRealizable}
\Delta(h,c^{\ast}) \geq 2\eps \Rightarrow \tilde{\Delta}(h,c^{\ast},\vec{x}) \geq \eps
\eeqp

Let us now condition on this event and bound the probability of error from the second source. We use the fact \cite{Sen06} that for any two $d$-dimensional quantum states $\ket{\psi}, \ket{\psi'}$ with $r:= \operatorname{rank}(\ket{\psi}-\ket{\psi'})$ and a random orthogonal measurement basis $M$, with probability $1- \exp(-kd/r)$ which can be assumed to be sufficiently close to $1$ (by padding auxiliary quantum registers) over the choice of the basis,
\begin{align}\label{eq:Sen}
    TV(D_{{\cal M},\ket{\psi}},D_{{\cal M},\ket{\psi'}}) \geq \frac{k}{\sqrt{2}} d_{\rm tr} (\ket{\psi},\ket{\psi'}),
\end{align}
for $k$ a universal constant. Condition on this event too. Hence for any $h \in \C_{>2\eps}(c^{\ast})$,
\begin{align}\label{eq:empiricalTV}
&\frac{1}{n} \sum_{i=1}^n TV(D_{M,h(x_i)},D_{M,c^{\ast}(x_i)}) \nonumber\\&> \frac{k}{\sqrt{2}} \frac{1}{n} \sum_{i=1}^n d_{\rm tr} (\ket{h(x_i)},\ket{c^{\ast}(x_i)})\geq \frac{k}{\sqrt{2}} \eps.
\end{align}
The first inequality follows by linearity of Eq.\eqref{eq:Sen}. The second inequality follows from the conditioning event Eq.~\eqref{eq:condEventRealizable}. Finally, the probability of maximum likelihood on samples from $D_{\bigotimes_i {\cal M}^{(i)}, \bigotimes_i c^{\ast}(x_i)}$ erroneously assigning them to $D_{\bigotimes_i {\cal M}^{(i)}, \bigotimes_i h(x_i)}$ is at most 
\begin{align}
&1- TV(D_{\bigotimes_i {\cal M}^{(i)}, \bigotimes_i c^{\ast}(x_i)}, D_{\bigotimes_i {\cal M}^{(i)}, \bigotimes_i h(x_i)}) \nonumber\\&\leq 2^{-\Omega(n \cdot (\frac{1}{n} \sum_{i=1}^n TV(D_{{\cal M},h(x_i)},D_{{\cal M},c^{\ast}(x_i)}))^2)}\nonumber\\& \leq 2^{-\Omega(n \cdot k^2\eps^2/2)}
\end{align}
The first inequality follows from Lemma 6 of \cite{ChungLin21}; the second inequality follows from Eq.\eqref{eq:empiricalTV}. 

Having accounted for both sources of error for a fixed hypothesis $h$ in the 'bad' set $\C|_{\vec{x}} \cap \C_{>2\eps}(c^{\ast})$, we now take a union bound over all bad hypotheses. Thus the total probability of error is at most
\begin{align}\label{eq:perror}
&\left|\C|_{\vec{x}} \cap \C_{>2\eps}(c^{\ast}) \right| \, ( 2^{-2\eps^2 T}  + 2^{-\Omega(T \cdot k^2\eps^2/2)})\nonumber\\& \leq G(T) 2^{-\Omega(\eps^2 T)}.
\end{align}
Thus, there exists a finite number of samples that makes the right-hand-side of Eq.~\eqref{eq:perror} smaller than $\delta$ if
\beq\label{convcond}
\lim_{T\rightarrow \infty}\frac{1}{\eps^2} \log \left( \frac{G(T)}{\delta}\right) /T=0
\eeqp
Note that if $G(T) \leq \C$ for all $T<\infty$ one recovers the sample complexity of \cite{ChungLin21}.

\section{Learning via restricted restricted empirical risk estimation.}\label{appC}

The following variant of risk minimization via ERM on a projector-valued concept class which is not necessarily an $\epsilon$-net on the full input data shows that risk minimization can be done even if a less restrictive condition on $\gamma_{1,\infty}(n,\epsilon,\mathcal{C})$ holds. Risk estimation for projector-valued concept classes and risk minimization for state-valued concept classes can also be realized with a similar modification.

\begin{theorem}[Learning quantum processes via ERM, alternative statement]
Suppose the concept class $\C$ consists of quantum processes mapping to projectors and let $\epsilon>0$ be the accuracy parameter, and suppose that
\begin{equation}
\lim_{n\rightarrow \infty}\frac{\log \gamma_{1,\infty}(n,\epsilon,\mathcal C)}{n}=0, \qquad \forall\epsilon>0.
\end{equation}
Given as input a training set $S = (x_i,\rho(x_i))_{i=1}^n$ with $x_i \xleftarrow{\mathcal{D}} X$ and $\rho(\cdot)$ an unknown classical-quantum channel, there is an \textit{agnostic} learning algorithm $\mathcal{A}:\X^n \times \mathcal{L}(\mathcal{H}^{(d)})^{\otimes n} \rightarrow \C$ 
such that
\begin{equation}\label{eq:ERMp2}
\Pr_{S}[R_\rho(\mathcal{A}(S)) -\inf_{h\in\mathcal C} R_\rho(h)<9\epsilon]=: 1-p_{\text{err}}.
\end{equation}
Writing $n=Tkl$, for large enough $T=O(\log\frac{1}{\epsilon})$, $k=O(\log(1/\delta\log(1/\epsilon)))$, and $m_0\leq Tkl$, such that

\begin{equation}
4 \gamma_{1,q}(2 m_0, \epsilon / 64, \mathcal{C}) e^{-\frac{m_0 \epsilon^{2}}{512}}\leq\delta/4,
\end{equation}
there exist constants $C_1, C_2, C_3$ such that,
if $l$ satisfies

\beq\label{eq:l03}
(\log\gamma_{1,\infty}(m_0, \eps/2,\mathcal{C})+C_2)^2\leq C_1l\epsilon^2/9
\eeqc
we have
\begin{align}
p_{\text{err}}&\leq \frac{\delta}{2}+4\gamma_{1,\infty}(6Tkl, \epsilon / 2, \mathcal{C}) e^{-\frac{Tkl \epsilon^{2}}{32 }} \nonumber\\&+C_3 \log\frac{1}{\epsilon}\log\left(\frac{1}{\delta}\log\frac{1}{\epsilon}\right)\gamma_{1,\infty}(m_0, \epsilon / 2, \mathcal{C}) e^{-\frac{l \epsilon^{2}}{72 }}.
\end{align}

\end{theorem}

Once this is established, it can be verified that $p_{err}$ can be made arbitrarily small if
$$\lim_{n\rightarrow \infty}\frac{\log \gamma_{1,\infty}(n,\epsilon,\mathcal C)}{n}=0, \qquad \forall\epsilon>0.$$

\begin{proof}

The algorithm takes the first $m_0$ classical variables $\vec{x}_0$ and obtains an $\epsilon$-net on the true risks $\mathcal C|_{\vec{x}_0}$, with probability of error bounded as in Lemma~\ref{th:uniconv}

\begin{equation}
p_{err, net}(m_0)\leq  4\gamma_{1}(2 m_0, \epsilon / 64, \mathcal{C}) e^{-\frac{m_0 \epsilon^{2}}{512}}.
\end{equation}

Then it runs algorithm~\ref{algo:proj2} on the product state obtained from the full dataset $\vec{x}$, with the concepts in $\mathcal C|_{\vec{x}_0}$. In this way we obtain $c^*$ such that $\hat{R}(c)< \max_{c\in \mathcal C|_{\vec{x}_0}}\hat{R}(c)+6\epsilon$ with probability bounded as in the proof of Theorem~\ref{th:theoere}
\begin{align}
p_{err,ERM}&\leq T(0.97^k+(k+1)e^{-l\epsilon^2/72})\nonumber\\&+2Tk\gamma_{1}(m_0, \epsilon / 2, \mathcal{C})e^{-2l\epsilon^2/64}.
\end{align}

Finally, we require that uniform convergence occurs also for the full dataset (not just the first $m_0$ data).
This happens with probability
\begin{align}
p_{err,unif}&=\Pr_{S \sim \mathcal{D}^{n}}[\exists c \in \mathcal{C}:|R(c)-\hat{R}(c)|\nonumber\\& \geq \epsilon]\leq 4\gamma_{1,\infty}(12Tkl, \epsilon / 8, \mathcal{C}) e^{-\frac{6Tkl \epsilon^{2}}{32}}.
\end{align}

If this is true we have that, for the selected concept $c$

\begin{align}
R(c)&\leq \hat R(c)+\epsilon\\&\leq \sup_{c\in \mathcal C|_{\vec{x}_0}} \hat R(c)+7\epsilon\\&\leq \sup_{c\in \mathcal C|_{\vec{x}_0}} R(c)+8\epsilon\\&\leq  \sup_{c\in \mathcal C} R(c)+9\epsilon.
\end{align}

The probability of error is then upper bounded as
\begin{equation}
p_{err,net}+p_{err,ERM}+p_{err,unif}.
\end{equation}

With the choices made for the parameters, we obtain the thesis.
\end{proof}
\end{appendix}

\end{document}